\newtheorem{theorem}{Theorem}
\newtheorem{corollary}[theorem]{Corollary}
\newtheorem{lemma}[theorem]{Lemma}
\newtheorem{axiom}{Axiom}
\newtheorem{definition}{Definition}
\newtheoremstyle{example}
  {3pt plus 1pt minus 2pt} 
  {3pt plus 1pt minus 2pt} 
  {} 
  {} 
  {\bfseries} 
  {.} 
  {.3em} 
  {\thmname{#1}\thmnumber{ #2}\thmnote{ (#3)}} 
\theoremstyle{example}
\newtheorem{example}{Example}
\newcommand{\w}{{\boldsymbol w}}
\newcommand{\y}{{\bf y}}
\newcommand{\btheta}{\boldsymbol\theta}
\newcommand{\boldeta}{\boldsymbol\eta}
\newcommand{\bSigma}{\boldsymbol\Sigma}
\newcommand{\bmu}{\boldsymbol\mu}
\newcommand{\mA}{\mathcal{A}}
\newcommand{\mB}{\mathcal{B}}
\newcommand{\mC}{\mathcal{C}}
\newcommand{\mF}{\mathcal{F}}
\newcommand{\warp}{\chi}
\DeclareMathOperator*{\argmin}{arg\,min}
\DeclarePairedDelimiterX{\infdivx}[2]{(}{)}{%
  #1\;\delimsize\|\;#2%
}
\title{Fusion of Probability Density Functions
\vspace{1mm}}
\author{
\IEEEauthorblockN{G\"unther Koliander\IEEEauthorrefmark{1}, Yousef El-Laham\IEEEauthorrefmark{2}, Petar M. Djuri\'c\IEEEauthorrefmark{2}, and Franz Hlawatsch\IEEEauthorrefmark{3}}\\[3mm]
    \IEEEauthorblockA{\normalsize\IEEEauthorrefmark{1}Acoustics Research Institute, Austrian Academy of Sciences
    (gkoliander@kfs.oeaw.ac.at)}\\[1mm]
    \IEEEauthorblockA{\IEEEauthorrefmark{2}Department of Electrical and Computer Engineering, Stony Brook University
    (\{petar.djuric, yousef.ellaham\}@stonybrook.edu})\\[1mm]
    \IEEEauthorblockA{\IEEEauthorrefmark{3}Institute of Telecommunications, TU Wien
    (franz.hlawatsch@tuwien.ac.at)}
    \thanks{This work was supported in part by the Vienna Science and Technology Fund (WWTF) under grant  MA16-053, by the Austrian Science Fund (FWF) under grants P32055-N31  and Y1199, and
    by the Growing Convergence Research Program of the National Science Foundation under grant
    2021002.  
    }
    }
\date{\today}
\begin{document}
\maketitle

\begin{abstract}
Fusing probabilistic information is a fundamental task in signal and data processing with relevance to many fields of technology and science. In this work, we investigate the fusion of multiple probability density functions (pdfs) of a continuous random variable or vector. Although the case of continuous random variables and the problem of pdf fusion frequently arise in multisensor signal processing, statistical inference, and machine learning, a 
universally accepted method for pdf fusion does not exist. The diversity of approaches, perspectives, and solutions related to pdf fusion motivates a unified presentation of the theory and methodology of the field. We discuss three different approaches to fusing pdfs. In the axiomatic approach, the fusion rule is defined indirectly by a set of properties (axioms). In the optimization approach, it is the result of minimizing an objective function that involves an information-theoretic divergence or a distance measure. In the supra-Bayesian approach, the fusion center interprets the pdfs to be fused as random observations. Our work is partly a survey, reviewing in a structured and coherent fashion many of the concepts and methods that have been developed in the literature. In addition, we present new results  
for each of the three approaches. Our original contributions include new fusion rules, axioms, and axiomatic and optimization-based characterizations; a new formulation of supra-Bayesian fusion in terms of finite-dimensional parametrizations; and a study of supra-Bayesian fusion of posterior pdfs for linear Gaussian models.

\end{abstract}

\begin{IEEEkeywords}
Information fusion, probabilistic opinion pooling, pooling function, multisensor signal processing, sensor network,
model averaging, supra-Bayesian fusion, Kullback-Leibler divergence, Chernoff fusion, $\alpha$-divergence, H\"older mean, linear Gaussian model, covariance intersection.
\end{IEEEkeywords}

\renewcommand{\textfraction }{0.01}

\section{Introduction}
\label{sec: introduction}


The fusion of multiple probabilistic descriptions of a random quantity is a fundamental task with 
applications in many fields 
including multisensor signal processing 
\cite{
bandyopadhyay2018distributed, 
Clark10, 
hu2011diffusion, 
maddern2016real, 
da2021recent, 
fantacci2018robust, 
meyer2015distributed, 
Uney13}, 
machine learning \cite{lakshminarayanan2017simple,lee2009neural,lu2020ensemble,thorgeirsson2021probabilistic},
robotics \cite{maddern2016real}, 
smart environments \cite{alam2017data}, 
medicine \cite{kats2019soft},
transportation \cite{kolosz2013modelling},
precision agriculture \cite{chlingaryan2018machine}, 
pharmacology \cite{li2013combined}, 
weather forecasting \cite{murphy1984probability,marty2015combining},  
economics \cite{mitchell2005evaluating,moral2015model},
and financial engineering \cite{barak2017fusion}. 
While this task has been studied for several decades, an in-depth treatment with a focus on 
continuous random variables and, accordingly, on the fusion of probability density functions (pdfs) appears to be lacking.
The present paper attempts to fill this gap. 
Our focus on continuous random variables is motivated by the fact that continuously distributed quantities are the primary object of interest in many applications.

The fusion of pdfs can be considered in different contexts, and several different techniques for this task have been proposed in the literature.
Our treatment is partly a survey of existing concepts and techniques, with an emphasis on a structured and coherent presentation. 
In addition, we present numerous original contributions related to axiomatic, optimization-based, and Bayesian approaches to pdf fusion.


\subsection{Motivation}
\label{sec:motivation}

The field of pdf fusion is multifaceted and somewhat fuzzy: there are many possible approaches to the problem of finding a pdf fusion rule, and there is no universally accepted measure of performance \cite{genest1986combining,hall2007combining}. 
An appropriate fusion rule and performance measure 
depend on the scenario and application. This situation can be 
aggravated by the fact that different fusion rules
can lead to very different results.

Although in specific applications certain pdf fusion rules have been established and found to be useful,  
the rationales of these rules and their possible alternatives
are not always obvious. Thus, it is both theoretically interesting and practically relevant to study the problem of pdf fusion and the existing viewpoints and solutions in a general way that abstracts from specific applications, and to put these viewpoints and solutions into a higher-level perspective. 
Our hope is that this 
analysis will 
support an informed choice of a pdf fusion rule for specific scenarios and applications.
Accordingly, rather than considering
a single framework or method for pdf fusion, this paper 
reviews the 
different approaches 
that have been developed over several decades in different disciplines and by different communities. In addition, these approaches are categorized into three fundamental approaches to principled pdf fusion, which we
term the axiomatic, optimization, and supra-Bayesian approaches.

Fusing pdfs is a special variant of
the general task of ``data fusion'' or ``information fusion,'' and one may ask why it can be advantageous to perform data/information fusion at the level of pdfs. 
Possible answers 
include the following \cite{mitchell2005evaluating,hall2007combining,gneiting2008probabilistic}:

\begin{itemize}

\vspace{1.5mm}

\item
A pdf constitutes a {\em complete} probabilistic description of a continuous random variable or random vector. In addition to its mean or its mode (which can be used as point estimates 
of the random variable or vector), this description includes further important information such as effective support, multimodality, tail decay, and a detailed characterization of the ``dispersion'' around the mean. 
Moreover, it enables
the calculation of quantitative measures of the accuracy of point estimates.

\vspace{1.5mm}

\item
A pdf provides a standardized and ``genesis-agnostic'' representation of the state of information of an agent or sensor, i.e., it abstracts from the intricacies of the 
processing employed by the agent or sensor to obtain it from the raw data. This ``no questions asked'' characteristic
enables or facilitates an information fusion even between heterogeneous agents, which employ different sensing modalities
and/or different types of data preprocessing. Furthermore, the lack of a transparent relation to the raw data is a desirable feature in 
privacy-sensitive applications.

\vspace{1.5mm}

\item
Because a pdf provides a standardized, genesis-agnostic representation, pdf fusion is 
well suited to 
a decentralized (peer-to-peer) network topology. In decentralized, possibly ad-hoc networks, a distributed in-network type of processing is used where each agent communicates with a limited set of neighboring agents and, typically, little or no information about the characteristics of far-away agents
is available locally. The pdf format here facilitates the dissemination of information through the network.

\vspace{1.5mm}

\item
Computationally efficient pdf fusion algorithms based on parameteric pdf representations are available.
For example, the fusion of Gaussian pdfs reduces to fusing the corresponding means and variances or covariance matrices. 
More generally, there are efficient algorithms for fusing Gaussian mixture pdfs. In distributed implementations, parametric pdf representations enable pdf fusion with low or moderate communication cost. Thus, pdf fusion is attractive because detailed probabilistic information can be fused with moderate complexity in terms of computation and communication.

\end{itemize}

\subsection{Probabilistic Opinion Pooling}
\label{sec:POP}

Consider $K$ ``agents,'' ``experts,'' or ``models,'' each providing an ``opinion'' about  an unknown random object that
may be a scalar or  vector. In the probabilistic setting studied in this work, the opinions provided by the agents
are not point estimates of the 
random object but 
probability distributions. 
More specifically, we focus on the case of a continuous random variable or vector $\btheta$, where the opinion of agent $k$ is
expressed by a pdf
$q_k(\btheta)$. 

The problem studied in this paper is to combine, or fuse, the 
pdfs of the $K$ agents, $q_k(\btheta)$ for $k = 1,2,\ldots,K$, into an aggregate pdf $q(\btheta)$.
This problem is traditionally referred to as \emph{probabilistic opinion pooling},
although that term is also used for the fusion of discrete (categorical) distributions. 
We assume that the combination of the agent pdfs $q_k(\btheta)$ 
is done by a central agent or unit, termed a ``fusion center,'' which has access to all the agent pdfs.
The function employed by the fusion center to map the $q_k(\btheta)$ into the aggregate (fused) pdf $q(\btheta)$ is termed a \emph{fusion rule} or a \emph{pooling function}. 
Many different pooling functions have been proposed in the literature, based on various models and considerations.
Important examples include the linear pooling function (a weighted arithmetic mean, also known as arithmetic mean density) 
\cite{stone1961opinion, Bailey12} and the log-linear pooling function (a weighted geometric mean, 
also referred to as Chernoff fusion or geometric mean density) \cite{genest1984characterization, hurley2002information, Lehrer19, Bailey12, Gunay16}.
For Gaussian pdfs, the covariance intersection technique \cite{hurley2002information, julier1997non} is an instance of a log-linear pooling function.
These and several other pooling functions will be discussed in later sections.

An alternative to the centralized setting for probabilistic opinion pooling described above would be a decentralized network of agents without a dedicated fusion center \cite{Chang10, Lehrer19, Uney13,urteaga2016sequential}. 
Here, the agents communicate their pdfs only locally, i.e., to neighboring agents, and each agent can be considered to act as a local fusion center. In this ``in-network'' or ``network-centric'' type of probabilistic opinion pooling, the agents use a distributed communication-and-fusion protocol, such as flooding, consensus, gossip, or diffusion, to disseminate their local pdfs through the network and emulate a given overall pooling function. 
This relies on a suitable pdf representation such as a Gaussian, Gaussian mixture, or particle representation.
The fusion methods we discuss in this work are also relevant to decentralized probabilistic opinion pooling. 
We note, however, that there are numerous methods for in-network signal and information processing in which the local processing results that are being combined are not pdfs. 
For example, some methods combine local likelihood functions \cite{bandyopadhyay2018distributed,hlinka2014consensus,savic2014belief} or messages within a message passing algorithm such as belief propagation \cite{wymeersch2009cooperative,meyer2015distributed}, or certain iterated quantities within a networkwide adaptation-diffusion procedure \cite{lopes2008diffusion}, to name a few.


\subsection{Relevance and Applications}
\label{sec:relevance}
Probabilistic opinion pooling is a fundamental and elementary functionality with widespread applications. 
Historically, the first motivation was to combine expert opinions into an aggregate opinion \cite{stone1961opinion}. 
Nowadays it is more likely that the different probability distributions
do not represent the opinions of multiple experts but originate from the use of multiple sensors, models, or data sets.
In particular, probabilistic opinion pooling is often formulated 
in a Bayesian setting as the fusion of local \emph{posterior} pdfs that are produced by multiple agents using
local implementations of Bayesian inference \cite{Punska99}. 
The ideal aggregate pdf here is 
the global posterior pdf, which takes into account all the data available to the agents.
However, the calculation of the global posterior pdf generally requires additional knowledge besides the local posterior pdfs, such as the local likelihood functions, 
the prior pdfs used by the agents, and possible statistical dependencies between
the agents. 
By contrast, probabilistic opinion pooling requires only the local posterior pdfs. 
In many settings, it is easily and widely applicable because it does not make any assumptions about the local inference methods,
the types of the sensors, or the nature of the local data, which can all be different at different agents.

From the viewpoint of the processed data, there is a wide range of scenarios for
probabilistic opinion pooling. Two extreme cases are particularly important: all the agents process different data, or they 
process exactly the same data. Furthermore, the processing may be carried out with completely unrelated models but with the same objective (e.g., predicting future observations or classifying observations).


Current applications of
probabilistic opinion pooling include, but are not limited to, the following selection:


\begin{itemize}

\vspace{1.5mm}

\item 
In multisensor signal processing applications of probabilistic opinion pooling, multiple sensors derive local pdfs 
based on local observations and either submit these pdfs (or finite-dimensional representations thereof)
to a fusion center or fuse them in a distributed, peer-to-peer manner
\cite{Bailey12, 
bandyopadhyay2018distributed, 
battistelli2014kullback, Chang10, 
deng2012sequential, 
Gunay16, hu2011diffusion, julier1997non, 
Lehrer19, 
Punska99, tang2018information}.
In particular, probabilistic opinion pooling 
plays an important role in multisensor target tracking 
\cite{
Clark10, 
da2021recent, 
fantacci2018robust,
hu2011diffusion, 
Lehrer19, li-battistelli2021distributed, 
li2020arithmetic, 
li-hlawatsch2021distributed, 
Uney13, 
yi2020distributed}.
For tracking an unknown number of targets, probabilistic opinion pooling has recently also been applied to the ``multiobject'' pdfs or
to the probability hypothesis densities (i.e., the ``densities'' of the first moment measures)
of finite point processes, also known as random finite sets 
\cite{Clark10, 
da2021recent, 
fantacci2018robust, gao2020fusion, 
gostar2020cooperative,
li-battistelli2021distributed, li-battistelli2019computationally, 
li2020arithmetic, 
li-hlawatsch2021distributed, mahler2000optimal,
Uney13, 
yi2020distributed}.
Although in this work we do not consider finite point processes, much of our discussion is also relevant in that domain. 
The application of probabilistic opinion pooling to multisensor target tracking will be discussed in more detail in Section \ref{sec: examples_TT}.

\vspace{1.5mm}

\item 
In probabilistic machine learning, 
several scenarios suggest
the combination of probability distributions.
For example,
the concept of ensemble learning \cite{lakshminarayanan2017simple}
is based on applying 
multiple learning algorithms whose outputs
are combined to obtain an aggregate
result that is more accurate than that of any of the individual learning algorithms in the ensemble.
Furthermore,
in federated learning \cite{li2020federated,yurochkin2019bayesian,thorgeirsson2021probabilistic}, multiple  edge devices  learn statistical models individually from their  local data sets without explicitly exchanging these data sets, and
a fusion center aggregates the learned models without having access to the original data. 
This is attractive for privacy-sensitive applications, since no private data have to be shared.
More details on probabilistic machine learning are provided in Section \ref{sec: examples_GenMod}.

\vspace{1.5mm}

\item 
The main goal in the combination of forecasts \cite{clemen1989combining,armstrong2001combining}
is the estimation of a parameter by combining several different models.
To this end, certain methods perform a fusion
of pdfs and usually refer to it as ``combining density forecasts'' \cite{wallis2005combining,mitchell2005evaluating,winkler2019probability}.
This application will be addressed in more detail in Section \ref{sec: examples_Forec}. 

\vspace{1.5mm}

\item 
In Bayesian model averaging, several different models are used to derive different posterior pdfs based on the same data \cite{hoeting1999bayesian,fragoso2018bayesian}.
An aggregate
pdf is derived
as a weighted average of the individual pdfs, where the weights are given by the posterior probabilities of the models.
Bayesian model averaging has been widely used in phylogenetics \cite{posada2008jmodeltest,darriba2012jmodeltest}, economics \cite{moral2015model,steel2020model},
ecology \cite{dormann2018model},
and many other fields \cite{turkheimer2003undecidability,montgomery2010bayesian}.

\vspace{1.5mm}

\item 
Traditional implementations of Monte Carlo-based inference schemes do not easily scale to large data sets (``big data'').
A common expedient
then is to partition the data set into subsets and obtain a partial
posterior pdf approximation for
each subset. 
The partial
approximations are subsequently fused into an approximation of the overall
posterior pdf, which, thereby, takes into account the full data set \cite{neiswanger2014asymptotically,wang2013parallel,bardenet2017markov}. 
More details on this application are given in Section~\ref{sec: examples_GenMod}.
Another approach \cite{rabin2011wasserstein, srivastava2015wasp} directly fuses  sample representations of distributions by interpreting these samples as a weighted sum of Dirac measures. 


\end{itemize}

To focus the scope of the present work, we assume for the most part that the fusion center does not have any additional data
about the random vector $\btheta$ beyond the pdfs provided by the agents. (Here, an
exception is given by the supra-Bayesian setting studied in 
Sections~\ref{sec: fusion_of_distributions} and \ref{sec:lingaumeam}, where we assume that the fusion center knows a statistical model related to $\btheta$.)
In particular, the fusion center cannot access any training data that were 
used by the agents, e.g., to derive a global posterior pdf, and it does not have any validation data that it could use to validate the agents' pdfs.
Thus, although the fundamental problems are similar, we will not consider several ensemble learning methods such as stacking 
\cite{wolpert1992stacked,breiman1996stacked} or many other machine learning settings 
related to probabilistic fusion \cite{shazeer2017outrageously,hoang2019collective,liu2020gaussian}.
Furthermore, given our focus on pdfs rather than discrete probability distributions, we will not touch upon methods  tailored to the combination of classifiers, 
another large and 
growing field \cite{wozniak2014survey}.
Finally, we are interested in obtaining a pdf and not merely a point estimate of $\btheta$.
This is motivated by the fact that the pdf of $\btheta$ contains all the probabilistic information about $\btheta$ and can thus be used to obtain point estimates or other types of statistics.
Hence, certain works on multimodel inference \cite{burnham2002practical} and
the combination of forecasts \cite{clemen1989combining,armstrong2001combining,wallis2011combining} share some ideas with the present work but ultimately have a different focus.

\subsection{Approaches to Probabilistic Opinion Pooling}
\label{sec:approaches}

Although the probabilistic opinion pooling problem may appear simple and elementary, 
no single pooling function is universally accepted or uniformly best.
Generally speaking, we would like the pooling function to involve the agent pdfs $q_k(\btheta)$ in a way that follows some rationale.
This rationale and the resulting choice of a pooling function may depend on the overall problem setting, application-specific aspects, side constraints, 
additional information available to the fusion center, and other considerations. 
The probabilistic opinion pooling problem has been studied for many decades, 
and substantial research efforts have been dedicated to the definition or derivation of pooling functions. 
One of the earliest works is \cite{stone1961opinion}, where the linear pooling function
was introduced. 
Several survey articles on probabilistic opinion pooling with detailed literature reviews
have been published \cite{genest1986combining,cooke1991experts,clemen1999combining,dietrichprobabilistic}, however often with a focus on discrete random variables. 

In this work, we consider three principled approaches to defining a pooling function for pdfs.
In what we call the \emph{axiomatic approach}, the pooling function is defined indirectly by a set of properties (axioms) that it is required to satisfy. 
For example, it may be reasonable to require that the aggregate
pdf $q(\btheta)$ does not depend on the indexing order of the agent pdfs $q_k(\btheta)$, or that for equal $q_k(\btheta)$---i.e., unanimity among all the agents---the aggregate 
pdf $q(\btheta)$ conforms to that unanimous opinion.
Most of the early literature in the field was dedicated to the axiomatic
approach \cite{genest1986combining,dietrichprobabilistic}.
An axiomatic approach is also adopted
in the literature based on \textit{imprecise probabilities} \cite{stewart2018probabilistic}.
There, the idea is to define pooling operators that map from the agents' probability mass functions (pmfs) to a set of pmfs rather than a single pmf. 
To the best of our knowledge, the concept of imprecise probabilities has so far been considered only for discrete probability spaces \cite{stewart2018probabilistic,stewart2018learning}. 

In the \emph{optimization approach}, the pooling function is the result of an optimization, i.e., the minimization or maximization of an objective function. 
Usually, the idea is that the aggregate pdf $q(\btheta)$ should be as close as possible to all the agent pdfs $q_k(\btheta)$ simultaneously. 
This can be formulated as a minimization involving an information-theoretic divergence \cite{abbas2009kullback,garg2004generalized,da2019kullback} or a distance 
measure \cite{garg2004generalized,agueh2011barycenters}.
The resulting optimum $q(\btheta)$ can typically be interpreted as an ``average'' of the $q_k(\btheta)$.

Finally, the \emph{supra-Bayesian approach} considers the fusion center as a Bayesian observer that interprets the agent pdfs $q_k(\btheta)$
as random observations. 
This Bayesian observer builds on additional information about the dependence of these pdfs on
$\btheta$ 
(represented by the
conditional probability distribution $p(q_1, \dots, q_K  \,\vert\, \btheta)$ of the random functions
$q_1, \dots, q_K$ given $\btheta$) to calculate a posterior pdf, which  then constitutes the  fusion result \cite{winkler1968consensus,morris1977combining}.
Most of the early literature \cite{winkler1981combining,lindley1983reconciliation,clemen1985limits} describes $p(q_1, \dots, q_K  \,\vert\, \btheta)$ implicitly
by assuming that the joint distribution of the errors $\bmu_k-\btheta$ 
(where $\bmu_k$ is the expectation of $\btheta$ induced by the pdf
$q_k$) is multivariate Gaussian.
This 
reduces the fusion problem to the calculation of the posterior pdf for a simple Bayesian linear Gaussian model where the $\bmu_k$ 
are treated as observations at the fusion center and the covariance structure is known.
The practically most 
important scenario in the supra-Bayesian approach is where each agent has access to certain random observations that are statistically dependent on the random vector $\btheta$,
and both the agents and the fusion center have knowledge of a prior distribution of $\btheta$ and of the local likelihood functions of 
the agents. 
The agent pdf $q_k(\btheta)$ is here
given by the agent's local posterior pdf.
The fusion center is also aware of any statistical dependencies between the observations of different agents, which are described
by a global likelihood function.

\subsection{Contributions and Paper Organization}
\label{sec:contributions}

The diversity of approaches, perspectives, and solutions related to probabilistic opinion pooling motivates a survey that presents the theory and methodology
of the field
in a coherent manner. 
The present paper attempts to answer this call.
In addition, it provides a number of original contributions and results, including the following:

\begin{itemize}

\vspace{1.5mm}

\item 
A rigorous and coherent
treatment of probabilistic opinion pooling for a \emph{continuous} random vector $\btheta$ and, accordingly,
for the fusion of pdfs. In particular, for the first time, the axiomatic approach is rigorously and thoroughly discussed for 
pdfs (Section~\ref{sec: pop_axiomatic_appr}).
So far, the focus in the literature has mostly been on discrete probability distributions, and it has been claimed that analogous results hold for pdfs. 
Although this is indeed often the case, the non-atomic structure of the pdf setting sometimes allows for stronger or different results.

\vspace{1.5mm}

\item 
The definition of a new pooling function, referred to as ``generalized multiplicative pooling function'' (Section~\ref{sec:genmult}).

\vspace{1.5mm}

\item 
Two new axioms for pooling functions, referred to as ``factorization preservation'' and ``generalized Bayesianity''
(Axioms~\ref{ax:FP} and \ref{ax:genbayes} in Section~\ref{subsec: axioms_opinion_pooling}).

\vspace{1.5mm}

\item 
Several new theorems presenting axiomatic characterizations of pooling functions for pdfs and related results
(Theorems~\ref{th:linearpooling}, \ref{th:genlinpool}, and \ref{th:mult_pooling}--\ref{th:implstruct} in Section~\ref{subsec:rel_axioms} and Appendices~\ref{app:proof_linearpooling}--\ref{app:thimplstruct}). 
These theorems are partly adaptations of existing results formulated
for discrete probability distributions 
and partly entirely new results. 

\vspace{1.5mm}

\item 
Proofs of the following results: 
the pooling function minimizing the weighted sum of $\alpha$-divergences is given by the weighted H\"older mean;
the pooling function minimizing the weighted sum of Pearson $\chi^2$-divergences is given by the weighted harmonic mean;
the pooling function minimizing the weighted sum of $L_2$ distances is given by the weighted arithmetic mean
(Theorems~\ref{th:optimal_hellinger} and \ref{th:optimal_l2}
in Sections~\ref{sec: pop_optim_appr_hellinger} and \ref{subsec: frechet_means}
and Appendices~\ref{appendix: proof_weighted_hellinger_simplex} and \ref{appendix: proof_weighted_l2_simplex}).
Furthermore, we derive the solution to the problem of minimizing a general class of weighted symmetric distance functions (Theorem~\ref{th: generalized_pooling_frechet} in Section~\ref{subsec: frechet_means} and Appendix~\ref{appendix: proof_weighted_general_frechet}).

\vspace{1.5mm}

\item 
A new framework of supra-Bayesian fusion of posterior pdfs in terms of finite-dimensional 
``local statistics'' (Sections~\ref{sec:abstract-bayesian} through \ref{sec:bayagdepmeam}). 
This includes an explicit pooling function for the case of agents collecting conditionally independent observations
(Theorem~\ref{th:bayesindep} in Section~\ref{sec:exbayes}),
a formal definition of and result for finite-dimensional supra-Bayesian fusion (Definition~\ref{def:supra} and Theorem~\ref{th:supraBayes} in Section~\ref{sec:abstract-bayesian}),
and 
a general procedure for establishing a fusion rule for the case of agents collecting conditionally dependent observations (Section~\ref{sec:bayagdepmeam}).
 
\vspace{1.5mm}

\item 
A detailed study of supra-Bayesian fusion of posterior pdfs for linear Gaussian  models (Section~\ref{sec:lingaumeam}), including the derivation of explicit pooling functions 
and fusion rules (Sections~\ref{sec:gaussscalar} and \ref{sec:gaussvector}, Appendices~\ref{appendix:scalarbayesfusionall} and \ref{appendix:vectorbayesfusion}).

\vspace{1.5mm}

\end{itemize}

The paper's structure is as follows. 
In Section \ref{sec: examples}, we illustrate the applicability and relevance of probabilistic opinion pooling by discussing three specific example applications.
In Section~\ref{sec: probabilistic_opinion_pooling}, we formulate the probabilistic opinion pooling problem for pdfs and
present a collection of specific pooling functions.
Section~\ref{sec: pop_axiomatic_appr} discusses the axiomatic approach to opinion pooling and provides
several new characterization theorems. 
In Section~\ref{sec: pop_optim_appr}, we consider the optimization approach to opinion pooling.
We describe various optimization criteria and show that they partly lead to the same pooling functions as the axiomatic approach
and partly to different pooling functions such as the family of H\"older means. 
The 
fusion of Gaussian distributions using the pooling functions from Sections~\ref{sec: probabilistic_opinion_pooling} and \ref{sec: pop_optim_appr}
is considered in Section~\ref{sec: opinion_pooling_with_gaussians}. 
Section~\ref{sec: choosing_the_weights} addresses
the choice of the weights involved in the two most prominent and popular pooling functions, namely, the linear and log-linear pooling functions,
as well as the choice of the parameter involved in the H\"older pooling function. 
In Section~\ref{sec: fusion_of_distributions}, we present
a new view of the supra-Bayesian pooling approach using finite-dimensional parametrizations.
The results of Section~\ref{sec: fusion_of_distributions} are specialized to linear Gaussian  models in Section~\ref{sec:lingaumeam}. 
The model of  Section~\ref{sec:lingaumeam} includes as a special case the supra-Bayesian setting presented in \cite{winkler1981combining,lindley1983reconciliation,clemen1985limits}.
We broaden this setting significantly and 
present detailed fusion rules.
In Section~\ref{sec:outlook}, we provide suggestions for future research, and in Section~\ref{sec: conclusion},
a summary
of our main insights and results.  
Detailed proofs of our main results are provided in several appendices.

\subsection{Notation}
\label{sec:notation}

We will use the following basic notation. 
Vectors are denoted by boldface lower-case letters (e.g., $\mathbf{t}$ and $\btheta$),
matrices
by boldface upper-case letters (e.g., $\mathbf{H}$ and $\boldsymbol{\Sigma}$),
and sets and events by calligraphic letters (e.g., $\mathcal{A}$).
The transpose is
written as $(\cdot)^\intercal\!$.
We write $\mathbf{I}_{d}$ for the identity matrix of dimension $d$,
$\mathbf{0}_{d_1\times d_2}$ for the $d_1\times d_2$ zero matrix,
$\mathbf{1}_{d}$ for the all-one vector of dimension $d$,
and $\otimes$ for the Kronecker product.
The symbol $\mathcal{P}$ denotes the set of all pdfs, 
and $\mathcal{S}_K$ denotes
the probability simplex on $[0,1]^K$, i.e., the set of all  $(w_1, \ldots, w_K) \in [0,1]^K$ with $\sum_{k=1}^K w_k=1$.
For a set or event $\mathcal{A}$, 
we denote the complement as $\mathcal{A}^c$, 
the indicator function as $\mathbbm{1}_{\mathcal{A}}$, and the Lebesgue measure as $\lvert \mathcal{A}\rvert$.
Further notation is listed in Table~\ref{tab:TableOfNotation}.

\begin{table}[t]%
\centering 
\begin{tabular}{p{1.9cm} c p{5.25cm}}
\toprule\\[-3.5mm]
\multicolumn{3}{c}{\textbf{\emph{Probabilistic opinion pooling}}} \\[.3mm]
\toprule\\[-2.8mm]
$q_k(\btheta)$ & --- & pdf of agent $k$
\\[.5mm]
$q(\btheta)$ & --- & aggregate (fused) pdf
\\[.5mm]
$Q_k({\cal A})$ & --- & probability of event ${\cal A}$ 
according to 
$q_k(\btheta)$ \\[.5mm]
$Q({\cal A})$ & --- & probability of event ${\cal A}$ according to 
$q(\btheta)$\\[.5mm]
$\bmu_{q_k}$ & --- & mean associated with 
$q_k(\btheta)$ \\[.5mm]
$\bmu_{q}$ & --- & mean associated with 
$q(\btheta)$ \\[.5mm]
$\bSigma_{q_k}$ & --- & covariance matrix associated with 
$q_k(\btheta)$ \\[.5mm]
$\bSigma_{q}$ & --- & covariance matrix associated with 
$q(\btheta)$ \\[1.5mm]
\bottomrule\\[-2.5mm]
\multicolumn{3}{c}{\textbf{\emph{Supra-Bayesian framework}}} \\[.3mm]
\toprule\\[-2.8mm]
$\y_k$ & --- & local observation vector of agent $k$
\\[.5mm]
$\y$ & --- & global observation vector (stacking all $\y_k$)\\[.5mm]
${\bf t}_k$ & --- & local statistic of agent $k$
\\[.5mm]
${\bf t}$ & --- & stacked vector of all local statistics ${\bf t}_k$ \\[.5mm]
$p(\btheta)$ & --- & prior pdf \\[.5mm]
$\ell_k(\btheta)$ & --- & local observation likelihood function of\\[-.2mm]
&& agent $k$
\\[.5mm]
$\ell(\btheta)$ & --- &  global observation likelihood function \\[.5mm]
$\lambda_k(\btheta)$ & --- & local ${\bf t}_k$-likelihood function of agent $k$
\\[.5mm]
$\lambda(\btheta)$ & --- &  global ${\bf t}$-likelihood function \\[.5mm]
$\pi_k(\btheta)$ & --- & local posterior pdf of agent $k$\\[1.5mm]
\bottomrule\\[-2.5mm]
\multicolumn{3}{c}{\textbf{\emph{General notation}}} \\[.3mm]
\toprule\\[-2.8mm]
$g[\hspace{.2mm}\cdot\hspace{.2mm}]$ & --- & pooling function 
\\[.5mm]
$g[q_1, \ldots, q_K](\btheta)$ & --- & fused pdf resulting from application of pooling\\[-.2mm]
&& function $g$ to pdfs $q_1(\btheta), \ldots, q_K(\btheta)$ 
\\[.5mm]
$\mathbb{E}_\psi[\cdot]$ & --- & expectation operator with respect to pdf $\psi(\btheta)$ \\[.5mm]
$\mathbb{E}[\cdot]$ & --- & expectation operator with respect to the joint\\[-.2mm]
&& pdf of all involved random variables \\[.5mm]
$\mathcal{N}(\btheta; \bmu, \bSigma)$ & --- & pdf of a Gaussian random vector $\btheta$ with mean\\[-.2mm]
&& $\bmu$ and covariance matrix $\bSigma$ \\[.6mm]
\bottomrule
\end{tabular}
\caption{Notation}
\label{tab:TableOfNotation}
\end{table}

\section{Illustrative Applications}
\label{sec: examples}

To illustrate the broad applicability of probabilistic opinion pooling or, more concretely, 
of the fusion of pdfs, we consider three illustrative applications in more detail.

\subsection{Target Tracking}
\label{sec: examples_TT}

Target tracking aims to 
estimate the time-varying state (e.g., position and velocity) 
of a ``target'' from a sequence of observations \cite{anderson1979optimal, ristic2003beyond}.
Applications 
include 
aeronautical and maritime situational awareness, surveillance, autonomous driving, biomedical analytics, 
remote sensing, and robotics.
The performance of target tracking can be enhanced by using
multiple sensors.
This can be done in an
optimal manner 
if the multisensor 
observation model
is completely known, including possible statistical dependencies between the observations. However, 
in many cases, 
a simplified
approach to multisensor target tracking 
based on probabilistic opinion pooling is adopted. 
Each sensor node operates a Bayesian
filter that, at each time step, calculates a local posterior pdf of the current 
state based solely on the observation of that sensor.
Fig.~\ref{fig:targettracking} illustrates the 
local posterior pdfs of two sensor nodes at two different time steps.
The local posterior pdfs of the various sensor nodes are then fused using,
typically, log-linear pooling 
or its second-order version known as covariance intersection
\cite{julier1997non, chong2001convex, hurley2002information, Chang10, hu2011diffusion, Bailey12, deng2012sequential, battistelli2014kullback, Gunay16, Lehrer19} 
(see Sections \ref{sec:loglinear} and \ref{subsec: fusion_gaussian_log_linear_pooling}).
This 
approach 
is practically convenient because
(i) the multisensor fusion is decoupled from the filtering, and 
(ii) it works for any choice of Bayesian filter methods used at the sensor nodes
and for any sensing modalities, even when they are different at different sensor nodes. 
These  
characteristics 
make the probabilistic opinion pooling approach well suited to heterogeneous and/or decentralized sensor networks.

\begin{figure}[t]
\centering
\includegraphics[trim={0cm 6cm 7cm 0cm}, clip, width=\linewidth]{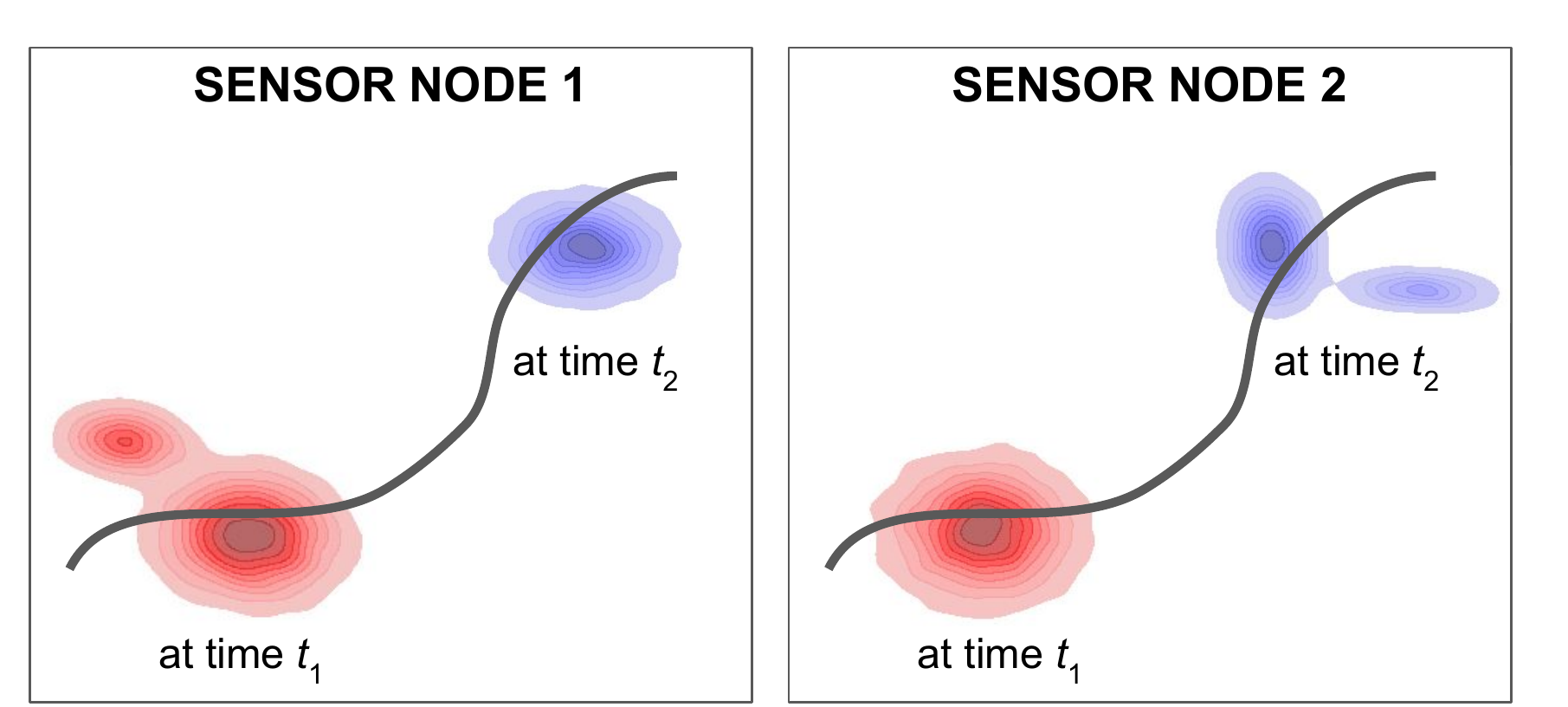}
\caption{Schematic illustration of the state trajectory 
of a target and the local posterior pdfs of two sensor nodes at two different time steps.}
\label{fig:targettracking}
\vspace{-2mm}
\end{figure}

A nontrivial extension of target tracking 
is \emph{multitarget} tracking, which involves an unknown time-varying number of targets
and a more complicated observation model
\cite{Bar-Shalom95,blackman1999design,Mahler07,Bar-Shalom11,Challa2011fundamentals,mahler2014advances,Koch14,meyer2018message}.
More specifically, targets can appear and disappear randomly, and there are missed detections (i.e., some sensors do not produce observations for some of the targets), clutter or false-alarm observations (which are not related to any target), and an observation-origin uncertainty (i.e., the sensor nodes do not know whether a given observation originated from a target, and from which target, or is clutter).
Probabilistic opinion pooling can be used both for ``vector-based'' multitarget tracking methods, 
which describe the joint state of the targets by a random vector, and for ``set-based'' methods, 
which describe it by a random finite set or equivalently a finite point process
\cite{Daley03, Mahler07, mahler2014advances}. 
In the vector-based case, 
the target states are fused individually using, typically, log-linear pooling or covariance intersection.
This presupposes an association of the target states across the 
sensors \cite{kaplan2008assignment,maresca2014maritime}.


In set-based methods, on the other hand, probabilistic opinion pooling is applied either to the 
posterior \emph{multiobject pdfs} or to the posterior \emph{probability hypothesis densities} (PHDs) of the 
sensor nodes,
which provide two alternative \emph{joint} descriptions of all the target states \cite{Mahler07, mahler2014advances}.
Here, both log-linear pooling---also termed
geometric average fusion, exponential mixture density, generalized covariance intersection, or Kullback-Leibler averaging
\cite{mahler2000optimal, Clark10, uney2010monte, Uney13, mahler2013toward, da2021recent, da2019kullback, li-fan2019second, li-battistelli2021distributed}---and
linear pooling (see Section \ref{subsec: fusion_rules_axioms_lop})---also termed arithmetic average fusion and minimum information loss fusion 
\cite{yu2016distributed, Li-T_19, li-fan2019second, gostar2020cooperative, li2020arithmetic, yi2020distributed, gao2020multiobject, da2021recent, li-hlawatsch2021distributed, li-battistelli2021distributed, da2019kullback}---have been used.
Log-linear pooling is more sensitive to 
missed detections 
whereas
linear pooling is more sensitive to 
clutter.
Regarding this sensitivity
tradeoff, 
we note
that pooling functions that are intermediate between the linear and log-linear ones are provided by the family of H\"older pooling functions to be presented in Section \ref{sec:holder}.

Finally, both log-linear and linear pooling have recently been generalized to 
multitarget tracking methods based on \emph{labeled} random finite sets, which track the identities 
of the targets in addition to their states 
\cite{fantacci2015consensus, fantacci2018robust, li-yi2018robust, li-battistelli2019computationally, gao2020fusion,kropfreiter2020probabilistic}.
Some of these methods require a label association step that is similar in spirit to the target
association step required by vector-based methods
\cite{li-yi2018robust, li-battistelli2019computationally, kropfreiter2020probabilistic, gao2020fusion}.


\subsection{Probabilistic Machine Learning}
\label{sec: examples_GenMod}

Probabilistic machine learning  \cite{ghahramani2015probabilistic,gal2016dropout} 
has recently seen applications in many different areas including quantum molecular dynamics \cite{krems2019bayesian}, 
disease detection \cite{leibig2017leveraging},
medical diagnosis \cite{begoli2019need},
scene understanding \cite{kendall2015bayesian},
and geotechnical engineering \cite{ching2019constructing}.
In machine learning, \textit{uncertainty quantification} for predictive models is required
for problems that involve
risk assessment.
Unfortunately, classical machine learning models do not account for parameter uncertainty, which makes them more susceptible to failure when dealing with unseen and/or unrelated data \cite{ovadia2019can}. This is a prominent issue for deep learning models \cite{abdar2021review}.
One way to account for predictive uncertainty in machine learning is to adopt a Bayesian framework: using training data, a prior pdf over the model parameters is updated to obtain a posterior pdf.
This posterior pdf is then used to calculate a predictive pdf
for unobserved data (test data).
This pdf is often represented in parametric form---e.g., a Gaussian pdf is parameterized by its mean and covariance matrix---or by a set of samples. Examples of Bayesian machine learning models include Bayesian linear regression, Bayesian neural networks \cite{wang2016towards, wilson2020bayesian}, Gaussian processes \cite{rasmussen2006gaussian}, and deep Gaussian processes \cite{salimbeni2017doubly}.

\begin{figure}[t]
\centering
\includegraphics[trim={0cm 5.25cm 13.5cm 0cm}, clip, width=\linewidth]{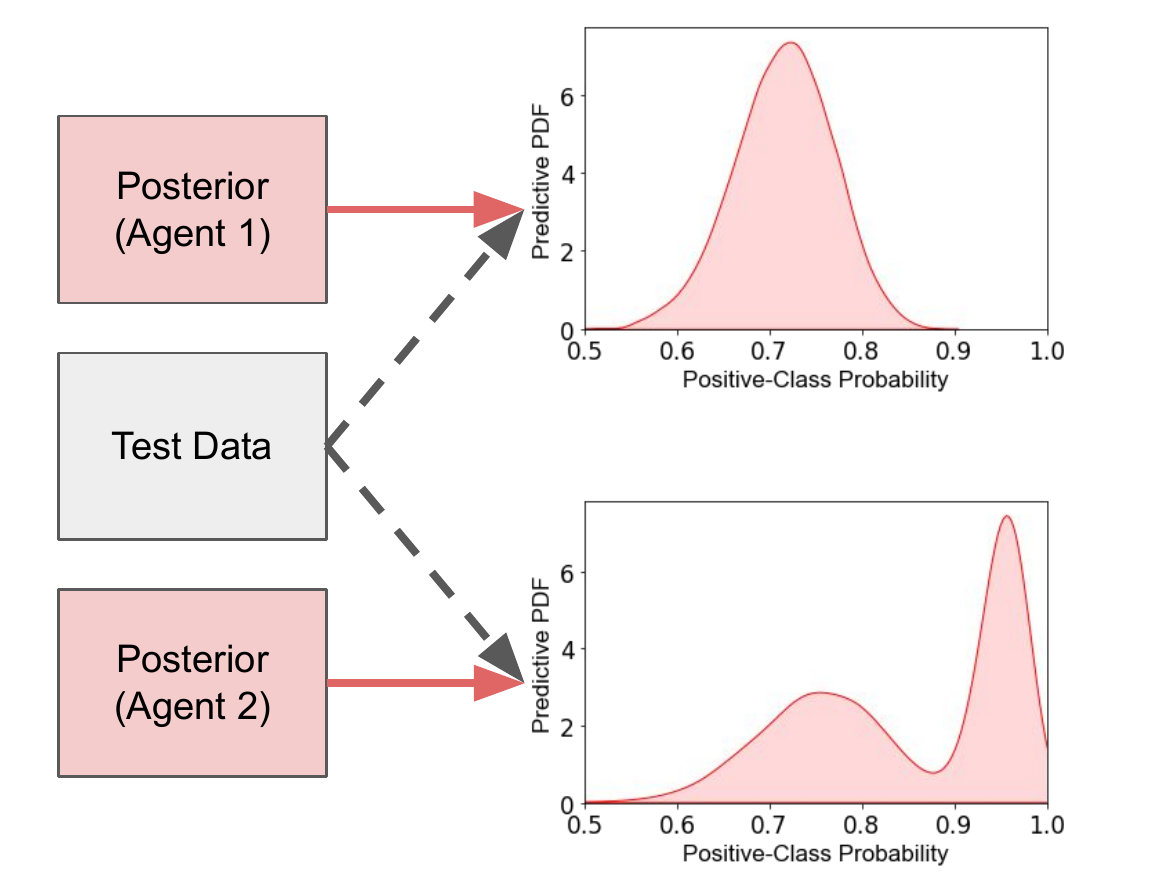}
\caption{Bayesian machine learning in the context of binary classification with two agents. Each agent obtains a posterior pdf from training data and uses it to derive its predictive pdf of the probability that test data belong to the  positive class.
These predictive pdfs are subsequently combined to obtain an aggregate predictive pdf.}
\label{fig:bml}
\vspace{-2mm}
\end{figure}



In certain scenarios of probabilistic machine learning, probabilistic opinion pooling can be used
to resolve practical challenges. 
For example, the choice of a model (or an architecture, or a set of parameters) is frequently not obvious, and thus there is a model uncertainty that has to be taken into account to ensure robustness and generalization.
A class of methods dealing with this issue is known as \textit{ensemble learning}. 
The learning is carried out by a collection of algorithms based on different models,
and the final result of classification, regression, or clustering is obtained by combining the individual results \cite{hastie2009friedman,murphy2012machine,sagi2018ensemble,opitz1999popular,polikar2006ensemble,rothe2016comparison}. 
The combination of the results of individual probabilistic learning algorithms can be implemented via probabilistic opinion pooling, i.e., by fusing the predictive pdfs produced by the individual algorithms. 
An example in the context of binary classification is shown in Fig.~\ref{fig:bml}.
Probabilistic opinion pooling in ensemble learning has been successfully applied, e.g., in the context of deep ensembles \cite{lakshminarayanan2017simple}, neural network ensembles \cite{lee2009neural}, and ensemble Gaussian processes \cite{lu2020ensemble}.  
Note that in ensemble learning, unlike in multisensor signal processing and, in particular, target tracking as discussed in the previous subsection, all the algorithms may operate on the same set of data.
%


Another practical challenge in machine learning is posed by privacy-sensitive scenarios.
Here, local (private) data observed at individual nodes may not be disseminated across the nodes or to a fusion center, and thus can be used only to train local models at the respective nodes. This framework, 
often referred to as \textit{federated learning},
requires the combination of local models at a fusion center \cite{li2020federated,yurochkin2019bayesian,thorgeirsson2021probabilistic,savazzi2020federated}.
Although in many instances of federated learning, 
updates are also communicated from the fusion center to the nodes, several works consider problem settings along the lines of probabilistic opinion pooling.
For example, agnostic federated learning \cite{mohri2019agnostic} combines sample representations of probability distributions trained on private data into an aggregate distribution.


Finally, the application of machine learning methods to ``big data'' scenarios calls for divide-and-conquer strategies that partition the data to much smaller sets, perform learning on each set, and combine the respective predictive or posterior distributions 
\cite{neiswanger2014asymptotically,wang2013parallel,bardenet2017markov,vehtari2020expectation}. 
Here, a focus has so far been on Markov chain Monte Carlo (MCMC) samplers for Bayesian inference \cite{neiswanger2014asymptotically,wang2013parallel,bardenet2017markov}.
For example, in \cite{neiswanger2014asymptotically}, the idea is to 
generate a ``subposterior'' for each small dataset and combine the subposteriors using the multiplicative pooling function (see Section~\ref{sec:multpool}).
Each subposterior is initially represented by a set of samples produced by an MCMC sampler but is then
converted into a continuous pdf given by a kernel density estimate. The different pdfs
are finally fused to form an approximation to the overall posterior pdf. 
This approach can be motivated by the fact, to be shown in Section~\ref{sec:exbayes}, that under a suitable conditional independence assumption a  multiplicative pooling function 
operating on the subposteriors
gives the overall posterior pdf.
The use of probabilistic machine learning has so far been restricted
by the fact that many popular methods of machine learning do not provide probabilistic results. 
However, we expect that the outcomes of recent and ongoing 
research will 
remove this limitation and thereby increase the successful application of probabilistic opinion pooling in this field.

\subsection{Forecasting}
\label{sec: examples_Forec}

The goal of forecasting is to predict future values of some variable of interest based on present and past observed data \cite{brockwell2016introduction}. An issue that may limit the performance of forecasting is a lack of confidence in the underlying model. This issue can be addressed by the \emph{combination of forecasts}, which fuses the forecasting results obtained with several different models \cite{clemen1989combining,armstrong2001combining,winkler2019probability}. While classical work has considered
point forecasts, 
\textit{probabilistic forecasting} uses a description of the variable of interest in terms of probability distributions. 
Here, for a long time, the focus was on discrete probability distributions
\cite{murphy1984probability}, and accordingly continuous random variables were approximated by discrete random variables through quantization. For example, in meteorology, the amount of precipitation was binned into a finite number of
categories \cite{bermowitz1979automated}. 
\begin{figure}[t]
\centering
\includegraphics[trim={0cm 0cm 0cm 0cm}, clip, width=\linewidth]{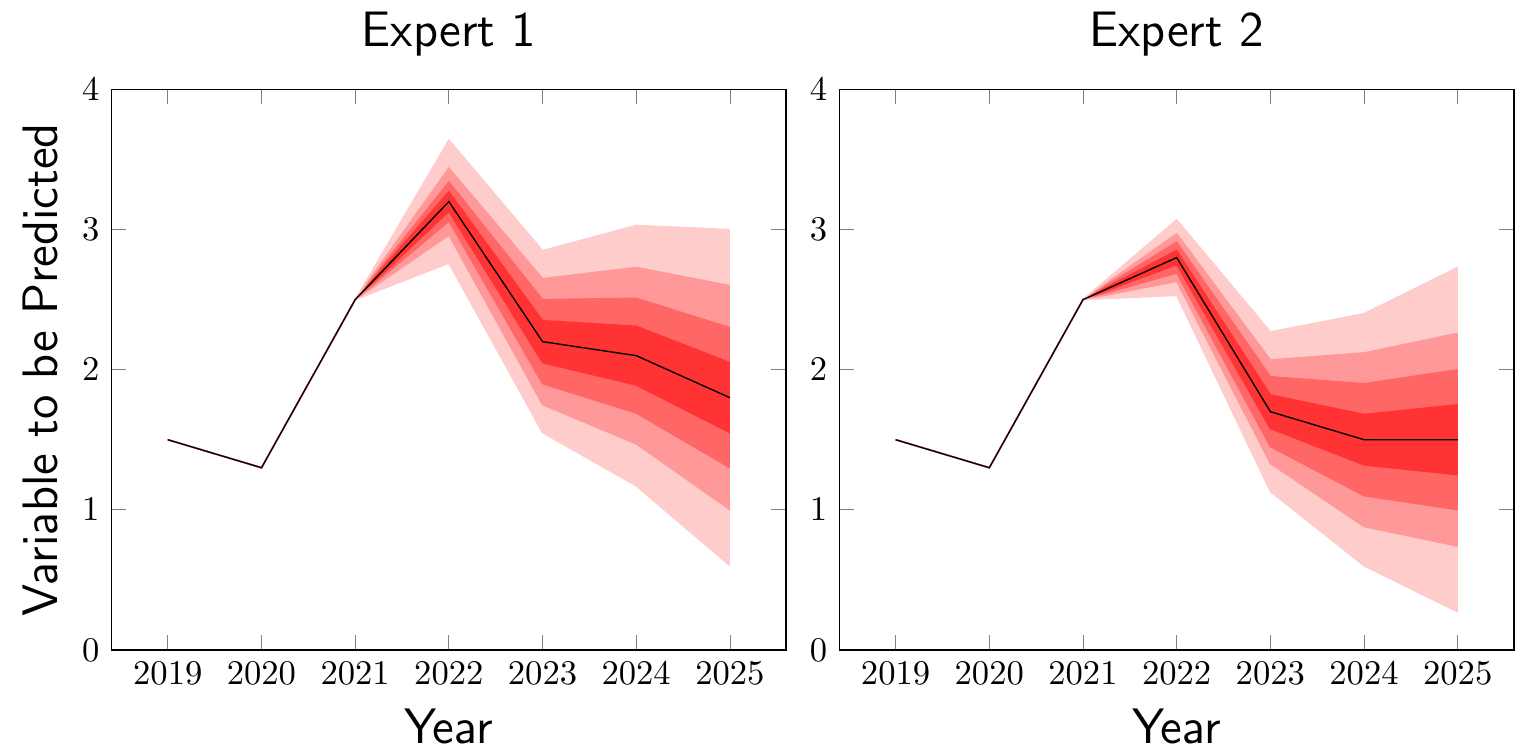}
\caption{Density forecasts of a variable (e.g., inflation) beyond 2021 made by two experts, visualized as fan charts.
The values of past years are already observed and thus fixed while predictions farther into the future become increasingly uncertain.}
\label{fig:forecasting}
\vspace{-2mm}
\end{figure}

By contrast, the idea of \textit{density forecasting} is to predict   continuous random variables directly in terms of their pdfs \cite{tay2000density}.
This is visualized by
Fig.~\ref{fig:forecasting}, which shows a fan chart representation of density forecasts made by two experts.
Density forecasting
was suggested already more than 50 years ago \cite{freiberger1965formulation,epstein1969stochastic}. However, the combination of density forecasts \cite{gneiting2013combining}---which is a special setting of the fusion of pdfs---was considered only much later.
Suggestions to combine density forecasts started with \cite{wallis2005combining,mitchell2005evaluating}, which discussed the optimization of the weights in the
linear pooling function based on training data.
At about the same time, the use of Bayesian model averaging \cite{hoeting1999bayesian} in forecasting was proposed  \cite{raftery2005using}, again resulting in a linear pooling function.
Also subsequent work focused on linear pooling \cite{hall2007combining}.
Nonlinear pooling functions were mostly obtained by a preprocessing of the individual pdfs (e.g., in the spread-adjusted linear pool \cite{glahn2009mos}) or by a postprocessing of the aggregate pdf (e.g., in the  Beta-transformed linear pool \cite{gneiting2013combining}).
Recently, the combination of density forecasts has also been studied in a nonparametric Bayesian setting based on the Beta-transformed linear pool \cite{bassetti2018bayesian}.

While the combination of density forecasts has the same goal 
as pdf fusion---namely, to fuse pdfs from different sources---there are two distinctive features. First, realizations of the random variable to be predicted
are observed on a regular basis, which enables an evaluation of density forecasts and their combinations based on new data. A significant part of the literature focuses on this aspect. Although beyond the scope of our work, such an evaluation can obviously be performed also within the general setting of pdf fusion if the required data are available. Second, forecasts usually concern one-dimensional random variables. This implies that the combination of forecasts can be formulated in terms of the one-dimensional cumulative distribution function (cdf), and more specific properties such as calibration \cite{gneiting2013combining} can be studied.
Also the combination of forecasts---in particular, the choice of weights---is often based on new data and the evaluation of the fused one-dimensional cdf \cite{clements2011combining}.

Probabilistic forecasting has been used in the broad domains of meteorology  \cite{murphy1984probability} and economics \cite{zarnowitz1969new,mitchell2005evaluating,tay2000density,steel2020model} and, more specifically and more recently,
in many disciplines including wind forecasting \cite{tastu2013probabilistic,baran2016mixture}, 
electric load forecasting \cite{ho2016probabilistic}, electricity price forecasting \cite{nowotarski2018recent}, and solar power forecasting \cite{doubleday2020probabilistic}.
The combination of density forecasts has, e.g., been considered in \cite{mitchell2005evaluating,baran2016mixture,doubleday2020probabilistic}, and we conjecture that successful deployments of this  variant of pdf fusion will emerge
in many further applications of probabilistic forecasting.


\section{Probabilistic Opinion Pooling}
\label{sec: probabilistic_opinion_pooling}

\subsection{Basic Framework} 

In probabilistic opinion pooling, we are interested in fusing  
the pdfs of $K$ agents or ``experts" into a single pdf. 
Let $\btheta\in\Theta\subseteq\mathbb{R}^{d_\theta}$ be a continuous random variable or vector defined on some probability space.%
\footnote{
Our results extend to arbitrary probability measures that are absolutely continuous with respect to a $\sigma$-finite non-atomic measure. However, to keep the presentation more easily accessible, we present all results in the familiar setting of pdfs on $\mathbb{R}^{d_\theta}$.}
Furthermore, let the pdf $q_{k}(\btheta)\in\mathcal{P}$ represent the {\it opinion} of the $k$th agent. 
The sequence of all opinions $(q_1, q_2, \ldots, q_K)\in\mathcal{P}^K$ is called the \textit{opinion profile}. 
We consider \textit{events} to be (measurable) subsets of $\Theta$. 
The probability of an event ${\cal A} \subseteq \Theta$ according to the opinion of the $k$th agent is given by
\begin{equation*}
    Q_{k}({\cal A}) = \int_{{\cal A}} q_{k}(\btheta) \mathrm{d}\btheta.
\end{equation*}
Given an opinion profile $(q_1, q_2, \ldots, q_K)$, a \textit{pooling function} $g\colon \mathcal{P}^K \rightarrow \mathcal{P}$ is used to fuse the agents' pdfs $q_k(\btheta)$ into a single pdf
\begin{equation*}
    q(\btheta)=g[q_1,\ldots,q_K](\btheta).
\end{equation*}
The probability of an event ${\cal A} \subseteq \Theta$ according to the fused pdf $q(\btheta)$ is then given by
\begin{equation*}
    Q({\cal A}) = \int_{\cal A} q(\btheta) \mathrm{d}\btheta.
\end{equation*}
The fused pdf $q(\btheta)$ summarizes the opinions of the $K$ agents and will be referred to as
the \textit{aggregate pdf}. The fusion of the agent opinions via the pooling function is done (at least virtually) at a \textit{fusion center}.

\subsection{Pooling Functions}
\label{subsec: fusion_rules_axioms}
Over the years, 
many different 
pooling functions $g$ have been proposed.  
We summarize some of them
in the following. 
These pooling functions will be reconsidered in later sections.

\subsubsection{Linear Pooling}
\label{subsec: fusion_rules_axioms_lop}
The most popular pooling function is the \textit{linear pooling function}, which was introduced in \cite{stone1961opinion}. 
Linear pooling aggregates the agent opinions through a weighted arithmetic average, i.e.,
\begin{equation}
    \label{eq: linear_pooling}
    g[q_1,\ldots,q_K](\btheta) = \sum_{k=1}^K w_{k} q_{k}(\btheta),
\end{equation}
where $(w_1,\ldots,w_K)\in \mathcal{S}_K$.

One can establish a connection between linear opinion pooling and model averaging \cite{hoeting1999bayesian}. 
Let us consider the joint distribution $q(\btheta, M)$ of the unknown random vector
$\btheta$ and a discrete ``model" random variable $M\in\{M_1,\ldots, M_K\}$. 
Furthermore, let $q(\btheta|M_k)$ denote the pdf of  $\btheta$ conditioned on model $M_k$ and  $P(M_k)$ denote
the probability of $M_k$. 
Then the marginal pdf of $\btheta$ is given by 
\begin{equation}
    \label{eq: model_averaging}
    q(\btheta) = \sum_{k=1}^K P(M_k) q(\btheta|M_k).
\end{equation}
This is equivalent to the linear pooling operation \eqref{eq: linear_pooling}, wherein the agent pdf $q_k(\btheta)$
is interpreted as the pdf of  $\btheta$ under model $M_k$,
the weight $w_k$  equals the probability of $M_k$, and the aggregate pdf $q(\btheta)$ is the marginal pdf of $\btheta$.

\subsubsection{Generalized Linear Pooling}
The \textit{generalized linear pooling function} defined in \cite{genest1984pooling}  
includes an arbitrary  pdf $q_0$ in the weighted arithmetic average \eqref{eq: linear_pooling}, i.e., 
\begin{equation}
    \label{eq:gen_linear_pooling}
    g[q_1,\ldots,q_K](\btheta) = \sum_{k=0}^K w_{k} q_{k}(\btheta),
\end{equation}
where $(w_0,\ldots,w_K) \in \mathcal{S}_{K+1}$.
We note that in the general, measure-theoretic formulation of generalized linear opinion pooling in \cite{genest1984pooling}, some weights $w_i$ are allowed to be negative. 
However, in the setting of fusing pdfs, this would result in a fusion rule $g$ that does not give a valid (nonnegative) pdf for all possible opinion profiles $(q_1,\ldots,q_K)$. 
Thus, we restrict to nonnegative weights. 
One possible interpretation of the pdf $q_0$ is as the opinion of the fusion center. Alternatively,  $q_0$ can be interpreted as a regularization.

\subsubsection{Log-linear Pooling}
\label{sec:loglinear}
Another popular pooling function is the \textit{log-linear pooling function} \cite{genest1984characterization}. 
This function aggregates the agent opinions using a weighted geometric average, i.e.,
\begin{equation}
    \label{eq: log_linear_pooling}
    g[q_1,\ldots,q_K](\btheta) = c \prod_{k=1}^K \left(q_k(\btheta)\right)^{w_k},
\end{equation}
where $c$ is a normalization factor given by 
\begin{equation}
    \label{eq:log_linear_pooling_normaliz}
    c = \frac{1}{\int_{\Theta} \prod_{k=1}^K \left(q_k(\btheta)\right)^{w_k} \mathrm{d}\btheta},
\end{equation}
and $(w_1,\ldots,w_K)\in \mathcal{S}_K$. 
To avoid the possibility of the integral in \eqref{eq:log_linear_pooling_normaliz} being zero and, thus, $c$ being undefined, this pooling function is usually only defined for pdfs that are positive on the domain $\Theta$.
We will refer to opinion profiles $(q_1, \dots, q_K)$ that satisfy
\begin{equation}
\label{eq:nonnegopprof}
    q_k(\btheta) > 0 \quad \text{ for all } \btheta \in \Theta\,
\end{equation}
as \textit{positive opinion profiles}.

The pooling function is called ``log-linear" because it is a linear function of the agent pdfs in the log-domain, i.e., the logarithm of the right-hand side of \eqref{eq: log_linear_pooling} is
\begin{align*}
    \log\left(c \prod_{k=1}^K \left(q_k(\btheta)\right)^{w_k}\right) = \log(c)+\sum_{k=1}^K w_k \log(q_k(\btheta)),
\end{align*}
which is a weighted arithmetic average (up to the additive constant $\log (c)$). 
We will therefore refer to the powers  $w_1,\ldots,w_K$ as ``weights."

\subsubsection{Generalized Log-linear Pooling}
Similar to the generalized linear pooling function, 
a generalization of the log-linear pooling function can be obtained by including an arbitrary function $\xi_0$ as an additional factor.
However, in contrast to the generalized linear pooling function, $\xi_0$ is not necessarily a pdf.
More specifically, the \textit{generalized log-linear pooling function} \cite{genest1986characterization} is defined as
\begin{equation}
    \label{eq:gen_log_linear_pooling}
    g[q_1,\ldots,q_K](\btheta) = c\, \xi_0(\btheta) \prod_{k=1}^K \left(q_k(\btheta)\right)^{w_k},
\end{equation}
where 
\begin{equation*}
    c = \frac{1}{\int_{\Theta} \xi_0(\btheta) \prod_{k=1}^K \left(q_k(\btheta)\right)^{w_k} \mathrm{d}\btheta},
\end{equation*}
$\xi_0$ is a bounded, positive function,
and $(w_1,\ldots,w_K)\in \mathcal{S}_K$.
Here, we again restrict to positive opinion profiles. 
The function $\xi_0$ can be used, e.g., to include the opinion of the fusion center or to regularize the fused density.

\subsubsection{H\"{o}lder Pooling}
\label{sec:holder}
 The following pooling function was apparently first suggested in \cite{cooke1991experts} as a generalization of the linear and log-linear pooling functions:
\begin{equation}
    \label{eq:holderpooling}
    g[q_1,\ldots,q_K](\btheta) = c\Bigg(\sum_{k=1}^K w_k(q_k(\btheta))^\alpha\Bigg)^{1/\alpha},
\end{equation}
where 
\begin{equation*}
    c= \frac{1}{\int_{\Theta}\big(\sum_{k=1}^K w_k(q_k(\btheta))^\alpha\big)^{1/\alpha}\mathrm{d}\btheta}
\end{equation*}
and 
$\alpha\in \mathbb{R}\setminus \{0\}$.
While for $\alpha \geq 1$ it can be shown that $c$ is defined for arbitrary opinion profiles, in the other cases we have to restrict to opinion profiles such that $c$ is defined.
Because the pooling function
in \eqref{eq:holderpooling} is the weighted H\"{o}lder mean (also called the generalized average) \cite{bullen2013handbook} of the agent pdfs  $q_k(\btheta)$,
we will refer to \eqref{eq:holderpooling} as the  \emph{H\"{o}lder pooling function}. 
The linear and log-linear pooling functions are special cases of the H\"older pooling function for $\alpha=1$ and $\alpha \to 0$, respectively.

\subsubsection{Inverse-linear Pooling}
The \emph{inverse-linear pooling function} (weighted harmonic average) is defined as
\begin{equation}
    \label{eq:inversepooling}
    g[q_1,\ldots,q_K](\btheta) = c\, \Bigg(\sum_{k=1}^K  \frac{w_k}{q_k(\btheta)}\Bigg)^{-1},
\end{equation}
where 
\begin{equation*}
    c= \frac{1}{\int_{\Theta}\big(\sum_{k=1}^K \frac{w_k}{q_k(\btheta)}\big)^{-1}\mathrm{d}\btheta} .
\end{equation*}
This is the special case of the H\"{o}lder pooling function for $\alpha=-1$.

\subsubsection{Multiplicative Pooling}
\label{sec:multpool}
The \textit{multiplicative pooling function}, proposed in \cite{dietrichprobabilistic} for pmfs, is defined as
\begin{equation}
    \label{eq: multiplicative_pooling}
    g[q_1,\ldots,q_K](\btheta) = c \left(q_0(\btheta)\right)^{1-K} \prod_{k=1}^K q_k(\btheta),
\end{equation}
where 
\begin{equation*}
    c = \frac{1}{\int_{\Theta} \left(q_0(\btheta)\right)^{1-K} \prod_{k=1}^K  q_k(\btheta) \mathrm{d}\btheta},
\end{equation*}
and $q_0$ is a   positive pdf
called the \emph{calibrating pdf}.
Here, we restrict to  positive opinion profiles and further assume that $q_k(\btheta)/q_0(\btheta)$ is bounded for all $k = 1, \dots, K$.
These assumptions  guarantee that the normalization constant $c$ is well-defined and nonzero.
In Section~\ref{sec:exbayes}, we will show that within the supra-Bayesian framework, the multiplicative pooling function is the correct fusion rule for combining posterior pdfs
in the case of conditionally independent observations. 
In that case, the calibrating pdf $q_0$ is the prior pdf used by the agents to form their posterior pdfs.

\subsubsection{Generalized Multiplicative Pooling}
\label{sec:genmult}
We propose another pooling function that is a generalization of both the generalized log-linear pooling function and the multiplicative pooling function.
In addition to a calibrating pdf $q_0$, we also allow for arbitrary weights in the  generalized log-linear pooling function \eqref{eq:gen_log_linear_pooling}.
More specifically, we define the \textit{generalized multiplicative pooling function} as
\begin{equation}
    \label{eq:gen_mult_pooling}
    g[q_1,\ldots,q_K](\btheta) = c \left(q_0(\btheta)\right)^{1-\sum_{k=1}^K w_k} \prod_{k=1}^K \left(q_k(\btheta)\right)^{w_k},
\end{equation}
where
\begin{equation*}
    c = \frac{1}{\int_{\Theta} \left(q_0(\btheta)\right)^{1-\sum_{k=1}^K w_k} \prod_{k=1}^K \left(q_k(\btheta)\right)^{w_k} \mathrm{d}\btheta},
\end{equation*}
$q_0$ is a positive calibrating pdf,
and the weights $w_1, \ldots, w_K\in \mathbb{R}$ are arbitrary real numbers. 
We again restrict to positive opinion profiles and assume that $(q_k(\btheta)/q_0(\btheta))^{w_k}$ is bounded for all $k = 1, \dots, K$.
In Section~\ref{sec:gaussscalar}, we will show that within the supra-Bayesian framework with a linear Gaussian  model, the generalized multiplicative pooling function is the correct fusion rule for combining posterior pdfs.

\subsubsection{Dictatorship Pooling}
The {\it dictatorship pooling function} maps the opinion profile to a single agent opinion, i.e., 
\begin{equation}
    \label{eq: dictatorship}
    g[q_1, \ldots, q_K](\btheta) = q_k(\btheta),
\end{equation}
for some fixed $k\in\{1,\ldots,K\}$. 
Although this function is a valid pooling function, one would not normally expect it to be a good choice.
\subsubsection{Dogmatic Pooling}
The
\textit{dogmatic pooling function}  enforces a fixed  pdf $q_0$  independently of the opinion profile, i.e., 
\begin{equation}
 \label{eq: dogmatic_pooling_function}
     g[q_1, \ldots, q_K](\btheta) = q_0(\btheta)\,.
\end{equation}
Again, this pooling function will not be suitable in most applications.

\section{The Axiomatic Approach} 
\label{sec: pop_axiomatic_appr}
Fundamentally, we would like the pooling function $g[q_1,\ldots,q_K]$ to depend directly on all the agent pdfs
$q_k$ in a way that follows some rationale. 
One principled approach to probabilistic opinion pooling is the axiomatic approach, which seeks to determine all 
pooling functions that satisfy a set of desirable properties (axioms). 
In this section, we first formulate some
axioms and then rigorously analyze the relationships between these axioms and the pooling functions presented in Section~\ref{subsec: fusion_rules_axioms}.

\subsection{Axioms}
\label{subsec: axioms_opinion_pooling}
To begin, one basic restriction we may impose 
on the pooling function is that it be
a symmetric function, i.e., a function whose arguments can be interchanged without altering the output of the function. This means that there is no ``natural order" of the agents, and all agents are treated equally. This is formally stated in the following axiom:
\begin{axiom}
\label{ax:S}
{\bf (Symmetry)}
For all permutations  $\beta\colon \mathcal{K}\rightarrow \mathcal{K}$ of the set $\mathcal{K}=\{1,\ldots,K\}$ and  all opinion profiles $(q_1, \ldots, q_K)$, the pooling function $g$ satisfies
\begin{equation*}
    g[q_1,\ldots, q_K](\btheta)=g[q_{\beta(1)},\ldots, q_{\beta(K)}](\btheta).
\end{equation*}
\end{axiom}
A symmetric pooling function seems to be desirable and natural since it treats the pdfs of the agents equally at the fusion center. 
However, if certain agents are known a priori to be more ``reliable" or ``informative" than other agents, then it may be reasonable to 
emphasize them in the pooling function. For example, in the linear or log-linear pooling function, we may assign larger weights $w_k$.
If this is done in a fixed manner, the pooling function is no longer symmetric. 
On the other hand, if the weights are chosen adaptively such that each weight
is an explicit function of the opinion profile and this adaptation rule involves each agent in the same way, then all agents are treated equally and
the resulting pooling function is still symmetric.
This will be further discussed in Section~\ref{sec: choosing_the_weights}.

Another basic property for a pooling function is the preservation of agreement among agents. For instance, if each of the agents believes that a certain event ${\cal A}\subset\Theta$ 
is a null event, i.e., the probability of 
${\cal A}$ is 0 according to all the agents, then ${\cal A}$ should also be a null event according to the aggregate pdf. 
This property is called the \textit{zero preservation property} (ZPP) \cite{mcconway1981marginalization}:
\begin{axiom}
\label{ax:ZPP}
{\bf (Zero Preservation)} 
For any event ${\cal A}\subset\Theta$, 
if $Q_k({\cal A})=0$ 
for all $k$, then $Q({\cal A})=0$.
\end{axiom}

The next property, termed \textit{unanimity preservation} \cite{dietrichprobabilistic}, 
 asserts that if the opinions of the agents are identical, then the aggregate pdf should conform to that unanimous opinion.
\begin{axiom}
\label{ax:UP}
{\bf (Unanimity Preservation)}
If for all events ${\cal A}\subseteq\Theta$, 
the probabilities $Q_k({\cal A})=p_{\cal A}$ 
coincide for all $k$, then $Q({\cal A})=p_{\cal A}$. 
Equivalently, if\,%
\footnote{We consider two pdfs to be equal if they are equal almost everywhere with respect to the Lebesgue measure.}
$q_k(\btheta)=q_0(\btheta)$ for all $k$ and some pdf $q_0(\btheta)$, then $q(\btheta)=q_0(\btheta)$.
\end{axiom}

Another property that may be desirable in a pooling function is the \textit{strong setwise function property} (SSFP) \cite{mcconway1981marginalization}. 
The SSFP states that the probability of an event
${\cal A}\subseteq\Theta$ 
according to the aggregate pdf $q(\btheta)$ can be expressed as a function of the probabilities of that event according to each agent, i.e., $Q_1({\cal A}),\ldots,Q_K({\cal A})$. 
\begin{axiom}
\label{ax:SSFP}
{\bf (Strong Setwise Function Property)}
There exists a function $h\colon  [0,1]^K\rightarrow [0,1]$ such that for  all opinion profiles $(q_1, \ldots, q_K)$ and for all events ${\cal A}\subseteq\Theta$,
\begin{equation}
    \label{eq: global_function_h}
    Q({\cal A})=h(Q_1({\cal A}),\ldots, Q_K({\cal A})).
\end{equation}
\end{axiom}
We note that this axiom is in general not equivalent to the property that there exists a function $\tilde{h}\colon  [0,\infty)^K\rightarrow [0,\infty)$ such that for  
all opinion profiles $(q_1, \ldots, q_K)$ and each point $\btheta \in \Theta$
\begin{equation}
    \label{eq: global_function_htilde}
    q(\btheta) = \tilde{h}(q_1(\btheta), \dots, q_K(\btheta)).
\end{equation}
In particular, for the case that $\Theta$ has finite Lebesgue measure $\lvert \Theta \rvert$, the dogmatic pooling function $q(\btheta)=1/\lvert \Theta \rvert$ for $\btheta \in \Theta$ trivially satisfies \eqref{eq: global_function_htilde}  but not \eqref{eq: global_function_h} (as a simple consequence of Theorem~\ref{th:linearpooling} below).

A more relaxed criterion than the SSFP is the \textit{weak setwise function property}  (WSFP) \cite{mcconway1981marginalization}. 
The WSFP states
that the probability of an event according to the aggregate pdf is a function of the probabilities of that event according to each agent \textit{and} the event itself.
\begin{axiom}
\label{ax:WSFP}
{\bf (Weak Setwise Function Property)}
For all events ${\cal A}\subseteq\Theta$, 
there exists a generally ${\cal A}$-dependent 
function $h_{{\cal A}}\colon [0,1]^K\rightarrow [0,1]$ such that for  all opinion profiles $(q_1, \ldots, q_K)$
\begin{equation}
\label{eq:wsft}
Q({\cal A})=h_{{\cal A}}(Q_1({\cal A}),\ldots, Q_K({\cal A})).
\end{equation}
\end{axiom}
The WSFP is also equivalent to the so-called  \textit{marginalization property}, which states that marginalization and fusion are commutative operations. 
Formulating the marginalization property requires a  measure-theoretic language that is beyond the scope of this paper. 
We thus omit a discussion of the marginalization property and refer the interested reader to \cite{mcconway1981marginalization} and \cite{genest1984pooling}. 

Another relaxation of the SSFP is the  \textit{likelihood principle} \cite{genest1984characterization}. 
Here, the  value of the aggregate pdf $q(\btheta)$ at some point $\btheta$ may only depend on the  values of all $q_k(\btheta)$ at the same  $\btheta$ up to a normalization constant that can depend on the opinion profile. 
\begin{axiom}
\label{ax:LP}
{\bf (Likelihood Principle)}
There exists a function $h\colon  [0,\infty)^K\rightarrow [0,\infty)$ such that for  all opinion profiles $(q_1, \ldots, q_K)$ and each point $\btheta \in \Theta$
\begin{equation*}
    q(\btheta) = \frac{h(q_1(\btheta), \dots, q_K(\btheta))}{\int_{\Theta}h(q_1(\btheta'), \dots, q_K(\btheta') ) \mathrm{d}\btheta'}.
\end{equation*}
\end{axiom}
The name ``likelihood principle" is motivated by viewing the pdfs as normalized likelihood functions: 
in this viewpoint, the idea is that the fused likelihood at $\btheta$ should only depend on the local likelihoods  at $\btheta$ up to normalization \cite{genest1984characterization}.
Note that \eqref{eq: global_function_htilde} is a significantly stronger assumption because the function $\tilde{h}$ in  \eqref{eq: global_function_htilde} has to normalize to one.

We can also formulate a weak version of the likelihood principle, where the function $h$ may depend on $\btheta$ \cite{genest1984characterization}.
\begin{axiom}
\label{ax:WLP}
{\bf (Weak Likelihood Principle)}
For all $\btheta\in \Theta$, there exists a generally $\btheta$-dependent function $h_{\btheta}\colon  [0,\infty)^K\rightarrow [0,\infty)$ such that for  all opinion profiles $(q_1, \ldots, q_K)$
\begin{equation*}
    q(\btheta) = \frac{h_{\btheta}(q_1(\btheta), \dots, q_K(\btheta))}{\int_{\Theta}h_{\btheta}(q_1(\btheta'), \dots, q_K(\btheta') ) \mathrm{d}\btheta'}.
\end{equation*}
\end{axiom}

Another important axiom is \textit{independence preservation}%
\footnote{Independence preservation should not be confused with the WSFP, which is sometimes referred to as the independence or eventwise independence property (e.g., \cite{dietrichprobabilistic}).}
\cite{laddaga1977lehrer}.
This axiom asserts that if all the agents agree that two events ${\cal A}, {\cal B}\subseteq\Theta$ 
are independent, then these events should be independent also according to the aggregate pdf. 
\begin{axiom}
\label{ax:IP}
{\bf (Independence Preservation)} 
For any events ${\cal A}, {\cal B}\subseteq \Theta$, 
if
\begin{equation*}
Q_{k}({\cal A} \cap {\cal B})=Q_{k}({\cal A})Q_{k}({\cal B})
\end{equation*}
for all $k\in\{1,\ldots,K\}$, then $Q({\cal A} \cap {\cal B})=Q({\cal A})Q({\cal B})$. 
\end{axiom}

A relaxation of independence preservation which, to the best of our knowledge, has not been considered before is to assume the preservation of a given factorization structure. 
\begin{axiom}
\label{ax:FP}
{\bf (Factorization Preservation)} 
For any functions $f_1 \colon \Theta \to \mathbb{R}^{d_1}$ and $f_2\colon \Theta \to \mathbb{R}^{d_2}$, if there exist functions $q_{k,1}$ and $q_{k,2}$ such that
\begin{equation*}
   q_{k}(\btheta)=q_{k,1}(f_1(\btheta)) q_{k,2}(f_2(\btheta))
\end{equation*}
for all $k\in\{1,\ldots,K\}$, then there exist functions $q_{\text{\em a},1}$ and $q_{\text{\em a},2}$ such that 
\begin{equation*}
   q(\btheta)=q_{\text{\em a},1}(f_1(\btheta)) q_{\text{\em a},2}(f_2(\btheta))\,.
\end{equation*}
\end{axiom}
This axiom expresses,
in particular, preservation of the independence of components of $\btheta$.
Assume that $\btheta = (\btheta_1, \btheta_2)$ and all agent pdfs factor according to 
$q_{k}(\btheta)=q_{k,1}(\btheta_1) q_{k,2}(\btheta_2)$.
We can choose $f_1(\btheta)= \btheta_1$ and $f_2(\btheta)= \btheta_2$, and  factorization preservation then implies that also the aggregate pdf  preserves the independence of $\btheta_1$ and $\btheta_2$, i.e., $q(\btheta)=q_{\text{a},1}(\btheta_1) q_{\text{a},2}(\btheta_2)$. 

The final axioms we consider are motivated by Bayesian updating of probabilities. 
More specifically, we interpret
each agent pdf $q_k(\btheta)$ as the agent's  belief about an unknown quantity $\btheta$ after observing some data.
When observing new (additional) data,  $q_k(\btheta)$ is updated by multiplying it by a  likelihood function $\ell\colon \Theta\rightarrow[0,\infty)$, which relates the  agent's new data to $\btheta$. 
The updated belief of the $k$th agent, $q_k^{(\ell)}(\btheta)$, is thus given as
\begin{equation}
    \label{eq: discrete_bayesian_update}
    q_k^{(\ell)}(\btheta)=\frac{\ell(\btheta)q_k(\btheta)}{\int_\Theta \ell(\btheta')q_k(\btheta')\mathrm{d}\btheta'}.
\end{equation}
To avoid degenerate cases, one usually assumes in the following axioms that all pdfs are positive on the domain $\Theta$. 
Thus, we restrict  the statements of the axioms to positive opinion profiles.
The first axiom related to the Bayesian framework is known as \textit{external Bayesianity} \cite{madansky1964externally,genest1984characterization}. 
\begin{axiom}
\label{ax:EB}
{\bf (External Bayesianity)} For all  functions $\ell\colon\Theta\rightarrow [0,\infty)$ and all positive opinion profiles $(q_1, \dots, q_K)$ satisfying 
$0 < \int_\Theta \ell(\btheta)q_k(\btheta)\mathrm{d}\btheta<\infty$ for all $k \in \{1, \dots, K\}$, we have 
\begin{equation*}
   q^{(\ell)}(\btheta) = g[q_1^{(\ell)},\ldots,q_K^{(\ell)}](\btheta),
\end{equation*}
where $q_k^{(\ell)}$ is defined in \eqref{eq: discrete_bayesian_update} and
\begin{equation}
\label{eq:fused_bayesian_update}
    q^{(\ell)}(\btheta) = \frac{\ell(\btheta)q(\btheta)}{\int_{\Theta}\ell(\btheta')q(\btheta')\mathrm{d}\btheta'},
\end{equation}
with $q(\btheta) = g[q_1,\ldots,q_K](\btheta)$.
\end{axiom}
This axiom is motivated by the following Bayesian scenario: 
Assume that $q_1, \dots, q_K$ are prior pdfs of $K$ agents. 
Some data are observed, and the resulting likelihood function $\ell$ is provided to all agents.
Then, a pooling function $g$ satisfying external Bayesianity 
gives the same fusion result if it first aggregates the priors $q_k$ into a fused prior $q$ and then $q$ is updated 
according to \eqref{eq:fused_bayesian_update}, or if it aggregates the posterior pdfs $q_k^{(\ell)}$ resulting from all agents updating their priors 
according to \eqref{eq: discrete_bayesian_update}.
Thus, external Bayesianity states that pdf updating and fusion are commutative operations.
Such a property is desirable in applications where the agents share identical data (i.e., a global likelihood function) but have distinct prior distributions \cite{rufo2012log}.  

A second axiom related to the Bayesian framework is known as \textit{individualized Bayesianity} \cite{dietrichprobabilistic}. 
This axiom is motivated by the idea of combining posterior probabilities, where each agent's posterior probability is based on private data (i.e., a local likelihood function) in contrast to all agents sharing identical data. 
\begin{axiom}\label{ax:indbayes}
\label{ax:IB}
{\bf (Individualized Bayesianity)} 
For all $k \in \{1, \dots, K\}$, all bounded, positive%
\footnote{The assumption of boundedness and positivity
is needed
to obtain the 
characterization theorems involving individualized Bayesianity
in Section~\ref{subsec:rel_axioms}.}
functions $\ell\colon\Theta\rightarrow [0,\infty)$, and all positive opinion profiles $(q_1, \dots, q_K)$,
we have
\begin{equation}
\label{eq:indivbayes}
    q^{(\ell)}(\btheta) =g[q_1,\ldots,q_{k-1},q_{k}^{(\ell)},q_{k+1},\ldots,q_K](\btheta),
\end{equation}
 where 
$q_{k}^{(\ell)}$ and $q^{(\ell)}$ are defined by \eqref{eq: discrete_bayesian_update} and \eqref{eq:fused_bayesian_update}, respectively.
\end{axiom}
This axiom is motivated by a scenario that is partly different from the scenario motivating external Bayesianity.
We again assume that $q_1, \dots, q_K$ are prior pdfs of the
agents. For some arbitrary but fixed $k$, the $k$th agent observes (private) data in terms of a likelihood function $\ell$.
Then, a pooling function $g$ satisfying individualized Bayesianity gives the same fusion result if it first
aggregates the priors $q_k$ into a fused prior $q$ and then $q$ is updated
according to \eqref{eq:fused_bayesian_update}, or if it aggregates the priors of all but the $k$th agent and the posterior pdf $q_k^{(\ell)}$
resulting from the $k$th agent updating its prior
according to \eqref{eq: discrete_bayesian_update}.
Thus, individualized Bayesianity states that pdf updating
at a \textit{single} agent and fusion are commutative operations.

Finally, we state a novel axiom that generalizes individualized Bayesianity. 
We thus call it \textit{generalized Bayesianity}.
\begin{axiom}\label{ax:genbayes}
{\bf (Generalized Bayesianity)} 
\label{ax:GB}
For all bounded, positive functions $\ell_k\colon \Theta\rightarrow [0,\infty)$, $k\in \{1, \dots, K\}$,
there exists a fused likelihood function $h[\ell_1, \dots, \ell_K]$
such that for all positive opinion profiles $(q_1, \dots, q_K)$, 
we have
\begin{equation}
\label{eq:gnbayes}
    q^{(h[\ell_1, \dots, \ell_K])}(\btheta) = g[q_1^{(\ell_1)},\ldots,q_K^{(\ell_K)}](\btheta),
\end{equation}
where 
$q_{k}^{(\ell_k)}$ and $q^{(h[\ell_1, \dots, \ell_K])}$ are defined by \eqref{eq: discrete_bayesian_update} and \eqref{eq:fused_bayesian_update}, respectively.
\end{axiom}
This axiom states that fusing $q_1^{(\ell_1)},\ldots,q_K^{(\ell_K)}$, i.e., the result of updating $q_1, \dots, q_K$, is equivalent to updating $q$, i.e., 
the result of fusing  $q_1, \dots, q_K$, by a ``fused likelihood function" $h[\ell_1, \dots, \ell_K]$.
Note that the fused likelihood function is not allowed to depend on the opinion profile $(q_1, \dots, q_K)$.

The axioms related to the Bayesian framework presented above are not directly related to the supra-Bayesian approach presented in Sections~\ref{sec: fusion_of_distributions} and~\ref{sec:lingaumeam} below. 
More specifically, in the supra-Bayesian framework, we have explicit likelihood functions and thus the pooling function does not necessarily satisfy properties that relate to arbitrary likelihood functions as in the axioms above. 



\subsection{Relations between Axioms and Pooling Functions}
\label{subsec:rel_axioms}

Having presented various pooling functions in Section~\ref{subsec: fusion_rules_axioms} and various axioms in Section~\ref{subsec: axioms_opinion_pooling}, we next
 analyze which pooling functions satisfy which axioms and, conversely, which axioms imply which pooling functions.
 Our results are summarized in Table~\ref{tab:axioms}.
\begin{table*}[tbh]
    \centering
    \rowcolors{4}{gray!10}{}
    \begin{tabular}{lcccccccccccc}
        \toprule
         {} & \multicolumn{11}{c}{Axiom} \\
         \cmidrule(lr){2-13}
         Pooling Function  & 1 & 2 & 3 & 4 & 5 & 6 & 7 & 8 & 9 & 10 & 11 & 12\\
         \midrule
         Linear  
         & $*$        & \checkmark & \checkmark & \checkmark & \checkmark & \checkmark & \checkmark & 
         {}         & {}         & {}         & {}         & {}         \\
         Generalized Linear  
         & $*$        & {}         & {}         & {}         & \checkmark & {}         & \checkmark & 
         {}         & {}         & {}         & {}         & {}         \\
         Log-linear  
         & $*$        & n.a.       & \checkmark & {}         & {}         & \checkmark & \checkmark & 
         {}         & \checkmark & \checkmark & {}         & \checkmark \\
         Generalized Log-linear  
         & $*$        &  n.a.      & {}         & {}         & {}         & {}         & \checkmark & 
         {}         & \checkmark & \checkmark & {}         & \checkmark \\
         H\"{o}lder
         & $*$        &  n.a.      & \checkmark & {}         & {}         & \checkmark & \checkmark & 
         {}         & {}         & {}         & {}         & {}         \\
         Inverse-linear  
         & $*$        & n.a.       & \checkmark & {}         & {}         & \checkmark & \checkmark & 
         {}         & {}         & {}         & {}         & {}         \\
         Multiplicative  
         & \checkmark &  n.a.      & {}         & {}         & {}         & {}         & \checkmark & 
         {}         & \checkmark & {}         & \checkmark & \checkmark \\
         Generalized Multiplicative  
         & $*$        &  n.a.      & {}         & {}         & {}         & {}         & \checkmark & 
         {}         & \checkmark & {}         & {}         & \checkmark \\
         Dictatorship  
         & {}         & \checkmark & \checkmark & \checkmark & \checkmark & \checkmark & \checkmark & 
         \checkmark & \checkmark & \checkmark & {}         & \checkmark \\
         Dogmatic  
         & \checkmark & {}         & {}         & {}         & \checkmark & {}         & \checkmark & 
         {}         & {}         & {}         & {}         & \checkmark \\
         \bottomrule
    \end{tabular}
    \caption{Axioms satisfied by the pooling functions presented in Section~\ref{subsec: fusion_rules_axioms}. ($*$: satisfied if and only if all weights are equal.)
    }
    \label{tab:axioms}
\end{table*}
In what follows, we will abbreviate the various axioms as A1, A2, etc.

\begin{theorem}
\label{th:linearpooling}
    The linear pooling function in \eqref{eq: linear_pooling} satisfies the 
    ZPP (A\ref{ax:ZPP}), unanimity preservation (A\ref{ax:UP}), the SSFP (A\ref{ax:SSFP}), the WSFP (A\ref{ax:WSFP}), the likelihood principle (A\ref{ax:LP}), and the weak likelihood principle (A\ref{ax:WLP}).
    In addition, it satisfies the symmetry axiom (A\ref{ax:S}) if and only if all weights are equal, i.e., $w_1=w_2=\cdots=w_K = 1/K$. 
    Furthermore, for a pooling function $g$ the following statements are equivalent:
    \begin{enumerate}
    \renewcommand{\theenumi}{(\roman{enumi})}
    \renewcommand{\labelenumi}{(\roman{enumi})}
        \item \label{en:glinpool} $g$ is a linear pooling function;
        \item \label{en:gssfp} $g$ satisfies the SSFP (A\ref{ax:SSFP});
        \item \label{en:gwsfpzpp} $g$ satisfies the WSFP (A\ref{ax:WSFP}) and the ZPP (A\ref{ax:ZPP});
        \item \label{en:gwsfpup} $g$ satisfies the WSFP (A\ref{ax:WSFP}) and unanimity preservation (A\ref{ax:UP}).
    \end{enumerate}
\end{theorem}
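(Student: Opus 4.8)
The plan is to split Theorem~\ref{th:linearpooling} into the ``forward'' assertions---that linear pooling satisfies the listed axioms---and the characterization, i.e., the equivalence of \ref{en:glinpool}--\ref{en:gwsfpup}. The forward assertions are all immediate from the linearity of integration. Writing $Q(\mathcal{A}) = \int_{\mathcal{A}} \sum_k w_k q_k(\btheta)\,\mathrm{d}\btheta = \sum_k w_k Q_k(\mathcal{A})$, one reads off the ZPP (if every $Q_k(\mathcal{A})=0$ then $Q(\mathcal{A})=0$), unanimity preservation (if $q_k=q_0$ for all $k$ then $q = q_0\sum_k w_k = q_0$), and the SSFP with the explicit choice $h(x_1,\dots,x_K)=\sum_k w_k x_k$, which indeed maps $[0,1]^K\to[0,1]$ since the $w_k$ form a convex-combination weight vector. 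The WSFP is the SSFP with an $\mathcal{A}$-independent $h$, and the (weak) likelihood principle holds with the same $h$ because $\sum_k w_k q_k$ is already normalized, so the denominators in A\ref{ax:LP} and A\ref{ax:WLP} equal $1$. For the symmetry claim, equal weights plainly give a symmetric function; conversely, testing symmetry on a profile in which two agents hold distinct pdfs $p\ne r$ while the others are fixed forces, after swapping those two indices, $(w_i-w_j)\bigl(p(\btheta)-r(\btheta)\bigr)=0$ on a set of positive measure, hence $w_i=w_j$, and since this holds for every pair all weights equal $1/K$.

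For the equivalence, the implications \ref{en:glinpool}$\Rightarrow$\ref{en:gssfp},\ref{en:gwsfpzpp},\ref{en:gwsfpup} are already covered by the forward assertions. It remains to prove \ref{en:gssfp}$\Rightarrow$\ref{en:glinpool} and to reduce \ref{en:gwsfpzpp} and \ref{en:gwsfpup} to \ref{en:gssfp}. The reductions are short: unanimity preservation evaluated at the common value $p=0$ is precisely the ZPP, so \ref{en:gwsfpup}$\Rightarrow$\ref{en:gwsfpzpp}; and since the SSFP is the WSFP with an event-independent $h_{\mathcal{A}}$, the real content of \ref{en:gwsfpzpp}$\Rightarrow$\ref{en:gssfp} is to use the ZPP to strip off the $\mathcal{A}$-dependence, which I defer until after \ref{en:gssfp}$\Rightarrow$\ref{en:glinpool}.

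The heart of the argument, and the main obstacle, is \ref{en:gssfp}$\Rightarrow$\ref{en:glinpool}, where I would exploit non-atomicity. Given disjoint events $\mathcal{A},\mathcal{B}$ with $\lvert\Theta\setminus(\mathcal{A}\cup\mathcal{B})\rvert>0$ and any $\mathbf{x},\mathbf{y}\in[0,1]^K$ with $x_k+y_k\le 1$, a non-atomic measure lets me build a profile placing mass $x_k$ on $\mathcal{A}$, mass $y_k$ on $\mathcal{B}$, and the remainder on the leftover region, so every such pair of probability vectors is realizable. Additivity of $Q$ over the disjoint union, combined with the SSFP, then yields the restricted multivariate Cauchy equation
\begin{equation*}
 h(\mathbf{x}+\mathbf{y}) = h(\mathbf{x}) + h(\mathbf{y}),
\end{equation*}
valid for all admissible $\mathbf{x},\mathbf{y}$. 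Because $h$ takes values in the bounded interval $[0,1]$, the only solutions are linear, $h(\mathbf{x})=\sum_k w_k x_k$. Evaluating at $\mathbf{x}=\mathbf 1$ (the event $\Theta$) gives $\sum_k w_k=1$, and realizing the unit vectors $\mathbf e_k$ (agent $k$ alone charges $\mathcal{A}$) gives $w_k=h(\mathbf e_k)\in[0,1]$, hence $w_k\ge 0$; thus $g$ is linear pooling with $(w_1,\dots,w_K)\in\mathcal{S}_K$. The delicate points are the boundedness argument excluding pathological additive solutions and the bookkeeping that guarantees enough ``room'' in $\Theta$ to realize the required profiles.

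Finally, for \ref{en:gwsfpzpp}$\Rightarrow$\ref{en:gssfp} I would compare the a priori $\mathcal{A}$-dependent functions $h_{\mathcal{A}}$ using disjoint events. For disjoint $\mathcal{A},\mathcal{B}$ with room to spare, realizability and additivity of $Q$ give
\begin{equation*}
 h_{\mathcal{A}\cup\mathcal{B}}(\mathbf{x}+\mathbf{y}) = h_{\mathcal{A}}(\mathbf{x}) + h_{\mathcal{B}}(\mathbf{y}).
\end{equation*}
Setting $\mathbf{y}=\mathbf 0$ and invoking the ZPP (which forces $h_{\mathcal{B}}(\mathbf 0)=0$) yields $h_{\mathcal{A}\cup\mathcal{B}}=h_{\mathcal{A}}$, while $\mathbf{x}=\mathbf 0$ yields $h_{\mathcal{A}\cup\mathcal{B}}=h_{\mathcal{B}}$, so $h_{\mathcal{A}}=h_{\mathcal{B}}$ for any such disjoint pair. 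Any two events are then linked through a common auxiliary event disjoint from both (available by non-atomicity), so all $h_{\mathcal{A}}$ coincide with one function $h$, which is the SSFP. Chaining \ref{en:glinpool}$\Rightarrow$\ref{en:gssfp}$\Rightarrow$\ref{en:glinpool} with \ref{en:gwsfpup}$\Rightarrow$\ref{en:gwsfpzpp}$\Rightarrow$\ref{en:gssfp} and \ref{en:glinpool}$\Rightarrow$\ref{en:gwsfpzpp},\ref{en:gwsfpup} closes the equivalence. Throughout, the essential departure from the classical discrete treatment of~\cite{mcconway1981marginalization} is that non-atomicity supplies, for free, the abundance of realizable opinion profiles that in the discrete case requires a state space with at least three atoms.
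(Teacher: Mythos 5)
Your forward assertions and your treatment of the core implication \ref{en:gssfp}$\Rightarrow$\ref{en:glinpool} match the paper's proof in Appendix~\ref{app:proof_linearpooling} almost exactly: realizability of arbitrary probability vectors on disjoint events via non-atomicity, the restricted Cauchy equation $h(\mathbf{x}+\mathbf{y})=h(\mathbf{x})+h(\mathbf{y})$, boundedness on $[0,1]^K$ to exclude pathological additive solutions, and evaluation at $\mathbf{1}$ and at the unit vectors to land in $\mathcal{S}_K$. Your route for \ref{en:gwsfpzpp}$\Rightarrow$\ref{en:gssfp} (comparing $h_{\mathcal{A}}$ and $h_{\mathcal{B}}$ for \emph{disjoint} events through $h_{\mathcal{A}\cup\mathcal{B}}$ and the ZPP) differs from the paper's, which compares events having a nontrivial intersection and a nontrivial union; both are workable, but note that your linking step fails as stated when $\mathcal{A}\cup\mathcal{B}$ exhausts $\Theta$ up to null sets (e.g.\ $\mathcal{B}=\mathcal{A}^c$), since then no auxiliary event of positive measure disjoint from both exists. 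The paper circumvents this by building the intermediate event $\mathcal{C}$ from pieces of $\mathcal{A}$ and $\mathcal{B}$ themselves; you would need a similar repair, and you also still need to handle trivial events separately (showing $h(\mathbf{0})=0$ and $h(\mathbf{1})=1$).

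The genuine error is your reduction \ref{en:gwsfpup}$\Rightarrow$\ref{en:gwsfpzpp}. You claim that ``unanimity preservation evaluated at the common value $p=0$ is precisely the ZPP,'' but Axiom~\ref{ax:UP} hypothesizes agreement of the agents on \emph{all} events, i.e., an identical opinion profile $q_1=\cdots=q_K=q_0$; it says nothing about profiles of distinct pdfs that merely agree that one particular event $\mathcal{A}$ is null. A pooling function that returns the common pdf on unanimous profiles and a fixed pdf charging $\mathcal{A}$ otherwise satisfies UP but violates the ZPP, so UP alone does not imply the ZPP. The implication does hold in the presence of the WSFP, which is what the paper proves: one applies UP to the unanimous test profile $q_k=\mathbbm{1}_{\mathcal{A}^c}/\lvert\mathcal{A}^c\rvert$ to deduce $h_{\mathcal{A}}(0,\dots,0)=0$, and only then does the WSFP transport this conclusion to an arbitrary profile with $Q_k(\mathcal{A})=0$ for all $k$. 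Your argument omits the WSFP entirely in this step and is therefore invalid as written, though easily repaired along these lines.
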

The equivalence of \ref{en:glinpool}, \ref{en:gssfp}, and \ref{en:gwsfpzpp}  was first proven in \cite{mcconway1981marginalization} for pmfs and in \cite{genest1984pooling} for 
arbitrary probability
measures.
However, to the best of our knowledge, a proof for pdfs has not been provided so far.%
\footnote{
Note that the proof for arbitrary probability measures in \cite{genest1984pooling} does not imply the result for pdfs.
Indeed,
in our pdf framework, 
only probability measures that are absolutely continuous with respect to a fixed reference measure 
(usually the Lebesgue measure) are considered.
This implicates the following difference 
from the framework of \cite{genest1984pooling}: 
whereas we only assume that an axiom holds for all pdfs, \cite{genest1984pooling} assumes that it also holds for other probability measures such as, e.g., a Dirac measure.
Therefore,
if \cite{genest1984pooling} states that, e.g., the assumption \ref{en:gssfp} implies \ref{en:glinpool}, 
then this refers to a stronger version of \ref{en:gssfp}.}
In \cite{dietrichprobabilistic}, the equivalence of \ref{en:gwsfpup} and \ref{en:gwsfpzpp} was presented for pmfs. 
In Appendix~\ref{app:proof_linearpooling}, we give a proof of Theorem~\ref{th:linearpooling} for pdfs.

\begin{theorem}
    \label{th:genlinpool}
    The generalized linear pooling function in \eqref{eq:gen_linear_pooling} satisfies the 
    WSFP (A\ref{ax:WSFP}) and the weak likelihood principle (A\ref{ax:WLP}).
    Conversely, any pooling function that satisfies the WSFP (A\ref{ax:WSFP}) is a generalized linear pooling function.
    In addition, the generalized linear pooling function satisfies the symmetry axiom (A\ref{ax:S}) if and only if all weights except $w_0$ are equal, i.e., $w_1=w_2=\cdots=w_K$. 
\end{theorem}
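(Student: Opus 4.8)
The plan is to treat the three assertions separately, disposing of the two ``forward'' directions and symmetry by direct computation and reserving the real work for the converse characterization.

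\textbf{Forward directions and symmetry.}
For the WSFP, I would integrate \eqref{eq:gen_linear_pooling}: since $Q(\mathcal{A}) = w_0 Q_0(\mathcal{A}) + \sum_{k=1}^K w_k Q_k(\mathcal{A})$, where $Q_0(\mathcal{A})=\int_\mathcal{A} q_0\,\mathrm{d}\btheta$ is a fixed number once $\mathcal{A}$ is fixed, the choice $h_\mathcal{A}(x_1,\ldots,x_K)=w_0 Q_0(\mathcal{A})+\sum_{k=1}^K w_k x_k$ witnesses A\ref{ax:WSFP}; it maps $[0,1]^K$ into $[0,1]$ because $(w_0,\ldots,w_K)\in\mathcal{S}_{K+1}$ makes the right-hand side a convex combination of numbers in $[0,1]$. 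For the weak likelihood principle I would take $h_\btheta(r_1,\ldots,r_K)=w_0 q_0(\btheta)+\sum_{k=1}^K w_k r_k$; the resulting numerator function $\btheta\mapsto h_\btheta(q_1(\btheta),\ldots,q_K(\btheta))$ is exactly $q(\btheta)$, which already integrates to one, so the normalization in A\ref{ax:WLP} is automatic. (The $\btheta$-dependence is precisely what lets the fixed term $w_0 q_0(\btheta)$ be reproduced, which a $\btheta$-independent function could not do.) For symmetry, the term $w_0 q_0$ does not involve the opinion profile and cancels, so A\ref{ax:S} reduces to $\sum_{k=1}^K w_k q_k=\sum_{k=1}^K w_k q_{\beta(k)}$ for every permutation $\beta$ and every profile. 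Testing the transposition of indices $i,j$ on a profile with $q_i\neq q_j$ gives $(w_i-w_j)(q_i-q_j)=0$, forcing $w_i=w_j$; since transpositions generate all permutations this yields $w_1=\cdots=w_K$, and the converse implication is immediate.

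\textbf{Converse (WSFP $\Rightarrow$ generalized linear).}
This is the crux, and it is where non-atomicity enters. For any event $\mathcal{A}$ of positive measure with $\lvert\mathcal{A}^c\rvert>0$, split $\mathcal{A}=\mathcal{A}_1\sqcup\mathcal{A}_2$ into two positive-measure pieces; because the reference measure is non-atomic I can, for any target vectors, realize opinion profiles by piecewise-uniform pdfs of the form $q_k=\tfrac{p_k}{\lvert\mathcal{A}_1\rvert}\mathbbm{1}_{\mathcal{A}_1}+\tfrac{p_k'}{\lvert\mathcal{A}_2\rvert}\mathbbm{1}_{\mathcal{A}_2}+\tfrac{1-p_k-p_k'}{\lvert\mathcal{A}^c\rvert}\mathbbm{1}_{\mathcal{A}^c}$, so that $Q_k(\mathcal{A}_1)=p_k$ and $Q_k(\mathcal{A}_2)=p_k'$. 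Applying A\ref{ax:WSFP} to $\mathcal{A}_1$, $\mathcal{A}_2$, and $\mathcal{A}$ and using finite additivity of $Q$ gives the Pexider-type identity
\begin{equation*}
h_{\mathcal{A}}(p_1+p_1',\ldots,p_K+p_K')=h_{\mathcal{A}_1}(p_1,\ldots,p_K)+h_{\mathcal{A}_2}(p_1',\ldots,p_K').
\end{equation*}
Setting $p'=\mathbf 0$ and $p=\mathbf 0$ and recombining shows that $h_\mathcal{A}-h_\mathcal{A}(\mathbf 0)$ is additive on $[0,1]^K$; since each $h_\mathcal{A}$ takes values in $[0,1]$ and is thus bounded, the only admissible solutions are affine, $h_\mathcal{A}(p)=\sum_{k=1}^K w_k^{(\mathcal{A})}p_k+c_\mathcal{A}$ with $c_\mathcal{A}=h_\mathcal{A}(\mathbf 0)$.

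It then remains to make the slopes global and extract a pdf. Writing the analogous Pexider identity for two \emph{disjoint} events $\mathcal{A},\mathcal{B}$ and matching coefficients forces $w_k^{(\mathcal{A})}=w_k^{(\mathcal{B})}=w_k^{(\mathcal{A}\cup\mathcal{B})}$; taking $\mathcal{B}=\mathcal{A}^c$ shows every event shares the slope of $\Theta$, so $w_k^{(\mathcal{A})}\equiv w_k$. Evaluating $Q(\Theta)=1$ fixes $c_\Theta=1-\sum_{k=1}^K w_k=:w_0$. Now $c_\mathcal{A}=Q(\mathcal{A})-\sum_{k=1}^K w_k Q_k(\mathcal{A})$ is a finite combination of the absolutely continuous measures $Q,Q_1,\ldots,Q_K$, hence is itself countably additive and absolutely continuous; by Radon--Nikodym it has density $\rho=q-\sum_{k=1}^K w_k q_k$ with $\int_\Theta\rho=w_0$, and $c_\mathcal{A}=h_\mathcal{A}(\mathbf 0)\ge0$ for all $\mathcal{A}$ forces $\rho\ge0$ a.e. If $w_0>0$, setting $q_0:=\rho/w_0$ gives a pdf and $q=\sum_{k=0}^K w_k q_k$; if $w_0=0$ the result is ordinary linear pooling, a special case. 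Finally, to confirm $(w_0,\ldots,w_K)\in\mathcal{S}_{K+1}$ I would establish $w_k\ge0$ by choosing events $\mathcal{A}_n$ of vanishing Lebesgue measure, so that $c_{\mathcal{A}_n}=\int_{\mathcal{A}_n}\rho\to0$, while the profile placing all of agent $k$'s mass on $\mathcal{A}_n$ gives $h_{\mathcal{A}_n}(\mathbf e_k)=w_k+c_{\mathcal{A}_n}\ge0$, whence $w_k\ge-c_{\mathcal{A}_n}\to0$.

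\textbf{Main obstacle.}
The delicate step is the functional-equation analysis: passing from the Pexider identity to \emph{affine} $h_\mathcal{A}$ requires excluding pathological (Hamel-basis) Cauchy solutions, which I would do via the boundedness of $h_\mathcal{A}$, taking care to lift the restricted-domain equation to genuine linearity. The second nontrivial point is that the continuous setting must be exploited twice---once to realize arbitrary probability vectors simultaneously on the sub-events, and once (via non-atomicity together with absolute continuity of the integral) to drive $c_{\mathcal{A}_n}\to0$ and thereby obtain nonnegativity of the weights. These are exactly the places where the pdf framework genuinely departs from the discrete one.
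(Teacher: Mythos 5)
Your forward directions and the symmetry argument coincide with the paper's. For the converse, however, you take a genuinely different route. The paper first manufactures the measure $Q_0(\mA)=h_{\mA}(0,\dots,0)$ and exhibits its density $\tilde q_0$ by an explicit gluing construction (choosing opinion profiles that vanish on a reference event $\mA_0$ and on $\mA_0^c$, and identifying the fused pdf of those profiles with the density of $Q_0$ on each piece); it then subtracts $w_0 q_0$, renormalizes by $1-w_0$, verifies that the renormalized rule satisfies the WSFP \emph{and} the ZPP, and invokes Theorem~\ref{th:linearpooling} to finish — which forces a separate treatment of the degenerate case $w_0=1$ (dogmatic pooling). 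You instead inline the additivity argument: a Pexider identity over a split $\mA=\mA_1\cup\mA_2$ yields, via the bounded restricted-domain Cauchy equation, that each $h_{\mA}$ is affine with event-independent slopes $w_k$; the constant term $c_{\mA}=Q(\mA)-\sum_k w_k Q_k(\mA)$ is then a profile-independent, absolutely continuous measure whose Radon--Nikodym density $\rho\ge 0$ supplies $w_0 q_0$ directly, and nonnegativity of the $w_k$ follows from shrinking events $\mA_n$ with a single agent's mass concentrated on them. Your route buys two things: it avoids the $w_0=1$ special case entirely (since you never divide by $1-w_0$), and it replaces the paper's hands-on construction of the density of $Q_0$ by a clean appeal to Radon--Nikodym; the price is that you must carry out the "global slopes" connectivity argument (linking $w_k^{(\mA)}$ across arbitrary nontrivial events, including complements) and the restricted-domain Cauchy analysis yourself rather than delegating them to the already-proven Theorem~\ref{th:linearpooling}. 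Both of the steps you flag as delicate are handled correctly in outline — the boundedness of $h_{\mA}$ does rule out pathological additive solutions exactly as in the paper's own appeal to Acz\'el, and absolute continuity of the integral does drive $c_{\mA_n}\to 0$ — so I see no gap, only details to be written out (in particular, the chaining argument that identifies the slopes of two arbitrary nontrivial events, which requires an intermediate event when $\mA\cup\mB$ exhausts $\Theta$, in the spirit of step~2 of the paper's proof of Theorem~\ref{th:linearpooling}).
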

The measure-theoretic equivalence of generalized linear pooling functions with possibly negative weights and pooling functions satisfying the  WSFP (A\ref{ax:WSFP}) was proven in \cite{genest1984pooling}.
However, in the case of the fusion of pdfs considered here, the generalized linear pooling functions cannot have negative weights.
We thus present a proof with the necessary adaptations in  Appendix~\ref{app:proof_genlinpool}.

We next turn to pooling functions that include multiplication of pdfs or of powers of pdfs. 
In this context, we restrict to positive opinion profiles, i.e., we assume that \eqref{eq:nonnegopprof} is satisfied.
Note that in this setting the ZPP (A\ref{ax:ZPP}) is not applicable
since $Q_k({\cal A})=0$ is not possible except for sets ${\cal A}$ of Lebesgue measure zero;
therefore, we will disregard
the ZPP in the following considerations.
\begin{theorem}
\label{th:loglin}
    The log-linear pooling function in \eqref{eq: log_linear_pooling} satisfies 
    unanimity preservation (A\ref{ax:UP}), the likelihood principle (A\ref{ax:LP}), the weak likelihood principle (A\ref{ax:WLP}), factorization preservation (A\ref{ax:FP}), external Bayesianity (A\ref{ax:EB}), and generalized Bayesianity (A\ref{ax:GB}).
    In addition, it satisfies the symmetry axiom (A\ref{ax:S}) if and only if all weights are equal, i.e., $w_1=w_2=\cdots=w_K = 1/K$. 
    Furthermore, for a pooling function $g$ the following statements are equivalent:
    \begin{enumerate}
    \renewcommand{\theenumi}{(\roman{enumi})}
    \renewcommand{\labelenumi}{(\roman{enumi})}
        \item \label{en:gloglinpool} $g$ is a log-linear pooling function;
        \item \label{en:glpeb} $g$ satisfies the likelihood principle (A\ref{ax:LP}) and external Bayesianity (A\ref{ax:EB});
        \item \label{en:gwlpupeb} $g$ satisfies unanimity preservation (A\ref{ax:UP}), the weak likelihood principle (A\ref{ax:WLP}), and external Bayesianity (A\ref{ax:EB}).
    \end{enumerate}
\end{theorem}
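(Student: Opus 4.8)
The plan is to separate a routine \emph{verification} part (log-linear pooling satisfies the listed axioms) from the \emph{characterization} part (the equivalences), where the real work lies. For the verifications I would substitute the definition \eqref{eq: log_linear_pooling} and use $\sum_{k=1}^K w_k = 1$ throughout. Unanimity preservation (A\ref{ax:UP}) is immediate: if all $q_k = q_0$, then $\prod_k q_0^{w_k} = q_0^{\sum_k w_k} = q_0$, already normalized. The likelihood principle (A\ref{ax:LP}) holds with the pointwise map $h(x_1,\dots,x_K) = \prod_{k=1}^K x_k^{w_k}$, and the weak likelihood principle (A\ref{ax:WLP}) follows a fortiori with $h_\btheta \equiv h$. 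Factorization preservation (A\ref{ax:FP}) follows because $\prod_k q_k^{w_k}$ splits into a product over $f_1$ and over $f_2$. External Bayesianity (A\ref{ax:EB}) is the key identity: inserting $q_k^{(\ell)} \propto \ell\, q_k$ from \eqref{eq: discrete_bayesian_update} gives $\prod_k (q_k^{(\ell)})^{w_k} \propto \ell^{\sum_k w_k}\prod_k q_k^{w_k} = \ell\prod_k q_k^{w_k}$, which after normalization is exactly $q^{(\ell)}$ in \eqref{eq:fused_bayesian_update}; the constants $\int \ell q_k$ drop out precisely because they carry the common exponent-sum one. Generalized Bayesianity (A\ref{ax:GB}) is proved identically with the profile-independent fused likelihood $h[\ell_1,\dots,\ell_K] = \prod_{k=1}^K \ell_k^{w_k}$, so that \eqref{eq:gnbayes} holds. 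Finally, symmetry (A\ref{ax:S}) holds iff the weight vector is permutation-invariant, i.e. iff $w_1 = \dots = w_K = 1/K$.

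Part~1 already yields (i)$\Rightarrow$(ii) and (i)$\Rightarrow$(iii), since log-linear pooling satisfies LP, WLP, UP, and EB. As (i) implies each of (ii) and (iii), it suffices to prove the two converses (ii)$\Rightarrow$(i) and (iii)$\Rightarrow$(i); the cycle then closes. Throughout I would restrict to positive opinion profiles, as all the Bayesian axioms are stated there.

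For (ii)$\Rightarrow$(i), assume LP and EB. LP supplies a single pointwise combiner $h$ with $q(\btheta) = h(q_1(\btheta),\dots,q_K(\btheta))/\!\int h$. Substituting this into EB and cancelling the common normalizations reduces the identity to the pointwise functional equation $\ell(\btheta)\,h(\mathbf{x}) = C\,h\big(\ell(\btheta)x_1/a_1,\dots,\ell(\btheta)x_K/a_K\big)$, valid for every $\btheta$, where $x_k = q_k(\btheta)$, $a_k = \int \ell q_k$, and $C$ is a $\btheta$-independent constant. I would then proceed in three steps: (a) evaluating this at two points carrying the same value $\mathbf{x}$ but different values of $\ell$ eliminates $C$ and yields degree-one homogeneity $h(\lambda\mathbf{x}) = \lambda\,h(\mathbf{x})$; (b) feeding homogeneity back collapses the equation to $h(\mathbf{x}) = C\,h(x_1/a_1,\dots,x_K/a_K)$, and letting $\ell$ vary (hence the vector $(a_1,\dots,a_K)$) turns this into the multiplicative Cauchy equation $h(b_1 x_1,\dots,b_K x_K) = h(\mathbf{b})\,h(\mathbf{x})/h(\mathbf{1})$; (c) its measurable positive solutions are the power laws $h(\mathbf{x}) = h(\mathbf{1})\prod_k x_k^{w_k}$, whereupon degree-one homogeneity forces $\sum_k w_k = 1$. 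Normalizing $h$ reproduces \eqref{eq: log_linear_pooling}. The engine here, and the step genuinely using the non-atomic structure, is a realizability lemma: given any target value $\mathbf{x}$, any $\lambda>0$, and any target integrals $a_k>0$, one can construct positive $q_k$ and $\ell$ with $q_k(\btheta)=x_k$, $\ell(\btheta)=\lambda$, and $\int\ell q_k = a_k$ simultaneously, since the $K$ integral constraints can be satisfied by modifying the functions away from $\btheta$ without disturbing their values at $\btheta$.

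For (iii)$\Rightarrow$(i), WLP replaces $h$ by a point-dependent combiner $h_\btheta$, and EB yields the same functional equation but with $h_\btheta$ in place of $h$ and the \emph{same} global constant $C$ and integrals $a_k$ shared across all $\btheta$. The plan is to show, paralleling the argument above, that $h_\btheta(\mathbf{x}) = \xi_0(\btheta)\prod_k x_k^{w_k}$ with \emph{constant} exponents and a possibly $\btheta$-dependent prefactor $\xi_0$, i.e. $g$ is a generalized log-linear pooling function: EB forbids $\btheta$-dependent exponents because, after updating, the stray factor $\prod_k a_k^{-w_k(\btheta)}$ must be $\btheta$-independent for every admissible $(a_k)$, which forces each $w_k$ constant. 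Unanimity preservation then eliminates the prefactor: applying UP to a constant profile $q_k \equiv q_0$ gives $\xi_0(\btheta)\,q_0(\btheta)^{\sum_k w_k} \propto q_0(\btheta)$ for every $q_0$, which (using $\sum_k w_k = 1$) forces $\xi_0$ constant, leaving exactly \eqref{eq: log_linear_pooling}. I expect the main obstacle to be precisely the coupling created by the single global constant $C$ and the shared integrals $a_k$ across all points: unlike in the LP case, one cannot run the homogeneity argument independently at each $\btheta$, so disentangling the point-dependence of $h_\btheta$ from this one global normalization---again leveraging the non-atomic realizability lemma together with UP---is the delicate technical step.
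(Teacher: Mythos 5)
Your verification part is correct and is exactly what the paper claims for it: the paper states that "the remaining claimed axioms follow straightforwardly from the definition of the log-linear pooling function," and your computations for unanimity preservation, the (weak) likelihood principle, factorization preservation, external and generalized Bayesianity, and the symmetry claim all check out, including the key observation that the exponent sum $\sum_k w_k=1$ makes the normalization constants $\int\ell q_k$ cancel. For the equivalences, however, the paper gives no proof at all: it cites \cite{genest1984characterization} for (i)$\Leftrightarrow$(ii) and \cite{genest1986characterization} for (i)$\Leftrightarrow$(iii). Your functional-equation sketch (reduce external Bayesianity to a pointwise Pexider/Cauchy equation via a realizability argument, then solve it) is essentially the route those references take, so you are reconstructing the cited proofs rather than diverging from the paper; your (iii)$\Rightarrow$(i) plan --- first obtain the generalized log-linear form from the weak likelihood principle and external Bayesianity, then eliminate the prefactor $\xi_0$ with unanimity preservation --- correctly mirrors the relationship between this theorem and Theorem~\ref{th:genloglinpool}.

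Two concrete gaps would need to be closed before your sketch is a proof. First, in step (c) you invoke the \emph{measurable} solutions of the multiplicative Cauchy equation, but Axiom~\ref{ax:LP} posits only the existence of some function $h\colon[0,\infty)^K\to[0,\infty)$; you must first derive measurability (or local boundedness/monotonicity) of $h$ from the requirement that $g$ map every positive opinion profile to a genuine pdf, since otherwise pathological non-power-law solutions of the Cauchy equation survive. Second, the power-law solution $h(\mathbf{x})=h(\mathbf{1})\prod_k x_k^{w_k}$ with $\sum_k w_k=1$ a priori admits negative exponents, whereas the log-linear pooling function \eqref{eq: log_linear_pooling} requires $(w_1,\ldots,w_K)\in\mathcal{S}_K$. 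Excluding negative weights requires a separate argument --- for instance, that if some $w_k<0$ one can construct a positive opinion profile for which $\prod_k (q_k(\btheta))^{w_k}$ is not integrable, so that $g$ would not be well defined on all positive profiles; this is precisely where the non-atomic pdf setting departs from the pmf setting, as the paper notes after Theorem~\ref{th:genloglinpool}. Your realizability lemma is the right engine and is available in the non-atomic setting, but it too is asserted rather than proved and carries the burden of simultaneously matching point values and integral constraints for all $K$ agents.
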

The equivalence of \ref{en:gloglinpool} and \ref{en:glpeb} was proven in \cite{genest1984characterization} and the equivalence of  \ref{en:gloglinpool} and \ref{en:gwlpupeb} in \cite{genest1986characterization}.
The remaining claimed axioms follow straightforwardly from the definition of the log-linear pooling function in \eqref{eq: log_linear_pooling}.

\begin{theorem}
    \label{th:genloglinpool}
    The generalized log-linear pooling function in \eqref{eq:gen_log_linear_pooling} satisfies 
    the weak likelihood principle (A\ref{ax:WLP}), factorization preservation (A\ref{ax:FP}), external Bayesianity (A\ref{ax:EB}), and generalized Bayesianity (A\ref{ax:GB}).
    In addition, it satisfies the symmetry axiom (A\ref{ax:S}) if and only if all weights except $w_0$ are equal, i.e., $w_1=w_2=\cdots=w_K$.
    Furthermore, for a pooling function $g$ the following statements are equivalent:
    \begin{enumerate}
    \renewcommand{\theenumi}{(\roman{enumi})}
    \renewcommand{\labelenumi}{(\roman{enumi})}
        \item \label{en:ggenloglinpool} $g$ is a generalized log-linear pooling function;
        \item \label{en:gwlpeb} $g$ satisfies the weak likelihood principle (A\ref{ax:WLP}) and external Bayesianity (A\ref{ax:EB}).
    \end{enumerate}
\end{theorem}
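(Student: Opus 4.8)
The statement bundles together four kinds of assertions, and the plan is to dispatch the easy ones by direct substitution and to concentrate effort on the converse characterization. Concretely, I would first verify that the generalized log-linear pooling function \eqref{eq:gen_log_linear_pooling} satisfies the weak likelihood principle (A\ref{ax:WLP}), factorization preservation (A\ref{ax:FP}), external Bayesianity (A\ref{ax:EB}), and generalized Bayesianity (A\ref{ax:GB}); then settle the symmetry characterization; and finally prove the equivalence \ref{en:ggenloglinpool}$\Leftrightarrow$\ref{en:gwlpeb}, whose hard direction \ref{en:gwlpeb}$\Rightarrow$\ref{en:ggenloglinpool} is the real content. The forward direction \ref{en:ggenloglinpool}$\Rightarrow$\ref{en:gwlpeb} is then a special case of the first group of checks.

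For the forward properties, each reduces to an algebraic identity. The weak likelihood principle holds with $h_{\btheta}(x_1,\dots,x_K)=\xi_0(\btheta)\prod_{k=1}^K x_k^{w_k}$; here $h_\btheta$ depends genuinely on $\btheta$ through $\xi_0$, which is precisely why only the \emph{weak} principle holds and not the full likelihood principle (A\ref{ax:LP}), whose $h$ must be $\btheta$-independent, unless $\xi_0$ is constant. External and generalized Bayesianity both hinge on $\sum_{k} w_k = 1$: substituting the updated pdfs $q_k^{(\ell)}$ from \eqref{eq: discrete_bayesian_update}, the per-agent normalizers factor out as one profile-dependent constant while the common likelihood collects as $\ell^{\sum_k w_k}=\ell$, so fusing the updated pdfs equals updating the fused pdf via \eqref{eq:fused_bayesian_update}; the same computation with distinct $\ell_k$ identifies the fused likelihood in A\ref{ax:GB} as $h[\ell_1,\dots,\ell_K]=\prod_{k=1}^K \ell_k^{w_k}$, which is indeed independent of the opinion profile. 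Factorization preservation follows because the weighted geometric mean $\prod_k q_k^{w_k}$ of densities that factor through $f_1,f_2$ again factors through $f_1,f_2$. The symmetry claim is elementary: interchanging two arguments multiplies the unnormalized fused density by a factor $(q_i(\btheta)/q_j(\btheta))^{w_i-w_j}$, and requiring the normalized results to coincide for every positive profile forces this factor to be constant in $\btheta$, hence $w_i=w_j$; thus symmetry holds iff $w_1=\dots=w_K$ (equivalently, since the weights lie in $\mathcal{S}_K$, iff $w_k=1/K$), while equal weights trivially give symmetry.

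The core of the proof is \ref{en:gwlpeb}$\Rightarrow$\ref{en:ggenloglinpool}, which I would carry out along the lines of \cite{genest1986characterization} and in parallel to the log-linear characterization (Theorem~\ref{th:loglin}), the sole difference being that A\ref{ax:WLP} in place of A\ref{ax:LP} permits the $\btheta$-dependent prefactor. From A\ref{ax:WLP} the pooled pdf is \emph{pointwise} up to one profile-dependent normalizer, $q(\btheta)\propto h_\btheta(q_1(\btheta),\dots,q_K(\btheta))$. I would insert the Bayesian updates $q_k^{(\ell)}(\btheta)=a_k\,\ell(\btheta)\,q_k(\btheta)$, with $a_k=\bigl(\int_\Theta \ell q_k\bigr)^{-1}$, into the external-Bayesianity identity, apply A\ref{ax:WLP} to both sides, and cancel the (distinct but $\btheta$-independent) normalizers. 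Fixing a point $\btheta$ and writing $x_k=q_k(\btheta)$, $s=\ell(\btheta)$, this yields the multiplicative functional equation
\[
h_\btheta\bigl(a_1 s\, x_1,\dots,a_K s\, x_K\bigr)=\kappa\, s\, h_\btheta(x_1,\dots,x_K),
\]
where the multiplier $\kappa$ is independent of $\btheta$. The crucial enabling step is to decouple the parameters: because the setting is non-atomic, I can modify $\ell$ and the profile away from $\btheta$ so as to realize the scale factors $a_1,\dots,a_K$ and the value $s=\ell(\btheta)$ as essentially arbitrary, mutually independent positive numbers while holding $x_1,\dots,x_K$ fixed. Passing to logarithms turns the equation into a Cauchy/Pexider system; invoking the regularity that the pdf setting and the boundedness/positivity of the admissible functions provide (to exclude pathological, non-measurable additive solutions) gives $h_\btheta(x_1,\dots,x_K)=a(\btheta)\prod_{k=1}^K x_k^{w_k(\btheta)}$.

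It remains to pin down the exponents, and this is where I expect the main difficulty to lie. Imposing A\ref{ax:EB} on the power form at all $\btheta$ and taking logarithms yields a relation $(1-\sum_k w_k(\btheta))\log\ell(\btheta)=\mathrm{const}+\sum_k w_k(\btheta)\log a_k$. Since the quantities $\log a_k=-\log\!\int_\Theta \ell q_k$ can be varied independently of the values $\ell(\btheta)$ by editing $\ell$ and the profile away from $\btheta$, comparing the relation at two points forces the exponents to be $\btheta$-independent, $w_k(\btheta)\equiv w_k$, and then a non-constant $\ell$ forces $\sum_k w_k=1$ (the same constraint whose failure makes A\ref{ax:EB} fail for the generalized multiplicative pooling function). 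Setting $\xi_0:=a$, which is positive and must be bounded for the normalizer $c$ in \eqref{eq:gen_log_linear_pooling} to be finite, identifies $g$ as a generalized log-linear pooling function and closes the equivalence. The delicate points that I would write most carefully are (i) justifying the independent realizability of $(a_1,\dots,a_K,s)$ from genuine likelihoods and profiles, which is exactly what distinguishes the pdf setting from the atomic one, and (ii) securing enough regularity on $h_\btheta$ to apply the linear solution of Cauchy's equation rather than a pathological one.
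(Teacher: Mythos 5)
The paper does not actually prove this theorem: the forward assertions are stated to follow from the definition, and the equivalence \ref{en:ggenloglinpool}$\Leftrightarrow$\ref{en:gwlpeb} is delegated entirely to \cite{genest1986characterization}. Your direct verifications of A\ref{ax:WLP}, A\ref{ax:EB}, A\ref{ax:GB}, and the symmetry characterization are correct and are exactly the computations the paper has in mind. One small caveat on factorization preservation: the fused density is $c\,\xi_0(\btheta)\prod_k q_k(\btheta)^{w_k}$, and while the geometric-mean factor inherits the factorization through $f_1,f_2$, the prefactor $\xi_0(\btheta)$ need not factor through $(f_1,f_2)$ for arbitrary $f_1,f_2$; your argument (like the paper's table entry) silently assumes it does, so you should either restrict to $\xi_0$ that factor accordingly or note the caveat.

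The genuine gap is in the converse direction \ref{en:gwlpeb}$\Rightarrow$\ref{en:ggenloglinpool}. Your functional-equation route (decouple $(a_1,\dots,a_K,s)$, solve the resulting Pexider system to get $h_\btheta(x)=a(\btheta)\prod_k x_k^{w_k(\btheta)}$, then force $w_k(\btheta)\equiv w_k$ and $\sum_k w_k=1$) is the right skeleton and matches the strategy of \cite{genest1986characterization}, but it never establishes $w_k\ge 0$, i.e., that the weights land in $\mathcal{S}_K$ as required by \eqref{eq:gen_log_linear_pooling}. This is not a technicality one can wave away: the paper explicitly remarks that for pmfs these same two axioms yield only a ``modified'' pooling function that may contain negative weights, and that the pdf (non-atomic) setting is essential precisely to exclude them. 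The missing argument is roughly that if some $w_j<0$, one can exhibit a positive opinion profile for which $q_j$ decays to zero fast enough near a point that $\xi_0\prod_k q_k^{w_k}$ fails to be integrable while the other agent pdfs stay bounded away from zero there, contradicting that $g$ outputs a valid pdf; an analogous construction is needed to show the derived $a(\btheta)$ can serve as a bounded positive $\xi_0$. Without this step your proof only delivers the ``modified'' form, not the generalized log-linear pooling function as defined in the paper. The two realizability/regularity points you flag (independent tuning of the normalizers $a_k$ and measurability of $h_\btheta$ to rule out pathological Cauchy solutions) are indeed the other places where the writeup must be careful, but the sign constraint on the weights is the one substantive omission.
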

This characterization theorem was proven in \cite{genest1986characterization}.
We note that  the assumption of fusing pdfs (rather than general measures) is essential.
In particular, for pmfs axioms A\ref{ax:WLP} and A\ref{ax:EB} would imply only a ``modified'' generalized log-linear pooling function that may contain negative weights \cite{genest1986characterization}. 
In \cite{genest1986characterization}, one can also find a characterization of all pooling functions that satisfy external Bayesianity (A\ref{ax:EB}).
However, these pooling functions do not have a simple structure.

\begin{theorem}
    The H\"older pooling function in \eqref{eq:holderpooling} satisfies 
    unanimity preservation (A\ref{ax:UP}), the likelihood principle (A\ref{ax:LP}), and the weak likelihood principle (A\ref{ax:WLP}).
    In addition, it satisfies the symmetry axiom (A\ref{ax:S}) if and only if all weights are equal, i.e., $w_1=w_2=\cdots=w_K$.
\end{theorem}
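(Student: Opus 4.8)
The plan is to read the three asserted axioms directly off the definition \eqref{eq:holderpooling} and to treat the symmetry claim separately. For the likelihood principle (A\ref{ax:LP}), I would simply exhibit the function
\begin{equation*}
  h(x_1,\ldots,x_K) = \Bigg(\sum_{k=1}^K w_k x_k^\alpha\Bigg)^{1/\alpha},
\end{equation*}
which is a well-defined map into $[0,\infty)$ on the relevant domain (for $\alpha<0$ one restricts to positive profiles, exactly as already required for $c$ to be defined). With this $h$, the definition \eqref{eq:holderpooling} is literally of the form demanded by A\ref{ax:LP}, since the normalization constant $c$ is precisely the reciprocal of $\int_\Theta h(q_1(\btheta'),\ldots,q_K(\btheta'))\,\mathrm{d}\btheta'$. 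The weak likelihood principle (A\ref{ax:WLP}) then follows at once from the $\btheta$-independent choice $h_{\btheta}=h$.

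For unanimity preservation (A\ref{ax:UP}), I would substitute $q_k=q_0$ for all $k$ and use $(w_1,\ldots,w_K)\in\mathcal{S}_K$, so that
\begin{equation*}
  \sum_{k=1}^K w_k (q_0(\btheta))^\alpha = (q_0(\btheta))^\alpha .
\end{equation*}
The unnormalized density is then $((q_0(\btheta))^\alpha)^{1/\alpha}=q_0(\btheta)$, the normalization constant becomes $c=1/\int_\Theta q_0(\btheta)\,\mathrm{d}\btheta=1$, and hence $q=q_0$.

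The substantive claim is that A\ref{ax:S} holds if and only if all weights coincide. The ``if'' direction is immediate: with $w_k=1/K$ the weighted power sum is $\tfrac1K\sum_k (q_k(\btheta))^\alpha$, which is invariant under any permutation of the $q_k$, and therefore so is the normalized pooling function. For the ``only if'' direction I would argue by contraposition. Suppose $w_i\neq w_j$ for some $i\neq j$ and let $\beta$ be the transposition of $i$ and $j$. The key reduction is that two positive integrable functions normalize to the same pdf precisely when they are proportional; thus symmetry under $\beta$ for a given profile is equivalent to proportionality of the unnormalized functions $(S(\btheta))^{1/\alpha}$ and $(\tilde S(\btheta))^{1/\alpha}$, where $S(\btheta)=\sum_k w_k (q_k(\btheta))^\alpha$ and $\tilde S$ is its swapped counterpart, and since $x\mapsto x^{1/\alpha}$ is a bijection of $(0,\infty)$ this is equivalent to proportionality of $S$ and $\tilde S$. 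To break proportionality, I would take a test profile in which all agents other than $i$ share a common pdf, so that $S$ and $\tilde S$ collapse to the two affine forms $w_i u+(1-w_i)v$ and $w_j u+(1-w_j)v$ in $u=(q_i(\btheta))^\alpha$ and $v=(q_j(\btheta))^\alpha$. Choosing $q_i\neq q_j$ makes the ratio $u/v$ non-constant over $\Theta$, so the two forms are proportional only if $w_i(1-w_j)=w_j(1-w_i)$, i.e., $w_i=w_j$, a contradiction; hence symmetry forces $w_1=\cdots=w_K$.

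I expect the only real obstacle to be the ``only if'' part of the symmetry statement, specifically the fact that the normalization constant $c$ itself depends on the opinion profile, so that swapping two agents changes $c$ and one cannot compare unnormalized functions naively. The clean way around this is the proportionality criterion for normalized densities noted above, after which the argument collapses to the elementary observation that two affine combinations of $u$ and $v$ agree up to a scalar only when their coefficients are proportional.
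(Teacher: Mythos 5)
Your proposal is correct. The paper omits the proof of this theorem as ``straightforward,'' and your argument is exactly the direct verification it has in mind: reading A\ref{ax:UP}, A\ref{ax:LP}, and A\ref{ax:WLP} off the definition \eqref{eq:holderpooling}, and settling the only substantive point---the ``only if'' direction of the symmetry claim---correctly by reducing equality of normalized densities to proportionality of the unnormalized ones and then testing with a profile in which all agents except $i$ share one pdf, so that proportionality of the two affine forms in $u$ and $v$ forces $w_i=w_j$.
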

The proof of this theorem is straightforward and thus omitted. 
Because the inverse-linear pooling function \eqref{eq:inversepooling} is a special case of the H\"older pooling function, it follows that it also satisfies A\ref{ax:UP}, A\ref{ax:LP}, and A\ref{ax:WLP}.


\begin{theorem}
\label{th:mult_pooling}
    The multiplicative pooling function in \eqref{eq: multiplicative_pooling} satisfies 
    the  symmetry axiom (A\ref{ax:S}), 
    the weak likelihood principle (A\ref{ax:WLP}), factorization preservation (A\ref{ax:FP}),  individualized Bayesianity (A\ref{ax:IB}), and generalized Bayesianity (A\ref{ax:GB}).
    Furthermore, for a pooling function $g$ the following statements are equivalent:
    \begin{enumerate}
    \renewcommand{\theenumi}{(\roman{enumi})}
    \renewcommand{\labelenumi}{(\roman{enumi})}
        \item \label{en:gmultpool} $g$ is a multiplicative pooling function with calibrating pdf $q_0$;
        \item \label{en:glib} $g$ satisfies individualized Bayesianity (A\ref{ax:IB}) and there exists a pdf $q_0(\btheta)$ such that $g[q_0, \dots, q_0](\btheta) = q_0(\btheta)$.
    \end{enumerate}
\end{theorem}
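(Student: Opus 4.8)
The plan is to treat the two assertions separately: first that the multiplicative pooling function \eqref{eq: multiplicative_pooling} satisfies the five listed axioms (which also yields \ref{en:gmultpool} $\Rightarrow$ \ref{en:glib}), and then the nontrivial converse \ref{en:glib} $\Rightarrow$ \ref{en:gmultpool}. The axiom checks all reduce to matching normalization constants after substituting \eqref{eq: multiplicative_pooling}. Symmetry (A\ref{ax:S}) is immediate because $\prod_{k=1}^K q_k(\btheta)$ and $c$ are permutation-invariant; the weak likelihood principle (A\ref{ax:WLP}) holds with $h_{\btheta}(x_1, \dots, x_K) = (q_0(\btheta))^{1-K}\prod_{k=1}^K x_k$; and factorization preservation (A\ref{ax:FP}) follows because a product of pdfs that factor through $f_1$ and $f_2$ again factors (the calibrating pdf contributing the factor $(q_0(\btheta))^{1-K}$). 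For generalized Bayesianity (A\ref{ax:GB}) I would exhibit the opinion-independent fused likelihood $h[\ell_1, \dots, \ell_K](\btheta) = \prod_{k=1}^K \ell_k(\btheta)$ and check that both sides of \eqref{eq:gnbayes} are the normalization of $\big(\prod_{k=1}^K \ell_k(\btheta)\big)(q_0(\btheta))^{1-K}\prod_{k=1}^K q_k(\btheta)$; individualized Bayesianity (A\ref{ax:IB}) is the special case where all but the $k$th likelihood equal the constant $1$. This settles the forward claims, and \ref{en:gmultpool} $\Rightarrow$ \ref{en:glib} then follows at once, since $g[q_0, \dots, q_0](\btheta) = c\,(q_0(\btheta))^{1-K}(q_0(\btheta))^{K} = q_0(\btheta)$ with $c = 1$ supplies the required fixed point.

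For the converse, assume $g$ satisfies A\ref{ax:IB} and $g[q_0, \dots, q_0] = q_0$ for some pdf $q_0$. The central idea is to write each $q_k$ as a Bayesian update of $q_0$: with $\ell_k(\btheta) = q_k(\btheta)/q_0(\btheta)$, \eqref{eq: discrete_bayesian_update} gives $q_0^{(\ell_k)} = q_k$. Starting from $(q_0, \dots, q_0)$, whose aggregate is $q_0$, I would apply A\ref{ax:IB} once per agent, replacing the $k$th entry by $q_0^{(\ell_k)} = q_k$ while updating the running aggregate by $\ell_k$. After incorporating agents $1, \dots, m$ an induction shows the aggregate equals the normalization of $q_1 \cdots q_m/(q_0)^{m-1}$, so after all $K$ steps $g[q_1, \dots, q_K](\btheta) \propto (q_0(\btheta))^{1-K}\prod_{k=1}^K q_k(\btheta)$, which upon normalization is precisely \eqref{eq: multiplicative_pooling}.

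The main obstacle is ensuring this iteration is admissible at every stage. Axiom A\ref{ax:IB} applies only to bounded, positive likelihoods, so I must check that each $\ell_k = q_k/q_0$ qualifies: positivity comes from restricting to positive opinion profiles, and boundedness is exactly the standing assumption that $q_k/q_0$ be bounded accompanying \eqref{eq: multiplicative_pooling} (the footnote to A\ref{ax:IB} flags boundedness and positivity as necessary precisely for this characterization). Under it, each intermediate integrand $q_1 \cdots q_m/(q_0)^{m-1} = q_1 \prod_{j=2}^m (q_j/q_0)$ is dominated by a constant multiple of the pdf $q_1$, so every normalization is finite and nonzero, and every application of A\ref{ax:IB} is to a genuine positive opinion profile $(q_1, \dots, q_m, q_0, \dots, q_0)$. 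Tracking these well-definedness conditions is the only delicate point; the algebra itself is a sequence of elementary update-and-renormalize steps.
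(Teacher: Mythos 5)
Your proposal is correct and follows essentially the same route as the paper: the axiom checks are direct substitutions into \eqref{eq: multiplicative_pooling}, and the converse is proved by writing $q_k = q_0^{(q_k/q_0)}$ and peeling off the updates one agent at a time via individualized Bayesianity until only $g[q_0,\dots,q_0]=q_0$ remains. Your extra care about boundedness and positivity of the $\ell_k = q_k/q_0$ at each intermediate stage is a welcome explicit treatment of conditions the paper's appendix only mentions in passing.
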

The claimed axioms follow straightforwardly from the definition of the pooling function.
A result similar to the equivalence of \ref{en:gmultpool} and \ref{en:glib} was proven for pmfs in \cite{dietrichprobabilistic}. 
We provide a proof for pdfs in Appendix~\ref{app:proof_mult_pooling}.

\begin{theorem}
    The generalized multiplicative pooling function in \eqref{eq:gen_mult_pooling} satisfies 
    the weak likelihood principle (A\ref{ax:WLP}), factorization preservation (A\ref{ax:FP}), and generalized Bayesianity (A\ref{ax:GB}).
    In addition, it satisfies the symmetry axiom (A\ref{ax:S}) if and only if all weights are equal, i.e., $w_1=w_2=\cdots=w_K$.
\end{theorem}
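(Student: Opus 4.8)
The plan is to verify the three claimed axioms by exhibiting, in each case, the auxiliary object the axiom asks for, and then to settle the symmetry equivalence by a short construction. Throughout I would exploit the fact that the generalized multiplicative pooling function \eqref{eq:gen_mult_pooling} is, up to the profile-independent normalization $c$, a pointwise product of powers of the agent pdfs and a power of the fixed calibrating pdf $q_0$; every one of the verifications reduces to manipulating this product.

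For the weak likelihood principle (A\ref{ax:WLP}) I would define, for each $\btheta$, the function $h_{\btheta}(x_1,\ldots,x_K) = (q_0(\btheta))^{1-\sum_{k=1}^K w_k}\prod_{k=1}^K x_k^{w_k}$. Evaluating at the agent values gives $h_{\btheta}(q_1(\btheta),\ldots,q_K(\btheta)) = (q_0(\btheta))^{1-\sum_k w_k}\prod_k (q_k(\btheta))^{w_k}$, which is exactly the unnormalized pooling function, so normalizing this expression as a function of $\btheta$ returns $q(\btheta)$. Note that $h_{\btheta}$ genuinely depends on $\btheta$ through $q_0(\btheta)$ whenever $\sum_k w_k \neq 1$; this is precisely why only the weak principle holds and not the likelihood principle (A\ref{ax:LP}), whose $h$ must be $\btheta$-independent. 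For factorization preservation (A\ref{ax:FP}), suppose each $q_k(\btheta) = q_{k,1}(f_1(\btheta))\,q_{k,2}(f_2(\btheta))$. Then $\prod_k (q_k(\btheta))^{w_k}$ splits as $\big[\prod_k (q_{k,1}(f_1(\btheta)))^{w_k}\big]\big[\prod_k (q_{k,2}(f_2(\btheta)))^{w_k}\big]$, a product of a function of $f_1(\btheta)$ and a function of $f_2(\btheta)$. The remaining factor $(q_0(\btheta))^{1-\sum_k w_k}$ splits as well provided $q_0$ itself factorizes with respect to $f_1,f_2$; this holds trivially when $\sum_k w_k = 1$ (the factor is constant) and, more generally, whenever $q_0$ is taken to be a coordinate-wise product prior, the natural choice in the independence-preservation setting. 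This dependence on $q_0$ is the one point that must be stated carefully.

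For generalized Bayesianity (A\ref{ax:GB}) I would guess the fused likelihood and then verify it. Setting $h[\ell_1,\ldots,\ell_K](\btheta) = \prod_{k=1}^K (\ell_k(\btheta))^{w_k}$ --- which depends on the likelihoods only, not on the opinion profile, as A\ref{ax:GB} requires --- I would compute both sides of \eqref{eq:gnbayes}. On the right, substituting $q_k^{(\ell_k)}(\btheta) = \ell_k(\btheta)q_k(\btheta)/\int_\Theta \ell_k q_k$ from \eqref{eq: discrete_bayesian_update} pulls the profile-dependent constants $\big(\int_\Theta \ell_k q_k\big)^{w_k}$ out of the product, where they are absorbed into the normalization; this turns $g[q_1^{(\ell_1)},\ldots,q_K^{(\ell_K)}]$ into a normalized version of $(q_0)^{1-\sum_k w_k}\prod_k (\ell_k)^{w_k}(q_k)^{w_k}$. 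On the left, \eqref{eq:fused_bayesian_update} multiplies $q = g[q_1,\ldots,q_K]$ by $h[\ell_1,\ldots,\ell_K]$ and renormalizes, giving a normalized version of the same expression. Since both sides are normalizations of the same unnormalized function, they coincide. The standing boundedness assumptions on the $\ell_k$ and on $(q_k/q_0)^{w_k}$ guarantee that all the normalizing integrals are finite and nonzero.

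Finally, for symmetry (A\ref{ax:S}), the ``if'' direction is immediate: when $w_1 = \cdots = w_K$, the product $\prod_k (q_k(\btheta))^{w_k}$ is invariant under permutations of the $q_k$, so $g$ is symmetric. For the ``only if'' direction I would argue by contraposition: if $w_i \neq w_j$ for some $i,j$, pick a positive profile with $q_i \neq q_j$ and the remaining pdfs arbitrary. Swapping the $i$th and $j$th arguments replaces the factor $(q_i)^{w_i}(q_j)^{w_j}$ by $(q_i)^{w_j}(q_j)^{w_i}$, and these agree up to a constant only if $(q_i/q_j)^{w_i - w_j}$ is constant; since $w_i \neq w_j$ this forces $q_i/q_j$ to be constant, i.e.\ $q_i = q_j$, contradicting the choice. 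Hence the two orderings yield different aggregate pdfs and symmetry fails. The only subtlety here is to check that the profile-dependent normalization constants cannot restore equality, which follows because two normalized pdfs coincide iff their unnormalized integrands are proportional. I expect this last construction, together with the careful treatment of the $q_0$ factor in factorization preservation, to be the only places requiring genuine attention.
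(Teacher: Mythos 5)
Your verification is correct and follows exactly the route the paper takes: the paper offers no written proof beyond ``the claimed axioms follow straightforwardly from the definition of the pooling function,'' and your computations (the $\btheta$-dependent $h_{\btheta}$ for A\ref{ax:WLP}, the fused likelihood $h[\ell_1,\ldots,\ell_K]=\prod_k \ell_k^{w_k}$ for A\ref{ax:GB}, and the transposition argument for the ``only if'' direction of symmetry) are precisely the intended straightforward checks. The one place where you go beyond the paper is your caveat on factorization preservation, and it is a genuine and correct observation rather than excessive caution: if $\sum_{k=1}^K w_k \neq 1$ and the calibrating pdf $q_0$ does not itself factorize with respect to $f_1,f_2$, then the residual factor $(q_0(\btheta))^{1-\sum_k w_k}$ destroys the product structure --- e.g.\ on $\Theta=(0,1)^2$ with coordinate projections $f_1,f_2$, uniform agent pdfs, and $q_0(\btheta)\propto 1+\theta_1\theta_2$, the aggregate is proportional to a negative power of $1+\theta_1\theta_2$ and does not split. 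Since Axiom~\ref{ax:FP} only hypothesizes factorization of the agent pdfs $q_1,\ldots,q_K$, the theorem's unqualified claim of A\ref{ax:FP} needs the additional hypothesis you state ($\sum_k w_k=1$, or $q_0$ a compatible product), and the same remark applies to the paper's A\ref{ax:FP} claims for the multiplicative and generalized log-linear pooling functions, whose factors $(q_0)^{1-K}$ and $\xi_0$ raise the identical issue. So your proposal is not merely correct; it supplies the qualification the paper's one-line proof silently omits.
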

Again, the 
 claimed axioms follow straightforwardly from the definition of the pooling function.

\begin{theorem}
\label{th:dict}
    The dictatorship pooling function in \eqref{eq: dictatorship} satisfies the ZPP (A\ref{ax:ZPP}), unanimity preservation (A\ref{ax:UP}), the SSFP  (A\ref{ax:SSFP}), the  WSFP  (A\ref{ax:WSFP}), the likelihood principle (A\ref{ax:LP}), the weak likelihood principle (A\ref{ax:WLP}), independence preservation (A\ref{ax:IP}), factorization preservation (A\ref{ax:FP}), external Bayesianity (A\ref{ax:EB}), and generalized Bayesianity (A\ref{ax:GB}).
    Furthermore, for a pooling function $g$ the following statements are equivalent:
    \begin{enumerate}
    \renewcommand{\theenumi}{(\roman{enumi})}
    \renewcommand{\labelenumi}{(\roman{enumi})}
        \item \label{en:gdictpool} $g$ is a dictatorship pooling function;
        \item \label{en:gssfpip} $g$ satisfies the SSFP  (A\ref{ax:SSFP}) and independence preservation (A\ref{ax:IP});
        \item \label{en:gwsfpip} $g$ satisfies the WSFP  (A\ref{ax:WSFP}) and independence preservation (A\ref{ax:IP});
        \item \label{en:gssfpeb} $g$ satisfies the SSFP  (A\ref{ax:SSFP}) and external Bayesianity (A\ref{ax:EB});
        \item \label{en:gwsfpeb} $g$ satisfies the WSFP  (A\ref{ax:WSFP}) and external Bayesianity (A\ref{ax:EB});
        \item \label{en:gssfpgb} $g$ satisfies the SSFP  (A\ref{ax:SSFP}) and generalized Bayesianity (A\ref{ax:GB}).
    \end{enumerate}
\end{theorem}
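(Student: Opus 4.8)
The plan is to prove the two halves separately: first that the dictatorship pooling function in \eqref{eq: dictatorship} satisfies the listed axioms, and then the equivalence \ref{en:gdictpool}--\ref{en:gssfpgb}. For the first half, everything is immediate from the fact that the fused pdf equals a single fixed agent pdf $q_{k_0}$. The ZPP, unanimity preservation, the SSFP (take $h(x_1,\ldots,x_K)=x_{k_0}$), the WSFP, the likelihood principle (take $h(y_1,\ldots,y_K)=y_{k_0}$, which already normalizes to one), the weak likelihood principle, independence preservation, and factorization preservation all hold because the aggregate inherits the corresponding property of the dictator verbatim. External and generalized Bayesianity hold because updating commutes with selecting the dictator: $g[q_1^{(\ell)},\ldots,q_K^{(\ell)}]=q_{k_0}^{(\ell)}=q^{(\ell)}$, and for A\ref{ax:GB} one simply sets $h[\ell_1,\ldots,\ell_K]=\ell_{k_0}$.

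For the equivalence, note that \ref{en:gdictpool} implies all of \ref{en:gssfpip}--\ref{en:gssfpgb} by the first half, and that the SSFP (A\ref{ax:SSFP}) is stronger than the WSFP (A\ref{ax:WSFP}), so \ref{en:gssfpip} implies \ref{en:gwsfpip} and \ref{en:gssfpeb} implies \ref{en:gwsfpeb}. It therefore suffices to prove that each of \ref{en:gwsfpip}, \ref{en:gwsfpeb}, and \ref{en:gssfpgb} implies \ref{en:gdictpool}. For \ref{en:gwsfpip} and \ref{en:gwsfpeb}, Theorem~\ref{th:genlinpool} reduces $g$ to a generalized linear pooling function $q=\sum_{k=0}^{K}w_kq_k$, and the first task is to eliminate the fixed term $q_0$. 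For \ref{en:gwsfpip} I would take a unanimous profile $q_1=\cdots=q_K=q$ and fixed events $\mathcal{A},\mathcal{B}$ whose four Boolean cells have positive Lebesgue measure; using the non-atomic structure I can build $q$ so that $\mathcal{A},\mathcal{B}$ are independent with $Q(\mathcal{A})=a$, $Q(\mathcal{B})=b$ arbitrary in $[0,1]$. Writing out $Q(\mathcal{A}\cap\mathcal{B})=Q(\mathcal{A})Q(\mathcal{B})$ and collecting the coefficient of $ab$ forces $w_0(1-w_0)=0$; the case $w_0=1$ is excluded because it would require $q_0$ to make every pair of events independent, which is impossible for a genuine pdf, so $w_0=0$. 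For \ref{en:gwsfpeb} I would instead pick a positive profile and a likelihood $\ell$ vanishing on a small ball where $q_0>0$: on that ball $g[q_1^{(\ell)},\ldots,q_K^{(\ell)}]=w_0q_0>0$ while $q^{(\ell)}=0$, so external Bayesianity forces $w_0=0$. In both cases $g$ is now an ordinary linear pooling function $\sum_{k=1}^{K}w_kq_k$.

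With $q_0$ gone, I finish \ref{en:gwsfpip} by using independence preservation on the same four-cell construction with free per-agent values $a_k=Q_k(\mathcal{A})$, $b_k=Q_k(\mathcal{B})$: the requirement $\sum_k w_ka_kb_k=(\sum_kw_ka_k)(\sum_kw_kb_k)$ for all $a_k,b_k$, specialized to $a_k=b_k$, says the weight vector has zero variance, hence is a point mass, i.e.\ dictatorship. For \ref{en:gwsfpeb} I impose external Bayesianity on $\sum_kw_kq_k$; cancelling the common factor $\ell$ yields $\sum_k w_k\bigl(Z_k^{-1}-(\sum_jw_jZ_j)^{-1}\bigr)q_k\equiv0$ with $Z_k=\int\ell q_k$, and choosing linearly independent $q_k$ and an $\ell$ separating any two normalizers forces all but one weight to vanish. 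Finally, for \ref{en:gssfpgb}, Theorem~\ref{th:linearpooling} makes $g$ a linear pooling function; plugging a unanimous profile into the generalized-Bayesianity identity \eqref{eq:gnbayes} and dividing by $q$ gives $\sum_k\frac{w_k}{Z_k}\ell_k\propto h[\ell_1,\ldots,\ell_K]$, where the right-hand side is a \emph{single} function independent of the profile. Comparing two unanimous profiles that induce different ratios $Z_1:Z_2$ (possible whenever $\ell_1\neq\ell_2$ and the $\ell_k$ are linearly independent) contradicts this proportionality unless exactly one weight is nonzero, again giving dictatorship.

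The main obstacle is the generalized-Bayesianity step \ref{en:gssfpgb}, because the fused likelihood $h[\ell_1,\ldots,\ell_K]$ is only guaranteed to \emph{exist} and is not given explicitly, while every quantity in sight (the normalizers $Z_k$ and the update normalizer of $q$) depends on the opinion profile. The crux is to exploit the one thing that does \emph{not} depend on the profile, namely $h$ itself: forcing the profile-dependent combination $\sum_k\frac{w_k}{Z_k}\ell_k$ to stay proportional to a fixed function is exactly what rules out two surviving weights. A secondary technical point, which I expect to be routine but must be handled carefully, is the elimination of the auxiliary term $q_0$ in the WSFP cases, together with the explicit non-atomic constructions (four positive-measure Boolean cells, likelihoods vanishing on a ball, linearly independent positive pdfs) that make the pdf setting behave as required.
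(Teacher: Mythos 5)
Your proof is correct, and while its skeleton matches the paper's (the first half and the implications out of (i) are immediate, the SSFP trivially implies the WSFP, so everything reduces to showing that (iii), (v), and (vi) each imply dictatorship), your arguments for the hard implications take a genuinely different and more self-contained route. The paper only proves the reductions (iii) $\Rightarrow$ (ii) (by showing that independence preservation implies the ZPP and then invoking Theorem~\ref{th:linearpooling}) and (v) $\Rightarrow$ (iv) (by combining the generalized linear and generalized log-linear characterizations with an $\varepsilon$-mixture limiting argument to kill the weight $w_0'$), and then defers (ii) $\Rightarrow$ (i) and (iv) $\Rightarrow$ (i) to the cited literature; you instead prove (iii) $\Rightarrow$ (i) and (v) $\Rightarrow$ (i) outright. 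Your zero-variance argument for independence preservation and your normalizer-separation argument for external Bayesianity are both sound: the four-cell construction, the vanishing-$\ell$ trick (admissible under Axiom~\ref{ax:EB}, since only the prior profile is required to be positive, not the updated one), and the cancellation of a strictly positive $\ell$ all go through, and the payoff is that the whole equivalence becomes independent of the external references. For (vi) $\Rightarrow$ (i) both you and the paper pass through Theorem~\ref{th:linearpooling} and then exploit the profile-independence of the fused likelihood $h[\ell_1,\ldots,\ell_K]$; the paper does so with one explicit two-valued likelihood and the comparison profile $(q_0,\ldots,q_0)$, ending in the quadratic $w_k^2-w_k=0$, whereas you compare two unanimous profiles against linearly independent $\ell_k$ --- two valid instances of the same idea. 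The one place where your write-up is looser than it should be is the exclusion of $w_0=1$ in the independence-preservation case: the claim that ``$q_0$ would have to make every pair of events independent, which is impossible'' needs the short supplementary argument that the pairs renderable independent by some admissible profile (all pairs whose four Boolean cells have positive Lebesgue measure) already overdetermine $Q_0$ --- for instance, shrinking $\mB$ inside $\mA\cap\mB$ by a set of positive $Q_0$-measure forces $Q_0(\mA)=1$, a contradiction. The paper's detour through the ZPP and the SSFP avoids this point entirely, at the cost of leaning on the cited result for the final step.
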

Our statements regarding  the satisfied axioms follow easily from the definition of the dictatorship pooling function. 
The equivalence of \ref{en:gdictpool} and \ref{en:gssfpip} was proven in \cite[Theorem~3.1]{genest1984pooling}. 
In Appendix~\ref{app:proof_dict}, we strengthen this result and show that the WSFP---instead of the (stronger) SSFP---in combination with independence preservation suffices to axiomatically define the dictatorship pooling function, i.e., that \ref{en:gwsfpip} implies \ref{en:gssfpip}.
The equivalence of \ref{en:gdictpool} and \ref{en:gssfpeb} was proven in \cite{genest1984conflict}.
In fact, \cite{genest1984conflict} even states the equivalence of \ref{en:gdictpool} and \ref{en:gwsfpeb} by proving that the version of external Bayesianity considered in \cite{genest1984conflict} implies the ZPP.
However, our formulation of external Bayesianity assumes positive opinion profiles and thus the ZPP cannot be proven.
To close this gap, we further show in Appendix~\ref{app:proof_dict} that \ref{en:gwsfpeb} implies \ref{en:gssfpeb}.
Finally, we also show in Appendix~\ref{app:proof_dict} that \ref{en:gssfpgb} implies \ref{en:gdictpool}. 

We  note that the dictatorship pooling function is a special case of both the linear and log-linear pooling functions, when one of the weights is 1 and all the others are 0.
The fact that the dictatorship pooling function satisfies ten axioms  shows that 
a pooling function that satisfies many axioms
is not necessarily a useful pooling function.

Turning to the dogmatic pooling function, we first present a preliminary result that is proven in Appendix~\ref{app:prooflemgbwsfpdog}.
\begin{lemma}
\label{lem:gbwsfpdog}
    Assume that a pooling function $g$ satisfies the WSFP  (A\ref{ax:WSFP}) and  generalized Bayesianity (A\ref{ax:GB}). 
    Then $g$ is either a dogmatic pooling function or a dictatorship pooling function. 
\end{lemma}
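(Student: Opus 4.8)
The plan is to first use the weak setwise function property to reduce $g$ to an explicit algebraic form, and then to use generalized Bayesianity to force all but one weight to vanish. By Theorem~\ref{th:genlinpool}, any pooling function satisfying the WSFP (A\ref{ax:WSFP}) is a generalized linear pooling function, so I may assume $g[q_1,\ldots,q_K](\btheta) = \sum_{k=0}^{K} w_k q_k(\btheta)$ as in \eqref{eq:gen_linear_pooling}, with $(w_0,\ldots,w_K) \in \mathcal{S}_{K+1}$ and a fixed pdf $q_0$. Here the dogmatic function corresponds to $w_0 = 1$ and the dictatorship functions to $w_j = 1$ for some $j \geq 1$, so it suffices to show that generalized Bayesianity forces exactly one of $w_0, w_1, \ldots, w_K$ to equal $1$. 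Since the weights are nonnegative and sum to one, this follows once I prove that $w_j \in \{0,1\}$ for every agent index $j \in \{1,\ldots,K\}$.

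Fix an agent $j$. I would invoke generalized Bayesianity (A\ref{ax:GB}) with the single nontrivial likelihood $\ell_j$ and $\ell_k \equiv 1$ for $k \neq j$, so that $q_k^{(\ell_k)} = q_k$ for $k \neq j$ and \eqref{eq:gnbayes} reads $q^{(h)} = w_0 q_0 + w_j q_j^{(\ell_j)} + \sum_{k\ge 1,\,k\neq j} w_k q_k$, where the fused likelihood $h = h[\ell_1,\ldots,\ell_K]$ depends only on the $\ell_k$, hence here only on $\ell_j$. Writing $S := w_0 q_0 + \sum_{k\ge 1,\,k\neq j} w_k q_k$ for the ``frozen'' part of the profile (so $\int_\Theta S = 1 - w_j$) and $c_j := \int_\Theta \ell_j q_j$, and using the definitions \eqref{eq: discrete_bayesian_update} and \eqref{eq:fused_bayesian_update}, this identity becomes $h\,(w_j q_j + S) = Z\big(S + \tfrac{w_j}{c_j}\ell_j q_j\big)$ with $Z = \int_\Theta h\,(w_j q_j + S)$, and it must hold for every positive opinion profile.

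The first key step is to pin down $h$ up to a positive scalar. Evaluating the identity on a unanimity-type reference profile (all agent pdfs equal to a common positive pdf, chosen equal to $q_0$ when $w_0 > 0$) collapses the fused pdf to a single density and yields that $h$ is \emph{affine} in $\ell_j$, namely $h \propto (1-w_j) + \tfrac{w_j}{a}\ell_j$ with $a$ the corresponding normalization integral. The subcase $w_0 = 0$ is immediate, since then $q_0$ does not appear; for $w_0 > 0$ the only delicate point is keeping the reference pdf positive, which can be arranged by a positive approximation of $q_0$. The second key step is to substitute this $h$ back into the identity for a \emph{general} positive profile. After clearing denominators this produces a single relation of the form $w_j(1-w_j)\,q_j + \big[(1-w_j)-Z\big]S + w_j\big(\tfrac{w_j}{a}-\tfrac{Z}{c_j}\big)\ell_j q_j + \tfrac{w_j}{a}\,\ell_j S = 0$ among the four functions $q_j,\ S,\ \ell_j q_j,\ \ell_j S$.

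The crux is then a linear-independence argument. When $w_j \in (0,1)$ the function $S$ is a genuine nonzero positive density (its integral is $1-w_j > 0$), and one can choose $q_j$, the frozen pdfs, and the bounded positive likelihood $\ell_j$ (for instance Gaussians with distinct parameters together with a suitable bounded $\ell_j$) so that $\{q_j,\ \ell_j q_j,\ S,\ \ell_j S\}$ are linearly independent. For such a profile all four coefficients in the relation above must vanish; in particular the coefficient of $q_j$ gives $w_j(1-w_j) = 0$, contradicting $w_j \in (0,1)$. Hence $w_j \in \{0,1\}$ for every $j$. Finally, because the weights are nonnegative and sum to one, either some $w_j = 1$ for an agent $j \ge 1$ (forcing all other weights, including $w_0$, to vanish, i.e.\ $g$ is a dictatorship pooling function) or $w_j = 0$ for all $j \ge 1$ (forcing $w_0 = 1$, i.e.\ $g$ is a dogmatic pooling function). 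I expect the main obstacle to be the rigorous linear-independence claim --- exhibiting an explicit positive profile for which the four functions are independent while respecting the positivity constraint on opinion profiles --- together with the care needed to determine $h$ uniquely in the case $w_0 > 0$ with $q_0$ not everywhere positive.
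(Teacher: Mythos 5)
Your proposal is correct in outline and shares its first move with the paper: both invoke Theorem~\ref{th:genlinpool} to reduce $g$ to a generalized linear pooling function $\sum_{k=0}^K w_k q_k$, and both then determine the fused likelihood $h[\ell_1,\dots,\ell_K]$ from a unanimity-type reference profile before testing the generalized-Bayesianity identity against a second profile. Where you diverge is in the second step. The paper only extracts $w_0\in\{0,1\}$ inside the lemma's proof --- it evaluates the identity on a two-piece profile $q_k=\tfrac23 q_0$ on $\mA$, $\tfrac43 q_0$ on $\mB$ with a two-valued likelihood, and solves the resulting cubic $w_0^3-3w_0^2+2w_0=0$; the case $w_0=0$ is then closed by appealing to the separately proven implication that the SSFP together with A\ref{ax:GB} forces dictatorship (Theorem~\ref{th:dict}, itself proved by an analogous quadratic computation). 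You instead show $w_j\in\{0,1\}$ for every agent index $j$ in one stroke, by pinning $h$ down as an affine function $(1-w_j)+\tfrac{w_j}{a}\ell_j$ and then running a linear-independence argument on the four functions $q_j,\ \ell_j q_j,\ S,\ \ell_j S$; the vanishing of the coefficient $w_j(1-w_j)$ (or of $w_j/a$) gives the contradiction. Your route is self-contained and arguably cleaner conceptually, at the price of two technical debts you correctly flag: (i) exhibiting a positive profile on which the four functions are genuinely independent, which needs explicit care in the degenerate case $S=w_0q_0$ where $S$ cannot be varied (it does work: choose $\ell_j$ two-valued on sets where $q_0>0$ and $q_j/q_0$ is non-constant); and (ii) determining $h$ when $w_0>0$ but $q_0$ is not everywhere positive, since the unanimous profile $(q_0,\dots,q_0)$ is then inadmissible and the limiting argument only fixes $h$ on the support of $q_0$. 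Note that the paper's own proof silently assumes $q_0$ positive in exactly the same place, so (ii) is not a defect specific to your argument. With those two points made explicit, your proof would stand as a valid alternative.
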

The following characterization of the dogmatic pooling function now follows easily.
\begin{theorem}
\label{th:dogm}
    The dogmatic pooling function in \eqref{eq: dogmatic_pooling_function} satisfies the symmetry axiom (A\ref{ax:S}), the  WSFP  (A\ref{ax:WSFP}), the weak likelihood principle (A\ref{ax:WLP}), and generalized Bayesianity (A\ref{ax:GB}).
    Conversely, any pooling function that satisfies the symmetry axiom (A\ref{ax:S}), the  WSFP  (A\ref{ax:WSFP}), and generalized Bayesianity (A\ref{ax:GB}) is a dogmatic pooling function.
\end{theorem}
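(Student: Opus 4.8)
The plan is to prove the two directions separately, handling the forward claim by direct verification from the definition $g[q_1,\ldots,q_K](\btheta)=q_0(\btheta)$ and reducing the converse to the already-established Lemma~\ref{lem:gbwsfpdog}.

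For the forward direction, I would substitute the dogmatic output into each of the four axioms. Symmetry (A\ref{ax:S}) is immediate, as the output is independent of the inputs. For the WSFP (A\ref{ax:WSFP}), observe that $Q(\mathcal{A})=\int_{\mathcal{A}}q_0(\btheta)\,\mathrm{d}\btheta$ does not depend on the opinion profile, so the constant choice $h_{\mathcal{A}}\equiv Q_0(\mathcal{A})$ satisfies \eqref{eq:wsft}. For the weak likelihood principle (A\ref{ax:WLP}), I would take $h_{\btheta}$ constant in its arguments, namely $h_{\btheta}(x_1,\ldots,x_K)=q_0(\btheta)$; since $\int_\Theta q_0(\btheta')\,\mathrm{d}\btheta'=1$, normalization returns $q_0$. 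For generalized Bayesianity (A\ref{ax:GB}), the key point is to exhibit a fused likelihood that does \emph{not} depend on the opinion profile: the choice $h[\ell_1,\ldots,\ell_K]\equiv 1$ works, because the right-hand side of \eqref{eq:gnbayes} is the dogmatic output $q_0$, while the left-hand side equals $q_0(\btheta)\big/\int_\Theta q_0(\btheta')\,\mathrm{d}\btheta' = q_0(\btheta)$.

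For the converse, I would invoke Lemma~\ref{lem:gbwsfpdog}: since $g$ satisfies the WSFP and generalized Bayesianity, it is either a dogmatic or a dictatorship pooling function. It then remains only to rule out the dictatorship case using the extra symmetry assumption. Suppose $g$ were a dictatorship with dictator $k$, so that $g[q_1,\ldots,q_K](\btheta)=q_k(\btheta)$ and hence $g[q_{\beta(1)},\ldots,q_{\beta(K)}](\btheta)=q_{\beta(k)}(\btheta)$. Applying A\ref{ax:S} to a transposition $\beta$ swapping $k$ with some $j\neq k$ would force $q_k=q_j$ for \emph{every} opinion profile, which fails as soon as one picks a profile with $q_k\neq q_j$ (possible whenever $K\geq 2$). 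Therefore $g$ cannot be a dictatorship and must be dogmatic.

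The heavy lifting is entirely contained in Lemma~\ref{lem:gbwsfpdog}; given that lemma, the theorem reduces to the routine forward verifications and the short symmetry argument excluding dictatorship. Consequently, I expect the theorem's proof to be brief, with the only place deserving care being the \emph{profile-independent} choice of the fused likelihood function ($h\equiv 1$) in the verification of generalized Bayesianity, since A\ref{ax:GB} requires $h[\ell_1,\ldots,\ell_K]$ not to depend on $(q_1,\ldots,q_K)$.
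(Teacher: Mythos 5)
Your proposal matches the paper's argument exactly: the forward direction is verified directly from the definition (the paper simply calls it obvious), and the converse is obtained by invoking Lemma~\ref{lem:gbwsfpdog} and then using symmetry to exclude the dictatorship alternative. Your explicit choices of $h_{\mathcal{A}}$, $h_{\btheta}$, and the profile-independent fused likelihood $h[\ell_1,\ldots,\ell_K]\equiv 1$ are all valid instantiations of what the paper leaves implicit.
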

It is obvious that the dogmatic pooling function satisfies the stated axioms. 
The converse follows because by Lemma~\ref{lem:gbwsfpdog}  the pooling function must be either a dogmatic pooling function or a dictatorship pooling function, but of these only the dogmatic pooling function is symmetric.

Based on the theorems above, we can 
establish an implication structure for the different axioms from Section~\ref{subsec: axioms_opinion_pooling}, 
which indicates which axioms imply which other axioms.
 To formalize this structure, we will designate the set of all pooling functions that satisfy Axiom~$i$ as $\mathcal{F}_i$. 
The next theorem states the currently known implications.
Venn diagrams representing the implication structure are presented in Fig.~\ref{fig:three graphs}.

\begin{figure*}[tbh]
\centering
    \begin{subfigure}[b]{0.45\textwidth}
         \centering
        \begin{tikzpicture}
        \draw[rounded corners] (0, 0) rectangle (6, 4) {};
        \node at (5.7, 0.2)   (WLP) {$\mF_{\ref{ax:WLP}}$};
        \draw[rounded corners] (0.5, 0.5) rectangle (3, 3.5) {};
        \node at (2.7, 0.7)   (LP) {$\mF_{\ref{ax:LP}}$};
        \draw[rounded corners] (2, 1) rectangle (5, 3) {};
        \node at (4.7, 1.2)   (WSFP) {$\mF_{\ref{ax:WSFP}}$};
        \draw[rounded corners] (1, 2.5) rectangle (4, 5) {};
        \node at (3.65, 4.8)   (GB) {$\mF_{\ref{ax:GB}}$};
        \draw[rounded corners,pattern=south east lines, pattern color=gray] (2, 2.5) rectangle (3, 6) {};
        \node at (2.65, 5.8)   (EB) {$\mF_{\ref{ax:EB}}$};
        \draw[rounded corners,pattern=south west lines, pattern color=gray] (2, 1.5) rectangle (3, 3) {};
        \node at (2.7, 1.7)   (SSFP) {$\mF_{\ref{ax:SSFP}}$};
        \draw[rounded corners] (3.15, 3.25) rectangle (3.75, 3.75) {};
        \node at (3.45, 3.5)   (IB) {$\mF_{\ref{ax:IB}}$};
        \node at (2.5, 2.75)   (Dict) {Dict};
        \node at (3.5, 2.75)   (Dogm) {Dogm};
        \end{tikzpicture}
        \caption{}
        \label{fig:venn1}
     \end{subfigure}
     \begin{subfigure}[b]{0.45\textwidth}
        \centering
        \begin{tikzpicture}
        \draw[rounded corners] (0, 0) rectangle (5, 4) {};
        \node at (4.7, 0.2)   (WLP) {$\mF_{\ref{ax:WLP}}$};
        \draw[rounded corners] (0.5, 0.5) rectangle (3, 3.5) {};
        \node at (2.7, 0.7)   (LP) {$\mF_{\ref{ax:LP}}$};
        \draw[rounded corners] (2, 1) rectangle (4, 3) {};
        \node at (3.7, 1.2)   (WSFP) {$\mF_{\ref{ax:WSFP}}$};
        \draw[rounded corners,pattern=south east lines, pattern color=gray] 
        (-0.5, 2.5) -- (-0.5, 2.9) -- (1.5, 2.9) -- (1.5, 5) -- (3, 5) -- (3, 2.5) -- cycle {};
        \node at (2.65, 4.8)   (IP) {$\mF_{\ref{ax:IP}}$};
        \draw[rounded corners] (-1, 1.5) rectangle (3, 5) {};
        \node at (-0.7, 4.8)   (ZPP) {$\mF_{\ref{ax:ZPP}}$};
        \draw[rounded corners,pattern=south west lines, pattern color=gray] (2, 1.5) rectangle (3, 3) {};
        \node at (2.7, 1.7)   (SSFP) {$\mF_{\ref{ax:SSFP}}$};
        \draw[rounded corners] (-2, 1.5) rectangle (3, 3) {};
        \node at (-1.7, 1.7)   (UP) {$\mF_{\ref{ax:UP}}$};
        \node at (2.5, 2.75)   (Dict) {Dict};
        \end{tikzpicture}
        \caption{}
        \label{fig:venn2}
     \end{subfigure}
\caption{Venn diagrams representing
the implication structure for the axioms from Section~\ref{subsec: axioms_opinion_pooling}:
(a) A\ref{ax:SSFP}--A\ref{ax:WLP} and A\ref{ax:EB}--A\ref{ax:GB} as well as intersections resulting in dictatorship (Dict) or dogmatic (Dogm) pooling functions; 
(b) A\ref{ax:ZPP}--A\ref{ax:IP} as well as intersections resulting in dictatorship (Dict) pooling functions.
Note that the diagrams illustrate the currently known implications, and some regions that appear non-empty in the diagrams may actually be empty sets. 
For better visibility, the sets $\mF_{\ref{ax:SSFP}}$, $\mF_{\ref{ax:IP}}$, and $\mF_{\ref{ax:EB}}$ are highlighted by different line-patterns.}
\label{fig:three graphs}
\end{figure*}
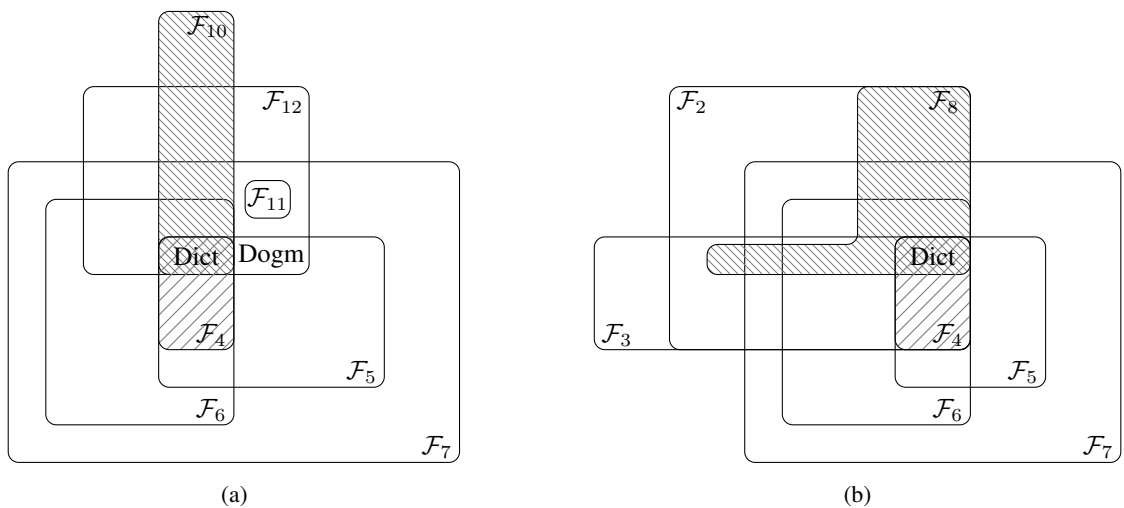

\begin{theorem}
\label{th:implstruct}
    For the axioms introduced in Section~\ref{subsec: axioms_opinion_pooling}, the following implications hold:
    \begin{enumerate}
    \renewcommand{\theenumi}{(\roman{enumi})}
    \renewcommand{\labelenumi}{(\roman{enumi})}
        \item \label{en:implSSFP}The SSFP (A\ref{ax:SSFP}) implies the 
    ZPP (A\ref{ax:ZPP}), unanimity preservation (A\ref{ax:UP}), the WSFP (A\ref{ax:WSFP}), the likelihood principle (A\ref{ax:LP}), and the weak likelihood principle (A\ref{ax:WLP}), i.e., 
        $\mF_{\ref{ax:SSFP}} \subseteq \mF_{\ref{ax:ZPP}} \cap \mF_{\ref{ax:UP}} \cap \mF_{\ref{ax:WSFP}} \cap \mF_{\ref{ax:LP}} \cap \mF_{\ref{ax:WLP}}$.
        Furthermore, $\mF_{\ref{ax:SSFP}} = \mF_{\ref{ax:ZPP}} \cap \mF_{\ref{ax:WSFP}} = \mF_{\ref{ax:UP}} \cap \mF_{\ref{ax:WSFP}}$.
        \item \label{en:implWSFP}The WSFP (A\ref{ax:WSFP}) implies the weak likelihood principle (A\ref{ax:WLP}), i.e., 
        $\mF_{\ref{ax:WSFP}} \subseteq \mF_{\ref{ax:WLP}}$.
        \item \label{en:implLP}The likelihood principle (A\ref{ax:LP}) implies the weak likelihood principle (A\ref{ax:WLP}), i.e., 
        $\mF_{\ref{ax:LP}} \subseteq \mF_{\ref{ax:WLP}}$.
        \item \label{en:implIP}Independence preservation (A\ref{ax:IP}) implies the  ZPP (A\ref{ax:ZPP}), i.e., $\mF_{\ref{ax:IP}} \subseteq \mF_{\ref{ax:ZPP}}$.
        \item \label{en:implIB}Individualized Bayesianity (A\ref{ax:IB}) implies generalized Bayesianity (A\ref{ax:GB}), i.e., 
        $\mF_{\ref{ax:IB}} \subseteq  \mF_{\ref{ax:GB}}$.
    \end{enumerate}
\end{theorem}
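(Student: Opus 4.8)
The plan is to handle the five implications according to their source of difficulty: \ref{en:implSSFP}--\ref{en:implLP} reduce to the characterization theorems already proved, \ref{en:implIB} is a short induction, and \ref{en:implIP} requires a genuine measure-theoretic argument.

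I would dispose of \ref{en:implLP} first, as it is purely formal: the weak likelihood principle (A\ref{ax:WLP}) differs from the likelihood principle (A\ref{ax:LP}) only in that the function $h$ is allowed to depend on $\btheta$, so taking $h_\btheta \equiv h$ gives $\mF_{\ref{ax:LP}} \subseteq \mF_{\ref{ax:WLP}}$. For \ref{en:implWSFP}, I would invoke Theorem~\ref{th:genlinpool}, which identifies $\mF_{\ref{ax:WSFP}}$ with the class of generalized linear pooling functions and also states that these satisfy A\ref{ax:WLP}; hence $\mF_{\ref{ax:WSFP}} \subseteq \mF_{\ref{ax:WLP}}$. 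For \ref{en:implSSFP}, Theorem~\ref{th:linearpooling} shows that $\mF_{\ref{ax:SSFP}}$ is exactly the class of linear pooling functions, and that linear pooling satisfies the ZPP, unanimity preservation, the WSFP, and both likelihood principles; this gives the stated inclusion. The two equalities $\mF_{\ref{ax:SSFP}} = \mF_{\ref{ax:ZPP}} \cap \mF_{\ref{ax:WSFP}} = \mF_{\ref{ax:UP}} \cap \mF_{\ref{ax:WSFP}}$ are then immediate restatements of the equivalences between linearity, ``WSFP and ZPP,'' and ``WSFP and unanimity preservation'' established in Theorem~\ref{th:linearpooling}.

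For \ref{en:implIB}, the idea is to apply individualized Bayesianity (A\ref{ax:IB}) to one agent after another. Fix bounded positive likelihoods $\ell_1, \dots, \ell_K$ and a positive opinion profile, and set $q = g[q_1, \dots, q_K]$. Applying A\ref{ax:IB} to agent $1$ with $\ell_1$ replaces $q_1$ by $q_1^{(\ell_1)}$ and yields the aggregate $\ell_1 q / \int_\Theta \ell_1 q\,\mathrm{d}\btheta$; since $\ell_1$ is bounded and positive and $q_1$ is positive, the new profile $(q_1^{(\ell_1)}, q_2, \dots, q_K)$ is again positive, so A\ref{ax:IB} applies to agent $2$ with $\ell_2$, and so on. A straightforward induction then gives $g[q_1^{(\ell_1)}, \dots, q_K^{(\ell_K)}] = \frac{\prod_{k=1}^K \ell_k(\btheta)\, q(\btheta)}{\int_\Theta \prod_{k=1}^K \ell_k(\btheta')\, q(\btheta')\,\mathrm{d}\btheta'}$. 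Choosing the fused likelihood $h[\ell_1, \dots, \ell_K] = \prod_{k=1}^K \ell_k$ --- which is bounded, positive, and, crucially, independent of the opinion profile --- this is precisely the identity \eqref{eq:gnbayes} required by generalized Bayesianity (A\ref{ax:GB}). The only routine checks are that the partial products $\prod_{j \le k} \ell_j$ are bounded and positive, so that all intermediate normalization integrals are finite and nonzero.

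The main obstacle is \ref{en:implIP}, where independence preservation (A\ref{ax:IP}) must be converted into the ZPP (A\ref{ax:ZPP}) without any setwise-function structure. Suppose $Q_k(\mathcal{A}) = 0$ for all $k$. The key observation is that for \emph{every} event $\mathcal{B}$ we have $\mathcal{A} \cap \mathcal{B} \subseteq \mathcal{A}$, whence $Q_k(\mathcal{A} \cap \mathcal{B}) = 0 = Q_k(\mathcal{A}) Q_k(\mathcal{B})$ for all $k$; thus every agent regards $\mathcal{A}$ and $\mathcal{B}$ as independent, and A\ref{ax:IP} forces $Q(\mathcal{A} \cap \mathcal{B}) = Q(\mathcal{A}) Q(\mathcal{B})$ for all $\mathcal{B}$. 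Taking $\mathcal{B} = \mathcal{A}$ yields $Q(\mathcal{A}) = Q(\mathcal{A})^2$, so $Q(\mathcal{A}) \in \{0, 1\}$. The delicate point is excluding $Q(\mathcal{A}) = 1$, and this is where the non-atomic (pdf) setting is essential: because $q$ is a pdf, $Q$ is a non-atomic measure, so if $Q(\mathcal{A}) = 1$ one can choose a subset $\mathcal{A}_1 \subseteq \mathcal{A}$ with $Q(\mathcal{A}_1) = 1/2$. Since $\mathcal{A}_1 \subseteq \mathcal{A}$ also satisfies $Q_k(\mathcal{A}_1) = 0$ for all $k$, the same argument forces $Q(\mathcal{A}_1) \in \{0,1\}$, contradicting $Q(\mathcal{A}_1) = 1/2$. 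Hence $Q(\mathcal{A}) = 0$, which is exactly the ZPP. I expect this splitting argument to be the crux, as it is the step where the continuous setting genuinely differs from the discrete one.
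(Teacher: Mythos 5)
Your proposal is correct and follows essentially the same route as the paper: parts (i)--(iii) are reduced to Theorems~\ref{th:linearpooling} and \ref{th:genlinpool} and to the definitions, part (v) is the iterated application of individualized Bayesianity with $h[\ell_1,\dots,\ell_K]=\prod_{k=1}^K \ell_k$, and part (iv) is exactly the paper's argument (given in Appendix~\ref{app:proof_dict}), namely $Q(\mathcal{A})=Q(\mathcal{A})^2$ followed by the non-atomic splitting of $\mathcal{A}$ into a subset of fused probability $1/2$ to rule out $Q(\mathcal{A})=1$. Your version is if anything slightly more explicit in justifying the existence of that subset via non-atomicity of $Q$.
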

Most of these implications follow from our earlier theorems.
For completeness, we provide a proof of, or references for,
all implications in Appendix~\ref{app:thimplstruct}.

\section{The Optimization Approach} 
\label{sec: pop_optim_appr}
In the previous section, we identified pooling functions that satisfy certain axioms.
An alternative approach to establishing pooling functions for probabilistic opinion pooling is the optimization approach. 
Here, a pooling function is obtained by minimizing the weighted average of some discrepancy measure between the pdfs of the $K$ agents, $q_1(\btheta),\ldots, q_K(\btheta)$, and the aggregate pdf $q(\btheta)$. The underlying idea is to make the aggregate pdf as similar as possible to all the agent pdfs simultaneously. As we will see, the obtained $q(\btheta)$ turns out to be some sort of average of the 
agent pdfs $q_1(\btheta),\ldots, q_K(\btheta)$.

One class of discrepancy measures that can be considered are $f$-divergences. 
For a convex function $f\colon \mathbb{R}^+\rightarrow \mathbb{R}$ with $f(1)=0$, the $f$-divergence between two pdfs $q_k(\btheta)$ and $\varphi(\btheta)$ with common domain $\Theta$ is defined as \cite{csiszar1964informationstheoretische,csiszar1967information,vajda1972f,Sason_2018}
\begin{equation}
  \label{eq: f_divergence_pi_q}
    \mathcal{D}_f(q_k\|\varphi)=\int_{\Theta} \varphi(\btheta)f\bigg(\frac{q_k(\btheta)}{\varphi(\btheta)}\bigg)  \mathrm{d}\btheta.
\end{equation}
The fusion of the agent
pdfs $q_1(\btheta),\ldots,q_K(\btheta)$ can then be based on defining the aggregate pdf $q(\btheta)$ as the pdf 
that minimizes a weighted average of 
$f$-divergences:\footnote{
This 
minimization establishes a
conceptual link to a central problem in the field of robust 
hypothesis testing, namely, the identification of
a vector of ``least favorable'' pdfs within a given set of hypothesized pdfs. 
For two pdfs, this problem can be shown to be equivalent to the joint minimization of all $f$-divergences \eqref{eq: f_divergence_pi_q} 
for all twice differentiable convex functions $f$
\cite{veeravalli1994minimax, fauss2021minimax}.
The solution to this minimization can be interpreted as the pdfs that are maximally
similar within the set of hypothesized pdfs, which means that a statistical test between the respective pdfs is ``as hard as possible.'' 
It is interesting that an interpretation as a maximally similar pdf holds for both the optimization approach to pdf fusion and robust hypothesis testing.
}
\begin{equation}
    \label{eq: f_divergence_fusion_optimization}
    q =\argmin_{\varphi\in\mathcal{P}} \sum_{k=1}^K w_k  \mathcal{D}_f(q_k\|\varphi),
\end{equation}
where 
the weights satisfy $(w_1,\ldots,w_K)\in\mathcal{S}_K$.
In what follows, we consider some specific $f$-divergences and derive the associated pooling functions defined by \eqref{eq: f_divergence_fusion_optimization}.
These results are summarized in Table~\ref{tab: frechet_means}.

\begin{table*}[t]
\centering
\begin{tabular}{@{}ccccc@{}}
\toprule
Pooling Function &
$f(x)$ &
$\mathcal{D}_f(q_k\|\varphi)$ &
$\warp(x)$ &
$\|\warp(q_k)-\warp(\varphi)\|_2^2$ 
\\
\midrule
Linear: \ $q(\btheta) = \sum_{k=1}^K w_kq_k(\btheta)$ & 
$x \log x$ &
$\int_{\Theta} q_k(\btheta)\log\left(\frac{q_k(\btheta)}{\varphi(\btheta)}\right) \mathrm{d}\btheta$ & 
$x$ & 
$\int_{\Theta} \left(q_k(\btheta)-\varphi(\btheta)\right)^2 \mathrm{d}\btheta$ 
\\
Log-linear: \ $q(\btheta) \propto\prod_{k=1}^K \left(q_k(\btheta)\right)^{w_k}$ & 
$-\log x$ &
$\int_{\Theta} \varphi(\btheta)\log\left(\frac{\varphi(\btheta)}{q_k(\btheta)}\right) \mathrm{d}\btheta$ & 
$\log x$  & 
$\int_{\Theta} \left(\log q_k(\btheta)-\log \varphi(\btheta)\right)^2 \mathrm{d}\btheta$             
\\
Inverse-linear: \ $q(\btheta)\propto\left(\sum_{k=1}^K  \frac{w_k}{q_k(\btheta)}\right)^{-1}$ &
$\frac{1-x}{2 x}$ &
$\int_{\Theta}\frac{(q_k(\btheta)-\varphi(\btheta))^2}{q_k(\btheta)}\mathrm{d}\btheta$ & 
$\frac{1}{x}$   & 
$\int_{\Theta} \left(\frac{1}{q_k(\btheta)}-\frac{1}{\varphi(\btheta)}\right)^2 \mathrm{d}\btheta$   
\\
H\"{o}lder: \ $q(\btheta)\propto\left(\sum_{k=1}^K w_k(q_k(\btheta))^\alpha\right)^{1/\alpha}$ & 
$\frac{x^{\alpha}-1}{\alpha(\alpha-1)}$ &
$\frac{1}{\alpha (\alpha-1)}\int_{\Theta} \varphi(\btheta) \frac{(q_k(\btheta))^\alpha-(\varphi(\btheta))^\alpha}{(\varphi(\btheta))^\alpha}  \mathrm{d}\btheta$ & $x^\alpha$ &
$\int_{\Theta} \left((q_k(\btheta))^{\alpha}-(\varphi(\btheta))^{\alpha}\right)^2 \mathrm{d}\btheta$ 
\\ 
\bottomrule
\end{tabular}
\caption{Optimization-based definition of pooling functions: some pooling functions along with the underlying $f$-divergence $\mathcal{D}_{f}(q_k\|\varphi)$ and squared distance function $d^2(q_k, \varphi)=\|\warp(q_k)-\warp(\varphi)\|_2^2$ used in the optimization problems in \eqref{eq: f_divergence_fusion_optimization} and \eqref{eq: relaxed_Frechet_means}, respectively.}
\label{tab: frechet_means}
\end{table*}

\subsection{Kullback-Leibler Divergence}
\label{subsec: forward_KL_minimization_pop}
For $f(x)=x\log x$, the $f$-divergence is the Kullback-Leibler divergence (KLD) \cite{liese2006divergences}
\begin{equation}
    \label{eq: forward_KLD}
    \mathcal{D}_{\textrm{KL}}(q_k\|\varphi)=\int_{\Theta} q_k(\btheta)\log\left(\frac{q_k(\btheta)}{\varphi(\btheta)}\right) \mathrm{d}\btheta.
\end{equation}
Under this choice of divergence, the pooling function that solves the optimization problem in \eqref{eq: f_divergence_fusion_optimization} is the linear pooling function in \eqref{eq: linear_pooling}: 
\begin{theorem}
Let $f(x)=x\log x$ (i.e., $\mathcal{D}_f(q_k\|\varphi)=\mathcal{D}_{\rm KL}(q_k\|\varphi)$) and $(w_1,\ldots,w_K)\in\mathcal{S}_K$. 
Then, the solution to the optimization problem in \eqref{eq: f_divergence_fusion_optimization} is 
\begin{equation*}
    q(\btheta)=\sum_{k=1}^K w_k q_k(\btheta).
\end{equation*}
\end{theorem}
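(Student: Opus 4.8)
The plan is to establish a Pythagorean-type decomposition of the objective: the weighted sum of forward KLDs splits into a term that is constant in the optimization variable $\varphi$ plus a single KLD of $\varphi$ from the linear pool, whose nonnegativity immediately identifies the minimizer. First I would set $\bar q(\btheta) \coloneqq \sum_{k=1}^K w_k q_k(\btheta)$ and note that, since $(w_1,\ldots,w_K)\in\mathcal{S}_K$, the function $\bar q$ is a convex combination of pdfs and is therefore itself a valid pdf, so $\bar q\in\mathcal{P}$ is an admissible candidate. Without loss of generality I would discard any agent with $w_k=0$, since the corresponding term contributes nothing to the objective.

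The key step is an algebraic identity. For an arbitrary $\varphi\in\mathcal{P}$, writing $\log\frac{q_k(\btheta)}{\varphi(\btheta)}=\log\frac{q_k(\btheta)}{\bar q(\btheta)}+\log\frac{\bar q(\btheta)}{\varphi(\btheta)}$ pointwise, integrating against $w_k q_k$, and summing over $k$, the cross term collapses because $\sum_{k=1}^K w_k q_k(\btheta)=\bar q(\btheta)$, yielding
\[
    \sum_{k=1}^K w_k\, \mathcal{D}_{\mathrm{KL}}(q_k\|\varphi)
    = \sum_{k=1}^K w_k\, \mathcal{D}_{\mathrm{KL}}(q_k\|\bar q) + \mathcal{D}_{\mathrm{KL}}(\bar q\|\varphi).
\]
Here the last integral $\int_\Theta \bar q(\btheta)\log\frac{\bar q(\btheta)}{\varphi(\btheta)}\mathrm{d}\btheta$ is exactly $\mathcal{D}_{\mathrm{KL}}(\bar q\|\varphi)$.

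Next I would observe that the first sum on the right-hand side is independent of $\varphi$ and, crucially, finite: since $\bar q(\btheta)\ge w_k q_k(\btheta)$ pointwise, we have $q_k(\btheta)/\bar q(\btheta)\le 1/w_k$ wherever $q_k(\btheta)>0$, hence $\mathcal{D}_{\mathrm{KL}}(q_k\|\bar q)\le -\log w_k$, so that $0\le\sum_{k=1}^K w_k\,\mathcal{D}_{\mathrm{KL}}(q_k\|\bar q)\le -\sum_{k=1}^K w_k\log w_k<\infty$. By the nonnegativity of the KLD (Gibbs' inequality), $\mathcal{D}_{\mathrm{KL}}(\bar q\|\varphi)\ge 0$ with equality if and only if $\varphi=\bar q$ almost everywhere. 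Consequently the objective is minimized precisely at $\varphi=\bar q=\sum_{k=1}^K w_k q_k$, which is the linear pooling function, establishing the claim.

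The main obstacle is the rigorous justification of the pointwise splitting and the term-by-term integration in the displayed identity, because $\log(\bar q/\varphi)$ is sign-indefinite and the individual integrals $\int_\Theta q_k\log(\bar q/\varphi)\,\mathrm{d}\btheta$ need not converge absolutely. This is handled by noting that the regrouped cross term equals $\mathcal{D}_{\mathrm{KL}}(\bar q\|\varphi)$, an integral that is always well-defined in $[0,\infty]$ (its integrand is bounded below by the integrable function $\bar q-\varphi$); in the case $\mathcal{D}_{\mathrm{KL}}(\bar q\|\varphi)=+\infty$ both sides of the identity are $+\infty$ and such a $\varphi$ is trivially suboptimal, so the finiteness of the constant term keeps the comparison meaningful. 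This decomposition is preferable to the naive ``subtract the differential entropy'' argument, which would involve the possibly infinite quantities $\sum_k w_k\int q_k\log q_k\,\mathrm{d}\btheta$.
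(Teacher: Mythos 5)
Your proof is correct, and it reaches the same minimizer by the same underlying mechanism as the paper's argument (taken from the cited reference): the cross term collapses because $\sum_{k=1}^K w_k q_k = \bar q$, and the problem reduces to the nonnegativity of a single KLD of $\varphi$ from the mixture. The packaging differs in a way that matters technically. The paper splits the objective as a constant $\sum_k w_k \int_\Theta q_k \log q_k \,\mathrm{d}\btheta$ plus the cross-entropy $H(q_{\rm mix},\varphi)$, and then writes $H(q_{\rm mix},\varphi) = \mathcal{D}_{\rm KL}(q_{\rm mix}\|\varphi) - \int_\Theta q_{\rm mix}\log q_{\rm mix}\,\mathrm{d}\btheta$; this implicitly manipulates differential entropies, which for general pdfs on $\mathbb{R}^{d_\theta}$ need not be finite, so the decomposition can degenerate into an $\infty-\infty$ comparison. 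Your Pythagorean identity
$\sum_k w_k\mathcal{D}_{\rm KL}(q_k\|\varphi) = \sum_k w_k\mathcal{D}_{\rm KL}(q_k\|\bar q) + \mathcal{D}_{\rm KL}(\bar q\|\varphi)$
replaces that constant by $\sum_k w_k\mathcal{D}_{\rm KL}(q_k\|\bar q)$, which you correctly bound by $-\sum_k w_k\log w_k<\infty$ via $q_k/\bar q\le 1/w_k$, so the comparison between candidates is always meaningful; your handling of the regrouped cross term as a single well-defined integral in $[0,\infty]$ (bounded below by the integrable $\bar q-\varphi$) is the right way to justify the term-by-term separation. In short: same idea, but your compensated decomposition is the more rigorous rendition for the continuous setting the paper insists on, and it costs nothing extra.
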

A proof of this theorem can be found in \cite{abbas2009kullback}. The proof is based on the fact that minimizing the weighted average of KLDs is equivalent to minimizing the cross-entropy
\begin{equation*}
H(q_{\rm mix}, \varphi)=  -\int_{\Theta} q_{\rm mix}(\btheta) \log\left(\varphi(\btheta)\right)\mathrm{d}\btheta 
\end{equation*}
between the mixture pdf $q_{\rm mix}(\btheta)=\sum_{k=1}^K w_kq_k(\btheta)$ and the pdf $\varphi\in\mathcal{P}${.} That is,
\begin{equation*}
    \argmin_{\varphi\in\mathcal{P}} \sum_{k=1}^K w_k  \mathcal{D}_{\rm KL}(q_k\|\varphi) = \argmin_{\varphi\in\mathcal{P}} H(q_{\rm mix},\varphi){.}
\end{equation*}
The cross-entropy $H(q_{\rm mix},\varphi)$ is minimized if and only if
$q_{\rm mix}(\btheta)$ and $\varphi(\btheta)$ are equal. This follows from the fact that
$H(q_{\rm mix},\varphi)$
is equal to 
the sum of the KLD between $q_{\rm mix}(\btheta)$ and $\varphi(\btheta)$ and the differential entropy of $q_{\rm mix}(\btheta)$ \cite[Chapter~2]{murphy2012machine}, i.e.,
\begin{equation*}
    H(q_{\rm mix}, \varphi) = {\cal D}_{\rm KL}(q_{\rm mix}\|\varphi) - \int_{\Theta} q_{\rm mix}(\btheta)\log(q_{\rm mix}(\btheta)){\rm d}\btheta.
\end{equation*}
Hence, $H(q_{\rm mix},\varphi)$ is minimized if and only if ${\cal D}_{\rm KL}(q_{\rm mix}\|\varphi)$ is minimized, which implies that
$\varphi(\btheta)
=q_{\rm mix}(\btheta)$.

\subsection{Reverse Kullback-Leibler Divergence}
Next, consider $f(x)=-\log x$. In this case, the $f$-divergence corresponds to the KLD whose arguments are reversed with respect to \eqref{eq: forward_KLD} \cite{van2014renyi}, i.e.,
\begin{equation*}
    \mathcal{D}_{\textrm{KL}}(\varphi\|q_k)=\int_{\Theta} \varphi(\btheta)\log\left(\frac{\varphi(\btheta)}{q_k(\btheta)}\right) \mathrm{d}\btheta.
\end{equation*}
We refer to $\mathcal{D}_{\textrm{KL}}(\varphi\|q_k)$ as the \textit{reverse} KLD. 
For the reverse KLD, the solution to the optimization problem in \eqref{eq: f_divergence_fusion_optimization} is the log-linear pooling function in \eqref{eq: log_linear_pooling}:
\begin{theorem}
Let $f(x)=-\log x$ (i.e., $\mathcal{D}_f(q_k\|\varphi)=\mathcal{D}_{\rm KL}(\varphi\| q_k)$)  and $(w_1,\ldots,w_K)\in\mathcal{S}_K$. 
Then, the solution to the optimization problem in \eqref{eq: f_divergence_fusion_optimization} is
\begin{equation*}
    \label{eq: optimal_KLD_reverse}
    q(\btheta) =  c\prod_{k=1}^K \left(q_k(\btheta)\right)^{w_k},
\end{equation*}
where $c=1 \big/ \!\int_{\Theta}\prod_{k=1}^K\left(q_k(\btheta)\right)^{w_k} \mathrm{d}\btheta$.
\end{theorem}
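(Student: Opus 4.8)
The plan is to mirror the argument already used for the forward KLD in the previous subsection, recasting the weighted sum of reverse KLDs as a single reverse KLD between $\varphi$ and the candidate optimum, plus a term that does not depend on $\varphi$. First I would expand the objective using the definition of the reverse KLD together with the fact that the weights sum to one:
\begin{align*}
\sum_{k=1}^K w_k \mathcal{D}_{\rm KL}(\varphi\|q_k) = \int_{\Theta} \varphi(\btheta) \log\varphi(\btheta)\, \mathrm{d}\btheta - \int_{\Theta} \varphi(\btheta) \sum_{k=1}^K w_k \log q_k(\btheta)\, \mathrm{d}\btheta ,
\end{align*}
where the prefactor $\sum_{k=1}^K w_k = 1$ has been used to collapse the $\varphi\log\varphi$ contributions into a single term.

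Next I would introduce the candidate minimizer $r(\btheta) = c\prod_{k=1}^K (q_k(\btheta))^{w_k}$, with $c$ the normalization constant stated in the theorem, and observe that taking the logarithm yields the identity $\sum_{k=1}^K w_k \log q_k(\btheta) = \log r(\btheta) - \log c$. Substituting this into the second integral and using $\int_{\Theta} \varphi(\btheta)\, \mathrm{d}\btheta = 1$ to absorb the constant gives
\begin{align*}
\sum_{k=1}^K w_k \mathcal{D}_{\rm KL}(\varphi\|q_k) = \mathcal{D}_{\rm KL}(\varphi\|r) + \log c .
\end{align*}
Since $\log c$ is independent of $\varphi$, minimizing the left-hand side over $\varphi \in \mathcal{P}$ is equivalent to minimizing $\mathcal{D}_{\rm KL}(\varphi\|r)$. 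By the nonnegativity of the KLD (Gibbs' inequality), with equality if and only if the arguments coincide almost everywhere, the unique minimizer is $\varphi = r$, which is precisely the claimed log-linear pooling function $q(\btheta) = c\prod_{k=1}^K (q_k(\btheta))^{w_k}$.

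The main obstacle is not the algebra but the bookkeeping needed to guarantee that every quantity is well defined and that the minimum is genuinely attained inside $\mathcal{P}$. Under the standing restriction to positive opinion profiles with $0 < \int_{\Theta} \prod_{k=1}^K (q_k(\btheta))^{w_k}\, \mathrm{d}\btheta < \infty$, the constant $c$ exists and $r \in \mathcal{P}$, so the completion-of-the-square step is legitimate and the optimum $\varphi = r$ lies in the feasible set rather than merely being approached on its boundary. The remaining care concerns finiteness of the individual integrals and the interchange of summation and integration in the expansion; I expect these to follow from the positivity and integrability assumptions in the same manner as in the forward-KLD case, so that no separate compactness or lower-semicontinuity argument is required.
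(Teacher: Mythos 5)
Your proof is correct and is essentially the argument the paper points to: the exact decomposition $\sum_k w_k \mathcal{D}_{\rm KL}(\varphi\|q_k) = \mathcal{D}_{\rm KL}(\varphi\|r) + \log c$ followed by nonnegativity of the KLD is precisely the Jensen-type lower bound (with the equality condition identifying $\varphi \propto \prod_k q_k^{w_k}$) that the paper attributes to its cited references and uses in the analogous H\"older-mean proof. Your attention to the well-definedness of $c$ under positive opinion profiles matches the paper's standing assumptions, so nothing is missing.
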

A
proof of this theorem can be found in \cite{abbas2009kullback} and \cite{dedecius2016sequential}. The idea behind the proof is to derive a lower bound on the weighted average of reverse KLDs using Jensen's inequality and then to show that the lower bound is achieved if and only if \eqref{eq: log_linear_pooling} is satisfied.

\subsection{$\alpha$-Divergences}
\label{sec: pop_optim_appr_hellinger}
We have shown that both the linear and log-linear pooling functions
can be derived using the optimization approach involving the KLD or reverse KLD, respectively. These two results can be extended to an entire family of divergences and a corresponding family of pooling
functions that are both
parameterized by a 
real parameter $\alpha$. Indeed, let us consider the $f$-divergence $\mathcal{D}_f(q_k\|\varphi)$ induced by
\begin{align*}
    f(x)&=f_\alpha(x)
    \triangleq \frac{x^\alpha-1}{\alpha(\alpha-1)},
\end{align*}
where $x>0$ and $\alpha\in\mathbb{R}\setminus{\{0, 1\}}$. This yields the family of 
$\alpha$-divergences defined as \cite{zhu1995information, minka2005divergence, cichocki2011generalized}
\begin{align}
    \mathcal{D}_\alpha (q_k\|\varphi)
    & \triangleq \mathcal{D}_{f_\alpha} (q_k\|\varphi)
    \label{eq: df_alpha_def} \\
    & =\frac{1}{\alpha(\alpha-1)}\int_{\Theta} \varphi(\btheta)
    \frac{(q_k(\btheta))^\alpha-(\varphi(\btheta))^\alpha}{(\varphi(\btheta))^\alpha}  \mathrm{d}\btheta \,.
    \label{eq: hellinger_divergence}
\end{align}
We remark that the $\alpha$-divergence 
equals the so-called Hellinger divergence up to a scaling factor and is also a one-to-one transformation of the R\'{e}nyi divergence 
\cite{Sason_2018}. 
Using the optimization approach for the  $\alpha$-divergences, we obtain the $\alpha$-parameterized family of H\"older pooling functions in \eqref{eq:holderpooling}. 
As noted earlier, this family comprises the linear, log-linear, and inverse-linear pooling functions as special cases.
\begin{theorem} \label{th:optimal_hellinger}
Let $f(x)=f_\alpha(x)=\frac{x^\alpha-1}{\alpha(\alpha-1)}$ (i.e., $\mathcal{D}_f(q_k\|\varphi)=\mathcal{D}_\alpha(q_k\|\varphi)$) with  $\alpha\in\mathbb{R}\setminus{\{0, 1\}}$ and $(w_1,\ldots,w_K)\in\mathcal{S}_K$. 
Then, the solution to the optimization problem in \eqref{eq: f_divergence_fusion_optimization} is 
\begin{equation}
    \label{eq: optimal_hellinger}
    q(\btheta)= c\left(\sum_{k=1}^K w_k(q_k(\btheta))^\alpha\right)^{1/\alpha},
\end{equation}
where $c= 1 \big/ \!\int_{\Theta}\big(\sum_{k=1}^K w_k(q_k(\btheta))^\alpha\big)^{1/\alpha}\mathrm{d}\btheta$.
\end{theorem}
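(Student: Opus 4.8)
The plan is to reduce the $K$-term optimization in \eqref{eq: f_divergence_fusion_optimization} to the optimization of a single scalar functional of $\varphi$, and then to settle that optimization by one application of Jensen's inequality, keeping careful track of the sign of $\alpha(\alpha-1)$. First I would simplify the individual $\alpha$-divergence. Using that $\varphi$ is a pdf (so $\int_\Theta \varphi(\btheta)\,\mathrm{d}\btheta = 1$), the integrand in \eqref{eq: hellinger_divergence} collapses to
\begin{equation*}
\mathcal{D}_\alpha(q_k\|\varphi) = \frac{1}{\alpha(\alpha-1)}\bigg(\int_\Theta \varphi(\btheta)^{1-\alpha} q_k(\btheta)^\alpha\,\mathrm{d}\btheta - 1\bigg).
\end{equation*}
Summing against the weights $(w_1,\dots,w_K)\in\mathcal{S}_K$ and introducing $r(\btheta) \triangleq \sum_{k=1}^K w_k q_k(\btheta)^\alpha$, the objective becomes $\frac{1}{\alpha(\alpha-1)}\big(J(\varphi)-1\big)$ with $J(\varphi)\triangleq \int_\Theta \varphi(\btheta)^{1-\alpha} r(\btheta)\,\mathrm{d}\btheta$. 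Thus minimizing the weighted sum of $\alpha$-divergences is equivalent to minimizing $J$ when $\alpha(\alpha-1)>0$ (i.e., $\alpha>1$ or $\alpha<0$) and to maximizing $J$ when $\alpha(\alpha-1)<0$ (i.e., $0<\alpha<1$).

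Second, I would identify the candidate optimizer as the normalized H\"older mean $q^*(\btheta) = c\,r(\btheta)^{1/\alpha}$ from \eqref{eq: optimal_hellinger}; a formal Lagrange-multiplier computation (setting the variational derivative of $\varphi^{1-\alpha} r - \lambda\varphi$ to zero) yields $\varphi \propto r^{1/\alpha}$ and motivates this choice. To prove global optimality rather than mere stationarity, I would rewrite $J$ in terms of $q^*$. Since $r = c^{-\alpha}(q^*)^\alpha$, one has
\begin{equation*}
J(\varphi) = c^{-\alpha}\int_\Theta (q^*)^{\alpha}\varphi^{1-\alpha}\,\mathrm{d}\btheta = c^{-\alpha}\,\mathbb{E}_{q^*}\!\bigg[\Big(\tfrac{\varphi}{q^*}\Big)^{1-\alpha}\bigg],
\end{equation*}
where the positivity of the opinion profile guarantees $q^*>0$ so that the ratio is well-defined, and $J(q^*) = c^{-\alpha}$ because $\int_\Theta q^*\,\mathrm{d}\btheta = 1$.

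Third, I would apply Jensen's inequality to the map $u\mapsto u^{1-\alpha}$ evaluated at $u = \varphi/q^*$, using that $\mathbb{E}_{q^*}[\varphi/q^*] = \int_\Theta \varphi\,\mathrm{d}\btheta = 1$. For $0<\alpha<1$ the exponent $1-\alpha\in(0,1)$ makes the map concave, so $J(\varphi)\le c^{-\alpha} = J(q^*)$; for $\alpha>1$ or $\alpha<0$ the exponent lies outside $[0,1]$, the map is convex, and $J(\varphi)\ge c^{-\alpha} = J(q^*)$. In each regime the direction of the Jensen bound matches the sign of $\alpha(\alpha-1)$ from the first step, so in all cases $q^*$ minimizes the weighted sum of $\alpha$-divergences, which establishes \eqref{eq: optimal_hellinger}. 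Equality in Jensen's inequality holds iff $\varphi/q^*$ is $q^*$-almost-everywhere constant, i.e., iff $\varphi = q^*$, giving uniqueness of the minimizer.

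The main obstacle I anticipate is the sign bookkeeping across the three regimes of $\alpha$ together with the finiteness side conditions: one must ensure the reduction to $J$ is valid (any $\varphi$ with $J(\varphi)=+\infty$ is trivially non-optimal in the minimization regimes and can be discarded), that $q^*$ is a genuine normalizable pdf (guaranteed by the standing assumption that $c$ in \eqref{eq: optimal_hellinger} is defined), and that the boundary behavior of $u^{1-\alpha}$ at $u=0$ when $1-\alpha<0$ does not invalidate the Jensen step, which it does not since it only reinforces the inequality in the correct direction. Consolidating these cases into a single statement is the bulk of the remaining work; the analytic content is carried entirely by the one Jensen inequality.
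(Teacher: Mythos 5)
Your proof is correct, but it is organized differently from the paper's (Appendix~\ref{appendix: proof_weighted_hellinger_simplex}). You first collapse the weighted objective to the single functional $J(\varphi)=\int_\Theta \varphi^{1-\alpha}\,r$ with $r=\sum_k w_k q_k^\alpha$, split into minimization/maximization regimes according to the sign of $\alpha(\alpha-1)$, and then apply Jensen's inequality to the power map $u\mapsto u^{1-\alpha}$ evaluated at $\varphi/q^*$ under the measure $q^*$; the convexity of that map flips exactly when the sign of $1/(\alpha(\alpha-1))$ flips, so the two case distinctions cancel and $q^*$ wins in every regime. The paper instead uses the algebraic identity $f_\alpha(p_1/p_2)=\big(f_\alpha(p_1)-f_\alpha(p_2)\big)/p_2^\alpha$ to rewrite the whole weighted sum as $\int_\Theta \varphi\, f_\alpha\big(f_\alpha^{-1}\big(\sum_k w_k f_\alpha(q_k)\big)/\varphi\big)$ and applies Jensen once with $f_\alpha$ itself, which is convex for \emph{all} $\alpha\in\mathbb{R}\setminus\{0,1\}$ (its second derivative is $x^{\alpha-2}>0$); this yields a $\varphi$-independent lower bound and requires no sign bookkeeping at all. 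The analytic content is the same single Jensen inequality applied to reciprocal substitutions ($q^*/\varphi$ with exponent $\alpha$ versus $\varphi/q^*$ with exponent $1-\alpha$), so the two proofs buy different things: the paper's choice of convex function makes the argument uniform in $\alpha$ and shorter, while yours makes the origin of the H\"older-mean candidate explicit (via the Lagrangian stationarity condition) and exposes the min/max duality between the $\alpha(\alpha-1)>0$ and $\alpha(\alpha-1)<0$ regimes, at the cost of the three-way case analysis and slightly more delicate handling of the support of $q^*$ (which, as you note, is immaterial for positive opinion profiles and only reinforces the inequality otherwise).
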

Although this result was mentioned in \cite[Fig. 1]{garg2004generalized}, to the best of our knowledge, a proof does not exist in the literature. 
We provide a proof in Appendix~\ref{appendix: proof_weighted_hellinger_simplex}.

In the limiting case $\alpha\rightarrow 0$, the H\"{o}lder pooling function  \eqref{eq: optimal_hellinger} becomes the log-linear pooling function (weighted geometric average) 
in \eqref{eq: log_linear_pooling}, while for $\alpha = 1$ it equals the linear pooling function (weighted arithmetic average)
in \eqref{eq: linear_pooling}. 
These results are consistent with the fact that $\lim_{\alpha\rightarrow 0}  \mathcal{D}_\alpha(q_k\|\varphi)=\mathcal{D}_{\rm KL}(\varphi\|q_k)$ and $\lim_{\alpha\rightarrow 1}\mathcal{D}_\alpha(q_k\|\varphi)=\mathcal{D}_{\rm KL}(q_k\|\varphi)$ \cite{minka2005divergence}. 
For $\alpha=-1$,   the H\"{o}lder pooling function  \eqref{eq: optimal_hellinger} becomes the inverse-linear pooling function \eqref{eq:inversepooling}.
Furthermore,
the $\alpha$-divergence in the case $\alpha=2$ is (up to a scaling factor 2) equal to the Pearson $\chi^2$-divergence \cite{pearson1900x, Sason_2018}
\begin{align*}
    \chi^2(q_k, \varphi) 
    & \triangleq \int_{\Theta}\frac{(q_k(\btheta)-\varphi(\btheta))^2}{\varphi(\btheta)}\mathrm{d}\btheta
    \\
    & = \int_{\Theta}\varphi(\btheta)\frac{(q_k(\btheta))^2-(\varphi(\btheta))^2}{(\varphi(\btheta))^2}\mathrm{d}\btheta\,.
\end{align*}
The corresponding H\"{o}lder pooling function  \eqref{eq: optimal_hellinger} is thus
\begin{equation*}
    q(\btheta)= c\left(\sum_{k=1}^K w_k(q_k(\btheta))^2\right)^{1/2},
\end{equation*}
where $c= 1 \big/ \!\int_{\Theta}\big(\sum_{k=1}^K w_k(q_k(\btheta))^2\big)^{1/2}\mathrm{d}\btheta$.

\subsection{Reverse $\alpha$-Divergences}
\label{sec:optim_appr_revalpha}

As for the KLD, one can exchange the order of $q_k$ and $\varphi$ in the $\alpha$-divergence in \eqref{eq: hellinger_divergence}.
Again, this is equivalent to changing to a different $f$-divergence.
More precisely, it is stated in \cite[eq.~(1.13)]{csiszar1967information} (see also \cite[Prop.~2]{Sason_2018})  that 
\begin{equation}
\label{eq:invfdivergence}
    \mathcal{D}_f(\varphi \| q_k) = \mathcal{D}_{f^*}(q_k\|\varphi),
\end{equation}
where $f^*(x)= x f(1/x)$. 
Based on this result, we show in Appendix~\ref{app:revalphadiv} that 
\begin{equation*}
    \mathcal{D}_{\alpha}(\varphi \| q_k)
= \mathcal{D}_{\alpha^*}(q_k\|\varphi), 
\end{equation*}
where $\alpha^*=1-\alpha$.
%
%
Thus, Theorem~\ref{th:optimal_hellinger} implies the following result.
\begin{corollary}
The solution to the optimization problem 
\begin{equation}
    q =\argmin_{\varphi\in\mathcal{P}} \sum_{k=1}^K w_k  \mathcal{D}_{\alpha}(\varphi \| q_k)
    \label{eq:dalphainvedirection}
\end{equation}
is 
\begin{equation}
    q(\btheta)= c\left(\sum_{k=1}^K w_k(q_k(\btheta))^{\alpha^*}\right)^{1/\alpha^*},
    \label{eq:dalphainvedirectionsol}
\end{equation}
where $c= 1 \big/ \!\int_{\Theta}\big(\sum_{k=1}^K w_k(q_k(\btheta))^{\alpha^*}\big)^{1/\alpha^*}\mathrm{d}\btheta$ and  $\alpha^*=1-\alpha$.
\end{corollary}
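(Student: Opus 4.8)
The plan is to reduce the reverse-direction minimization in \eqref{eq:dalphainvedirection} to the forward-direction problem already solved in Theorem~\ref{th:optimal_hellinger}, by exploiting the divergence identity $\mathcal{D}_\alpha(\varphi\|q_k)=\mathcal{D}_{\alpha^*}(q_k\|\varphi)$ with $\alpha^*=1-\alpha$. Once this identity is available, the objective functions of the two problems coincide term by term, and the minimizer can be read off directly.

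First I would establish the identity. Starting from \eqref{eq:invfdivergence}, namely $\mathcal{D}_f(\varphi\|q_k)=\mathcal{D}_{f^*}(q_k\|\varphi)$ with $f^*(x)=x f(1/x)$, I apply it to $f=f_\alpha$. A short computation gives
\[
f_\alpha^*(x)=x\,f_\alpha(1/x)=\frac{x^{1-\alpha}-x}{\alpha(\alpha-1)}.
\]
Comparing with $f_{\alpha^*}(x)=\frac{x^{1-\alpha}-1}{\alpha(\alpha-1)}$ (using $(1-\alpha)\big((1-\alpha)-1\big)=\alpha(\alpha-1)$), the two generators differ only by the affine term
\[
f_\alpha^*(x)-f_{\alpha^*}(x)=\frac{1-x}{\alpha(\alpha-1)},
\]
which is proportional to $(x-1)$. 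Since an $f$-divergence is unchanged when its generator is modified by such a term---because $\int_\Theta \varphi(\btheta)\big(q_k(\btheta)/\varphi(\btheta)-1\big)\,\mathrm{d}\btheta=\int_\Theta\big(q_k(\btheta)-\varphi(\btheta)\big)\,\mathrm{d}\btheta=0$---it follows that $\mathcal{D}_{f_\alpha^*}(q_k\|\varphi)=\mathcal{D}_{\alpha^*}(q_k\|\varphi)$, and hence $\mathcal{D}_\alpha(\varphi\|q_k)=\mathcal{D}_{\alpha^*}(q_k\|\varphi)$. This is exactly the content promised for Appendix~\ref{app:revalphadiv}.

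With the identity in hand, the Corollary is immediate. Summing over $k$ with the weights $w_k$ yields
\[
\sum_{k=1}^K w_k\,\mathcal{D}_\alpha(\varphi\|q_k)=\sum_{k=1}^K w_k\,\mathcal{D}_{\alpha^*}(q_k\|\varphi),
\]
so the minimization \eqref{eq:dalphainvedirection} is precisely the minimization \eqref{eq: f_divergence_fusion_optimization} with the parameter $\alpha$ replaced by $\alpha^*$. Because $\alpha\in\mathbb{R}\setminus\{0,1\}$ is equivalent to $\alpha^*=1-\alpha\in\mathbb{R}\setminus\{0,1\}$, Theorem~\ref{th:optimal_hellinger} applies verbatim with parameter $\alpha^*$ and produces the weighted H\"older mean \eqref{eq:dalphainvedirectionsol}.

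The only points requiring care are the bookkeeping around the affine term in the generator and the verification that the admissible parameter range is preserved under $\alpha\mapsto 1-\alpha$; there is no genuine analytic obstacle here, since all the substantive work---the existence and closed form of the H\"older-mean minimizer---has already been carried out in Theorem~\ref{th:optimal_hellinger}.
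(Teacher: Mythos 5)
Your proposal is correct and follows essentially the same route as the paper: Appendix~\ref{app:revalphadiv} likewise computes $f_\alpha^*(x)=xf_\alpha(1/x)$, observes that it differs from $f_{\alpha^*}(x)$ only by a multiple of $(x-1)$ (which leaves the $f$-divergence unchanged), and then invokes Theorem~\ref{th:optimal_hellinger} with $\alpha$ replaced by $\alpha^*$. Your inline justification that the affine term integrates to zero is a harmless substitute for the paper's citation of the standard invariance property of $f$-divergences.
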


In particular, the reverse $\alpha$-divergence for
$\alpha=2$ corresponds to the Pearson $\chi^2$-divergence in the reverse direction, i.e., $\chi^2(\varphi, q_k) = \int_{\Theta}\frac{(q_k(\btheta)-\varphi(\btheta))^2}{q_k(\btheta)}\mathrm{d}\btheta$.
In this case, $\alpha^*=1-2=-1$ and the corresponding H\"older pooling function  \eqref{eq:dalphainvedirectionsol}
is the  inverse-linear pooling function \eqref{eq:inversepooling}.

\subsection{Symmetric Discrepancy Measures}
\label{subsec: frechet_means}
As previously mentioned, the optimization approach defines pooling functions by minimizing a weighted average of discrepancy measures between the agent pdfs and the aggregate pdf. So far, our focus has been on minimizing a weighted average of $f$-divergences, where our
choices of $f$ 
yielded \emph{asymmetric} discrepancy measures. 
Through this approach, we 
derived pooling functions that are the weighted arithmetic, geometric, harmonic, and H\"{o}lder averages
of the agent pdfs. Interestingly, these fusion rules can also be derived using an alternative formulation, where the goal is to minimize a weighted average of \emph{symmetric} discrepancy measures (distance functions). Let $d(q_k, \varphi)$ be a symmetric function expressing
a distance between the $k$th agent pdf $q_k(\btheta)$ and the pdf $\varphi(\btheta)$, where symmetric
means that $d(q_k, \varphi)=d(\varphi, q_k)$. Then, we can define the aggregate pdf to be the solution to the following optimization problem:
\begin{equation}
\label{eq: frechet_mean_definition}
q(\btheta) = \argmin_{\varphi\in\mathcal{P}} \sum_{k=1}^K w_k d^2(q_k, \varphi),
\end{equation}
where $(w_1,\ldots,w_K)\in\mathcal{S}_K$. 
The resulting $q(\btheta)$ has been referred to as  \emph{Fr\'{e}chet mean} \cite{li2020arithmetic}. 

An important distance function is the $L_2$ distance function
defined as
\begin{equation}
    \label{eq: l2_distance_pdfs}
    \|q_k-\varphi\|_2 = \sqrt{\int_{\Theta} (q_k(\btheta)-\varphi(\btheta))^2 \,\mathrm{d}\btheta}.
\end{equation}
The linear pooling function can be obtained alternatively by minimizing a weighted average of squared $L_2$ distances:
\begin{theorem} 
\label{th:optimal_l2}
Let $d(q_k, \varphi)=\|q_k-\varphi\|_2$ and $(w_1,\ldots,w_K)\in\mathcal{S}_K$. Then, the solution to the optimization problem in \eqref{eq: frechet_mean_definition} is
\begin{equation*}
    q(\btheta)=\sum_{k=1}^K w_k q_k(\btheta).
\end{equation*}
\end{theorem}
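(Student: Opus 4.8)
The plan is to exploit the Hilbert-space (bias--variance) structure of the $L_2$ objective and, crucially, to observe that the unconstrained minimizer is automatically a valid pdf, so that the constraint $\varphi\in\mathcal{P}$ is never active and no Lagrange multiplier / KKT argument is required. Write $q_{\rm mix}(\btheta)=\sum_{k=1}^K w_kq_k(\btheta)$. Since $(w_1,\ldots,w_K)\in\mathcal{S}_K$ and each $q_k$ is a pdf, $q_{\rm mix}$ is nonnegative and integrates to one, so $q_{\rm mix}\in\mathcal{P}$; this is precisely the claimed linear pooling function, and the whole task reduces to showing it minimizes the objective.

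First I would insert and subtract $q_{\rm mix}$ inside each squared norm and expand, denoting by $\langle f,g\rangle=\int_{\Theta}f(\btheta)g(\btheta)\,\mathrm{d}\btheta$ the $L_2$ inner product:
\begin{align*}
\sum_{k=1}^K w_k \|q_k-\varphi\|_2^2
&= \sum_{k=1}^K w_k \|q_k-q_{\rm mix}\|_2^2 \\
&\quad + 2\Big\langle \sum_{k=1}^K w_k(q_k-q_{\rm mix}),\, q_{\rm mix}-\varphi\Big\rangle \\
&\quad + \|q_{\rm mix}-\varphi\|_2^2,
\end{align*}
where I already used $\sum_{k=1}^K w_k=1$ in the final term. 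The next step is to note that the cross term vanishes identically, because $\sum_{k=1}^K w_k(q_k-q_{\rm mix})=q_{\rm mix}-q_{\rm mix}=0$ (again using $\sum_k w_k=1$). This yields the clean decomposition
\begin{equation*}
\sum_{k=1}^K w_k \|q_k-\varphi\|_2^2 = \sum_{k=1}^K w_k \|q_k-q_{\rm mix}\|_2^2 + \|q_{\rm mix}-\varphi\|_2^2 .
\end{equation*}
The first term is independent of $\varphi$, and the second is nonnegative and equals zero if and only if $\varphi=q_{\rm mix}$ almost everywhere. Since $q_{\rm mix}\in\mathcal{P}$ is feasible, it is the (essentially unique) minimizer, completing the argument.

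The only point requiring genuine care is integrability: the manipulation presumes that each $q_k$ (and hence $q_{\rm mix}$) lies in $L_2(\Theta)$, so that all norms and inner products above are finite and the cross-term cancellation is legitimate. I would therefore either adopt square-integrability of the opinion profile as a standing assumption for this subsection or restrict the statement to $\varphi$ with finite objective value. Everything else is routine Hilbert-space algebra; the conceptual heart of the proof is simply that the $L_2$-optimal (unconstrained) average of the $q_k$ is the weighted arithmetic mean, and that this average happens to already satisfy the defining constraints of $\mathcal{P}$.
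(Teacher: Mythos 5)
Your proof is correct, but it takes a different route from the paper's. The paper's proof (Appendix~\ref{appendix: proof_weighted_l2_simplex}) lower-bounds the constrained minimum by pulling the minimization inside the integral, i.e., $\min_{\varphi\in\mathcal{P}}\int_{\Theta}\sum_k w_k(q_k(\btheta)-\varphi(\btheta))^2\,\mathrm{d}\btheta \ge \int_{\Theta}\min_{\varphi(\btheta)\ge 0}\{\sum_k w_k(q_k(\btheta)-\varphi(\btheta))^2\}\,\mathrm{d}\btheta$, solves the resulting scalar quadratic minimization pointwise in $\btheta$ to get $\varphi^*(\btheta)=\sum_k w_k q_k(\btheta)$, and then collapses the chain of inequalities using $\varphi^*\in\mathcal{P}$. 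You instead perform a global Hilbert-space (bias--variance) decomposition around $q_{\rm mix}$ and observe that the cross term vanishes because $\sum_k w_k(q_k-q_{\rm mix})=0$. Both arguments share the essential observation that the unconstrained minimizer is already a valid pdf, so the constraint is inactive. Your decomposition buys a little more: an explicit expression for the excess objective $\|q_{\rm mix}-\varphi\|_2^2$ and hence almost-everywhere uniqueness of the minimizer. The price is the square-integrability requirement you correctly flag --- the cross-term cancellation needs each $q_k\in L_2(\Theta)$, whereas the paper's pointwise argument does not manipulate possibly infinite inner products (though of course the optimization problem is only meaningful when the objective is finite, so this is a mild and reasonable standing assumption). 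Both proofs are sound; yours is arguably the more standard "projection onto an affine combination" argument, while the paper's generalizes more directly to the non-quadratic distances treated in Theorem~\ref{th: generalized_pooling_frechet}.
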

This result was mentioned without proof in \cite[Fig. 1]{garg2004generalized}. We provide a proof in Appendix~\ref{appendix: proof_weighted_l2_simplex}. 

Unfortunately, for arbitrary distance functions $d(q_k, \varphi)$, an analytical solution to the optimization problem in \eqref{eq: frechet_mean_definition} does not exist. This is due to the difficulty in satisfying the constraint $\varphi\in{\cal P}$, which ensures
that the obtained aggregate pdf $q(\btheta)$ is a valid pdf.  To overcome this difficulty, following \cite{li2020arithmetic}, we can instead solve 
the unconstrained version of the optimization problem in \eqref{eq: frechet_mean_definition}, i.e.,
\begin{equation}
    \label{eq: relaxed_Frechet_means}
    \tilde q(\btheta) = \argmin_{\varphi} \sum_{k=1}^K w_k d^2(q_k, \varphi), 
\end{equation}
and then normalize the result, i.e.,
\begin{equation*}
    q(\btheta) = \frac{\tilde q(\btheta)}{\int_{\Theta} \tilde q(\btheta') d\btheta'}.
\end{equation*}
However, we emphasize that the obtained aggregate pdf $q(\btheta)$ is generally different from the solution of the constrained optimization problem in \eqref{eq: frechet_mean_definition}.

Using this unconstrained approach,
the minimization of the $L_2$ distance function \eqref{eq: l2_distance_pdfs}
results again in
the linear pooling function \cite{li2020arithmetic}.
Here, the solution  satisfies the constraint $q\in \mathcal{P}$ without explicitly enforcing it.
Furthermore, 
the log-linear \cite{li2020arithmetic}, inverse-linear, and H\"{o}lder pooling functions can be derived in an analogous manner using suitable distance functions. 
We can arrive at all of these results and many more  in a unified manner by considering the general class of distance functions $d(q_k, \varphi)$ defined as
\begin{equation}
    \label{eq: general_distance_function}
    \|\warp(q_k)-\warp(\varphi)\|_2 = \sqrt{\int_{\Theta} (\warp(q_k(\btheta))-\warp(\varphi(\btheta)))^2 \,\mathrm{d}\btheta},
\end{equation}
where $\warp\colon (0, \infty) \to (a,b)$ with $a\in \mathbb{R}\cup \{-\infty\}$ and $b\in \mathbb{R}\cup \{\infty\}$ is an invertible function. 
Solving the optimization problem  \eqref{eq: relaxed_Frechet_means} for the distance functions \eqref{eq: general_distance_function} leads to the rich class of pooling functions stated by the following result.
\begin{theorem}
\label{th: generalized_pooling_frechet}
Let $d(q_k, \varphi)=\|\warp(q_k)-\warp(\varphi)\|_2$ and $(w_1,\ldots,w_K)\in\mathcal{S}_K$. Then, the solution to the optimization problem in \eqref{eq: relaxed_Frechet_means} is
\begin{equation}
    \label{eq: general_unnormalized_pooling_function}
    \tilde q(\btheta)=\warp^{-1}\Bigg(\sum_{k=1}^K w_k \warp(q_k(\btheta))\Bigg).
\end{equation}
\end{theorem}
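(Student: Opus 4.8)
The plan is to exploit the fact that the optimization in \eqref{eq: relaxed_Frechet_means} is \emph{unconstrained}: we no longer require $\varphi \in \mathcal{P}$, so the values $\varphi(\btheta)$ at distinct points $\btheta$ may be chosen independently. First I would rewrite the objective as a single integral by inserting the definition \eqref{eq: general_distance_function} and interchanging the finite sum with the integral,
\begin{equation*}
    \sum_{k=1}^K w_k\, d^2(q_k,\varphi) = \int_{\Theta} \sum_{k=1}^K w_k \bigl(\warp(q_k(\btheta)) - \warp(\varphi(\btheta))\bigr)^2 \,\mathrm{d}\btheta \,.
\end{equation*}
Since the integrand is nonnegative and $\varphi(\btheta)$ ranges freely over $(0,\infty)$ for each $\btheta$, the global minimizer is obtained by minimizing the integrand pointwise. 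Concretely, I would argue that if $\tilde q$ minimizes the inner expression for (almost) every $\btheta$, then for any competitor $\varphi$ the pointwise inequality integrates to $\sum_k w_k d^2(q_k,\varphi) \ge \sum_k w_k d^2(q_k,\tilde q)$, which is exactly the desired global optimality.

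Next I would solve the pointwise problem. Fixing $\btheta$ and writing $c_k \triangleq \warp(q_k(\btheta))$, I would substitute $u \triangleq \warp(\varphi(\btheta))$; because $\warp$ is invertible with range $(a,b)$, letting $\varphi(\btheta)$ range over $(0,\infty)$ is equivalent to letting $u$ range over $(a,b)$. The inner problem then becomes the weighted least-squares problem $\min_{u}\sum_{k=1}^K w_k (c_k - u)^2$. Differentiating gives the stationarity condition $\sum_{k=1}^K w_k (u - c_k) = 0$, and using $\sum_{k=1}^K w_k = 1$ (from $(w_1,\ldots,w_K)\in\mathcal{S}_K$) yields the unique minimizer $u^\star = \sum_{k=1}^K w_k c_k$; the second derivative $2\sum_k w_k = 2 > 0$ confirms it is a strict minimum. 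Undoing the substitution gives $\varphi(\btheta) = \warp^{-1}(u^\star)$, i.e.\ the claimed formula \eqref{eq: general_unnormalized_pooling_function}.

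The one point requiring genuine care is the well-definedness of $\warp^{-1}(u^\star)$, and I expect this to be the main (though mild) obstacle. The weighted mean $u^\star = \sum_{k=1}^K w_k \warp(q_k(\btheta))$ is a convex combination of the points $\warp(q_k(\btheta)) \in (a,b)$; since $(a,b)$ is an interval and hence convex, $u^\star$ again lies in $(a,b)$, so $\warp^{-1}$ is applicable and $\tilde q(\btheta) = \warp^{-1}\!\bigl(\sum_{k=1}^K w_k \warp(q_k(\btheta))\bigr)$ is well-defined at every $\btheta$. I would also remark that $\tilde q$ is measurable, being a composition of the measurable functions $q_k$ with $\warp$ and its inverse, so it is an admissible candidate in \eqref{eq: relaxed_Frechet_means}. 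Assembling the pointwise optimality into the integral inequality then completes the proof.
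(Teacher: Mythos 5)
Your proposal is correct and follows essentially the same route as the paper: the paper reduces the problem to the weighted $L_2$ minimization over arbitrary functions $\warp$ (solved pointwise, exactly as in your weighted least-squares step) and then uses the convexity of the interval $(a,b)$ to conclude that the minimizer $\sum_{k} w_k \warp(q_k(\btheta))$ lies in the range of $\warp$, so that $\warp^{-1}$ applies. Your direct pointwise substitution $u = \warp(\varphi(\btheta))$ and the paper's two-step relax-then-check-feasibility argument are the same proof in slightly different packaging.
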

A proof is provided in Appendix~\ref{appendix: proof_weighted_general_frechet}, and  the functions $\warp$ leading to the linear, log-linear, inverse-linear, and H\"older pooling functions are listed in Table~\ref{tab: frechet_means}.
Note that the solution $\tilde q(\btheta)$ in  \eqref{eq: general_unnormalized_pooling_function} is always nonnegative because the domain of $\warp$ is $(0,\infty)$.

\section{Gaussian Densities}
\label{sec: opinion_pooling_with_gaussians}
In Sections~\ref{sec: probabilistic_opinion_pooling} and \ref{sec: pop_optim_appr}, we discussed a variety of pooling functions that can be used to fuse the pdfs of several agents into a single aggregate pdf. We now consider the practically important special case where the opinions of the agents are represented by Gaussian pdfs. {That is}, we assume that 
\begin{equation}
    \label{eq: agents_are_gaussian}
    q_k(\btheta)=\mathcal{N}(\btheta; \bmu_{q_k},\bSigma_{q_k}), \quad k=1,\ldots,K,
\end{equation}
where $\mathcal{N}(\btheta;\bmu_{q_k},\bSigma_{q_k})$ denotes a multivariate Gaussian pdf with mean ${\bmu_{q_k}=\mathbb{E}_{q_k}[\btheta]}$ and covariance matrix ${\bSigma_{q_k}=\mathbb{E}_{q_k}[(\btheta-\bmu_{q_k})(\btheta-\bmu_{q_k})^\intercal]}$. {An important aspect of the Gaussian case is the fact that each agent pdf $q_k(\btheta)$ is completely characterized by its first- and second-order moments $\bmu_{q_k}$ and $\bSigma_{q_k}$.}
\subsection{Linear Pooling}
\label{subsec: fusion_gaussian_linear_pooling}
The fusion of Gaussian pdfs using the linear pooling function in \eqref{eq: linear_pooling} results in an aggregate pdf that is a mixture of Gaussians, i.e., 
\begin{equation}
    \label{eq: aggregate_mixture_gauss}
    q(\btheta) = \sum_{k=1}^K w_k \mathcal{N}(\btheta;\bmu_{q_k},\bSigma_{q_k}). 
\end{equation}
A convenient property in this context is that the expected value of a function $h(\btheta)$ with respect to {the pdf $q(\btheta)$ in} \eqref{eq: linear_pooling} is the weighted average of the expected {values} of $h(\btheta)$ with respect to {the} agent pdfs $q_1(\btheta),\ldots,q_K(\btheta)$, i.e., $\mathbb{E}_q[h(\btheta)]=\sum_{k=1}^K w_k\mathbb{E}_{q_k}[h(\btheta)]$. This implies that the mean of {the} aggregate pdf in \eqref{eq: aggregate_mixture_gauss}, $\bmu_{q}=\mathbb{E}_q[\btheta]$, {is simply the weighted average of the agent means, i.e.,}
\begin{equation}
    \label{eq: mean_linear_pooling}
    \bmu_{q}=\sum_{k=1}^K w_k\bmu_{q_k}.
\end{equation}
Similarly, the covariance matrix of the aggregate pdf in \eqref{eq: aggregate_mixture_gauss}, $\bSigma_{q}= \mathbb{E}_q[(\btheta-\bmu_{q})(\btheta-\bmu_{q})^\intercal]$, is obtained as \cite{malladi1997new}
\begin{equation}
    \label{eq: cov_matrix_aggregate}
    \bSigma_{q}=\sum_{k=1}^Kw_k\left(\bSigma_{q_k}+(\bmu_{q_k}-\bmu_{q})(\bmu_{q_k}-\bmu_{q})^\intercal\right).
\end{equation}

Thus, the mean and covariance matrix of the aggregate pdf {$q(\btheta)$} can be calculated easily from the agent means and covariance matrices. 
 This is useful from a practical perspective because it provides a way for obtaining an estimate of the parameters (e.g., mean)  as well as a measure of  uncertainty for that estimate (e.g., covariance matrix). It is important to note, however, that since $q(\btheta)$ is a mixture of Gaussians and, therefore, is non-Gaussian, it is not fully characterized by its mean and covariance matrix. Indeed, a mixture of Gaussians can have properties that a Gaussian cannot have, including heavy tails, multiple modes, and nonzero skewness \cite{wang2015multivariate}.

In the case that the agent pdfs are Gaussian, the connection of linear opinion pooling to model averaging established in Section~\ref{subsec: fusion_rules_axioms_lop} extends to an estimation technique in the Kalman filtering literature called \emph{multiple model adaptive estimation} (MMAE) \cite{hanlon2000multiple}. MMAE {uses} a bank of Kalman filters to estimate an {unknown state (time-varying parameter)}, where each Kalman filter assumes a distinct model {describing the state's time evolution and its relation} to the observed data. In this context, $\bmu_{q_k}$ is the local state estimate provided by the $k$th Kalman filter {at a given time}, while $\bSigma_{q_k}$ is the covariance {of that estimate}. The local state estimates are then combined according to \eqref{eq: mean_linear_pooling} to obtain a final state estimate $\bmu_{q}$, whose covariance $\bSigma_{q}$ is determined by \eqref{eq: cov_matrix_aggregate}. Here, the weight $w_k$ {equals} the posterior probability of the model {assumed by} the $k$th Kalman filter.

\subsection{Log-linear Pooling}
\label{subsec: fusion_gaussian_log_linear_pooling}
The fusion of the 
Gaussian pdfs $q_k(\btheta)$ {in \eqref{eq: agents_are_gaussian}} by the log-linear pooling function in \eqref{eq: log_linear_pooling} results in an aggregate pdf that is also Gaussian, i.e.,
\begin{equation*}
    q(\btheta)=\mathcal{N}(\btheta; \bmu_{q}, \bSigma_{q}),
\end{equation*}
with mean vector
\begin{equation}
    \label{eq: mean_fusion_log_linear_gaussians}
    \bmu_{q} = \left(\sum_{k=1}^K w_k\bSigma_{q_k}^{-1}\right)^{-1}\sum_{j=1}^K w_j\bSigma_{q_j}^{-1}\bmu_{q_j}
\end{equation}
and covariance matrix
\begin{equation}
    \label{eq: covariance_fusion_log_linear_gaussians}
    \bSigma_{q} = \left(\sum_{k=1}^K w_k\bSigma_{q_k}^{-1}\right)^{-1}.
\end{equation}
Unlike the case of linear pooling, since the aggregate pdf $q(\btheta)$ is Gaussian, it is unimodal and symmetric {about the mean $\bmu_{q}$}, and it is moreover fully characterized by the mean $\bmu_{q}$ and covariance $\bSigma_{q}$. 

There is a strong link between log-linear pooling of Gaussian pdfs and a second-order fusion method called \emph{covariance intersection} \cite{julier1997non, hurley2002information}, which is often employed {in distributed (decentralized) Kalman filter implementations} \cite{chong2001convex, hu2011diffusion, deng2012sequential}. In the covariance intersection context, there are $K$ agents, each {of which uses its own local observations to form a local estimate of an unknown quantity $\btheta$}. The goal of covariance intersection is to fuse the {local} estimates in a way that does not underestimate the overall  covariance of the fused estimate. Let $\bmu_{q_k}$ be the local estimate of the $k$th agent, whose covariance is denoted by $\bSigma_{q_k}$. The fused state estimate $\bmu_{q}$ is determined according to \eqref{eq: mean_fusion_log_linear_gaussians}, while the corresponding  covariance matrix $\bSigma_{q}$ is {given by} \eqref{eq: covariance_fusion_log_linear_gaussians}. The weights $w_1,\ldots,w_K$ {used in \eqref{eq: mean_fusion_log_linear_gaussians} and \eqref{eq: covariance_fusion_log_linear_gaussians}} are typically chosen to minimize the determinant or the trace of $\bSigma_{q}$ \cite{hurley2002information}.

\subsection{Other Pooling Functions}
\label{subsec: fusion_gaussian_alpha_pooling}
Finally, we consider the H\"{o}lder pooling functions.  
The normalization factor $c$ in  the H\"{o}lder pooling function in \eqref{eq:holderpooling}  for general $\alpha\in\mathbb{R}\setminus{\{0, 1\}}$,
involves an intractable integral and cannot be evaluated, even if the agent pdfs are Gaussian. 
Therefore, typically, the aggregate pdf $q(\btheta)$ resulting from the 
H\"{o}lder pooling function is only known up to a normalization factor. 
Computing expected values with respect to $q(\btheta)$ 
would require the use of numerical integration techniques such as the trapezoidal quadrature rule or Monte Carlo methods \cite{robert2013monte}. Because numerical integration techniques are plagued by the \emph{curse of dimensionality} \cite{hinrichs2019curse}, computing expectations with respect to $q(\btheta)$ under the
H\"{o}lder pooling function becomes challenging when the dimension of $\btheta$ is large.

To illustrate the behavior of the linear and log-linear
pooling functions, and to demonstrate the effect of different choices of $\alpha$ on  
the H\"{o}lder pooling function, we present in Fig.~\ref{fig: alpha_fusion} simulation results for two different sets of $K=2$ Gaussian agent pdfs $q_k(\theta)$
with $\theta\in\mathbb{R}$. We used the trapezoidal quadrature rule to compute the normalization factor {of the aggregate pdf}. Fig.~\ref{fig: alpha_fusion_visual_diff_mean} shows the fusion of two Gaussian pdfs with different means but the same variance. In this case, the value of $\alpha$ in the H\"{o}lder pooling function controls the multimodality of the aggregate pdf, in the sense that smaller (larger) values of $\alpha$ attenuate (enhance) the modes of the agent pdfs in the aggregate pdf. Fig.~\ref{fig: alpha_fusion_visual_diff_var} shows the fusion of two Gaussian pdfs with the same mean but different variances. In this case, the value of $\alpha$ controls the shape
of the tails of the aggregate pdf, in the sense that smaller (larger) values of $\alpha$ lead to less heavy (heavier) tails. 

\begin{figure*}
     \centering
     \begin{subfigure}[b]{\textwidth}
         \centering
         \includegraphics[width=\textwidth]{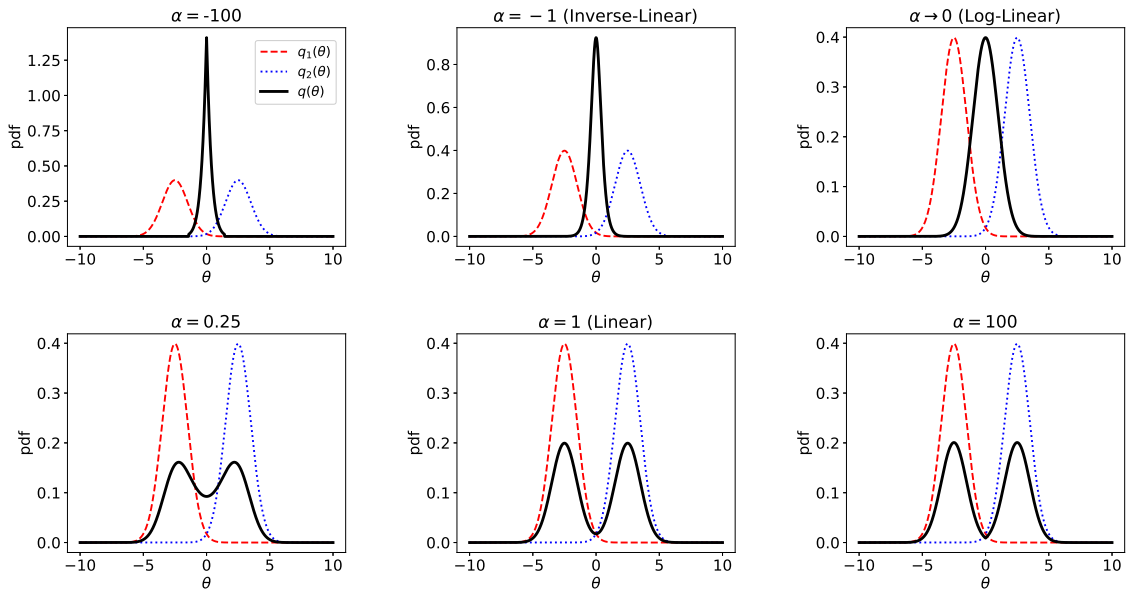}
         \vspace{-9mm}
         \caption{Different means, same variance.}
         \label{fig: alpha_fusion_visual_diff_mean}
     \end{subfigure}
     \hfill
     \begin{subfigure}[b]{\textwidth}
         \centering
          \includegraphics[width=\textwidth]{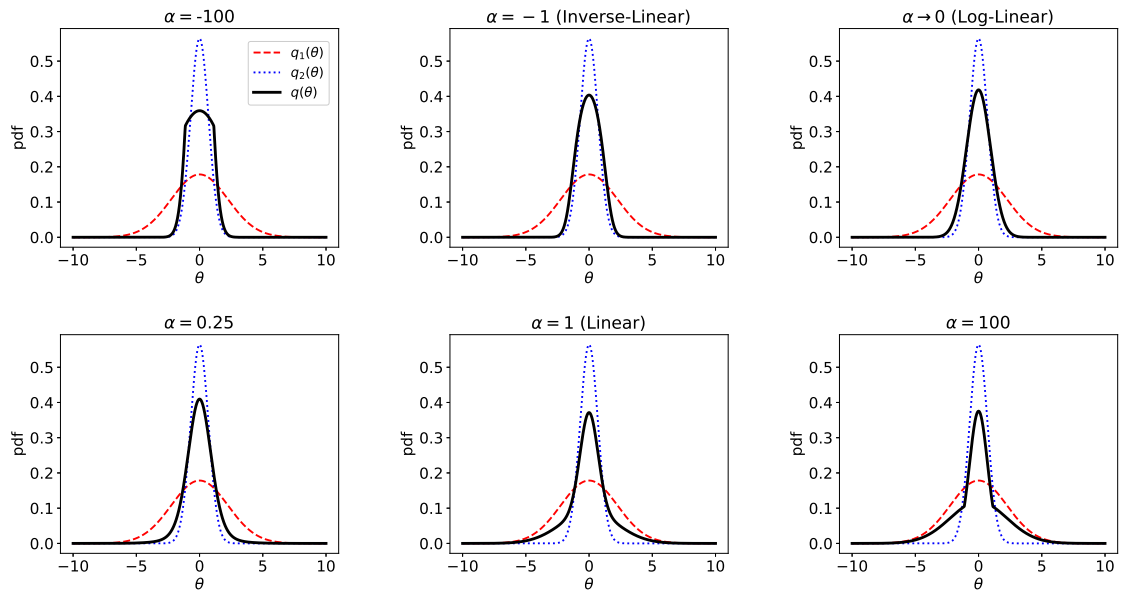}
          \vspace{-9mm}
         \caption{Same mean, different variances.}
        \label{fig: alpha_fusion_visual_diff_var}
     \end{subfigure}
     \vspace{-1mm}
     \caption{Results of H\"{o}lder pooling of two pdfs $q_1(\theta)$ and $q_2(\theta)$ using weights $w_1=w_2=0.5$ and different values of $\alpha$. The pdfs are defined as follows: (a) $q_1(\theta)=\mathcal{N}(\theta; -2.5, 1)$ and $q_2(\theta)=\mathcal{N}(\theta; 2.5, 1)$ (different means, same variance), and (b) $q_1(\btheta)=\mathcal{N}(\theta; 0, 5)$ and $q_2(\theta)=\mathcal{N}(\theta; 0, 0.5)$ (same mean, different variances). 
     Note that $\alpha=-1$, $\alpha\rightarrow 0$, and $\alpha= 1$ correspond to the inverse-linear, log-linear, and linear pooling functions, respectively.}
     \label{fig: alpha_fusion}
\end{figure*}

\section{Choosing the Pooling Parameters} 
\label{sec: choosing_the_weights}
An important consideration in opinion pooling is the choice of the parameters involved in the various pooling functions. While most of our
discussion will be in regard to the weights $w_1,\ldots, w_K$,
we also provide some insight on 
the choice of the parameter $\alpha$ in the H\"{o}lder pooling function. 

The problem of choosing the weights in probabilistic opinion pooling is well researched. The simplest approach is to assign equal weights to all agents, i.e., $w_k={1}/{K}$ for all $k$ \cite{wallis2005combining}. However, alternative strategies for assigning weights have been proposed for linear \cite{genest1990allocating,degroot1991optimal,clemen2008comment} and log-linear \cite{heskes1998selecting, rufo2012log, de2015choosing} pooling. These strategies are usually based on solving some optimization problem, where the definition of the objective function depends on how the weights are interpreted by the fusion center. In some instances, the optimization of the weights solely depends on the agent pdfs. In other scenarios, weight assignment takes into consideration data that are observed at the fusion center, and is based on a Bayesian interpretation involving likelihood functions or posterior distributions. These data-dependent methods have also been extended to the sequential case, where observed data are streamed and the weights are updated when new data become available \cite{genest1990allocating}. 

In the following, we describe several options for choosing the weights in linear and log-linear pooling. We focus on methods that do not assume that the fusion center has observed any data. 
At this point, it 
is important to emphasize that in both the axiomatic and optimization approaches to probabilistic opinion pooling, the weights $w_k$ were assumed fixed, i.e., not dependent on the agent pdfs $q_k(\btheta)$. If, on the other hand, the weights are chosen adaptively according to an additional optimization procedure that involves the agent pdfs $q_k(\btheta)$, then this implies a deviation from the strict mathematical framework established by both the axiomatic and optimization approaches. For example, the linear pooling function with adaptively chosen weights is no longer linear in the agent pdfs $q_k(\btheta)$.

\subsection{Linear Pooling}
\label{subsec: choice_of_weights_lop}
The problem of assigning the weights in the linear pooling function has been considered in many works; see \cite{genest1990allocating} for a review. One approach is based on interpreting the weight $w_k$ as a \emph{veridical probability}, i.e, as the probability that the true pdf of $\btheta$ is $q_k(\btheta)$ \cite{bunn1981two}. Accordingly, $w_k$ is chosen to equal a prior or posterior estimate of that probability. This approach is connected to the model-averaging view of linear opinion pooling mentioned in Section~\ref{subsec: fusion_rules_axioms_lop}, since in \eqref{eq: model_averaging}, $P(M_k)$ equals the probability that the model of the $k$th agent, $M_k$, is the correct one. When data are considered, the weights $w_k$ equal the posterior probabilities of the models $M_k$, and this is exactly how they are assigned in the MMAE algorithm mentioned in Section~\ref{subsec: fusion_gaussian_linear_pooling}  \cite{malladi1997new,hanlon2000multiple}.

Alternatively, the weights can be assigned according to the predictive performance of each agent by viewing the weights as \emph{outranking probabilities} \cite{bunn1982synthesis}. In this view, $w_k$ is the probability that predictions made based on $q_k(\btheta)$ will outperform the predictions based on the pdfs of the other agents. This rationale for choosing the weights requires consideration of data and a mechanism for assessing the predictive performance of the agents. 

Another idea is to interpret the weights as a measure of distance \cite{barlow1986combination}. Based on this interpretation, agents that have ``middle of the road" opinions are assigned higher weights, while those that have more extreme (controversial) opinions are assigned lower weights.
The opposite strategy would in principle also be possible, namely, giving more weight to controversial opinions.
Such  weight assignments can be achieved by assigning a nonnegative score $\gamma_k$ to each agent pdf $q_k(\btheta)$. 
For example, one can choose the score $\gamma_k$ to be inversely related to the \emph{maximum discrepancy} between agent $k$ and the other agents, i.e.,
\begin{equation}
    \label{eq: agents_max_discrepancy}
    \gamma_k = \frac{1}{\underset{j\in\{1,\ldots,K\}}{\max}\mathcal{D}_{\rm KL}(q_k\|q_{j})} \geq 0,\quad k=1,\ldots,K.
\end{equation}
Here, the KLD is used to measure the discrepancy between agents, although other divergences can be used instead. The weight of each agent is then obtained as a normalized version of 
$\gamma_k$,
i.e., 
\begin{equation*}
    w_k = \frac{\gamma_k}{\sum_{j=1}^K \gamma_j}, \quad k=1,\ldots,K. 
\end{equation*}

Finally, there are also iterative schemes for weight assignment, where each agent considers itself to be a fusion center and assigns weights to all the other agents. The weights are iteratively updated until a consensus is reached. In \cite{degroot1974reaching}, the weight vector of each agent is updated by multiplying it by a transition matrix, and under some conditions 
a consensus is reached asymptotically. The work \cite{carvalho2013consensual} builds on this idea, but updates the weights according to how closely the agent pdfs agree, using a scoring function similar to \eqref{eq: agents_max_discrepancy}. 

\subsection{Log-linear Pooling}
\label{subsec: choice_of_weights_llop}
The choice of the weights in the log-linear pooling function has been considered less intensely in the literature. Some of the aforementioned methods for linear opinion pooling can also be applied to log-linear opinion pooling; for example, the scoring rule in \eqref{eq: agents_max_discrepancy} is still reasonable. Moreover, as mentioned in Section~\ref{subsec: fusion_gaussian_log_linear_pooling}, for Gaussian agent pdfs, log-linear pooling corresponds to the covariance intersection fusion method. Here, the weights can be chosen using schemes proposed in the covariance intersection literature, such as minimizing the trace or determinant of the covariance matrix in \eqref{eq: covariance_fusion_log_linear_gaussians} \cite{hurley2002information}. 

One criterion proposed in the literature that does not require the consideration of data is the \emph{minimum KLD} criterion \cite{de2015choosing}. If there is no basis for determining the reliability of each agent, one can choose the weights such that the aggregate pdf is maximally close to all the agent pdfs simultaneously. This is the criterion that was used in Section~\ref{sec: pop_optim_appr} to find an optimal pooling function for given weights $w_k$.  Similarly to Section~\ref{subsec: forward_KL_minimization_pop}, the criterion can be formulated as a minimization of the average of the KLDs between the agent pdfs $q_k(\btheta)$ and the aggregate pdf $q(\btheta)$. Introducing the weight vector $\w\triangleq(w_1,\ldots,w_K)$, the optimal weights are defined as
\begin{equation*}
    \w^\star = \argmin_{\w \in \mathcal{S}_K} L(\w),
\end{equation*}
with
\begin{align*}
    L(\w) 
    & \triangleq \frac{1}{K}\sum_{k=1}^K \mathcal{D}_{\rm KL}(q_k \| q)
    \\
    & = \frac{1}{K}\sum_{k=1}^K \mathcal{D}_{\rm KL}\bigg( q_k \, \bigg\| \,c \prod_{\ell=1}^K \left(q_\ell(\btheta)\right)^{w_\ell}\bigg),
\end{align*}
where expression \eqref{eq: log_linear_pooling} was inserted  for $q(\btheta)$.
Using the KLD definition \eqref{eq: forward_KLD}, one can obtain \cite{de2015choosing} 
\begin{equation}
    \label{eq: sum_of_KLDs_worked_out}
    L(\w) = -\log c(\w)+ \frac{1}{K}\sum_{k=1}^K \sum_{j\neq k} 
    w_j\mathcal{D}_{\rm KL}(q_k \| q_j).
\end{equation}
Here, $c(\w)$ is the normalization factor in \eqref{eq:log_linear_pooling_normaliz}, which depends on $\w$. The objective function $L(\w)$ is convex, since the first term $-\log c(\w)$ is convex \cite{rufo2012log} and the second term is a linear function of $\w$. Therefore, tools from convex optimization can be used to compute the optimal weight vector $\w^\star$. We note that the minimum KLD criterion would also be a reasonable criterion for use with other pooling functions; however, the expression for $L(\w)$ in \eqref{eq: sum_of_KLDs_worked_out} applies specifically to the log-linear pooling function. 

Furthermore, we remark that if the average of the reverse KLDs, i.e.,
\begin{align}
    \tilde{L}(\w) 
    & \triangleq\frac{1}{K}\sum_{k=1}^K \mathcal{D}_{\rm KL}(q\|q_k)
    \notag
    \\
    & = \frac{1}{K}\sum_{k=1}^K \mathcal{D}_{\rm KL}\bigg( c \prod_{\ell=1}^K \left(q_\ell(\btheta)\right)^{w_\ell}\, \bigg\|\, q_k \bigg),
    \label{eq: sum_of_revKLDS}
\end{align}
was chosen as the objective function to be minimized, the optimal weights 
would be given by 
\begin{equation}
    \label{eq:wstarrevKL}
    \argmin_{\w \in \mathcal{S}_K} \tilde{L}(\w) = \bigg(\frac{1}{K},\ldots,\frac{1}{K}\bigg)\,.
\end{equation}
Indeed, let $q^\star(\btheta)$ be defined by \eqref{eq: log_linear_pooling} with weights $\w=(\frac{1}{K},\ldots,\frac{1}{K})$, i.e.,
\begin{equation}
\label{eq:qstarrevKL1}
    q^\star(\btheta) \triangleq c \prod_{k=1}^K \left(q_k(\btheta)\right)^{1/K}.
\end{equation}
By Theorem~\ref{eq: optimal_KLD_reverse},  $q^\star(\btheta)$
minimizes the objective function in \eqref{eq: sum_of_revKLDS} over all pdfs $\varphi$, i.e., 
\begin{equation}
\label{eq:qstarrevKL}
    q^\star = \argmin_{\varphi \in \mathcal{P}} \frac{1}{K}\sum_{k=1}^K \mathcal{D}_{\rm KL}(\varphi \| q_k)\,.
\end{equation}
Thus, we have 
\begin{align*}
    \tilde{L}(\w) 
    & \stackrel{\hidewidth \eqref{eq: sum_of_revKLDS}\hidewidth}= \frac{1}{K}\sum_{k=1}^K \mathcal{D}_{\rm KL}\bigg( c \prod_{\ell=1}^K \left(q_\ell(\btheta)\right)^{w_\ell} \bigg\| \,q_k\bigg) 
    \\
    & \stackrel{\hidewidth \eqref{eq:qstarrevKL} \hidewidth}\geq \frac{1}{K}\sum_{k=1}^K \mathcal{D}_{\rm KL}(q^\star \| q_k)
    \\
    & \stackrel{\hidewidth \eqref{eq:qstarrevKL1} \hidewidth}= \frac{1}{K}\sum_{k=1}^K \mathcal{D}_{\rm KL}\bigg( c \prod_{\ell=1}^K \left(q_\ell(\btheta)\right)^{1/K} \bigg\| \,q_k\bigg) 
    \\
    & \stackrel{\hidewidth \eqref{eq: sum_of_revKLDS}\hidewidth}= \tilde{L}\big(\big(\tfrac{1}{K},\ldots,\tfrac{1}{K}\big)\big)\,.
\end{align*}
Thus, for any $\w$, $ \tilde{L}(\w)$ is lower bounded by $\tilde{L}\big(\big(\tfrac{1}{K},\ldots,\tfrac{1}{K}\big)\big)$. 
This proves \eqref{eq:wstarrevKL}.

Other approaches minimize an alternative KLD criterion \cite{heskes1998selecting, rufo2012bayesian} or take a Bayesian approach by specifying a prior distribution over the weights \cite{de2015choosing}. However, these approaches require data to be available, and usually lead to closed form solutions only if the prior pdfs take the form of conjugate priors 
for the considered likelihood functions. 


\subsection{H\"{o}lder Pooling}
\label{subsec: choice_of_parameters_holder}

In addition to the 
weights, the 
parameter $\alpha$ involved in the H\"{o}lder pooling function in \eqref{eq:holderpooling} strongly
impacts the resulting aggregate pdf, as was demonstrated in Fig.~\ref{fig: alpha_fusion}.
An appropriate choice of $\alpha$ depends on the application at hand.
For example, in risk assessment, the choice of $\alpha$ is relevant to a quantification of uncertainty.
In a 
risk-averse scenario, one may opt to choose a larger value of $\alpha$, or at least a positive $\alpha$. Indeed, for any $\alpha > 0$, the supports of the agent pdfs are preserved by the fusion in the 
sense that the support of the aggregate pdf equals the union of the supports of all the agent 
pdfs. Furthermore, a larger $\alpha$ tends to yield a larger uncertainty in the aggregate pdf. This latter characteristic is related to the fact, shown in Fig.~\ref{fig: alpha_fusion},  that a larger $\alpha$ tends to promote multimodality and/or heavy-tailed properties in the aggregate pdf.

If one instead chooses a small value of $\alpha$,
then components of different agent pdfs that occur at different $\btheta$ locations will have substantially less influence on the aggregate pdf. This means, in particular, that an ``outlier behavior'' of one agent will tend to be attenuated in the fusion process. 
Furthermore, for $\alpha = 0$, if the pdf of any agent $k$ is zero for some $\btheta_0$, i.e., $q_k(\btheta_0)=0$, this implies that the aggregate pdf is also zero at $\btheta_0$ irrespectively of the values of the other agent pdfs.
This ``veto property'' can be problematic in certain situations.
Finally, for $\alpha < 0$, H\"{o}lder pooling is restricted to 
positive opinion profiles, which implies that all agents have to agree on the support $\Theta$ of $\btheta$.

H\"{o}lder pooling appears to be practically relevant mostly for values of $\alpha$ in $[0,1]$.
Here, we recall that $\alpha = 0$ and $\alpha = 1$ correspond to the log-linear pooling function and the linear pooling function, respectively; furthermore,  values of $\alpha$ between $0$ and $1$ correspond to pooling functions whose characteristics---e.g., with regard to multimodality and tail decay---are intermediate between those of the linear and log-linear pooling functions, as demonstrated by Fig.~\ref{fig: alpha_fusion}.
An application where this observation is potentially
relevant was considered in Section \ref{sec: examples_TT}.

\section{The Supra-Bayesian Framework}
\label{sec: fusion_of_distributions}

\newcommand{\meamsc}{y}
\newcommand{\meam}{\mathbf{\meamsc}}
\newcommand{\meamr}{\meam}
\newcommand{\sufstsc}{t}
\newcommand{\sufst}{\mathbf{\sufstsc}}
\newcommand{\sufstr}{\sufst}
\newcommand{\lik}{p}
\newcommand{\pri}{p}
\newcommand{\poste}{p}
\newcommand{\myproj}{\mathbf{S}}
\newcommand{\E}{\mathbb{E}}
\newcommand{\tlik}{\lambda}
\newcommand{\parfam}{\psi}
\newcommand{\Ourinv}{\mathbf{V}}
\newcommand{\ourinv}{\mathbf{v}^{\intercal}}
\newcommand{\ourinvt}{\mathbf{v}}

The supra-Bayesian framework is fundamentally different from the approaches discussed so far.
In this section, we consider $\btheta$ to be a random variable with prior pdf $\pri(\btheta)$ and 
assume that the fusion center  follows a Bayesian update rule to derive a posterior pdf.
Our focus will  be on scenarios where
observations (data) that depend on $\btheta$ are obtained by the agents but are not known to the fusion center.
We will start this section with
a formulation using conditionally independent observations,
and extend from there to the general supra-Bayesian framework.

\subsection{Agents Collecting Conditionally Independent Observations} \label{sec:exbayes}
Let us consider a scenario with $K$ agents where 
each agent $k\in \{1, \dots, K\}$ obtains observations $\meam_k\in \mathbb{R}^{d_{\meamsc_k}}$.
These observations are statistically related to the random vector
$\btheta \in \mathbb{R}^{d_{\theta}}$ according to the ``local'' likelihood functions $\lik(\meam_k \,\vert\,  \btheta)$.
We consider the observations fixed (i.e., already observed) and emphasize the dependence of $\lik(\meam_k \,\vert\,  \btheta)$ on $\btheta$ by writing the local likelihood functions as $\ell_k(\btheta) \triangleq \lik(\meam_k \,\vert\,  \btheta)$.
Furthermore, each agent has access to the prior pdf $\pri(\btheta)$ and is thus able to calculate its local posterior 
$\pi_k(\btheta) \triangleq \poste(\btheta \,\vert\, \meam_k)$ according to Bayes' rule:
\begin{equation} \label{eq:locpost}
    \pi_k(\btheta)
    = \poste(\btheta \,\vert\, \meam_k)
    = \frac{\ell_k(\btheta) \pri(\btheta) }%
    {\int_{\Theta} \ell_k(\btheta') \pri(\btheta')\,\mathrm{d}\btheta' }\,.
\end{equation}
We further assume that the local observations $\meam_k$  are conditionally independent given $\btheta$ for all $k\in \{1, \dots, K\}$.
This implies that the ``global'' likelihood function $\ell(\btheta) \triangleq \lik(\meam \,\vert\,  \btheta)$ for $\meam \triangleq [\meam_1^\intercal, \dots, \meam_K^\intercal]^{\intercal}$
factors into the local likelihood functions $\ell_k(\btheta) = \lik(\meam_k \,\vert\,  \btheta)$, i.e., 
\begin{equation} \label{eq:indepmeams}
    \ell(\btheta)
    = \prod_{k=1}^K \ell_k(\btheta) \,.
\end{equation}

The task of the fusion center is to fuse the local posteriors $\pi_k(\btheta)$ provided by the agents into an aggregate (fused) pdf 
 $g[\pi_1,\ldots,\pi_K](\btheta)$.
We assume that the fusion center is aware of  the statistical properties of all the observations (i.e., the conditional pdfs $\lik(\meam_k \,\vert\,  \btheta)$) and of the prior $\pri(\btheta)$ but does not have access to the observations $\meam_k$ directly.
From a Bayesian viewpoint, the best possible fusion result is the posterior pdf of $\btheta$ using the observations from all the agents
as represented by the total observation vector $\meam$, i.e., $\poste(\btheta \,\vert\, \meam)$. We will refer to $\poste(\btheta \,\vert\, \meam)$ as
\textit{oracle posterior}
because the fusion center does not know the observations $\meam$ explicitly.
Nevertheless, the following result shows that the fusion center is still able to fuse the $\pi_k(\btheta)$ into the oracle posterior $\poste(\btheta \,\vert\, \meam)$.
\begin{theorem}
\label{th:bayesindep}
Let $\btheta$ be a random vector
with prior $\pri(\btheta)$. 
Furthermore, let the local observations $\meam_1$, $\dots$, $\meam_K$ given $\btheta$ be mutually independent and distributed according to $\lik(\meam_k \,\vert\,  \btheta)$.
Then the global posterior $\poste(\btheta \,\vert\, \meam)$ with $\meam = [\meam_1^\intercal, \dots, \meam_K^\intercal]^{\intercal}$ is given by
\begin{equation}\label{eq:fusionindepbayes}
    \poste(\btheta \,\vert\, \meam)
    = g[\pi_1,\ldots,\pi_K](\btheta)
    =
    c \,(\pri(\btheta))^{1-K}  \prod_{k=1}^K \pi_k(\btheta)\,,
\end{equation}
where $c = 1/ \int_{\Theta} (\pri(\btheta))^{1-K}  \big(\prod_{k=1}^K \pi_k(\btheta)\big) \mathrm{d}\btheta$ is a normalization factor and
the local posteriors $\pi_k(\btheta)$ are given by \eqref{eq:locpost}.
\end{theorem}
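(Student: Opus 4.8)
The plan is to start from Bayes' rule for the oracle posterior and then eliminate the local likelihood functions $\ell_k(\btheta)$---which the fusion center does not have---in favor of the local posteriors $\pi_k(\btheta)$, which it does have. Using the conditional-independence factorization \eqref{eq:indepmeams}, Bayes' rule gives
\[
\poste(\btheta \,\vert\, \meam) = \frac{\ell(\btheta)\,\pri(\btheta)}{\int_{\Theta} \ell(\btheta')\,\pri(\btheta')\,\mathrm{d}\btheta'} = \frac{\big(\prod_{k=1}^K \ell_k(\btheta)\big)\,\pri(\btheta)}{\int_{\Theta} \big(\prod_{k=1}^K \ell_k(\btheta')\big)\,\pri(\btheta')\,\mathrm{d}\btheta'}.
\]
The key observation is that the definition of the local posterior in \eqref{eq:locpost} can be inverted: writing $Z_k \triangleq \int_{\Theta} \ell_k(\btheta')\,\pri(\btheta')\,\mathrm{d}\btheta'$, which is a constant independent of $\btheta$, we obtain $\ell_k(\btheta) = Z_k\,\pi_k(\btheta)/\pri(\btheta)$.

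Next I would substitute this into the product over $k$, yielding $\prod_{k=1}^K \ell_k(\btheta) = \big(\prod_{k=1}^K Z_k\big)\,(\pri(\btheta))^{-K}\prod_{k=1}^K \pi_k(\btheta)$, and insert the result into both the numerator and the denominator of the Bayes expression above. The $\btheta$-independent factor $\prod_{k=1}^K Z_k$ then cancels between numerator and denominator, while the single prior factor supplied by Bayes' rule combines with $(\pri(\btheta))^{-K}$ to give $(\pri(\btheta))^{1-K}$. What is left is precisely $c\,(\pri(\btheta))^{1-K}\prod_{k=1}^K \pi_k(\btheta)$, with $c$ equal to the reciprocal of the normalizing integral stated in the theorem; this is exactly \eqref{eq:fusionindepbayes}, and it identifies the fusion rule as the multiplicative pooling function of Section~\ref{sec:multpool} with the calibrating pdf $q_0$ taken to be the prior $\pri$.

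The computation itself is essentially bookkeeping of normalization constants, so the only point that genuinely requires care is well-definedness, i.e., justifying that each division and the final cancellation are legitimate. Concretely, I would note that each $Z_k$ must be finite and strictly positive so that $\ell_k(\btheta) = Z_k\,\pi_k(\btheta)/\pri(\btheta)$ is valid, and that the normalizing integral defining $c$ must converge and be nonzero. Both are guaranteed by the positivity of the prior and of the local posteriors together with the boundedness assumption imposed on the multiplicative pooling function in Section~\ref{sec:multpool}; under these assumptions the stated identity holds.
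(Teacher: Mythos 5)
Your proposal is correct and follows essentially the same route as the paper's proof: Bayes' rule plus the conditional-independence factorization, then replacing each $\ell_k(\btheta)$ by $\pi_k(\btheta)/\pri(\btheta)$ up to a $\btheta$-independent constant and normalizing. The paper simply absorbs your explicit constants $Z_k$ into proportionality signs; your added remark on the positivity and finiteness of the normalizers is a harmless elaboration of what the paper leaves implicit.
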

\begin{proof}
    We recall that $\ell(\btheta) = \lik(\meam \,\vert\,  \btheta)$,
    $\ell_k(\btheta) = \lik(\meam_k \,\vert\,  \btheta)$, 
    and $\pi_k(\btheta) = \poste(\btheta \,\vert\, \meam_k)$.
    We have by Bayes' rule that 
    \begin{align}
        \poste(\btheta \,\vert\, \meam)
        & \propto  \pri(\btheta) \ell(\btheta) 
        \notag \\
        & \stackrel{\hidewidth \eqref{eq:indepmeams} \hidewidth }=   \pri(\btheta)\prod_{k=1}^K \ell_k(\btheta)
        \notag \\
        & \propto \pri(\btheta)  \prod_{k=1}^K \frac{\poste(\btheta \,\vert\, \meam_k)}{\pri(\btheta)}
        \notag \\
        & = (\pri(\btheta))^{1-K}  \prod_{k=1}^K \pi_k(\btheta)\,. 
        \label{eq:nonnormindepBayes}
    \end{align}
    Since $\poste(\btheta \,\vert\, \meam)$ is a conditional pdf, normalizing the function in \eqref{eq:nonnormindepBayes} gives \eqref{eq:fusionindepbayes}.
\end{proof}
The fusion rule in \eqref{eq:fusionindepbayes}
is recognized to be an instance of the  multiplicative pooling function in~\eqref{eq: multiplicative_pooling}, where the calibrating pdf $q_0(\btheta)$ is given by the prior $\pri(\btheta)$.
Thus, Theorem~\ref{th:bayesindep} states that the multiplicative pooling function applied to the local posteriors $\pi_k(\btheta)$
provides the oracle posterior $\poste(\btheta \,\vert\, \meam)$ in the case of conditionally independent local observations $\meam_k$.

We note that the fusion center could calculate $\poste(\btheta \,\vert\, \meam)$ equally well from the local  likelihood functions $\ell_k(\btheta) = \lik(\meam_k \,\vert\,  \btheta)$, rather than from
the local posteriors $\pi_k(\btheta)$.
Indeed, the fusion rule \eqref{eq:fusionindepbayes} can be interpreted as first dividing each local posterior $\pi_k(\btheta)$ by the prior $\pri(\btheta)$ to obtain the local  likelihood function $\lik(\meam_k \,\vert\,  \btheta)$, 
then fusing (multiplying) the local likelihood functions into the global likelihood function $\lik(\meam \,\vert\,  \btheta)$,  
and finally multiplying  by the prior to obtain the oracle posterior
$\poste(\btheta \,\vert\, \meam)$.
(This corresponds to reading the proof of Theorem~\ref{th:bayesindep} bottom up.)
Thus, in the present scenario of conditionally independent observations $\meam_k$, 
the agents may also communicate 
their local likelihood functions $\ell_k(\btheta)$ to
the fusion center, rather than their posteriors $\pi_k(\btheta)$.

\subsection{The Supra-Bayesian Framework and Local Statistics}
\label{sec:abstract-bayesian}

To generalize the scenario considered in Section~\ref{sec:exbayes}, we take the perspective of the fusion center.
In our Bayesian setting, the fusion center aims to calculate the posterior distribution of 
$\btheta$, given all the information it has access to.
However, in more general settings than the case of conditionally independent observations discussed in Section~\ref{sec:exbayes}, we cannot expect that the fusion center is able to calculate the oracle posterior $\poste(\btheta \,\vert\, \meam)$. This is because the fusion center does not have direct access to the observations $\meam_k$; rather,
it observes the effect of the $\meam_k$ only indirectly through the local posteriors 
$\pi_k(\btheta) = \poste(\btheta \,\vert\, \meam_k)$.
In addition to knowing the local posteriors 
$\pi_k(\btheta)$, the fusion center is aware of the prior $\pri(\btheta)$ and the  conditional distribution $\lik(\meam \,\vert\, \btheta)$ (as a function of $\meam$ and $\btheta$, not for the fixed, observed $\meam$). 
Finally, the fusion center knows how the agents derive their local posteriors $\pi_k(\btheta) = \poste(\btheta \,\vert\, \meam_k)$ given their local observations $\meam_k$,
i.e., it is aware that each $\pi_k$ depends on 
$\btheta$ in a well-defined probabilistic way, namely, by the two-step process of first generating a random 
$\meam_k$ given $\btheta$ according to the conditional pdf $\lik(\meam_k \,\vert\,  \btheta)$
and then deriving $\pi_k$ from $\meam_k$ using \eqref{eq:locpost}.

This setup can be formulated generically via an abstract ``observation model'' $\lik(\pi_1, \dots, \pi_K\,\vert\,  \btheta)$ in which the local posteriors $\pi_K$ are considered as ``observations.''
This approach is known in the literature as the supra-Bayesian model \cite{winkler1968consensus,morris1977combining}.
In this abstract setting, we no longer have to consider the intermediate step of generating the observations $\meam_k$ given $\btheta$, and we no longer have to assume that the local pdfs  $\pi_k$ are generated as posteriors. 
Instead, we directly define an observation model by specifying a probability distribution over the local pdfs $\pi_k$ given 
$\btheta$.
Thus, at the fusion center,  
the local pdfs of all agents are considered as observations, i.e.,
as random objects whose statistical relation to $\btheta$ is described by the ``likelihood function'' $\lik(\pi_1, \dots, \pi_K\,\vert\,  \btheta)$.
As always in Bayesian settings, we need in addition some prior $\pri(\btheta)$.
By Bayes' theorem, we can then express the posterior distribution of 
$\btheta$ given the local pdfs $\pi_k$ as
\begin{equation}
    \label{eq:bayesabstract}
    \poste(\btheta \,\vert\, \pi_1, \dots, \pi_K) = 
    \frac{\lik(\pi_1, \dots, \pi_K \,\vert\,  \btheta)\pri(\btheta) }%
    {\int_{{\Theta}}  \lik(\pi_1, \dots, \pi_K \,\vert\,  \btheta') \pri(\btheta')\,\mathrm{d}\btheta'},
\end{equation}
which is 
considered to be the supra-Bayesian fusion result, also to be referred to as ``supra-Bayesian posterior.''

For any given $\btheta$, $\lik(\pi_1, \dots, \pi_K \,\vert\,  \btheta)$ is a probability distribution over the infinite-dimensional space of functions that is given by the
$K$-fold Cartesian product of the space of all  pdfs $\mathcal{P}$.
It is both mathematically and practically convenient to restrict to a finite-dimensional subset of this space. 
Indeed,  a finite-dimensional parameterization is very often used in practical applications. In particular, if $\pi_k$ depends deterministically on some finite-dimensional observation $\meam_k$, then $\pi_k$ is obviously restricted to a finite-dimensional subset.
Thus, we will hereafter assume that each $\pi_k$ depends deterministically and in a one-to-one manner on a finite-dimensional random vector $\sufst_k \in \mathbb{R}^{d_{\sufstsc_k}}$.
Then the probability distribution  $\lik(\pi_1, \dots, \pi_K \,\vert\,  \btheta)$ simplifies to a conventional conditional pdf $\lik(\sufst_1, \dots, \sufst_K \,\vert\,  \btheta)$. This finite-dimensional setting is formalized by the following definition.

\begin{definition}
\label{def:supra}
    A \emph{finite-dimensional supra-Bayesian model} for a parameter $\btheta \in  \Theta \subseteq \mathbb{R}^{d_{\theta}}$ consists of:    
   \begin{itemize}
        \item a prior pdf $p(\btheta)$;
   \vspace{1mm}
   \item a conditional pdf $\lik(\sufstr \,\vert\,  \btheta)$, where $\sufstr = { [\sufstr_1^\intercal, \dots, \sufstr_K^\intercal] }^\intercal$ with $\sufst_k \in \mathbb{R}^{d_{\sufstsc_k}}$ for $k = 1,\ldots,K$; 
    \vspace{1mm}
       \item for each $k\in \{1, \dots, K\}$, a one-to-one mapping $\parfam_k\colon \mathbb{R}^{d_{\sufstsc_k}} \to \mathcal{P}$.
   \end{itemize}  
    The vectors $\sufst_k$ are referred to as \emph{local statistics} and the functions $\pi_k(\btheta)= \parfam_k[\sufst_k](\btheta)$ as \emph{local pdfs}.
\end{definition}
In a finite-dimensional supra-Bayesian model,
each local pdf $\pi_k$ is uniquely defined by a corresponding local statistic $\sufst_k$.
As a consequence,  the  conditional  distribution $\lik(\pi_1, \dots, \pi_K \,\vert\,  \btheta)$ 
is implicitly given by 
the conditional pdf
$\lik(\sufstr \,\vert\,  \btheta)$ with $\sufstr = {[\sufstr_1^\intercal, \dots,\sufstr_K^\intercal]}^\intercal$,
and we will refer to $\tlik(\btheta) \triangleq \lik(\sufstr \,\vert\,  \btheta)$ as global likelihood function.
The function $\parfam_k$ specifies which family of distributions $\pi_k$ belongs to.
For example, if we want to model the fact that $\pi_k(\btheta)$  belongs to the family of Gaussian distributions with fixed and known covariance matrix $\bSigma$, 
then we define $\parfam_k[\bmu_k](\btheta) = \mathcal{N}(\btheta; \bmu_{k}, \bSigma)$.
In this example, then,
$\sufstr_k = \bmu_k$.

In Definition~\ref{def:supra},
we further assumed that there is a one-to-one relation between the local  pdf $\pi_k$ and  $\sufstr_k$, i.e., two different vectors $\sufst_k$ and $\widetilde{\sufst}_k$ correspond to different pdfs $\pi_k$ and $\widetilde{\pi}_k$.
In addition to the fact that the pdf  $\pi_k$ is uniquely specified by the vector $\sufst_k$, this assumption also implies that we can uniquely determine  $\sufst_k$ from   $\pi_k$, i.e.,  $\sufst_k$ is a function of $\pi_k$ and we can thus interpret it as a statistic of $\pi_k$.
This justifies the designation of
the vectors $\sufst_k$ as local statistics.
In summary, the local statistic $\sufstr_k$ represents the information provided by the pdf $\pi_k$ of agent $k$ in 
a more accessible, finite-dimensional way.

The following result is an immediate consequence of our definition of a finite-dimensional supra-Bayesian model (Definition \ref{def:supra}) and Bayes' theorem:
the one-to-one relationship between $\pi_k$ and $\sufst_k$ for each $k\in \{1, \dots, K\}$ implies that $\poste(\btheta \,\vert\, \pi_1, \dots, \pi_K) = \poste(\btheta \,\vert\, \sufst)$, and Bayes' theorem implies that $\poste(\btheta \,\vert\, \sufst) \propto \lik(\sufst \,\vert\,  \btheta)\pri(\btheta)$.
\begin{theorem}
\label{th:supraBayes}
    In a  finite-dimensional supra-Bayesian model, the supra-Bayesian fusion result (or supra-Bayesian posterior) is given by
\vspace{-1mm} 
\begin{equation}\label{eq:globpost}
        \poste(\btheta \,\vert\, \pi_1, \dots, \pi_K)
        =
        \poste(\btheta \,\vert\, \sufst) 
        =
        \frac{\tlik(\btheta)\pri(\btheta)}{\int_{\Theta}     \tlik(\btheta')\pri(\btheta') \mathrm{d}\btheta'}\,,
 \vspace{-1mm} 
   \end{equation}
    where $\tlik(\btheta) = \lik(\sufst \,\vert\,  \btheta)$.
\end{theorem}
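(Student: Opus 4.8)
The plan is to establish the two equalities in \eqref{eq:globpost} separately, each of which reduces to a standard fact; the whole argument is short because the finite-dimensional structure imposed by Definition~\ref{def:supra} does essentially all the work, as already anticipated in the paragraph preceding the theorem.

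First I would argue the left equality $\poste(\btheta \,\vert\, \pi_1, \dots, \pi_K) = \poste(\btheta \,\vert\, \sufst)$. By assumption, for each $k$ the mapping $\parfam_k\colon \mathbb{R}^{d_{\sufstsc_k}} \to \mathcal{P}$ is one-to-one and $\pi_k = \parfam_k[\sufst_k]$. Hence $\sufst_k = \parfam_k^{-1}[\pi_k]$ is recoverable from $\pi_k$, so the random objects $(\pi_1, \dots, \pi_K)$ and $\sufst = {[\sufst_1^\intercal, \dots, \sufst_K^\intercal]}^\intercal$ are measurable functions of one another and therefore carry the same information (generate the same $\sigma$-algebra). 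Conditioning $\btheta$ on either consequently yields the identical posterior distribution, which is precisely the first equality. Next I would obtain the right equality by a direct application of Bayes' theorem to $\poste(\btheta \,\vert\, \sufst)$: writing the joint density as $\lik(\sufst \,\vert\, \btheta)\pri(\btheta)$ and dividing by the marginal $\int_{\Theta} \lik(\sufst \,\vert\, \btheta')\pri(\btheta')\,\mathrm{d}\btheta'$ gives $\poste(\btheta \,\vert\, \sufst) = \lik(\sufst \,\vert\, \btheta)\pri(\btheta) \big/ \int_{\Theta} \lik(\sufst \,\vert\, \btheta')\pri(\btheta')\,\mathrm{d}\btheta'$, and substituting the definition $\tlik(\btheta) = \lik(\sufst \,\vert\, \btheta)$ yields \eqref{eq:globpost}.

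I expect the only delicate point to be the measure-theoretic justification of the first equality: one must confirm that $\parfam_k$ being one-to-one (and measurable, with a measurable inverse on its range) indeed forces $\sigma(\pi_1, \dots, \pi_K)$ and $\sigma(\sufst)$ to coincide, so that the two conditional distributions agree almost surely rather than merely informally representing ``the same data.'' For the explicit parametric families used in the applications---such as $\parfam_k[\bmu_k](\btheta) = \mathcal{N}(\btheta; \bmu_k, \bSigma)$, where $\sufst_k = \bmu_k$---this bimeasurability is immediate, so within the finite-dimensional setting of Definition~\ref{def:supra} the theorem amounts to bookkeeping once the identification of the two conditioning variables is granted.
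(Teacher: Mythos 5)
Your proposal is correct and matches the paper's own (very brief) justification exactly: the paper likewise derives the first equality from the one-to-one correspondence between $\pi_k$ and $\mathbf{t}_k$ and the second from Bayes' theorem applied to $p(\mathbf{t}\,\vert\,\boldsymbol\theta)p(\boldsymbol\theta)$. Your added remark on the bimeasurability of $\psi_k$ is a point the paper leaves implicit, but it does not change the argument.
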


Since the fusion center  knows $\sufst$,  $\lik(\sufst \,\vert\,  \btheta)$, and $\pri(\btheta)$, it is able to calculate \eqref{eq:globpost}.
However, in general,  \eqref{eq:globpost} does not provide an explicit rule for fusing the pdfs $\pi_k(\btheta)$ into the supra-Bayesian posterior $\poste(\btheta \,\vert\, \sufst)$, 
i.e., it does not specify a pooling function $g$ such that $\poste(\btheta \,\vert\, \sufst) = g[\pi_1, \dots, \pi_K](\btheta)$.
Nevertheless, we can already deduce an interesting fact from the structure of \eqref{eq:globpost}:
The supra-Bayesian posterior is proportional to the product of the prior $\pri(\btheta)$ and the global likelihood function $\tlik(\btheta)$, and thus depends on the pdfs $\pi_k(\btheta)$ only indirectly via the global likelihood function $\tlik(\btheta)$.
Hence, the actual task in supra-Bayesian fusion is to establish a rule for obtaining the global likelihood function $\tlik(\btheta) = \lik(\sufst \,\vert\,  \btheta)$ from the local posteriors $\pi_k(\btheta)$ or, equivalently, from the vector of local statistics $\sufst = {[\sufstr_1^\intercal, \dots, \sufstr_K^\intercal]}^\intercal$.
In what follows, we will see that this approach can result in interesting fusion rules for specific scenarios. 
In particular, we will consider conditionally independent agents in Section~\ref{sec:bayindepag} and 
dependent agents in Section~\ref{sec:bayagdepmeam}. Furthermore, the special case given by the linear Gaussian  model will be studied in Section~\ref{sec:lingaumeam}.

\subsection{Supra-Bayesian Fusion for Conditionally Independent Agents} \label{sec:bayindepag}
Generalizing the scenario in Section~\ref{sec:exbayes}, 
we  assume
that, given $\btheta$, the information provided by each agent to the fusion center is conditionally independent of the information provided by 
the other agents.
In our finite-dimensional supra-Bayesian model
this means that the $\sufstr_k$ are conditionally independent given $\btheta$, i.e., the global likelihood function $\tlik(\btheta)$ factors according to
\begin{equation}
\label{eq:likindeppik}
    \tlik(\btheta) = \lik(\sufst \,\vert\,  \btheta) 
    = \prod_{k=1}^K \lik(\sufst_k \,\vert\,  \btheta)
    = \prod_{k=1}^K \tlik_k(\btheta),
\end{equation}
where we introduced the local likelihood functions $\tlik_k(\btheta) \triangleq \lik(\sufst_k \,\vert\,  \btheta)$.
Because conditional independence of the  $\sufstr_k$ is equivalent to conditional independence of the random local pdfs $\pi_k$,  we immediately obtain the following corollary by inserting \eqref{eq:likindeppik} into  \eqref{eq:globpost}.
\begin{corollary}
    In a finite-dimensional supra-Bayesian model where the local pdfs $\pi_k$ are conditionally independent given $\btheta$, the supra-Bayesian fusion result (or  supra-Bayesian posterior) is given 
    \vspace{-2mm}
    by
    \begin{equation}\label{eq:postindep}
        \poste(\btheta \,\vert\, \pi_1, \dots, \pi_K)
        =
        \poste(\btheta \,\vert\, \sufst) 
        =
        \frac{\big(\prod_{k=1}^K \tlik_k(\btheta)\big) \pri(\btheta)}
        {\int_{\Theta} \big(\prod_{k=1}^K \tlik_k(\btheta') \big) \pri(\btheta')\,\mathrm{d}\btheta'},
    \vspace{-2mm}
\end{equation}
where $\tlik_k(\btheta) = \lik(\sufst_k \,\vert\,  \btheta)$.
\end{corollary}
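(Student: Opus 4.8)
The plan is to obtain the corollary as a direct specialization of Theorem~\ref{th:supraBayes}. First I would invoke \eqref{eq:globpost}, which already states that in any finite-dimensional supra-Bayesian model the supra-Bayesian posterior equals $\tlik(\btheta)\pri(\btheta)$ normalized over $\Theta$, with $\tlik(\btheta) = \lik(\sufst \,\vert\, \btheta)$. The entire content of the corollary is therefore to exploit the extra conditional-independence hypothesis, which is precisely the factorization \eqref{eq:likindeppik}, namely $\tlik(\btheta) = \prod_{k=1}^K \tlik_k(\btheta)$ with $\tlik_k(\btheta) = \lik(\sufst_k \,\vert\, \btheta)$.

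The one point that deserves an explicit word---rather than a routine computation---is the claim, used in the sentence preceding the corollary, that conditional independence of the random local pdfs $\pi_k$ given $\btheta$ is equivalent to conditional independence of the local statistics $\sufst_k$ given $\btheta$. This follows from Definition~\ref{def:supra}: each mapping $\parfam_k$ is one-to-one, so $\pi_k = \parfam_k[\sufst_k]$ and $\sufst_k$ are deterministic bijective transforms of one another. Conditional independence is preserved under such deterministic bijections, so the two independence statements coincide, and the joint global likelihood indeed factors as in \eqref{eq:likindeppik}.

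With the factorization established, the proof is completed by substitution: replacing $\tlik(\btheta)$ by $\prod_{k=1}^K \tlik_k(\btheta)$ in both the numerator and the normalizing integral of \eqref{eq:globpost} yields \eqref{eq:postindep} verbatim. Because every step is either a direct appeal to Theorem~\ref{th:supraBayes} or the substitution of an assumed factorization, there is no genuine obstacle here; the result is immediate once the equivalence of the two notions of conditional independence is noted.
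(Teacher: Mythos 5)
Your proposal is correct and follows exactly the paper's own route: the corollary is obtained by inserting the factorization \eqref{eq:likindeppik} into \eqref{eq:globpost}, with the equivalence of conditional independence of the $\pi_k$ and of the $\sufst_k$ justified by the one-to-one mappings $\parfam_k$ from Definition~\ref{def:supra}. Your explicit remark on why that equivalence holds is a slightly fuller account of a step the paper states without elaboration, but the argument is the same.
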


To establish a link to the scenario of Section~\ref{sec:exbayes}, let us consider the local statistics $\sufstr_k$ and the global likelihood function $\tlik(\btheta) = \lik(\sufst \,\vert\, \btheta)$ in that scenario.
Recall that in Section~\ref{sec:exbayes},
we assumed that each agent has observations $\meam_k\in \mathbb{R}^{d_{\meamsc_k}}$ related to 
$\btheta$ according to the local observation likelihood function $\ell_k(\btheta) = \lik(\meam_k \,\vert\,  \btheta)$, and these observations are 
conditionally independent given $\btheta$.
The local pdfs $\pi_k(\btheta)$---which, in this scenario, are the local posteriors $\poste(\btheta \,\vert\, \meam_k)$---are given by~\eqref{eq:locpost}, and they are thus
parametrized by the local observations $\meam_k$. 
However, in common observation models, the observations $\meam_k$ cannot be uniquely reconstructed  from the posterior pdf $\pi_k(\btheta)$. 
Indeed,  local statistics $\sufstr_k$ that parametrize the local posteriors $\pi_k(\btheta)$ in a one-to-one manner are usually obtained as some function
$T_k( \meamr_k)$ of the observations, where $T_k \colon \mathbb{R}^{d_{\meamsc_k}} \to \mathbb{R}^{d_{\sufstsc_k}}$ with $d_{\sufstsc_k} \leq d_{\meamsc_k}$ is in general not invertible.
The random variable  $\sufstr_k = T_k(\meamr_k)$ is then a sufficient statistic \cite[Sec.~6.2]{casella2002statistical} of $ \meamr_k$ for 
$\btheta$, i.e.,  
\begin{equation}
\label{eq:postemkeqpostetk}
    \poste(\btheta \,\vert\, \meam_k) = \poste(\btheta \,\vert\, \sufst_k)\,.
\end{equation}

Thus, our local statistic $\sufstr_k$ uniquely parametrizing the local posterior $\pi_k(\btheta)$ is given by
$\sufstr_k = T_k( \meamr_k)$, with a noninvertible, possibly dimension-reducing function $T_k$.
The local statistics $\sufstr_k$ given $\btheta$ are  conditionally independent for $k=1, \dots, K$ because they are deterministic functions of the conditionally independent observations $\meamr_k$.
Hence, the factorization \eqref{eq:likindeppik} holds,
and indeed we have a finite-dimensional supra-Bayesian model with a prior $\pri(\btheta)$, a likelihood function $\lik(\sufst \,\vert\,  \btheta)$, and local pdfs $\pi_k$ that are given by $\pi_k(\btheta) = \poste(\btheta \,\vert\, \sufst_k)$, i.e., $\parfam_k[\sufst_k](\btheta) = \poste(\btheta \,\vert\, \sufst_k)$.
Thus,
the supra-Bayesian fusion result $\poste(\btheta \,\vert\, \sufst)$ is given by the expression in \eqref{eq:postindep}.
We will now demonstrate that $\poste(\btheta \,\vert\, \sufst)$ coincides with the fusion result given in 
\eqref{eq:fusionindepbayes}.
Recalling that $\tlik_k(\btheta)= \lik(\sufst_k \,\vert\,  \btheta)$, the supra-Bayesian fusion result \eqref{eq:postindep} becomes
\begin{align}
    \poste(\btheta \,\vert\, \sufst)
    & \propto \bigg(\prod_{k=1}^K \lik(\sufst_k \,\vert\,  \btheta)\bigg) \pri(\btheta) 
    \notag \\
    & \propto \bigg(\prod_{k=1}^K \frac{\poste(\btheta \,\vert\, \sufst_k )}{\pri(\btheta)}\bigg) \pri(\btheta) 
    \notag \\
    & = (\pri(\btheta))^{1-K}  \prod_{k=1}^K \poste(\btheta \,\vert\, \sufst_k ), 
    \notag
\end{align}
where  we used   Bayes' theorem.
By~\eqref{eq:postemkeqpostetk}, we further have
\begin{align}
    \poste(\btheta \,\vert\, \sufst)
     & \propto  (\pri(\btheta))^{1-K} \prod_{k=1}^K  \poste(\btheta \,\vert\, \meam_k)
     \notag \\
     & =  (\pri(\btheta))^{1-K} \prod_{k=1}^K  \pi_k(\btheta)\,,
     \label{eq:globalpostaslocalpostbayes}
\end{align}
which indeed equals the fusion rule
\eqref{eq:fusionindepbayes}.
In particular, a comparison with \eqref{eq:fusionindepbayes} shows
that for conditionally independent $\meam_k$, the supra-Bayesian posterior $\poste(\btheta \,\vert\, \sufst)$ coincides with the oracle posterior  $\poste(\btheta \,\vert\, \meam)$.
Thus, in this case, $\sufst$ is a sufficient statistic of $\meam$ for 
$\btheta$.

\vspace{.5mm}

\begin{example}[Exponential Families]
\label{ex:expfam}
A convenient and versatile class of likelihood functions is given by exponential families \cite{brown1986fundamentals}.
We thus specialize
the results discussed above to these models.
    A local observation likelihood function of the exponential family type 
can be written as    
\begin{equation}
    \label{eq:expofamlik}
         \lik(\meam_k \,\vert\, \btheta) = h_k(\meam_k) \exp\big(\boldeta(\btheta)^{\intercal} T_k(\meam_k) -A_k(\btheta)\big),
    \end{equation}
    with some functions $h_k(\meam_k) \ge 0$, $\boldeta(\btheta) \in \mathbb{R}^{d_\theta}$, and $T_k(\meam_k) \in \mathbb{R}^{d_\theta}$. 
    The function $A_k(\btheta)$ is determined by 
    the other functions
    via the fact that $\lik(\meam_k \,\vert\, \btheta)$ is normalized.
    We assume that the
    observations $\meam_k$ are conditionally independent given $\btheta$.
    Furthermore, the fusion center is supposed to know the conditional pdfs 
    $\lik(\meam_k \,\vert\, \btheta)$ in terms of the functions $\boldeta$, $h_k$, $T_k$, and $A_k$ for all $k$ 
    (but, as always, it does not know the $\meamr_k$),
    and to be also aware of the prior $\pri(\btheta)$.

    It is known that the local statistic $\sufstr_k = T_k(\meamr_k)$ is a sufficient statistic of $\meamr_k$ for 
    $\btheta$ \cite[Prop.~1.5]{brown1986fundamentals}.
    To verify that there is a one-to-one relation between the local posterior $\pi_k$ and $\sufstr_k$, we have to show that $\sufstr_k$ can be recovered from $\pi_k$.
We have    
\begin{equation}
        \pi_k(\btheta) \propto \pri(\btheta) \lik(\meam_k \,\vert\, \btheta) 
        \propto \pri(\btheta) \exp(\boldeta(\btheta)^{\intercal} \sufstr_k -A_k(\btheta))\,.
        \label{eq:postexpofam}
    \end{equation}
    Then 
    \begin{equation}
    \label{eq:expfamtosolve}
        \log \bigg(\frac{\pi_k(\btheta)}{\pri(\btheta)} \exp(A_k(\btheta))\bigg)
        = \boldeta(\btheta)^{\intercal} \sufstr_k + C \,,
    \end{equation}
    where $C$
    is a 
    constant that does not depend on $\btheta$.
    To be able to solve \eqref{eq:expfamtosolve} for $\sufstr_k$ and $C$, we make the technical assumption that there exist
     $d_\theta+1$ different $\btheta_j$ such that the matrix
    \begin{equation}
    \label{eq:defmatB}
    \mathbf{B} \triangleq
        \begin{pmatrix}
             \boldeta(\btheta_1)^{\intercal} & 1 \\
             \vdots & \vdots \\
             \boldeta(\btheta_{d_\theta+1})^{\intercal} & 1 
        \end{pmatrix}
        \in \mathbb{R}^{(d_\theta+1) \times (d_\theta+1)}
    \end{equation}
    is nonsingular.  
    Then, evaluating \eqref{eq:expfamtosolve} at $\btheta_1, \ldots, \btheta_{d_\theta+1}$ gives a system of $d_\theta+1$ equations that can be written as 
    \begin{equation*}
        \mathbf{B} \binom{\sufst_k}{C} = 
        \begin{pmatrix}
             \log \bigg(\frac{\pi_k(\btheta_1)}{\pri(\btheta_1)} \exp(A_k(\btheta_1))\bigg) \\
             \vdots \\
             \log \bigg(\frac{\pi_k(\btheta_{d_\theta+1})}{\pri(\btheta_{d_\theta+1})} \exp(A_k(\btheta_{d_\theta+1}))\bigg)
        \end{pmatrix}.
    \end{equation*}
    Because $\mathbf{B}$ is nonsingular, this equation can be solved for $\sufst_k$ and $C$. Thus, we are able to recover $\sufst_k$ from $\pi_k$.
    We conclude that our exponential family model is a finite-dimensional supra-Bayesian model.
    
    Using \eqref{eq:postexpofam} in \eqref{eq:globalpostaslocalpostbayes}, the supra-Bayesian fusion result
    is obtained as
    \begin{align}
        \poste(\btheta \,\vert\, \sufst)
        & \propto 
        \pri(\btheta)^{1-K} \prod_{k=1}^K  \pri(\btheta) \exp\big(\boldeta(\btheta)^{\intercal} \sufst_k -A_k(\btheta)\big)
        \notag \\
        & = \pri(\btheta) \exp\big(\boldeta(\btheta)^{\intercal} \bar{\sufst} - \bar{A}(\btheta) \big),
        \label{eq:posteexpfam}
  \end{align}
with
    \begin{equation*}
        \bar{\sufst} = \sum_{k=1}^K \sufst_k, 
        \quad 
        \bar{A}(\btheta) = \sum_{k=1}^K A_k(\btheta)\,.
    \end{equation*}
    We see that, for conditionally independent observations $\meamr_k$, $\poste(\btheta \,\vert\, \sufst)$
    depends on the observations $\meamr_k$ only via the local statistics $\sufstr_k = T_k(\meamr_k)$, and furthermore,
    supra-Bayesian fusion 
    essentially amounts to the summation of the local statistics $\sufst_k$ and of the normalization functions $A_k(\btheta)$.
 
 This simple summation rule is augmented when the prior $\pri(\btheta)$ is chosen as 
    \begin{equation}
    \label{eq:priexpfam}
        \pri(\btheta) \propto \exp(\boldeta(\btheta)^{\intercal} \sufst_0 -A_0(\btheta)) ,
    \end{equation}
    for some vector $\sufst_0$ and function $A_0(\btheta)$. 
    Inserting \eqref{eq:priexpfam} into \eqref{eq:posteexpfam}, we obtain
    \begin{equation}
        \poste(\btheta \,\vert\, \sufst)
        \propto \exp\big(\boldeta(\btheta)^{\intercal} \sufst_{\text{post}} -A_{\text{post}}(\btheta)\big),
        \label{eq:fusedexpfam}
    \end{equation}
    with
    \begin{equation}
        \sufst_{\text{post}} = \bar{\sufst} + \sufst_0 = \sum_{k=0}^K \sufst_k 
        \label{eq:fusedexpfam_sum_t} 
    \end{equation}
and
    \begin{equation}
                A_{\text{post}}(\btheta) = \bar{A}(\btheta) + A_0(\btheta) 
        = \sum_{k=0}^K A_k(\btheta)\,.
    \label{eq:fusedexpfam_sum_A}
    \end{equation}
In particular, when all $A_k(\btheta)$ for $k=1, \dots, K$ are equal to the same $A(\btheta)$ and $A_0(\btheta)=a_0 A(\btheta)$, 
then the prior becomes the conjugate prior \cite[Def.~4.18]{brown1986fundamentals}
\begin{equation*}
        \pri(\btheta) \propto \exp(\boldeta(\btheta)^{\intercal} \sufst_0 -a_0 A(\btheta)) ,
\end{equation*}
with the two hyperparameters $\sufst_0$ and $a_0>0$.
Here, the supra-Bayesian fusion result simplifies to
\begin{equation*}
        \poste(\btheta \,\vert\, \sufst)
        \propto \exp\big(\boldeta(\btheta)^{\intercal} \sufst_{\text{post}} -(K+a_0)A(\btheta)\big).
\end{equation*}
We see that $\poste(\btheta \,\vert\, \sufst)$ has the same form as the prior $\pri(\btheta)$, while the hyperparameters $\sufst_0$ and $a_0$ are replaced by $\sufst_{\text{post}} = \sufst_0 + \bar{\sufst}$ and $a_0 + K$, respectively.
\end{example}

An important special case of the exponential family setting is given by linear Gaussian observations.
This case will be considered in Section~\ref{sec:lingaumeam}, both for conditionally dependent and independent observations (see in particular Example~\ref{ex:indepgauss} in Section~\ref{subsec:locstat}).

\subsection{Supra-Bayesian Fusion for 
Agents Collecting Dependent Observations} \label{sec:bayagdepmeam}
Similar to the setting of independent agents studied above, we consider $K$ agents that  obtain observations $\meamr_k \in \mathbb{R}^{d_{\meamsc_k}}$ distributed according to the local observation likelihood functions $\lik(\meam_k \,\vert\,  \btheta)$, with $k\in \{1, \dots, K\}$.
Again, each agent has access also to the prior pdf $\pri(\btheta)$, and the local  posterior pdfs $\pi_k(\btheta)$ are still
    given by \eqref{eq:locpost}.
However, in contrast to the previous subsection,
we do not assume that the observations are conditionally independent.
We assume that the fusion center is aware of the conditional pdf%
\footnote{Note that  the conditional pdfs $\lik(\meam_k \,\vert\,  \btheta)$ are  marginals of the conditional pdf $\lik(\meam \,\vert\,  \btheta)$.}
 $\lik(\meam \,\vert\,  \btheta)$ of all observations  $\meamr = {[\meamr_1^{\intercal}, \dots, \meamr_K^{\intercal}]}^{\intercal}$ given $\btheta$,
the prior pdf $\pri(\btheta)$, and
the local posterior pdfs $\pi_k(\btheta)= \poste(\btheta \,\vert\, \meam_k)$.
We emphasize that although the fusion center has access to $\lik(\meam \,\vert\,  \btheta)$ as a function of $\meam$ and $\btheta$, it does not know the global observation $\meam$ and thus cannot use $\lik(\meam \,\vert\,  \btheta)$ as a global likelihood function.

To establish a supra-Bayesian fusion scheme 
for this scenario, we again
consider a finite-dimensional supra-Bayesian model,
i.e., 
for each agent $k$ there exists a local statistic $\sufstr_k$ such that $\pi_k(\btheta) = \parfam_k[\sufst_k](\btheta)$, and there is a one-to-one relation between $\sufstr_k$ and the local posterior $\pi_k$.
Because $\pi_k(\btheta) = \poste(\btheta \,\vert\, \meam_k)$, the local pdf $\pi_k$ is also uniquely determined by $\meam_k$, and thus the one-to-one relation between $\pi_k$ and $\sufstr_k$ implies that there exists a function
$T_k \colon \mathbb{R}^{d_{\meamsc_k}} \to \mathbb{R}^{d_{\sufstsc_k}}$ such that
 $\sufstr_k =T_k( \meam_k)$.
As before,  the function $T_k$ is not one-to-one in general, i.e., it is not possible to recover  $\meamr_k$ from $\sufstr_k$.
However, $\sufstr_k$ is again a sufficient statistic of $\meamr_k$ for 
$\btheta$, i.e., $\poste(\btheta \,\vert\, \meam_k) = \poste(\btheta \,\vert\, \sufst_k)$.

Because the local observations $\meamr_k$ are subvectors of the global observation $\meamr = {[\meamr_1^{\intercal}, \dots, \meamr_K^{\intercal}]}^{\intercal}\in \mathbb{R}^{d_{\meamsc}} $, we can   introduce $T\colon \mathbb{R}^{d_{\meamsc}} \to \mathbb{R}^{\sum_{k=1}^K d_{\sufstsc_k}}$ as 
\begin{equation*}
    T(\meam) = {[T_1(  \meam_1 )^\intercal, \dots, T_K(  \meam_K )^\intercal]}^{\intercal}\,,
\end{equation*}
and thus we have  
\begin{equation*}
    \sufstr = {[\sufstr_1^\intercal, \dots, \sufstr_K^\intercal]}^\intercal = T(\meamr)\,.
\end{equation*}
The random vector $\sufstr$  summarizes all the information that the agents communicate
to the fusion center, and it is thus known to the fusion center (whereas $\meam$ is not).
Note that although each  $\sufstr_k$ is a sufficient statistic of $\meamr_k$ for 
$\btheta$, the global statistic $\sufstr$ is, in general, not a sufficient statistic of $\meamr$.
This is due to the fact that $\sufstr$ generally  does not capture all the dependencies between the individual  $\meamr_k$.

Because $\sufstr =  T(\meamr)$, we can use the general change-of-variables formula \cite[Sec.~3.4.3]{evangar92measure} to calculate the  conditional pdf $\lik(\sufst \,\vert\, \btheta)$ from the conditional pdf $\lik(\meam \,\vert\, \btheta)$, provided the function $T$ is differentiable.
Since $\sufstr$ summarizes the information communicated
by the agents to the fusion center, $\tlik(\btheta) = \lik(\sufst \,\vert\, \btheta)$
is the global likelihood function that the fusion center has to use in the calculation of the supra-Bayesian posterior $\poste (\btheta \,\vert\, \sufst)$ according to \eqref{eq:globpost}.
Therefore, to obtain the supra-Bayesian fusion rule $g[\pi_1, \dots, \pi_K]$, based on \eqref{eq:globpost}, we have to perform the following three steps:
\begin{enumerate}
    \item Identify the local statistics $\sufstr_k$ that uniquely represent the local posterior pdfs $\pi_k$ within the given statistical model;
    \item apply the general change-of-variables formula to transform the (known) conditional pdf $\lik(\meam \,\vert\, \btheta)$ into the global likelihood function $\tlik(\btheta) = \lik(\sufst \,\vert\, \btheta)$;
    \item calculate the supra-Bayesian posterior $\poste (\btheta \,\vert\, \sufst)$ according to \eqref{eq:globpost}. 
\end{enumerate}

While this three-step process can in principle be performed in any setting satisfying our assumptions,
an explicit characterization of the resulting supra-Bayesian  fusion rule (pooling function) $g[\pi_1, \dots, \pi_K]$   can only be derived for special cases.
The important case of a linear Gaussian  model will be explored in the following.

\section{Supra-Bayesian Fusion for the Linear Gaussian Model}
\label{sec:lingaumeam}

We consider supra-Bayesian pdf fusion for the linear observation model 
\begin{equation}
\meamr =  \mathbf{H}\btheta + \mathbf{n} ,
\label{eq:lingaussmeasmodel}
\end{equation}  
where $\mathbf{H}\in \mathbb{R}^{d_{\meamsc}\times d_{\theta}}$
is a known observation matrix
and $\mathbf{n}\in \mathbb{R}^{d_{\meamsc}}$ is additive zero-mean Gaussian noise with 
a known covariance matrix $\mathbf{\Sigma}$, i.e., $p(\mathbf{n}) = \mathcal{N}(\mathbf{n}; \mathbf{0}, \mathbf{\Sigma})$.
Thus,   $\meamr$  given $\btheta$ is Gaussian distributed with mean $ \mathbf{H}\btheta$ and covariance matrix $\mathbf{\Sigma}$, i.e., 
\begin{equation}
    \lik(\meam \,\vert\, \btheta) = \mathcal{N} (\meam; \mathbf{H}\btheta, \mathbf{\Sigma})\,.
    \label{eq:normaldistygiventh}
\end{equation}
The local observation at agent $k$ is given as $\meamr_k =  \mathbf{H}_k\btheta + \mathbf{n}_k \in \mathbb{R}^{d_{\meamsc_k}}$, where 
\begin{equation}
\label{eq:Hblockstruct}
    \mathbf{H} = 
    \begin{pmatrix}
        \mathbf{H}_1\\
        \mathbf{H}_2\\
         \vdots \\
        \mathbf{H}_K
    \end{pmatrix},
\end{equation} 
with $\mathbf{H}_k\in \mathbb{R}^{d_{\meamsc_k}\times d_{\theta}}$, and $\mathbf{n} = {[\mathbf{n}_1^{\intercal}, \dots, \mathbf{n}_K^{\intercal}]}^\intercal$.
Thus, each local observation $\meamr_k$ given $\btheta$ is again Gaussian with mean $\mathbf{H}_k\btheta$   and  covariance matrix  $\mathbf{\Sigma}_{k k} \in \mathbb{R}^{d_{\meamsc_k}\times d_{\meamsc_k}}$.
We note that the overall covariance matrix $\mathbf{\Sigma}$ is block-structured according to
\begin{equation}
\label{eq:covariancebayesgauss}
    \mathbf{\Sigma}= 
    \begin{pmatrix}
        \mathbf{\Sigma}_{1 1} & \cdots & \mathbf{\Sigma}_{1 K}
         \\ 
         \vdots & \ddots & \vdots 
         \\
        \mathbf{\Sigma}_{K 1} & \cdots & \mathbf{\Sigma}_{K K}
    \end{pmatrix} ,
\end{equation}
where the off-diagonal cross-covariance matrices $\mathbf{\Sigma}_{k k'}$ for $k\neq k'$ describe the conditional dependency between the observations of different agents.
The case of conditionally independent observations $\meam_k$ is obtained for $\mathbf{\Sigma}_{k k'}=\mathbf{0}$ for all $k\neq k'$.
For simplicity, we further assume that for all $k= 1, \dots, K$, $d_{\meamsc_k} \geq d_{\theta}$,  $\mathbf{H}_k$ has full rank, and  $\mathbf{\Sigma}_{k k}$ is positive definite.
The local observation likelihood functions are here given by
\begin{align}
\ell_k(\btheta)
    & = \lik(\meam_k \,\vert\, \btheta) 
    \notag \\
    & = \mathcal{N} (\meam_k; \mathbf{H}_k\btheta, \mathbf{\Sigma}_{k k})
    \notag \\
    & \propto \exp\bigg({-}\frac{(\meam_k-\mathbf{H}_k \btheta)^\intercal\mathbf{\Sigma}_{k k}^{-1}(\meam_k-\mathbf{H}_k\btheta)}{2}\bigg)\,.
    \label{eq:loclicgaussnons1} 
\end{align}

\subsection{Local Statistics}
\label{subsec:locstat}
We can rewrite \eqref{eq:loclicgaussnons1} as    
\begin{align}
\ell_k(\btheta)
    &  \propto \exp\bigg({-}\frac{(\btheta- \Ourinv_k\meam_k)^\intercal 
      \mathbf{H}_k^\intercal
    \mathbf{\Sigma}_{k k}^{-1}
    \mathbf{H}_k 
    (\btheta-\Ourinv_k\meam_k)}{2}\bigg)
    \notag \\
    & = \exp\bigg({-}\frac{(\btheta- \sufstr_k)^\intercal 
      \mathbf{H}_k^\intercal
    \mathbf{\Sigma}_{k k}^{-1}
    \mathbf{H}_k 
    (\btheta-\sufstr_k)}{2}\bigg),
    \label{eq:loclicgaussnons2}
\end{align}
where 
\begin{equation}\label{eq:defhkplus}
    \Ourinv_k
    = 
    (  \mathbf{H}_k^\intercal
    \mathbf{\Sigma}_{k k}^{-1}
    \mathbf{H}_k )^{-1}  \mathbf{H}_k^\intercal \mathbf{\Sigma}_{k k}^{-1}
\end{equation} 
and 
\begin{equation}
\label{eq:sufstrgauss}
    \sufstr_k = \Ourinv_k\meamr_k 
    = (  \mathbf{H}_k^\intercal
    \mathbf{\Sigma}_{k k}^{-1}
    \mathbf{H}_k )^{-1}  \mathbf{H}_k^\intercal \mathbf{\Sigma}_{k k}^{-1}\meamr_k.
\end{equation}
The proportionality in \eqref{eq:loclicgaussnons2} is as a function of $\btheta$, i.e., the proportionality constant will depend on $\meam_k$.

We claim that $\sufstr_k$ in \eqref{eq:sufstrgauss} qualifies as a local statistic 
in a finite-dimensional supra-Bayesian model. For a proof, we note that the local posteriors are again given as 
$\pi_k(\btheta)
    = \poste(\btheta \,\vert\, \meam_k)
    \propto \ell_k(\btheta) \pri(\btheta)$.
To see that there is a one-to-one relation between the local posterior $\pi_k$ and the finite-dimensional parameter $\sufstr_k \in \mathbb{R}^{  d_{\theta}}$, recall that the fusion center is aware of the prior $\pri(\btheta)$ and the  matrices $\mathbf{H}$ and  $\mathbf{\Sigma}$.
In particular, the fusion center is aware of
$\mathbf{H}_k$ and $\mathbf{\Sigma}_{k k}$, and thus it is able to recover from $\sufstr_k$  the local observation  likelihood function $\ell_k(\btheta)$ in  \eqref{eq:loclicgaussnons2}
and, in turn, the local posterior  $\pi_k(\btheta) \propto \ell_k(\btheta) \pri(\btheta)$.
Conversely, the fusion center is able to obtain  $\sufstr_k$  from the local posterior $\pi_k(\btheta)$ by first dividing by the prior $\pri(\btheta)$ and normalizing as a function of $\btheta$ (to obtain a function proportional to $\ell_k(\btheta)$), and finally calculating the mean of the resulting pdf in $\btheta$ (which is $\sufstr_k$ according to \eqref{eq:loclicgaussnons2}).
Thus, $\sufstr_k$ is related to $\pi_k$ in a one-to-one manner, and hence it is a local statistic.

\begin{example}[Conditionally Independent Agents]
\label{ex:indepgauss}
    In the case of conditionally independent agents, i.e., the observations $\meamr_k$ are conditionally independent given $\btheta$, we can easily calculate the  supra-Bayesian posterior.
    Indeed, the structure of the local likelihood function in \eqref{eq:loclicgaussnons2} shows  that we are in the exponential family setting of Example~\ref{ex:expfam}.
    More specifically, we can rewrite  \eqref{eq:loclicgaussnons2} as
    \begin{align}
        \ell_k(\btheta)
        &  \propto  
        \exp\bigg(
        \btheta^\intercal 
          \tilde{\sufstr}_k
        -\frac{\btheta^\intercal 
          \mathbf{H}_k^\intercal
        \mathbf{\Sigma}_{k k}^{-1}
        \mathbf{H}_k 
        \btheta}{2}\bigg),
        \label{eq:ellkgaussindep}
    \end{align}
    where
    \begin{equation*}
        \tilde{\sufstr}_k = \mathbf{H}_k^\intercal
            \mathbf{\Sigma}_{k k}^{-1}
            \mathbf{H}_k 
            \sufstr_k 
            = \mathbf{H}_k^\intercal \mathbf{\Sigma}_{k k}^{-1}\meamr_k
    \end{equation*}
    is a bijective trans\-for\-ma\-tion of $\sufstr_k$ and thus also a valid choice for a local statistic.
    Considering a Gaussian prior $\pri(\btheta)$ with mean $\bmu_0$ and covariance matrix $\mathbf{\Sigma}_{0}$, we can rewrite $\pri(\btheta)$
    as
    \begin{equation}
    \label{eq:priorgaussindep}
        \pri(\btheta) \propto  
            \exp\bigg(
            \btheta^\intercal 
              \tilde{\sufstr}_0
            -\frac{\btheta^\intercal 
            \mathbf{\Sigma}_{0}^{-1}
            \btheta}{2}\bigg),
    \end{equation}
    where $\tilde{\sufstr}_0 = \mathbf{\Sigma}_{0}^{-1} \bmu_0$.
    Comparing \eqref{eq:ellkgaussindep} with \eqref{eq:expofamlik} and 
    \eqref{eq:priorgaussindep} with \eqref{eq:priexpfam},
    we see that $\ell_k(\btheta) = \lik(\meam_k \,\vert\, \btheta)$  belongs to the exponential family \eqref{eq:expofamlik} with $\sufstr_k$   formally replaced by $\tilde{\sufstr}_k$ and 
    $A_k(\btheta) = \frac{\btheta^\intercal 
          \mathbf{H}_k^\intercal
        \mathbf{\Sigma}_{k k}^{-1}
        \mathbf{H}_k 
        \btheta}{2}$.
        Furthermore, $\pri(\btheta)$ conforms to \eqref{eq:priexpfam} with  $A_0(\btheta)  = \frac{\btheta^\intercal 
            \mathbf{\Sigma}_{0}^{-1}
            \btheta}{2}$.
    With our assumption of conditionally independent agents, we can use the result \eqref{eq:fusedexpfam}--\eqref{eq:fusedexpfam_sum_A} and obtain for the 
     supra-Bayesian fusion result
    \begin{align}
        & \poste(\btheta \,\vert\, \sufst)
        \notag \\*
            & \propto 
            \exp\bigg(\btheta^{\intercal} \bigg(\sum_{k=0}^K \tilde{\sufst}_k\bigg) - 
            \frac{
            \btheta^\intercal \big(
            \mathbf{\Sigma}_{0}^{-1} 
            + \sum_{k=1}^K   
            \mathbf{H}_k^\intercal
            \mathbf{\Sigma}_{k k}^{-1}
            \mathbf{H}_k 
            \big)
            \btheta
            }{2} \bigg).
            \label{eq:optbaygaussindep}
    \end{align}
    This is again a Gaussian pdf, with mean 
     \begin{equation*}
        \bmu_1
         = \mathbf{\Sigma}_1 \sum_{k=0}^K \tilde{\sufst}_k
         = \mathbf{\Sigma}_1
            \bigg(\mathbf{\Sigma}_{0}^{-1} \bmu_0
            + \sum_{k=1}^K \mathbf{H}_k^\intercal \mathbf{\Sigma}_{k k}^{-1}\meamr_k\bigg)
    \end{equation*}
    and covariance matrix 
    \begin{equation*}
        \mathbf{\Sigma}_1 = \bigg(
            \mathbf{\Sigma}_{0}^{-1} 
            + \sum_{k=1}^K
            \mathbf{H}_k^\intercal
            \mathbf{\Sigma}_{k k}^{-1}
            \mathbf{H}_k 
            \bigg)^{-1}.
    \end{equation*}
    It is straightforward to verify that \eqref{eq:optbaygaussindep} is equal to the  oracle  posterior $\poste(\btheta \,\vert\, \meamr)$.
    Thus, we see once again (cf.~Section~\ref{sec:bayindepag}) that although the  supra-Bayesian fusion result depends on the observations $\meamr_k$ only via the local statistics $\tilde{\sufstr}_k$, it still equals
    the oracle posterior $\poste(\btheta \,\vert\, \meamr)$, as if the fusion center had access to all observations $\meamr_k$ directly.
    As we will see below, this crucially depends on our
    assumption of conditionally independent agents and is no longer true
    if we assume conditional dependencies between the observations.
\end{example}

\subsection{Global Likelihood Function}
In the previous subsection, for the general linear Gaussian model with conditionally dependent $\meamr_k$,
we identified   local statistics  $\sufstr_k = T_k(\meamr_k) = \Ourinv_k\meamr_k $ that are related in a one-to-one manner to the local posteriors $\pi_k$. 
The next step according to our three-step program from Section~\ref{sec:bayagdepmeam} is to calculate the global likelihood function 
$\tlik(\btheta) = \lik(\sufst \,\vert\,  \btheta)$ by 
transforming the conditional pdf  $p(\meamr \, \vert \, \btheta)$ into the conditional pdf $p(\sufstr \, \vert \, \btheta)$.
According to \eqref{eq:normaldistygiventh},
the conditional pdf of   $\meamr$ given $\btheta$ is%
\footnote{
This conditional pdf only exists if  the covariance matrix $\mathbf{\Sigma}$ is positive definite.
However, the derivations that follow do not require the existence of a pdf and are also valid if $\mathbf{\Sigma}$ is positive semidefinite.
}
\begin{equation} 
 \label{eq:condpdfgauss}
    \lik (\meam \, \vert \, \btheta) 
    \propto \exp\bigg({-}\frac{(\meam- \mathbf{H} \btheta)^\intercal \mathbf{\Sigma}^{-1}(\meam- \mathbf{H} \btheta)}{2}\bigg).
\end{equation}
We further have that
\begin{equation}
\label{eq:sufstrvecgauss}
    \sufstr = {[\sufstr_1^\intercal, \dots, \sufstr_K^\intercal]}^\intercal = \Ourinv \meamr\,,
\end{equation}
where $\Ourinv = \operatorname{diag}(\Ourinv_1, \dots, \Ourinv_K)$ denotes the block-diagonal matrix with block entries $\Ourinv_k$ on the diagonal.
Thus, $\sufstr$ is a linear function of $\meamr$ and
hence $\sufstr$ given $\btheta$ is Gaussian and has mean $\Ourinv\mathbf{H}\btheta$ and covariance matrix 
\begin{equation}\label{eq:sigmatilde}
    \widetilde{\mathbf{\Sigma}} 
     = \Ourinv\mathbf{\Sigma}{\Ourinv}^{\intercal} .
\end{equation}
We assume that $\widetilde{\mathbf{\Sigma}}$ is nonsingular.
The mean can be simplified to
\begin{equation*}
    \Ourinv\mathbf{H} \btheta= 
    \begin{pmatrix}
        \Ourinv_1\mathbf{H}_1 \\
         \Ourinv_2\mathbf{H}_2 \\
         \vdots \\
         \Ourinv_K\mathbf{H}_K 
    \end{pmatrix} \btheta
    =
    \begin{pmatrix}
        \mathbf{I}_{d_{\theta}}\\
        \mathbf{I}_{d_{\theta}}\\
         \vdots \\
        \mathbf{I}_{d_{\theta}}
    \end{pmatrix}\btheta
    =
    \begin{pmatrix}
        \btheta\\
        \btheta\\
         \vdots \\
        \btheta
    \end{pmatrix}
    = 
    \mathbf{1}_{K} \otimes \btheta\,,
\end{equation*}
where we used \eqref{eq:Hblockstruct} and the fact that, by \eqref{eq:defhkplus}, 
\begin{equation}
\label{eq:ourinvleftinv}
    \Ourinv_k\mathbf{H}_k = \mathbf{I}_{d_{\theta}}\,.
\end{equation}
The global likelihood function $\tlik(\btheta)$ is thus obtained
as 
\begin{align}
    \tlik(\btheta) 
    &  = \lik(\sufst \, \vert \, \btheta)
    \notag \\
    & = \mathcal{N} (\sufst; \mathbf{1}_{K} \otimes \btheta, \widetilde{\mathbf{\Sigma}})
    \notag \\
    & \propto \exp\bigg({-}\frac{(\sufstr-\mathbf{1}_{K} \otimes \btheta)^\intercal \widetilde{\mathbf{\Sigma}}^{-1}(\sufstr-\mathbf{1}_{K} \otimes \btheta)}{2}\bigg)\,.
    \label{eq:likgausssufstatinep}
\end{align}
To summarize, for the linear Gaussian model, local statistics $\sufstr_k$ characterizing the local posteriors $\pi_k$ are given by \eqref{eq:sufstrgauss}, and the corresponding global likelihood function $\tlik(\btheta)  = \lik(\sufst \, \vert \, \btheta)$ is given by \eqref{eq:likgausssufstatinep}.

\subsection{Supra-Bayesian Fusion Rule for a Scalar $\theta$} 
\label{sec:gaussscalar}

After identifying local statistics $\sufstr_k$ and calculating the global likelihood function $\tlik(\btheta)  = \lik(\sufst \, \vert \, \btheta)$, the final step in the derivation
of the supra-Bayesian fusion rule is to calculate the supra-Bayesian posterior $\poste(\btheta \,\vert\, \sufst)$ according to \eqref{eq:globpost}.
We first develop the  supra-Bayesian fusion rule for the case that $d_{\theta}=1$, i.e., for a scalar random variable $\theta \in \mathbb{R}$.
Here, the observation matrix $\mathbf{H}$ reduces to a vector $\mathbf{h}\in \mathbb{R}^{d_{\meamsc}}$ and the observation model \eqref{eq:lingaussmeasmodel} is given by
\begin{equation*}
    \meamr =  \mathbf{h}\,\theta + \mathbf{n}\,.
\end{equation*}  
Similarly, the local observation at agent $k$ is given as $\meamr_k =  \mathbf{h}_k\theta + \mathbf{n}_k$ with $\mathbf{h}_k\in \mathbb{R}^{d_{\meamsc_k}}$,
and the local statistic at agent $k$ follows from \eqref{eq:sufstrgauss} as 
\begin{equation}
\label{eq:sufstsc_k}
    \sufstsc_k = \ourinv_k\meamr_k\in  \mathbb{R}\,,
\end{equation}
where $\Ourinv_k$  reduces to the (row) vector 
\begin{equation}
\label{eq:ourinv_k}
    \ourinv_k 
    =  \frac{1}{ \mathbf{h}_k^\intercal
    \mathbf{\Sigma}_{k k}^{-1}
    \mathbf{h}_k}
    \mathbf{h}_k^\intercal \mathbf{\Sigma}_{k k}^{-1}\,.
\end{equation}
Note that $\Ourinv = \operatorname{diag}(\ourinv_1, \dots, \ourinv_K)$ is still a matrix.
In this case, we can give the following explicit fusion rule,
which is derived in Appendix~\ref{appendix:scalarbayesfusion}.

\begin{theorem}
\label{th:scalarbayesfusion}
    For $d_{\theta}=1$, let $\ell_k(\theta) = \lik(\meam_k \,\vert\, \theta)$ denote the local observation likelihood functions given by \eqref{eq:loclicgaussnons1} 
    for $k=1, \dots, K$ 
    and let $\tlik(\theta) = \lik(\sufst \, \vert \, \theta)$ be the global likelihood function given by \eqref{eq:likgausssufstatinep}.
        Then 
        \begin{equation}
            \tlik(\theta)
            \propto
            \prod_{k=1}^K  (\ell_k(\theta))^{w_k},
            \label{eq:fusionliksgaussd1}
        \end{equation}
        where
        \begin{equation}\label{eq:powersbayes}
            w_k = \frac{\mathbf{1}_{K}^\intercal \widetilde{\mathbf{\Sigma}}^{-1}\mathbf{e}_k} {\mathbf{h}_k^\intercal
            \mathbf{\Sigma}_{k k}^{-1}
            \mathbf{h}_k}\,,
        \end{equation}
        with $\widetilde{\mathbf{\Sigma}} 
         = \Ourinv\mathbf{\Sigma}{\Ourinv}^{\intercal}$ and
        $\mathbf{e}_k$ denoting the $k$th unit vector in $\mathbb{R}^{K}$.
        Furthermore, for a given prior $\pri(\theta)$
        and local posteriors $\pi_k(\theta)= \poste(\theta \,\vert\, \meam_k) \propto \pri(\theta) \ell_k(\theta)$, the  supra-Bayesian fusion result $g[\pi_1, \dots, \pi_K](\theta) = \poste(\theta \,\vert\, \sufst) \propto \pri(\theta) \tlik(\theta)$ is given by
    \begin{align}
        g[\pi_1, \dots, \pi_K](\theta)
        & \propto 
        (\pri(\theta))^{1-\sum_{k=1}^K w_k}  \prod_{k=1}^K   (\pi_k( \theta))^{w_k}\,.
        \label{eq:fusedepbayesd1}
    \end{align}
\end{theorem}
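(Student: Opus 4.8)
The plan is to prove the two displayed identities by specializing the Gaussian local and global likelihoods to the scalar case and matching quadratic forms in $\theta$. First I would write, for brevity, $a_k \triangleq \mathbf{h}_k^\intercal \mathbf{\Sigma}_{k k}^{-1} \mathbf{h}_k > 0$, so that the scalar specialization of \eqref{eq:loclicgaussnons2} reads $\ell_k(\theta) \propto \exp\bigl({-}\tfrac{1}{2} a_k (\theta - \sufstsc_k)^2\bigr)$ and hence $\prod_{k=1}^K (\ell_k(\theta))^{w_k} \propto \exp\bigl({-}\tfrac{1}{2} \sum_{k=1}^K w_k a_k (\theta - \sufstsc_k)^2\bigr)$. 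On the other side, since $\mathbf{1}_{K} \otimes \theta = \theta\, \mathbf{1}_{K}$ for scalar $\theta$, the global likelihood \eqref{eq:likgausssufstatinep} is $\tlik(\theta) \propto \exp\bigl({-}\tfrac{1}{2} (\sufst - \theta\, \mathbf{1}_{K})^\intercal \widetilde{\mathbf{\Sigma}}^{-1} (\sufst - \theta\, \mathbf{1}_{K})\bigr)$. Both are exponentials of quadratic polynomials in $\theta$; because the proportionality is understood as a function of $\theta$, it suffices to verify that the two quadratics share the same coefficients of $\theta^2$ and $\theta$, the $\theta$-independent terms being absorbed into the normalization.

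The crux is the algebraic identity $w_k a_k = \mathbf{1}_{K}^\intercal \widetilde{\mathbf{\Sigma}}^{-1} \mathbf{e}_k$, which is precisely the definition \eqref{eq:powersbayes} cleared of its denominator. Summing over $k$ gives $\sum_{k} w_k a_k = \mathbf{1}_{K}^\intercal \widetilde{\mathbf{\Sigma}}^{-1} \mathbf{1}_{K}$, and using $\sum_{k} \mathbf{e}_k \sufstsc_k = \sufst$ gives $\sum_{k} w_k a_k \sufstsc_k = \mathbf{1}_{K}^\intercal \widetilde{\mathbf{\Sigma}}^{-1} \sufst$. Expanding the global quadratic and exploiting the symmetry of $\widetilde{\mathbf{\Sigma}}^{-1}$, its $\theta^2$-coefficient is $\mathbf{1}_{K}^\intercal \widetilde{\mathbf{\Sigma}}^{-1} \mathbf{1}_{K}$ and its $\theta$-coefficient is $-2\,\mathbf{1}_{K}^\intercal \widetilde{\mathbf{\Sigma}}^{-1} \sufst$; expanding the product-side quadratic, the corresponding coefficients are $\sum_{k} w_k a_k$ and $-2 \sum_{k} w_k a_k \sufstsc_k$. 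The two identities above make these pairs coincide, which establishes \eqref{eq:fusionliksgaussd1}.

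Finally I would deduce \eqref{eq:fusedepbayesd1} from \eqref{eq:fusionliksgaussd1} and the structure of the local posteriors. Since $\pi_k(\theta) \propto \pri(\theta) \ell_k(\theta)$ as a function of $\theta$, we have $\ell_k(\theta) \propto \pi_k(\theta)/\pri(\theta)$, and substituting this into $g[\pi_1, \dots, \pi_K](\theta) = \poste(\theta \,\vert\, \sufst) \propto \pri(\theta)\, \tlik(\theta) \propto \pri(\theta) \prod_{k=1}^K (\ell_k(\theta))^{w_k}$ yields $g[\pi_1, \dots, \pi_K](\theta) \propto \pri(\theta) \prod_{k=1}^K \bigl(\pi_k(\theta)/\pri(\theta)\bigr)^{w_k} = (\pri(\theta))^{1 - \sum_{k} w_k} \prod_{k=1}^K (\pi_k(\theta))^{w_k}$, as claimed. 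The calculation is essentially routine once the key identity $w_k a_k = \mathbf{1}_{K}^\intercal \widetilde{\mathbf{\Sigma}}^{-1} \mathbf{e}_k$ is recognized; I expect the only genuinely delicate point to be the bookkeeping that justifies reducing the proof to matching the $\theta^2$ and $\theta$ coefficients, which hinges on the proportionalities being taken in $\theta$ so that all $\theta$-free factors may be discarded.
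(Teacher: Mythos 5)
Your proposal is correct and follows essentially the same route as the paper's Appendix K: specialize the Gaussian likelihoods to the scalar case, observe that $w_k\,\mathbf{h}_k^\intercal\mathbf{\Sigma}_{kk}^{-1}\mathbf{h}_k = \mathbf{1}_{K}^\intercal\widetilde{\mathbf{\Sigma}}^{-1}\mathbf{e}_k$ by definition of $w_k$, match the $\theta^2$ and $\theta$ coefficients using $\sum_k\mathbf{e}_k\sufstsc_k=\sufst$ and $\sum_k\mathbf{e}_k=\mathbf{1}_K$, and then substitute $\ell_k\propto\pi_k/\pri$ to obtain \eqref{eq:fusedepbayesd1}. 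The only cosmetic difference is that you keep the quadratics in completed-square form while the paper expands them, which changes nothing of substance.
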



We emphasize that in this theorem we do not assume that the observations $\meamr_k$ are conditionally independent given $\theta$.
Furthermore, it should be noted that the weights $w_k$ in \eqref{eq:powersbayes} do not generally sum to one, and they may be negative. 
Thus, the fusion rule \eqref{eq:fusedepbayesd1} is an instance of the generalized multiplicative pooling function in~\eqref{eq:gen_mult_pooling}.

Finally, if the prior $\pri(\theta)$ is Gaussian, we can show that the supra-Bayesian fusion result
$\poste(\theta \,\vert\, \sufst)$ is again Gaussian and reduce the fusion rule \eqref{eq:fusedepbayesd1} to a second-order rule involving only the mean and variance:
\begin{corollary}
\label{cor:fusegausspriord1}
Under the assumptions of Theorem~\ref{th:scalarbayesfusion},
let the prior $\pri(\theta)$ be Gaussian with mean $\mu_0$ and variance $\sigma_0^2$,
i.e., $\pri(\theta) = \mathcal{N}(\theta; \mu_0, \sigma_0^2)$.
Then the supra-Bayesian fusion result
$\poste(\theta \,\vert\, \sufst)$ is again Gaussian, i.e., $\poste(\theta \,\vert\, \sufst) =  \mathcal{N}(\theta; \mu_1, \sigma_1^2)$, with mean
\begin{equation}
    \mu_1 = \frac{\widehat{\sigma}^{2} \sigma_0^{2}}{ \widehat{\sigma}^{2} + \sigma_0^{2}} \mathbf{1}_{K}^\intercal \widetilde{\mathbf{\Sigma}}^{-1}\sufstr
    + 
    \frac{\widehat{\sigma}^{2}}{ \widehat{\sigma}^{2} + \sigma_0^{2}} \mu_0 
    \label{eq:mu1sc}
\end{equation}
and variance
\begin{equation*}
    \sigma_1^2 = \frac{\widehat{\sigma}^{2} \sigma_0^{2}}{ \widehat{\sigma}^{2} + \sigma_0^{2}},
\end{equation*}
where
\begin{equation}
\label{eq:sighatsc}
        \widehat{\sigma}^{2} = \frac{1}{\mathbf{1}_{K}^\intercal \widetilde{\mathbf{\Sigma}}^{-1} \mathbf{1}_{K}}
    \end{equation}
and $\sufstr= {[\sufstsc_1^\intercal, \dots, \sufstsc_K^\intercal]}^\intercal$ is given by \eqref{eq:sufstsc_k} and \eqref{eq:ourinv_k}.
\end{corollary}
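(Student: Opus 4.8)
The plan is to specialize the general supra-Bayesian posterior formula of Theorem~\ref{th:supraBayes} to the scalar linear Gaussian model and then invoke Gaussian--Gaussian conjugacy. By Theorem~\ref{th:supraBayes} we have $\poste(\theta \,\vert\, \sufst) \propto \tlik(\theta)\,\pri(\theta)$, where the global likelihood function $\tlik(\theta) = \lik(\sufst \,\vert\, \theta)$ is given explicitly by \eqref{eq:likgausssufstatinep}. The first step is to observe that for $d_{\theta}=1$ the Kronecker product collapses to $\mathbf{1}_{K} \otimes \theta = \theta\,\mathbf{1}_{K}$, so that
\[
    \tlik(\theta) \propto \exp\bigg({-}\frac{(\sufstr - \theta\,\mathbf{1}_{K})^\intercal \widetilde{\mathbf{\Sigma}}^{-1} (\sufstr - \theta\,\mathbf{1}_{K})}{2}\bigg).
\]
Expanding the quadratic form as a function of $\theta$ and absorbing all $\theta$-independent terms into the proportionality constant, and using the symmetry of $\widetilde{\mathbf{\Sigma}}^{-1}$, yields
\[
    \tlik(\theta) \propto \exp\bigg({-}\frac{1}{2}\big(\theta^2\,\mathbf{1}_{K}^\intercal \widetilde{\mathbf{\Sigma}}^{-1} \mathbf{1}_{K} - 2\theta\,\mathbf{1}_{K}^\intercal \widetilde{\mathbf{\Sigma}}^{-1}\sufstr\big)\bigg).
\]
Viewed as a function of $\theta$, this is proportional to a Gaussian with precision $\mathbf{1}_{K}^\intercal \widetilde{\mathbf{\Sigma}}^{-1} \mathbf{1}_{K} = 1/\widehat{\sigma}^{2}$ (recall \eqref{eq:sighatsc}) and natural parameter $\mathbf{1}_{K}^\intercal \widetilde{\mathbf{\Sigma}}^{-1}\sufstr$; the assumed nonsingularity of $\widetilde{\mathbf{\Sigma}}$ guarantees that this precision is strictly positive, so $\widehat{\sigma}^{2}$ is well-defined.

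The final step is the standard conjugate update: I would multiply this Gaussian ``likelihood in $\theta$'' by the Gaussian prior $\pri(\theta) = \mathcal{N}(\theta; \mu_0, \sigma_0^2)$ and complete the square in $\theta$. Since the product of two Gaussian-in-$\theta$ factors is again Gaussian in $\theta$, this immediately establishes that $\poste(\theta \,\vert\, \sufst)$ is Gaussian. Precisions add, giving $1/\sigma_1^2 = 1/\sigma_0^2 + 1/\widehat{\sigma}^{2}$ and hence $\sigma_1^2 = \widehat{\sigma}^{2}\sigma_0^{2}/(\widehat{\sigma}^{2}+\sigma_0^{2})$, matching the claimed variance. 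The posterior mean is the precision-weighted combination $\mu_1 = \sigma_1^2\big(\mu_0/\sigma_0^2 + \mathbf{1}_{K}^\intercal \widetilde{\mathbf{\Sigma}}^{-1}\sufstr\big)$; substituting the expression for $\sigma_1^2$ and simplifying splits this into exactly the two terms $\frac{\widehat{\sigma}^{2}\sigma_0^{2}}{\widehat{\sigma}^{2}+\sigma_0^{2}}\mathbf{1}_{K}^\intercal \widetilde{\mathbf{\Sigma}}^{-1}\sufstr$ and $\frac{\widehat{\sigma}^{2}}{\widehat{\sigma}^{2}+\sigma_0^{2}}\mu_0$ of the claimed formula for $\mu_1$.

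I expect no genuine obstacle here, as every step is an elementary Gaussian manipulation. The only point requiring mild care is the bookkeeping in the completion of the square---keeping precise track of which $\theta$-independent terms are absorbed into the normalization---and confirming that the likelihood's natural parameter $\mathbf{1}_{K}^\intercal \widetilde{\mathbf{\Sigma}}^{-1}\sufstr$ combines with the prior contribution $\mu_0/\sigma_0^2$ to reproduce the stated $\mu_1$ after the substitution of $\sigma_1^2$. Given the explicit form of $\tlik$ in \eqref{eq:likgausssufstatinep}, the result then follows by direct computation.
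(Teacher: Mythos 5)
Your proof is correct and follows essentially the same route as the paper: the paper derives the same expansion of $\tlik(\theta)$ into the form $\exp\big({-}\theta^2/(2\widehat{\sigma}^2) + \theta\,\mathbf{1}_K^\intercal\widetilde{\mathbf{\Sigma}}^{-1}\sufstr\big)$ in \eqref{eq:prooffuse1d} within the proof of Theorem~\ref{th:scalarbayesfusion}, and the corollary then follows by the same Gaussian conjugate update (the paper writes this step out explicitly only for the vector case in Appendix~\ref{app:fusegaussprior}). Your precision-addition and precision-weighted-mean bookkeeping reproduces \eqref{eq:mu1sc} and the stated $\sigma_1^2$ exactly.
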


As  mentioned before, the supra-Bayesian fusion result  $\poste(\theta \,\vert\, \sufst)$
is in general different from the oracle posterior $\poste(\theta \, \vert \, \meam)$.
 Indeed, the oracle posterior is proportional to the product of the prior $\pri(\theta)$ and the global observation likelihood function $\lik (\meam \, \vert \, \theta)$ in \eqref{eq:condpdfgauss}.
It can then easily be seen that the oracle posterior $\poste(\theta \, \vert \, \meam)$ is also Gaussian but with mean
\begin{equation}
    \mu_2 = \frac{\widehat{\sigma}_2^{2} \sigma_0^{2}}{ \widehat{\sigma}^{2} + \sigma_0^{2}} \mathbf{h}^\intercal {\mathbf{\Sigma}}^{-1}\meam
    + 
    \frac{\widehat{\sigma}_2^{2}}{ \widehat{\sigma}_2^{2} + \sigma_0^{2}} \mu_0 
    \label{eq:mu2sc}
\end{equation}
and variance
\begin{equation*}
    \sigma_2^2 = \frac{\widehat{\sigma}_2^{2} \sigma_0^{2}}{ \widehat{\sigma}_2^{2} + \sigma_0^{2}},
\end{equation*}
where
\begin{equation}
\label{eq:sig2hatsc}
    \widehat{\sigma}_2^{2} = \frac{1}{\mathbf{h}^\intercal {\mathbf{\Sigma}}^{-1} \mathbf{h}}\,.
\end{equation}
To better understand the difference, we note that in \eqref{eq:mu1sc} 
\begin{align*}
    \mathbf{1}_{K}^\intercal \widetilde{\mathbf{\Sigma}}^{-1}\sufstr
    & = 
    \mathbf{h}^\intercal \Ourinv^\intercal(\Ourinv\mathbf{\Sigma}{\Ourinv}^{\intercal})^{-1} \Ourinv \meam
\end{align*}
and in \eqref{eq:sighatsc}
\begin{align*}
    \mathbf{1}_{K}^\intercal \widetilde{\mathbf{\Sigma}}^{-1} \mathbf{1}_{K}
    & = 
    \mathbf{h}^\intercal \Ourinv^\intercal(\Ourinv\mathbf{\Sigma}{\Ourinv}^{\intercal})^{-1} \Ourinv \mathbf{h} \,,
\end{align*}
where we used \eqref{eq:sufstrvecgauss}--\eqref{eq:ourinvleftinv}.
Comparing with $\mathbf{h}^\intercal {\mathbf{\Sigma}}^{-1}\meam$ and $\mathbf{h}^\intercal {\mathbf{\Sigma}}^{-1} \mathbf{h}$ arising in \eqref{eq:mu2sc} and \eqref{eq:sig2hatsc}, respectively, 
we conclude that  the difference between the oracle posterior and the supra-Bayesian posterior is that  the  matrix ${\mathbf{\Sigma}}^{-1}$ is replaced by $\Ourinv^\intercal(\Ourinv\mathbf{\Sigma}{\Ourinv}^{\intercal})^{-1} \Ourinv$.

A simplified version of Theorem~\ref{th:scalarbayesfusion} has been shown in \cite{clemen1985limits} and is the setting of the early supra-Bayesian approaches. 
More specifically, it is assumed
in  \cite{clemen1985limits} 
that a fusion center obtains from $K$ agents estimates $\mu_k$ of a scalar random variable
$\theta$.
These estimates can be interpreted as our local statistics $\sufstsc_k$.
Furthermore, the fusion center has a Gaussian prior for $\theta$ and knows that the vector of the estimation errors of all agents, $\mathbf{u}= {[u_1, \dots, u_K]}^{\intercal}$ with $u_k = \mu_k - \theta$, also follows a Gaussian distribution with zero mean and some covariance matrix $\widetilde{\mathbf{\Sigma}}$ (in general,
the errors may be correlated).
Equivalently, conditionally on $\theta$, the estimates $\bmu = {[\mu_1, \dots, \mu_K]}^{\intercal}$ follow a 
Gaussian distribution with mean $\mathbf{1}_{K} \theta$ and the same covariance matrix $\widetilde{\mathbf{\Sigma}}$.
Thus, the setting in  \cite{clemen1985limits} directly assumes the conditional distribution of $\sufst$ given $\theta$ without starting from any detailed observation model.

To get a better intuition about the role of the weights $w_k$ and the meaning of negative weights in the setting of Theorem~\ref{th:scalarbayesfusion}, we will consider a specific example. 
\begin{example}[Private and Shared Observations]
\label{ex:specialweightsd1}
We assume that agent $k$ has $r_k$ private observations, i.e., observations that no other agent observes, and $r_0$ shared observations, i.e., observations that all agents know jointly.
The resulting total number of observations is thus $d_{\meamsc}=\sum_{k=1}^K (r_0 + r_k)$.
However, there are only $r_0 + \sum_{k=1}^K  r_k$ different observations.
We assume that these different observations given $\theta$ are independent and have variance one and mean $\theta$.
To embed this scenario into our linear model, we choose $\mathbf{h}_k = \mathbf{1}_{r_0+r_k}$
and the submatrices of the covariance matrix $\mathbf{\Sigma}$ in \eqref{eq:covariancebayesgauss} as
\begin{equation}
\mathbf{\Sigma}_{k k'} = 
    \begin{pmatrix}
        \mathbf{I}_{r_0} & \mathbf{0}_{r_0\times r_{k'}} \\
        \mathbf{0}_{r_k\times r_0} & \mathbf{0}_{r_k\times r_{k'}}
    \end{pmatrix}
    \in \mathbb{R}^{(r_0+r_k) \times (r_0+r_{k'})}
    \label{eq:covexample}
\end{equation}
for $k\neq k'$ and 
\begin{equation*}
    \mathbf{\Sigma}_{k k} = \mathbf{I}_{r_0+r_k}\,.
\end{equation*}
Thus, we have that 
\begin{equation*}
    \meam_k = \mathbf{1}_{r_0+r_k} \theta + \mathbf{n}_k\,,
\end{equation*}
where $\mathbf{n}_k$ is a vector of independent and identically distributed standard Gaussian random variables,
i.e., 
$\pri(\mathbf{n}_k) = \mathcal{N}(\mathbf{n}_k; \mathbf{0}_{(r_0+r_k)\times 1}, \mathbf{I}_{r_0+r_k})$.
The covariance structure \eqref{eq:covexample} between the  $\mathbf{n}_k$, for $k\in \{1, \dots, K\}$, implies that for $i\in \{1, \dots, r_0\}$ the $i$th entry of $\mathbf{n}_k$ 
and the $i$th entry of $\mathbf{n}_{k'}$ with $k'\neq k$ coincide with probability one:
\begin{equation*}
    \E[(n_{k,i} - n_{k',i})^2]
    = \underbrace{\E[n_{k,i}^2]}_{=1} + \underbrace{\E[n_{k',i}^2]}_{=1} - 2 \underbrace{\E[n_{k,i}n_{k',i}]}_{=1} = 0\,.
\end{equation*}
Thus, the first $r_0$ observations are the same for all agents.

With these choices and assuming that $r_k>0$ and $r_0>0$, a tedious but straightforward calculation (for details see Appendix~\ref{app:weightsd1}) shows that the weights $w_k$ in \eqref{eq:powersbayes} simplify to 
\begin{align}
    \label{eq: weight_wk_ex_1}
    w_k 
    & = 1 - \frac{K-1}{r_k} \bigg(\sum_{k'=0}^K \frac{1}{r_{k'}}\bigg)^{-1}.
\end{align}
In particular, we see that all weights are upper-bounded by $1$ and are emphasized according to their amount of independent information as given by $r_k$. 
More surprising is the possibility of negative weights for agents with few private observations (e.g., the setting $K=3$, $r_1=1$, and $r_0=r_2=r_3=4$ gives $w_1=-1/7$).
An explanation for this result is that  negatively weighting agents with few private observations can counteract the multiple-counting of the shared observations that are part of all agents' posteriors. 
More generally, it follows from \eqref{eq: weight_wk_ex_1} that $w_k\geq 0$ if and only if
\begin{align*}
    r_k \geq (K-1)\Bigg(\sum_{k'=0}^K \frac{1}{r_{k'}}\Bigg)^{-1}
\end{align*}
or, equivalently,
\begin{align*}
    \sum_{k'=0}^K \frac{r_k}{r_{k'}} \geq K  -  1 .
\end{align*}

The sum of all weights is given by
\begin{equation}
    \label{eq: sum_of_weights_expression_ex_1}
    \sum_{k=1}^K w_k = K - (K-1)\bigg(\sum_{k'=0}^K \frac{1}{r_{k'}}\bigg)^{-1}\sum_{k=1}^K\frac{1}{r_k} .
\end{equation}
From this expression, we readily conclude that
\begin{equation}
    \label{eq: sum_of_weights_bounds}
    1 \leq \sum_{k=1}^K w_k \leq K. 
\end{equation}
Indeed, this follows from the fact that the second term on the right-hand side of \eqref{eq: sum_of_weights_expression_ex_1}, 
$(K-1)\big(\sum_{k'=0}^K \frac{1}{r_{k'}}\big)^{-1}\sum_{k=1}^K\frac{1}{r_k}$,
is nonnegative 
and upper-bounded by $K-1$ since
$
    \big(\sum_{k'=0}^K \frac{1}{r_{k'}}\big)^{-1}\sum_{k=1}^K\frac{1}{r_k} \leq 1
$.
The double bound \eqref{eq: sum_of_weights_bounds} shows that although some weights may be negative, the sum of all weights is always between the sum of all weights
in the log-linear pooling function in \eqref{eq: log_linear_pooling} (there, the sum was
$1$) and the sum of all weights in the multiplicative pooling function in \eqref{eq: multiplicative_pooling} (there, all weights were $1$,
and hence the sum was
$K$). 

Another conclusion  we can draw is that varying the number of shared observations $r_0$---while keeping the number of private observations $r_k$ fixed---corresponds to an ``interpolation" between the multiplicative pooling function  and the  log-linear pooling function. 
Consider first the case that the agents have the same number of private observations, i.e.,  $r_1=\cdots=r_K$. 
When $r_0=0$, a derivation similar to that  in Appendix~\ref{app:weightsd1} gives $w_k = 1$. 
This implies that when the agents do not share any observations, the pooling
function in \eqref{eq:fusedepbayesd1} corresponds exactly to the standard multiplicative pooling function in \eqref{eq: multiplicative_pooling}. 
On the other hand, as the number of shared observations $r_0$ increases, the pooling
function behaves closer to a symmetric log-linear pooling function (i.e., using $w_k = 1/K$). 
Indeed, it follows from \eqref{eq: weight_wk_ex_1} that
\begin{equation*}
    \lim_{r_0\rightarrow \infty} w_k = \frac{1}{K}.
\end{equation*}
If we remove the restriction that $r_1=\dots=r_K$, the connection to multiplicative pooling still holds; however, the connection to log-linear pooling only holds under the condition of nonnegative weights, i.e., $w_k\geq 0$ for all $k$, which may be violated if some agents hold only few private observations as compared to the total number of observations.

\end{example}


\subsection{Supra-Bayesian Fusion Rule for a Vector $\btheta$} 
\label{sec:gaussvector}

We can generalize Theorem~\ref{th:scalarbayesfusion} to a vector
$\btheta\in \mathbb{R}^{d_{\theta}}$ with $d_{\theta}>1$.
However, formally, the weights $w_k$ in \eqref{eq:powersbayes} become matrices $\mathbf{W}_k$ and thus cannot  be used as powers in a fusion rule.
Hence, the following fusion result is more complicated and the relation to the one-dimensional case is not obvious.
A proof is provided in Appendix~\ref{appendix:fuseliks}.
\begin{theorem} \label{th:fuseliks}
    Let $\ell_k(\btheta) = \lik(\meam_k \,\vert\, \btheta)$ denote the local observation likelihood functions given by \eqref{eq:loclicgaussnons1} for $k=1, \dots, K$ and let $\tlik(\btheta) = \lik(\sufst \, \vert \, \btheta)$ be the global likelihood function given by \eqref{eq:likgausssufstatinep}.
    Then 
    \begin{equation}
        \tlik (\btheta)
        \propto
        \xi_0(\btheta) \prod_{k=1}^K  \ell_k(\mathbf{W}_k\btheta)
        \label{eq:fuselikgaussvec}
    \end{equation}
    where
    \begin{equation}
        \mathbf{W}_k = (\mathbf{H}_k^\intercal
        \mathbf{\Sigma}_{k k}^{-1}
        \mathbf{H}_k)^{-1}
        (\mathbf{e}_k \otimes \mathbf{I}_{d_{\theta}})^\intercal\widetilde{\mathbf{\Sigma}}^{-1 }(\mathbf{1}_{K}\otimes \mathbf{I}_{d_{\theta}}),
        \label{eq:weightmatrix}
    \end{equation}
    with $\mathbf{e}_k$ denoting the $k$th unit vector in $\mathbb{R}^{K}$ and $\widetilde{\mathbf{\Sigma}} = \Ourinv\mathbf{\Sigma}{\Ourinv}^{\intercal}$, and
    \begin{align}\label{eq:defxi0}
         \xi_0(\btheta)
        & =
        \exp\bigg(
        {-} \frac{
        \btheta^\intercal \mathbf{G} \btheta
        }{2}
        \bigg).
    \end{align}
    Here,
    \begin{align}\label{eq:defG}
    \mathbf{G}
    & = 
    \widehat{\mathbf{\Sigma}}^{-1}
    - 
    \sum_{k=1}^K  \mathbf{W}_k^\intercal
        \mathbf{H}_k^\intercal
        \mathbf{\Sigma}_{k k}^{-1}
        \mathbf{H}_k 
        \mathbf{W}_k
    \end{align}
    with
    \begin{equation}
    \label{eq:sighatinv}
        \widehat{\mathbf{\Sigma}}^{-1} = (\mathbf{1}_{K}\otimes \mathbf{I}_{d_{\theta}})^\intercal \widetilde{\mathbf{\Sigma}}^{-1} (\mathbf{1}_{K}\otimes \mathbf{I}_{d_{\theta}})\,.
    \end{equation}
    Furthermore, for a given prior $\pri(\btheta)$
    and local posteriors $\pi_k(\btheta)= \poste(\btheta \,\vert\, \meam_k) \propto \pri(\btheta) \ell_k(\btheta)$, the supra-Bayesian fusion result
    $g[\pi_1, \dots, \pi_K](\btheta) = \poste(\btheta \,\vert\, \sufst) \propto \pri(\btheta) \tlik(\btheta)$ is given by
    \begin{align}
    g[\pi_1, \dots, \pi_K](\btheta)
    & \propto 
    \pri(\btheta) \xi_0(\btheta) \prod_{k=1}^K  \frac{\pi_k(\mathbf{W}_k\btheta)}{\pri(\mathbf{W}_k\btheta)}.
    \label{eq:fusedepbayes}
\end{align}
\end{theorem}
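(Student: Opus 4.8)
The plan is to exploit the fact that, viewed as functions of $\btheta$, every factor in \eqref{eq:fuselikgaussvec}---the global likelihood $\tlik(\btheta)$ in \eqref{eq:likgausssufstatinep}, each reparametrized local likelihood $\ell_k(\mathbf{W}_k\btheta)$ via \eqref{eq:loclicgaussnons2}, and $\xi_0(\btheta)$ in \eqref{eq:defxi0}---is the exponential of a quadratic form in $\btheta$. Writing $\mathbf{A}_k \triangleq \mathbf{H}_k^\intercal \mathbf{\Sigma}_{k k}^{-1}\mathbf{H}_k$ and $\mathbf{M}\triangleq \mathbf{1}_{K}\otimes \mathbf{I}_{d_\theta}$, I would first reduce the claimed proportionality to matching the linear and quadratic coefficients of $\btheta$ in the two exponents, since two exponentials of quadratics in $\btheta$ are proportional (with a possibly $\sufst$-dependent constant) if and only if their first- and second-order coefficients agree. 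Thus \eqref{eq:fuselikgaussvec} becomes equivalent to two matrix identities, one for the quadratic part and one for the linear part.

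I would then bring $\tlik(\btheta)$ into canonical form: expanding the exponent of \eqref{eq:likgausssufstatinep} and discarding the $\btheta$-independent term $\sufst^\intercal \widetilde{\mathbf{\Sigma}}^{-1}\sufst$ leaves linear coefficient $\mathbf{M}^\intercal \widetilde{\mathbf{\Sigma}}^{-1}\sufst$ and quadratic coefficient $-\tfrac12\,\mathbf{M}^\intercal \widetilde{\mathbf{\Sigma}}^{-1}\mathbf{M}= -\tfrac12\,\widehat{\mathbf{\Sigma}}^{-1}$, the latter by the definition \eqref{eq:sighatinv}. Likewise, using \eqref{eq:loclicgaussnons2}, each $\ell_k(\mathbf{W}_k\btheta)$ contributes linear coefficient $\mathbf{W}_k^\intercal \mathbf{A}_k \sufstr_k$ and quadratic coefficient $-\tfrac12\,\mathbf{W}_k^\intercal \mathbf{A}_k \mathbf{W}_k$, so the product $\prod_k \ell_k(\mathbf{W}_k\btheta)$ has linear coefficient $\sum_k \mathbf{W}_k^\intercal \mathbf{A}_k \sufstr_k$ and quadratic coefficient $-\tfrac12\sum_k \mathbf{W}_k^\intercal \mathbf{A}_k \mathbf{W}_k$. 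Multiplying by $\xi_0(\btheta)$ shifts the quadratic coefficient by $-\tfrac12\mathbf{G}$, so the quadratic identity reads $\mathbf{G}+\sum_k \mathbf{W}_k^\intercal \mathbf{A}_k \mathbf{W}_k = \widehat{\mathbf{\Sigma}}^{-1}$, which holds by the very definition \eqref{eq:defG} of $\mathbf{G}$; this is precisely why $\xi_0$ is introduced.

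The crux is the linear identity $\sum_k \mathbf{W}_k^\intercal \mathbf{A}_k \sufstr_k = \mathbf{M}^\intercal \widetilde{\mathbf{\Sigma}}^{-1}\sufst$, which I would establish by a short Kronecker-algebra computation. From \eqref{eq:weightmatrix}, $\mathbf{W}_k = \mathbf{A}_k^{-1}(\mathbf{e}_k\otimes \mathbf{I}_{d_\theta})^\intercal \widetilde{\mathbf{\Sigma}}^{-1}\mathbf{M}$; since $\mathbf{A}_k$ and $\widetilde{\mathbf{\Sigma}}$ are symmetric, transposing and cancelling $\mathbf{A}_k^{-1}\mathbf{A}_k$ gives $\mathbf{W}_k^\intercal \mathbf{A}_k = \mathbf{M}^\intercal \widetilde{\mathbf{\Sigma}}^{-1}(\mathbf{e}_k\otimes \mathbf{I}_{d_\theta})$. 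Summing yields $\sum_k \mathbf{W}_k^\intercal \mathbf{A}_k \sufstr_k = \mathbf{M}^\intercal \widetilde{\mathbf{\Sigma}}^{-1}\sum_k (\mathbf{e}_k\otimes \mathbf{I}_{d_\theta})\sufstr_k$, and the remaining sum is the block-stacking identity $\sum_k (\mathbf{e}_k\otimes \mathbf{I}_{d_\theta})\sufstr_k = \sufst$, because $(\mathbf{e}_k\otimes \mathbf{I}_{d_\theta})\sufstr_k$ places $\sufstr_k$ into the $k$th block of a $Kd_\theta$-vector. This gives $\mathbf{M}^\intercal \widetilde{\mathbf{\Sigma}}^{-1}\sufst$ and completes \eqref{eq:fuselikgaussvec}. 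I expect the main obstacle to be keeping the Kronecker bookkeeping straight---correctly transposing \eqref{eq:weightmatrix} and recognizing the stacking identity---rather than any conceptual difficulty.

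Finally, to obtain the fusion rule \eqref{eq:fusedepbayes}, I would invoke \eqref{eq:globpost}, which gives $g[\pi_1,\dots,\pi_K](\btheta)=\poste(\btheta\,\vert\,\sufst)\propto \pri(\btheta)\tlik(\btheta)$, and substitute \eqref{eq:fuselikgaussvec}. It then remains to rewrite each $\ell_k(\mathbf{W}_k\btheta)$ through the local posterior: since $\pi_k(\btheta)\propto \pri(\btheta)\ell_k(\btheta)$ with a $\btheta$-independent normalization constant, $\ell_k(\btheta)\propto \pi_k(\btheta)/\pri(\btheta)$ as functions of $\btheta$, and substituting the argument $\mathbf{W}_k\btheta$ preserves this, so $\ell_k(\mathbf{W}_k\btheta)\propto \pi_k(\mathbf{W}_k\btheta)/\pri(\mathbf{W}_k\btheta)$. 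Collecting all $\btheta$-independent constants into the overall normalization then yields \eqref{eq:fusedepbayes}.
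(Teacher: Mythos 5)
Your proposal is correct and follows essentially the same route as the paper's proof in Appendix~\ref{appendix:fuseliks}: expand all factors as exponentials of quadratics in $\btheta$, observe that the quadratic coefficients match by the very definition of $\mathbf{G}$ (which is why $\xi_0$ is introduced), verify the linear identity $\sum_k \mathbf{W}_k^\intercal \mathbf{H}_k^\intercal\mathbf{\Sigma}_{kk}^{-1}\mathbf{H}_k\sufst_k = (\mathbf{1}_K\otimes\mathbf{I}_{d_\theta})^\intercal\widetilde{\mathbf{\Sigma}}^{-1}\sufst$ via the Kronecker stacking identity, and then obtain \eqref{eq:fusedepbayes} by substituting $\ell_k\propto\pi_k/\pri$ into $\poste(\btheta\,\vert\,\sufst)\propto\pri(\btheta)\tlik(\btheta)$. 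The only cosmetic difference is that you make the coefficient-matching framing explicit, whereas the paper carries it out by direct manipulation of the exponents.
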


Finally, if the prior $\pri(\btheta)$ is Gaussian, then the supra-Bayesian fusion result
$ \poste(\btheta \,\vert\, \sufst)$ is again Gaussian and the fusion rule \eqref{eq:fusedepbayes} can be reduced to a second-order rule involving only the mean and covariance matrix:
\begin{corollary}\label{cor:fusegaussprior}
Under the assumptions of Theorem~\ref{th:fuseliks},
let the prior $\pri(\btheta)$ be Gaussian with mean $\bmu_0$ and covariance matrix $\mathbf{\Sigma}_0$,
i.e., $\pri(\btheta) = \mathcal{N}(\btheta; \bmu_0, \mathbf{\Sigma}_0)$.
Then the supra-Bayesian fusion result
$\poste(\btheta \,\vert\, \sufst)$ is again Gaussian, i.e., $\poste(\btheta \,\vert\, \sufst) =  \mathcal{N}(\btheta; \bmu_1, \mathbf{\Sigma}_1)$, with mean
\begin{equation}
    \bmu_1 = \big(\widehat{\mathbf{\Sigma}}^{-1} + \mathbf{\Sigma}_0^{-1} \big)^{-1} \big( (\mathbf{1}_{K}\otimes \mathbf{I}_{d_{\theta}})^\intercal \widetilde{\mathbf{\Sigma}}^{-1}\sufstr
    + 
    \mathbf{\Sigma}_0^{-1} \bmu_0 \big)
    \label{eq:mu1}
\end{equation}
and covariance matrix
\begin{equation}
    \mathbf{\Sigma}_1 = \big(\widehat{\mathbf{\Sigma}}^{-1} + \mathbf{\Sigma}_0^{-1} \big)^{-1}.
    \label{eq:sigma1}
\end{equation}
\end{corollary}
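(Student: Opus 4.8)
The plan is to use the fact that, by Theorem~\ref{th:supraBayes} (equivalently, \eqref{eq:globpost}), the supra-Bayesian posterior satisfies $\poste(\btheta \,\vert\, \sufst) \propto \pri(\btheta)\,\tlik(\btheta)$, together with the observation that under the corollary's hypothesis both factors are Gaussian functions of $\btheta$. Since a product of two Gaussians in $\btheta$ is again Gaussian up to normalization, the result will follow by completing the square and reading off the information (inverse-covariance) matrix and information vector. This route bypasses the more intricate pooling-function expression \eqref{eq:fusedepbayes} entirely; that form is needed only to exhibit the fusion rule as a generalized multiplicative pooling function, not to compute the Gaussian parameters.

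First I would rewrite the global likelihood function \eqref{eq:likgausssufstatinep} as an explicit Gaussian in $\btheta$. Using the mixed-product identity $\mathbf{1}_{K}\otimes\btheta = (\mathbf{1}_{K}\otimes\mathbf{I}_{d_{\theta}})\btheta$ and expanding the quadratic form $(\sufstr-(\mathbf{1}_{K}\otimes\mathbf{I}_{d_{\theta}})\btheta)^\intercal \widetilde{\mathbf{\Sigma}}^{-1}(\sufstr-(\mathbf{1}_{K}\otimes\mathbf{I}_{d_{\theta}})\btheta)$, the terms depending on $\btheta$ are a quadratic term with matrix $(\mathbf{1}_{K}\otimes\mathbf{I}_{d_{\theta}})^\intercal\widetilde{\mathbf{\Sigma}}^{-1}(\mathbf{1}_{K}\otimes\mathbf{I}_{d_{\theta}})$ and a linear term with vector $(\mathbf{1}_{K}\otimes\mathbf{I}_{d_{\theta}})^\intercal\widetilde{\mathbf{\Sigma}}^{-1}\sufstr$. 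By the definition \eqref{eq:sighatinv}, the quadratic matrix is exactly $\widehat{\mathbf{\Sigma}}^{-1}$, so that $\tlik(\btheta)$ is, as a function of $\btheta$, proportional to a Gaussian with information matrix $\widehat{\mathbf{\Sigma}}^{-1}$ and information vector $(\mathbf{1}_{K}\otimes\mathbf{I}_{d_{\theta}})^\intercal\widetilde{\mathbf{\Sigma}}^{-1}\sufstr$.

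Next I would multiply by the Gaussian prior $\pri(\btheta)=\mathcal{N}(\btheta;\bmu_0,\mathbf{\Sigma}_0)$, whose exponent contributes a quadratic term with matrix $\mathbf{\Sigma}_0^{-1}$ and a linear term with vector $\mathbf{\Sigma}_0^{-1}\bmu_0$. Adding the two exponents, the combined quadratic form in $\btheta$ has information matrix $\widehat{\mathbf{\Sigma}}^{-1}+\mathbf{\Sigma}_0^{-1}$ and information vector $(\mathbf{1}_{K}\otimes\mathbf{I}_{d_{\theta}})^\intercal\widetilde{\mathbf{\Sigma}}^{-1}\sufstr+\mathbf{\Sigma}_0^{-1}\bmu_0$. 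The standard conversion between the information and moment parameterizations of a Gaussian then yields $\mathbf{\Sigma}_1=(\widehat{\mathbf{\Sigma}}^{-1}+\mathbf{\Sigma}_0^{-1})^{-1}$, which is \eqref{eq:sigma1}, and $\bmu_1=\mathbf{\Sigma}_1\big((\mathbf{1}_{K}\otimes\mathbf{I}_{d_{\theta}})^\intercal\widetilde{\mathbf{\Sigma}}^{-1}\sufstr+\mathbf{\Sigma}_0^{-1}\bmu_0\big)$, which is \eqref{eq:mu1}.

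There is essentially no hard step here; the whole argument is Gaussian conjugacy, and the only points requiring care are the Kronecker-product bookkeeping—in particular the identification of the quadratic coefficient matrix with $\widehat{\mathbf{\Sigma}}^{-1}$ via \eqref{eq:sighatinv}—and the verification that the relevant matrices are invertible so that the moment form is well defined. Invertibility of $\widetilde{\mathbf{\Sigma}}$ is assumed in the preceding development, while positive definiteness of the prior covariance $\mathbf{\Sigma}_0$ guarantees that $\widehat{\mathbf{\Sigma}}^{-1}+\mathbf{\Sigma}_0^{-1}$ is positive definite and hence invertible, so that $\mathbf{\Sigma}_1$ and $\bmu_1$ are well defined.
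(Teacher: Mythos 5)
Your proposal is correct and follows essentially the same route as the paper's own proof: the paper also starts directly from $\poste(\btheta \,\vert\, \sufst) \propto \pri(\btheta)\,\tlik(\btheta)$, uses the expansion of the global likelihood in \eqref{eq:sufstgiventhetagauss} (quadratic coefficient $\widehat{\mathbf{\Sigma}}^{-1}$, linear coefficient $(\mathbf{1}_{K}\otimes \mathbf{I}_{d_{\theta}})^\intercal \widetilde{\mathbf{\Sigma}}^{-1}\sufstr$), adds the Gaussian prior's exponent, and completes the square to read off \eqref{eq:mu1} and \eqref{eq:sigma1}. No substantive differences.
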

Here, we recall that $\sufstr = {[\sufstr_1^\intercal, \dots, \sufstr_K^\intercal]}^\intercal$ with $\sufstr_k$ given by \eqref{eq:sufstrgauss}.
A proof of Corollary~\ref{cor:fusegaussprior} is provided in Appendix~\ref{app:fusegaussprior}.

The supra-Bayesian fusion result in \eqref{eq:fusedepbayes} has an  intriguing structure
in that the agent pdfs are first preprocessed by a multiplication in the argument and then combined via a generalized multiplicative pooling function.
The relevance of this fusion rule beyond  the linear Gaussian  setting, especially for approximately  linear  Gaussian  observation models, is an open issue.

As in the scalar case, the supra-Bayesian fusion result $\poste(\btheta \,\vert\, \sufst)$
is in general different from the oracle posterior $\poste(\btheta \, \vert \, \meam)$.
 Again, the oracle posterior is proportional to the product of the prior $\pri(\btheta)$ and the global observation likelihood function $\lik (\meam \, \vert \, \btheta)$ in \eqref{eq:condpdfgauss};
it is easily seen that $\poste(\btheta \, \vert \, \meam)$ is also Gaussian but with mean
\begin{equation}
    \bmu_2 =  \big(\widehat{\mathbf{\Sigma}}_2^{-1} + \mathbf{\Sigma}_0^{-1}\big)^{-1}\big(\mathbf{H}^\intercal {\mathbf{\Sigma}}^{-1}\meam
    + 
    \mathbf{\Sigma}_0^{-1} \bmu_0  
    \big)
    \label{eq:mu2}
\end{equation}
and covariance matrix
\begin{equation*}
    \mathbf{\Sigma}_2 = \big(\widehat{\mathbf{\Sigma}}_2^{-1} + \mathbf{\Sigma}_0^{-1}\big)^{-1},
\end{equation*}
where
\begin{equation}
\label{eq:sigmahat2vec}
    \widehat{\mathbf{\Sigma}}_2^{-1} = \mathbf{H}^\intercal {\mathbf{\Sigma}}^{-1} \mathbf{H}\,.
\end{equation}
The difference can be better understood  by noting that in \eqref{eq:mu1}  
\begin{align*}
    (\mathbf{1}_{K}\otimes \mathbf{I}_{d_{\theta}})^\intercal \widetilde{\mathbf{\Sigma}}^{-1}\sufstr
    & = 
    \mathbf{H}^\intercal \Ourinv^\intercal(\Ourinv\mathbf{\Sigma}{\Ourinv}^{\intercal})^{-1} \Ourinv \meam
\end{align*}
and in \eqref{eq:sighatinv}
\begin{align*}
    (\mathbf{1}_{K}\otimes \mathbf{I}_{d_{\theta}})^\intercal \widetilde{\mathbf{\Sigma}}^{-1} (\mathbf{1}_{K}\otimes \mathbf{I}_{d_{\theta}})
    & = 
    \mathbf{H}^\intercal \Ourinv^\intercal(\Ourinv\mathbf{\Sigma}{\Ourinv}^{\intercal})^{-1} \Ourinv \mathbf{H}\,,
\end{align*}
where we used \eqref{eq:sufstrvecgauss}--\eqref{eq:ourinvleftinv}.
Comparing with $\mathbf{H}^\intercal {\mathbf{\Sigma}}^{-1}\meam$ and $\mathbf{H}^\intercal {\mathbf{\Sigma}}^{-1} \mathbf{H}$ in \eqref{eq:mu2} and \eqref{eq:sigmahat2vec}, respectively, we conclude that, as in the scalar case considered earlier, the difference between the oracle posterior and the supra-Bayesian posterior is that   ${\mathbf{\Sigma}}^{-1}$ is replaced by $\Ourinv^\intercal(\Ourinv\mathbf{\Sigma}{\Ourinv}^{\intercal})^{-1} \Ourinv$.

\section{Outlook}
\label{sec:outlook}


The fusion of pdfs presents numerous interesting aspects beyond those considered in our treatment. Moreover, 
certain
extensions can be envisioned.
In what follows, we suggest
some related 
directions of future research.

\begin{itemize}
\setlength\itemsep{1.5mm}

\item Our discussion of pdf fusion emphasized theoretical considerations. 
In practical implementations, a finite-dimensional representation or parametrization of 
the agent pdfs $q_k(\btheta)$ is required. 
Popular examples are Gaussian, Gaussian mixture, and particle representations \cite{Gunay16,rabin2011wasserstein, srivastava2015wasp}. 
Since these representations are usually approximations
of the true pdfs, a relevant issue is the tradeoff between low representation complexity (small number of parameters) and high accuracy of approximation.
Furthermore, 
algorithms implementing a given pooling function for a given type of parametric representation are required. Examples of 
finite-dimensional parametric fusion rules were considered
in Sections~\ref{sec: opinion_pooling_with_gaussians} and \ref{sec:lingaumeam}.

\item In the case of a centralized agent network where each agent pdf $q_k(\btheta)$ is transmitted to the fusion center via a channel, communication cost is another 
practical issue. 
Although a low-dimensional
parametric representation of the agent pdfs may be used to achieve a low communication cost,
the reduction of communication cost is ultimately a source coding (rate-distortion) problem.

\item In many cases, the aggregate pdf $q(\btheta) = g[q_1,\ldots,q_K](\btheta)$ is not used as the final result but arises as part of a method performing 
a statistical inference task such as estimation, detection, classification, or clustering. In this setting, the pooling function (or certain parameters within 
a given family of pooling functions) should be chosen or optimized 
such that the performance of the statistical inference method is maximized. Note that this is different from the optimization approach considered in 
Section~\ref{sec: pop_optim_appr}.

\item Our discussion assumed the existence of a fusion center that has access to all pdfs $q_k(\btheta)$. In a decentralized agent network, 
there is no fusion center and each agent is able to communicate only with certain neighboring agents. Besides the basic necessity of using a distributed 
communication-and-fusion protocol, 
challenging
aspects in the decentralized setting include communication cost, efficient representation of pdfs, and double counting of information
along cycles in the network graph.

\item In many 
scenarios,
the agent pdfs $q_k(\btheta)$ are time-varying and a temporal sequence $q_k^{(n)}(\btheta)$, where $n = 1,2,\ldots$ is a discrete time index, is 
available at the $k$th agent. This serial setting suggests a sequential variant of pdf fusion in which at each time $n$ the fused pdf is not calculated from scratch but
the previous fusion result is updated using the new set of $q_k^{(n)}(\btheta)$.
Practical implementations of 
sequential updating can be based on both 
parametric and nonparametric representations of the pdfs.

\item The fusion of multiobject pdfs or probability hypothesis densities
of finite point processes (random finite sets), especially in the context
of multitarget tracking, is a topic of active research
\cite{Clark10,Uney13,gao2020multiobject,li2020arithmetic,Li-T_19}. 
While the current focus is on the finite point process counterparts of the linear and log-linear pooling functions,
it would also be interesting to investigate the applicability of the other pooling functions 
considered in Sections~\ref{sec: probabilistic_opinion_pooling} and \ref{sec: pop_optim_appr}. In particular, the fact that the family of H\"older pooling functions offers fusion characteristics that are intermediate between those of the linear and log-linear pooling functions may be relevant to multitarget tracking.
Furthermore, it may be rewarding to reformulate and develop our results on supra-Bayesian pdf fusion in the context of finite point processes.

\item Big data problems allow a natural application of pdf fusion.
When the data to be processed are so large in size that they exceed the capacity of a single computer, it is logical to partition them and process the different parts separately.
Furthermore, data related to some quantity of interest may be available in heterogeneous form, so  that all of the data cannot be processed within a single framework and hence different parts have to be processed separately. 
In either case, the individual processing results can be represented as summaries, which then need to be fused into one overall summary. 
The concepts and techniques presented in this article provide suggestions regarding the construction and fusion of the summaries.
This is of particular interest in the context of modern machine learning methods \cite{neiswanger2014asymptotically,wang2013parallel,bardenet2017markov,qiu2016survey,vehtari2020expectation}.

\item  
Ensemble learning \cite{hastie2009friedman,sagi2018ensemble}, i.e., the combination of the results
of multiple
learning algorithms, is currently one of the most successful learning paradigms. 
At the same time, there is a growing demand for probabilistic machine learning methods that provide along with
a point estimate also a measure of reliability.
Until now, only few works have considered ensembles of probabilistic machine learning methods.
We conjecture that the 
success of the ensemble learning paradigm
will soon lead to its increased use
also in probabilistic machine learning. At that point, it is likely that probabilistic opinion pooling will outperform
the simple linear voting rules that are currently used to combine point estimates.

\item With a collaborative machine learning methodology known as federated learning,  a learning algorithm is trained across multiple decentralized edge devices or servers that hold local data, which are not exchanged  \cite{savazzi2020federated}. 
In other words, model parameters are learned collectively by many interconnected devices without sharing or disclosing local training data. 
The devices  send summaries instead of raw data to a server for fusion. 
Here, again, fusion plays a central role.
The fusion process can be challenging in the case of a large number of heterogeneous devices with different constraints.
Using pdfs to represent the local summaries enables the use of different pdf representations at the individual devices, from simple parametric models to complex kernel density estimates, which can still be combined in a meaningful way. 
Moreover, different levels of quality of the local data can be taken into account by using appropriate weights in the pooling function used for pdf fusion.

\item A potential theoretical basis of pdf fusion that has not been explored in this work is
information geometry, which studies probability theory and statistics  using tools from differential geometry \cite{tang2018information}. 
The focus of information geometry is on statistical manifolds whose points correspond to probability distributions. 
This theoretical framework can be exploited for fusion by assuming that local estimates are posterior pdfs  that correspond to a parametric family with the structure of a Riemannian manifold \cite{tang2018information}. 
One can then formulate pdf fusion, e.g., by considering the fused pdf to be an informative barycenter of the manifold \cite{kim2017wasserstein}.

\item Within the finite-dimensional supra-Bayesian setting,
an explicit fusion rule was obtained only for linear Gaussian observation models
(see Section~\ref{sec:lingaumeam}). 
This fusion rule can formally be used also for nonlinear/non-Gaussian models with known first and second moments. However, it is here unclear how close the obtained fusion result will be to the true supra-Bayesian fusion result.
A characterization of the error for approximately linear Gaussian observation models is an interesting topic for future research.
Another interesting topic is the derivation of explicit supra-Bayesian fusion rules for simple  nonlinear/non-Gaussian observation models.


\item Our supra-Bayesian framework is currently limited to a finite-dimensional setting. 
Although this is the setting most frequently encountered in practical applications, it would be
interesting to find a definition of a likelihood function for random pdfs that do not 
admit a finite-dimensional parameterization. For this, nonparametric Bayesian models \cite{ghosal2017fundamentals} appear to be a feasible starting point. The challenge is to 
model a useful and nontrivial dependence on the parameter $\btheta$ that accounts for the constraint that random pdfs must be nonnegative and integrate to one with probability one.

\end{itemize}

\section{Concluding Remarks}
\label{sec: conclusion}


The problem of fusing multiple pdfs $q_k(\btheta)$, $k = 1, \dots, K$ of a continuous random vector $\btheta$ into an aggregate pdf $q(\btheta) = g[q_1,\ldots,q_K](\btheta)$
has many possible solutions and, indeed, several different  approaches to this fusion problem have been developed in the past decades.
We have attempted to survey and study these approaches and the related  solutions
in a structured and coherent manner.
Our discussion has emphasized a first basic distinction between the axiomatic approach, the optimization approach,
and the conceptually more complex supra-Bayesian framework. 

Regarding the axiomatic approach, we formulated a set of axioms and determined the axioms satisfied by each considered 
pooling function. This analysis demonstrated the prominent role of the linear, log-linear, and multiplicative pooling functions within the axiomatic framework.
However, it also revealed that several desirable axioms are effectively incompatible and postulating those simultaneously implies a dictatorship pooling function.

Regarding the optimization approach, besides other results, we proved that the minimization of the weighted sum of $\alpha$-divergences 
yields the family of H\"{o}lder mean pooling functions. 
This 
family 
contains the two most popular pooling functions---the linear and log-linear pooling functions---as special cases.
Moreover, it offers an infinite number of further interesting pooling functions with different multimodality and tail decay characteristics depending on the choice of a single parameter.


The supra-Bayesian framework is different from the classical probabilistic opinion pooling framework in that 
the pdfs $q_k(\btheta)$
are modeled as random observations, and 
additional information regarding the statistical structure of $\btheta$ is available to the fusion center. 
In this framework,
the optimal aggregate pdf $q(\btheta)$ is the global posterior pdf of $\btheta$ given the pdfs $q_k(\btheta)$. 
Since random functions 
are difficult to work with, we introduced the finite-dimensional 
supra-Bayesian model based on
random ``local statistics.''
Using this framework, 
we formulated a general procedure for obtaining the supra-Bayesian posterior pdf conditioned on all the local statistics, and we derived explicit fusion rules for special cases.

While the theory of pdf fusion appears mature, interesting directions of future work are related to implementation and application aspects. 
We provided some suggestions including implementations using parametric representations, integration into probabilistic methods for multisensor signal processing and machine learning, and extensions to decentralized scenarios and point processes.

\section*{Acknowledgments}
\label{sec: ack}

We would like to thank the anonymous reviewers for insightful comments and constructive criticism, which have resulted in a significant improvement of this article. 
In particular, we are thankful for a comment on a conceptual relationship between pdf fusion and robust hypothesis testing as mentioned in Section \ref{sec: pop_optim_appr}.
We are also grateful to Mr.\ Thomas Kropfreiter for sharing his expertise in target tracking.

\appendices

\section{Proof of Theorem~\ref{th:linearpooling}}
\label{app:proof_linearpooling}

\subsection{Axioms Satisfied by the Linear Pooling Function}
We first show that all the mentioned axioms are satisfied by the linear pooling function.
Let $g[q_1, \dots, q_K](\btheta) = \sum_{k=1}^K w_k q_k(\btheta)$ with $(w_1, \dots, w_K)\in \mathcal{S}_K$.
We first show the ZPP (A\ref{ax:ZPP}).
Assume that for some event $\mA$, we have $Q_k(\mA)=0$ for all $k=1, \dots, K$. 
Because $Q_k(\mA) = \int_{\mA} q_k(\btheta) \, \mathrm{d}\btheta$ and $q_k(\btheta)$ is nonnegative, this implies
$q_k(\btheta)=0$ for almost all $\btheta \in \mA$ and all $k=1, \dots, K$.
Thus,
\begin{equation}
\notag 
   q(\btheta) = g[q_1, \dots, q_K](\btheta) = \sum_{k=1}^K w_k q_k(\btheta) = 0\,,
\end{equation}
for almost all $\btheta \in \mA$. 
Hence, $Q(\mA)= \int_{\mA} q(\btheta) \, \mathrm{d}\btheta=0$, which concludes the proof of the ZPP.

We next show unanimity preservation (A\ref{ax:UP}). 
To this end, assume that $q_k(\btheta) = q_0(\btheta)$ for all $k= 1, \dots, K$.
Then 
\begin{align*}
   q(\btheta) 
   & = g[q_1, \dots, q_K](\btheta) \\
   & = \sum_{k=1}^K w_k q_k(\btheta) \\
   & = q_0(\btheta)\sum_{k=1}^K w_k \\
   & = q_0(\btheta)\,,
\end{align*}
which shows unanimity preservation.

To show the SSFP  (A\ref{ax:SSFP}), we define 
$h \colon [0,1]^K \to [0,1]$ as
\begin{equation}
\label{eq:proofssfplin}
    h(p_1, \dots, p_K) \triangleq \sum_{k=1}^K w_k p_k\,.
\end{equation}
For an arbitrary set $\mA \subseteq \Theta$ and any opinion profile $(q_1, \dots, q_K)$, we have that
\begin{align*}
    Q(\mA) 
    & = \int_{\mA} q(\btheta) \, \mathrm{d}\btheta \\
    & = \int_{\mA} \sum_{k=1}^K w_k q_k(\btheta) \, \mathrm{d}\btheta \\
    & =  \sum_{k=1}^K w_k \int_{\mA} q_k(\btheta) \, \mathrm{d}\btheta \\
    & = \sum_{k=1}^K w_k Q_k(\mA) \\
    & = h(Q_1(\mA), \dots, Q_K(\mA))\,,
\end{align*}
i.e., $h$ satisfies the condition stated in A\ref{ax:SSFP}.

The WSFP (A\ref{ax:WSFP}) follows by setting $h_{\mA}=h$  with $h$ given in \eqref{eq:proofssfplin}.
The likelihood principles (A\ref{ax:LP} and A\ref{ax:WLP}) are obviously satisfied with
\begin{equation*}
    h(t_1, \dots, t_K)  \triangleq \sum_{k=1}^K w_k t_k
\end{equation*}
and $h_{\btheta}=h$, respectively, as is the symmetry statement. 

\subsection{Equivalence Statement}
We now prove the other direction, namely, that each of the assumptions  \ref{en:gssfp}--\ref{en:gwsfpup}  stated in Theorem~\ref{th:linearpooling} implies that $g$ is a linear pooling function.
More specifically, we will show the chain of implications \ref{en:gwsfpup} $\Rightarrow$ \ref{en:gwsfpzpp} $\Rightarrow$ \ref{en:gssfp} $\Rightarrow$ \ref{en:glinpool}. 
Because we already showed \ref{en:glinpool} $\Rightarrow$ \ref{en:gwsfpup}, this implies that \ref{en:glinpool}--\ref{en:gwsfpup} are equivalent, and thus concludes the proof.

\subsubsection{\ref{en:gwsfpup} implies \ref{en:gwsfpzpp}}
We assume that  $g$ satisfies \ref{en:gwsfpup}, i.e., the WSFP (A\ref{ax:WSFP}) and unanimity preservation (A\ref{ax:UP}).
We will show that this implies  that  $g$ satisfies the  ZPP (A\ref{ax:ZPP}), i.e., \ref{en:gwsfpup} implies \ref{en:gwsfpzpp}.
Let $h_{\mA}\colon [0,1]^K\rightarrow [0,1]$ denote the function satisfying \eqref{eq:wsft} for all opinion profiles.
For any set $\mA$ that satisfies $\lvert \mA^c \rvert>0$, let us choose $q_k(\btheta) = \mathbbm{1}_{\mA^c}(\btheta)/\lvert \mA^c \rvert$ for all $k=1, \dots, K$.
Then $Q_k(\mA) = \int_{\mA} q_k(\btheta) \mathrm{d}\btheta = 0$ for all $k$.
By unanimity preservation, this implies $Q(\mA)=0$.
On the other hand, we have
\begin{equation*}
    Q(\mA) \stackrel{\eqref{eq:wsft}}= h_{\mA}(Q_1(\mA), \dots, Q_K(\mA)) = h_{\mA}(0, \dots, 0)
\end{equation*}
and hence 
\begin{equation}
\label{eq:haiszero}
    h_{\mA}(0, \dots, 0) = 0
\end{equation}
for any set $\mA$ such that $\lvert \mA^c \rvert>0$.

To show the ZPP, assume that for a given opinion profile $(q_1, \dots, q_K)$, we have $Q_k(\mA)=0$ for all $k=1, \dots, K$.
Note that this is only possible if $\lvert \mA^c \rvert>0$ as otherwise $Q_k(\Theta) = Q_k(\mA)+ Q_k(\mA^c) = 0$.
Thus, we can calculate $Q(\mA)$ as 
\begin{equation*}
    Q(\mA) \stackrel{\eqref{eq:wsft}}= h_{\mA}(Q_1(\mA), \dots, Q_K(\mA)) = h_{\mA}(0, \dots, 0) 
    \stackrel{\eqref{eq:haiszero}}= 0\,,
\end{equation*}
which shows that the  ZPP (A\ref{ax:ZPP}) is satisfied. 

\subsubsection{\ref{en:gwsfpzpp} implies \ref{en:gssfp}}
Next we show that  \ref{en:gwsfpzpp}, i.e., the WSFP (A\ref{ax:WSFP}) and the ZPP (A\ref{ax:ZPP}), implies \ref{en:gssfp}, i.e., the SSFP (A\ref{ax:SSFP}). 
Let again $h_{\mA}\colon [0,1]^K\rightarrow [0,1]$ denote the function satisfying \eqref{eq:wsft} for all opinion profiles.
Our proof consists of three steps:
\begin{enumerate}
    \item \label{en:prlinpit1}Show that for two nontrivial events $\mA$ and $\mB$  (i.e., $\lvert \mA\rvert, \lvert \mA^c \rvert, \lvert \mB\rvert, \lvert \mB^c \rvert>0$) that have a nontrivial intersection and a nontrivial union, we have $h_{\mA} = h_{\mB}$.
    \item \label{en:prlinpit2}Show that for any  nontrivial events $\mA$ and $\mB$, there exists a nontrivial event $\mC$ such that $\mA\cap \mC$, $\mA\cup \mC$, $\mB\cap \mC$, and $\mB\cup \mC$ are all nontrivial. This implies by step~\ref{en:prlinpit1} that $h_{\mA} = h_{\mC}$ and $h_{\mB} = h_{\mC}$, and thus $h_{\mA} = h_{\mB}$.
    Thus, setting  $h \triangleq h_{\mA}$, we have $h_{\mA'} = h$  for all nontrivial events $\mA'$, and hence the same function $h$ satisfies  \eqref{eq: global_function_h} for all nontrivial events.
    \item \label{en:prlinpit3}Show that the function $h$ satisfies \eqref{eq: global_function_h} also for trivial events.
\end{enumerate}

To show step \ref{en:prlinpit1}, we consider two nontrivial events $\mA$ and $\mB$  that have a nontrivial intersection, in particular, $\lvert \mA\cap \mB \rvert >0$, and a nontrivial union, in particular, $\lvert (\mA\cup \mB)^c \rvert >0$.
We fix arbitrary $(p_1, \dots, p_K)\in [0,1]^K$ and will show that $h_{\mA}(p_1, \dots, p_K)=h_{\mB}(p_1, \dots, p_K)$.
Because $\lvert \mA\cap \mB \rvert >0$ and $\lvert (\mA\cup \mB)^c \rvert >0$,  there exists an opinion profile  $(q_1, \dots, q_K)$ such that 
\begin{equation}
\label{eq:qkacapb}
    Q_k(\mA\cap \mB) = p_k
\end{equation} 
and 
\begin{equation}
\label{eq:qkaxupbc}
    Q_k\big((\mA\cup \mB)^c\big) = 1-p_k\,,
\end{equation} 
for all $k=1, \dots, K$.
Because $\Theta = (\mA\cup \mB) \cup  (\mA\cup \mB)^c$ is a disjoint union and $Q_k(\Theta)=1$, \eqref{eq:qkaxupbc}  implies  
$Q_k(\mA\cup \mB) = p_k$.
Hence, as also $Q_k(\mA\cap \mB) = p_k$ by \eqref{eq:qkacapb}, the difference set $(\mA\cup \mB) \setminus (\mA\cap \mB)$ satisfies  
\begin{equation}
\label{eq:qkpfacbmacb}
    Q_k\big((\mA\cup \mB) \setminus (\mA\cap \mB)\big) = 0\,.
\end{equation}
Because $(\mA\cup \mB) \setminus (\mA\cap \mB) = (\mA\setminus \mB) \cup (\mB\setminus \mA)$, this implies $Q_k(\mA\setminus \mB) = 0$ and $Q_k(\mB\setminus \mA) = 0$.
Thus,
\begin{align}
    Q_k(\mA)
    & = Q_k\big((\mA\setminus \mB) \cup (\mA\cap \mB)\big) 
    \notag \\
    & = 
    Q_k(\mA\setminus \mB) + Q_k(\mA\cap \mB)
    \notag \\
    & = 
    p_k
    \label{eq:qkaispk}
\end{align} 
and, similarly,
\begin{equation}
    Q_k(\mB) = p_k\,.
    \label{eq:qkbispk}
\end{equation} 
Furthermore, 
\begin{align}
    Q_k\big((\mA\setminus \mB) \cup (\mB\setminus \mA)\big) 
    & =
    Q_k\big((\mA\cup \mB) \setminus (\mA\cap \mB)\big) 
    \, \stackrel{\hidewidth \eqref{eq:qkpfacbmacb} \hidewidth}= 
    \; 0\,,
    \label{eq:qkambubmaeq0}
\end{align}
for all $k= 1, \dots, K$.
By the ZPP, \eqref{eq:qkambubmaeq0} implies $Q\big((\mA\setminus \mB) \cup (\mB\setminus \mA)\big)=0$ and, in turn, $Q\big((\mA\setminus \mB)\big)=Q\big( (\mB\setminus \mA)\big)=0$.
Thus,
\begin{align}
    Q(\mA)
    & = 
    Q(\mA\setminus \mB) + Q(\mA\cap \mB)
    \notag \\
    & = 
    Q(\mA\cap \mB) 
    \notag \\
    & = 
    Q(\mB\setminus \mA) + Q(\mA\cap \mB)
    \notag \\
    & = 
    Q(\mB)\,.
    \label{eq:qaeqqc}
\end{align} 
For the functions $h_{\mA}$ and $h_{\mB}$, these properties imply
\begin{align*}
    h_{\mA}(p_1, \dots, p_K) 
    & \stackrel{\hidewidth \eqref{eq:qkaispk}\hidewidth} = \, h_{\mA}(Q_1(\mA), \dots, Q_K(\mA)) 
    \\
    & \stackrel{\hidewidth \eqref{eq:wsft}\hidewidth}= \,Q(\mA) 
    \\
    & \stackrel{\hidewidth \eqref{eq:qaeqqc}\hidewidth}= \, Q(\mB) 
    \\
    & \stackrel{\hidewidth \eqref{eq:wsft}\hidewidth}= \,h_{\mB}(Q_1(\mB), \dots, Q_K(\mB)) 
    \\
    & \stackrel{\hidewidth \eqref{eq:qkbispk}\hidewidth} = \, h_{\mB}(p_1, \dots, p_K)\,,
\end{align*}
i.e.,   
\begin{equation}
\label{eq:haeqhb}
    h_{\mA}(p_1, \dots, p_K)  = h_{\mB}(p_1, \dots, p_K)
\end{equation}
for any nontrivial events $\mA,\mB\subseteq \Theta$ that have a nontrivial intersection and a nontrivial union.

To show step \ref{en:prlinpit2},
we first construct a set 
$\mC\subseteq \mA\cup \mB$ such that $\mA\cap \mC$, $\mB\cap \mC$,  $\mA\cup \mC$, and $\mB\cup \mC$ are nontrivial.
If $\lvert \mA\cap \mB \rvert > 0$, then $\mC = \mA\cap \mB$ can easily be seen to satisfy these assumptions.
If $\lvert \mA\cap \mB \rvert = 0$, we choose 
$\mC=\mC_{\mA} \cup \mC_{\mB}$ where $\mC_{\mA}\subseteq \mA$ with $\lvert \mC_{\mA}\rvert,  \lvert \mA\setminus \mC_{\mA}\rvert>0$ and $\mC_{\mB}\subseteq \mB$ with $\lvert \mC_{\mB}\rvert,  \lvert \mB\setminus \mC_{\mB}\rvert>0$.
The separations $\mA = \mC_{\mA} \cup (\mA\setminus \mC_{\mA})$ and $\mB = \mC_{\mB} \cup (\mB\setminus \mC_{\mB})$ are possible because the Lebesgue measure is nonatomic, i.e., any set of positive Lebesgue measure can be separated into two disjoint sets of positive  Lebesgue measure.

We now choose 
\begin{equation}
\label{eq:choiceh}
    h(p_1, \dots, p_K) = h_{\mA}(p_1, \dots, p_K)
\end{equation}
for any nontrivial set $\mA$.
Then, for any nontrivial set $\mB\subseteq \Theta$, we construct $\mC$ as above and obtain
\begin{align}
    h(Q_1(\mB), \dots, Q_K(\mB)) \,
    & \stackrel{\hidewidth \eqref{eq:choiceh} \hidewidth}= \,h_{\mA}(Q_1(\mB), \dots, Q_K(\mB))
    \notag \\
    & \stackrel{\hidewidth \eqref{eq:haeqhb} \hidewidth}= \,h_{\mC}(Q_1(\mB), \dots, Q_K(\mB))
    \notag \\
    & \stackrel{\hidewidth \eqref{eq:haeqhb} \hidewidth}= \,h_{\mB}(Q_1(\mB), \dots, Q_K(\mB))
    \notag \\
    & \stackrel{\hidewidth \eqref{eq:wsft} \hidewidth}= \,Q(\mB)\,,
    \label{eq:hsatsssfp}
\end{align}
i.e., \eqref{eq: global_function_h} is satisfied for any nontrivial set $\mB$.

It remains to show step \ref{en:prlinpit3}, i.e.,  that with this choice of $h$, \eqref{eq: global_function_h} is also satisfied by trivial sets.
For trivial sets $\mA$, i.e., such that $\lvert \mA\rvert=0$ or $\lvert \mA^c \rvert=0$, we have $Q_k(\mA)=0$ or  $Q_k(\mA)=1$ for all $k=1, \dots, K$, respectively.
Also the fused result must satisfy $Q(\mA)=0$ or  $Q(\mA)=1$, respectively.
Thus, we have to show $h(0, \dots, 0) = 0$ and $h(1, \dots, 1) = 1$ for our choice of $h$ in \eqref{eq:choiceh}.
To this end, let $\mB\subseteq \Theta$ be any nontrivial set and choose an opinion profile $(q_1, \dots, q_K)$ such that $Q_k(\mB)=0$ for all $k=1, \dots, K$.
Then the ZPP implies $Q(\mB)=0$.
On the other hand, since $\mB$ is a nontrivial set and thus \eqref{eq:hsatsssfp} is satisfied, we have 
\begin{equation*}
    h(0, \dots, 0)
    = h(Q_1(\mB), \dots, Q_K(\mB))
    \stackrel{\eqref{eq:hsatsssfp}}= 
    Q(\mB)\,.
\end{equation*}
Thus, $h(0, \dots, 0)=0$.
Furthermore, $Q_k(\mB)=0$ and $Q(\mB)=0$ imply $Q_k(\mB^c)=1$ and $Q(\mB^c)=1$, respectively.
Hence, 
\begin{equation*}
    h(1, \dots, 1)
    = h(Q_1(\mB^c), \dots, Q_K(\mB^c))
    \stackrel{\eqref{eq:hsatsssfp}}= Q(\mB^c)\,.
\end{equation*}
Thus, $h(1, \dots, 1) = 1$.
Hence, we identified a function $h$ such that \eqref{eq: global_function_h} holds for all sets $\mA \subseteq \Theta$.
This concludes
the proof that  \ref{en:gwsfpzpp} implies \ref{en:gssfp}.

\subsubsection{\ref{en:gssfp} implies \ref{en:glinpool}}
Finally, we show that \ref{en:gssfp}, i.e., the SSFP (A\ref{ax:SSFP}) implies \ref{en:glinpool}, i.e., that $g$ is a linear pooling function.
Let $h$ denote the function satisfying \eqref{eq: global_function_h}.
Furthermore, let $\mA,\mB,\mC\subseteq \Theta$ be disjoint events of positive Lebesgue measure.
For arbitrary $p_1, \tilde{p}_1, \dots, p_K, \tilde{p}_K \in [0,1]$ satisfying $p_k+\tilde{p}_k\leq 1$ for all $k=1, \dots, K$, we define an opinion profile $(q_1, \dots, q_K)$ such that 
$Q_k(\mA)= p_k$, $Q_k(\mB) = \tilde{p}_k$, and $Q_k(\mC) = 1-p_k-\tilde{p}_k$.
Because $\mA$ and $\mB$ are disjoint, $Q_k(\mA\cup \mB)= p_k + \tilde{p}_k$ and $Q(\mA\cup \mB) = Q(\mA) + Q(\mB)$.
Thus,
\begin{align}
    & h(p_1+\tilde{p}_1, \dots, p_K+\tilde{p}_K) \notag \\*
    & \quad = h(Q_1(\mA\cup \mB), \dots, Q_K(\mA\cup \mB)) \notag \\
    & \quad \stackrel{\hidewidth \eqref{eq: global_function_h} \hidewidth }= Q(\mA\cup \mB) \notag \\
    & \quad = Q(\mA) + Q(\mB) \notag \\
    & \quad \stackrel{\hidewidth \eqref{eq: global_function_h} \hidewidth }= h(Q_1(\mA), \dots, Q_K(\mA)) + h(Q_1(\mB), \dots, Q_K(\mB)) \notag \\
    & \quad = h(p_1, \dots, p_K) + h(\tilde{p}_1, \dots, \tilde{p}_K)\,,\notag 
\end{align}
i.e., $h$ is an additive function on its domain $[0,1]^K$.
It can moreover be extended to an additive function on $\mathbb{R}^K$.
Because $h$ is also bounded by $1$ on $[0,1]^K$, it must be linear according to \cite[Th.~1, p.~215]{aczel1966lectures},
i.e., 
\begin{equation}
\label{eq:linpoolprobs}
    h(p_1, \dots, p_K) = \sum_{k=1}^K w_k p_k \,.
\end{equation}
Here, the weights $w_k$ must be in $[0,1]$ because $h(p_1, \dots, p_K) \in [0,1]$ for all $(p_1, \dots, p_K) \in  [0,1]^K$.
Furthermore, because 
$1=Q(\Theta) = h(Q_1(\Theta), \dots, Q_K(\Theta)) = h(1, \dots, 1) = \sum_{k=1}^K w_k$, the weights must sum to one.  
We thus have for any event $\mA \subseteq \Theta$
\begin{align*}
    \int_{\mA} q(\btheta) \, \mathrm{d}\btheta 
    & = 
    Q(\mA)
    \notag \\
    & \stackrel{\hidewidth \eqref{eq: global_function_h} \hidewidth}= h(Q_1(\mA), \dots, Q_K(\mA))
    \notag \\
    & \stackrel{\hidewidth \eqref{eq:linpoolprobs} \hidewidth}= \;\sum_{k=1}^K w_k Q_k(\mA)
    \notag \\
    & = \int_{\mA} \sum_{k=1}^K w_k q_k(\btheta) \, \mathrm{d}\btheta\,,
\end{align*}
which implies $q(\btheta) = \sum_{k=1}^K w_k q_k(\btheta)$.

\section{Proof of Theorem~\ref{th:genlinpool}}
\label{app:proof_genlinpool}

\subsection{Axioms Satisfied by the Generalized Linear Pooling Function}
We first show that all the mentioned axioms are satisfied by the generalized linear pooling function.
Let $g[q_1, \dots, q_K](\btheta) = \sum_{k=0}^K w_k q_k(\btheta)$ with $(w_0, \dots, w_K)\in \mathcal{S}_{K+1}$.
To show the WSFP (A\ref{ax:WSFP}), we define for an event $\mA \subseteq \Theta$
\begin{equation}
\label{eq:proofwsfpgenlin}
    h_{\mA}(p_1, \dots, p_K) = w_0 \int_{\mA} q_0(\btheta)\,\mathrm{d}\btheta + \sum_{k=1}^K w_k p_k\,,
\end{equation}
for all $(p_1, \dots, p_K)\in [0,1]^K$.
For any opinion profile $(q_1, \dots, q_K)$, we then have that
\begin{align*}
    Q(\mA) 
    & = \int_{\mA} q(\btheta) \, \mathrm{d}\btheta \\
    & = \int_{\mA} \sum_{k=0}^K w_k q_k(\btheta) \, \mathrm{d}\btheta \\
    & =  \sum_{k=0}^K w_k \int_{\mA} q_k(\btheta) \, \mathrm{d}\btheta \\
    & = w_0 \int_{\mA} q_0(\btheta) \, \mathrm{d}\btheta + \sum_{k=1}^K w_k Q_k(\mA) \\
    & \stackrel{\hidewidth\eqref{eq:proofwsfpgenlin}\hidewidth}= \,h_{\mA}(Q_1(\mA), \dots, Q_K(\mA))\,,
\end{align*}
i.e., $h_{\mA}$ satisfies the condition stated in A\ref{ax:WSFP}.
The weak likelihood principle (A\ref{ax:WLP}) is obviously satisfied
with 
\begin{equation*}
    h_{\btheta}(t_1, \dots, t_K) = w_0 q_0(\btheta) + \sum_{k=1}^K w_k t_k\,,
\end{equation*}
as is the symmetry statement in Theorem~\ref{th:genlinpool}. 

\subsection{Converse Statement}
We now prove the converse statement in Theorem~\ref{th:genlinpool}, i.e.,  that any pooling function $g$ that satisfies the WSFP (A\ref{ax:WSFP}) is a generalized linear pooling function.
For each event $\mA \subseteq \Theta$, let $h_{\mA}\colon [0,1]^K\rightarrow [0,1]$ denote the function satisfying \eqref{eq:wsft} for all opinion profiles.
Our proof consists of three steps:
First, we construct the pdf $q_0$ and the corresponding weight $w_0\leq 1$.
In the second step, we show that by adapting each function $h_{\mA}$ to $\tilde{h}_{\mA}= \big(h_{\mA}-w_0 \int_{\mA} q_0(\btheta)\, \mathrm{d}\btheta\big)/(1-w_0)$, we obtain a linear pooling function.
Finally, we show that this implies that $g$ is a generalized linear pooling function. 

\subsubsection*{Step 1: Construct $q_0$ and $w_0$}
We define 
\begin{equation}
\label{eq:defq0}
    Q_0(\mA)= h_{\mA}(0, \dots, 0)
\end{equation} 
for all nontrivial (i.e., $\lvert \mA\rvert, \lvert \mA^c \rvert >0$) events $\mA \subseteq \Theta$.
The pdf $q_0$ will be a weighted version of a density associated with $Q_0$.
Thus, we first show that $Q_0$ can be expressed as an integral
$Q_0(\mA) = \int_{\mA} \tilde{q}_{0}(\btheta) \, \mathrm{d}\btheta$.

Let $\mA_0$ be a fixed nontrivial event. 
Because $\lvert \mA_0^c \rvert >0$,
there exists an opinion profile $(q^{(\mA_0)}_1, \dots, q^{(\mA_0)}_K)$ such that $Q^{(\mA_0)}_k(\mA_0)= \int_{\mA_0} q^{(\mA_0)}_k(\btheta)\,\mathrm{d}\btheta = 0$ for all $k=1, \dots, K$.
We denote the fused pdf of this particular profile as $q^{(\mA_0)}(\btheta)$ and the resulting probability measure as 
\begin{equation}
\label{eq:defqa0}
    Q^{(\mA_0)}(\mA) = \int_{\mA_0} q^{(\mA_0)}(\btheta)\,\mathrm{d}\btheta\,.
\end{equation}
Then, 
\begin{equation*}
    Q^{(\mA_0)}(\mA)  \stackrel{\eqref{eq:wsft}}= h_{\mA}(Q^{(\mA_0)}_1(\mA), \dots, Q^{(\mA_0)}_K(\mA))\,.
\end{equation*}
In particular, for any event $\mA \subseteq \mA_0$, we have $Q^{(\mA_0)}_k(\mA)=0$ for all $k=1, \dots, K$ (because $Q^{(\mA_0)}_k(\mA)\leq Q^{(\mA_0)}_k(\mA_0)=0$), and thus we obtain further
\begin{align*}
    Q^{(\mA_0)}(\mA)
    & = h_{\mA}(0, \dots, 0) 
    \, \stackrel{\hidewidth\eqref{eq:defq0}\hidewidth}=\, Q_0(\mA)\,.
\end{align*}
Recalling \eqref{eq:defqa0}, we conclude that 
the fused pdf $q^{(\mA_0)}(\btheta)$ satisfies
\begin{equation}
\label{eq:q0a0}
    Q_0(\mA) = \int_{\mA} q^{(\mA_0)}(\btheta)\,\mathrm{d}\btheta \quad \text{for any event $\mA \subseteq \mA_0$.}
\end{equation}
Following the same steps with  $\mA_0$ replaced by  $\mA_0^c$, we obtain a pdf $q^{(\mA_0^c)}(\btheta)$ such that we have
\begin{equation}
\label{eq:q0a0c}
    Q_0(\mA) = \int_{\mA} q^{(\mA_0^c)}(\btheta)\,\mathrm{d}\btheta \quad \text{for any event $\mA \subseteq \mA_0^c$.}
\end{equation}
Now for an arbitrary nontrivial event $\mB\subseteq \Theta$, there exists an opinion profile $(q^{(\mB)}_1, \dots, q^{(\mB)}_K)$ such that $Q^{(\mB)}_k(\mB)= \int_{\mB} q^{(\mB)}_k(\btheta)\,\mathrm{d}\btheta = 0$ for all $k=1, \dots, K$.
Again we denote the fused probability measure as $Q^{(\mB)}$.
We thus obtain for  $Q_0(\mB)$ as defined by \eqref{eq:defq0}
\begin{align}
    Q_0(\mB) \,
    & = \, h_{\mB}(0, \dots, 0) 
    \notag \\
    & = h_{\mB}(Q^{(\mB)}_1(\mB), \dots, Q^{(\mB)}_K(\mB))
    \notag \\
    & \stackrel{\hidewidth\eqref{eq:wsft}\hidewidth}= 
    Q^{(\mB)}(\mB)\,.
    \label{eq:q0beqqb}
\end{align}
Because $\mB$ can be decomposed into disjoint subsets according to $\mB= (\mB\cap \mA_0) \cup (\mB\cap \mA_0^c)$, we further obtain from \eqref{eq:q0beqqb}
\begin{align*}
    Q_0(\mB)
    & =
    Q^{(\mB)}(\mB\cap \mA_0) + Q^{(\mB)}(\mB\cap \mA_0^c) \notag \\
    & \stackrel{\hidewidth\eqref{eq:wsft}\hidewidth}= 
    h_{\mB\cap \mA_0}(Q^{(\mB)}_1(\mB\cap \mA_0), \dots, Q^{(\mB)}_K(\mB\cap \mA_0)) 
    \notag \\* 
    & \quad + h_{\mB\cap \mA_0^c}(Q^{(\mB)}_1(\mB\cap \mA_0^c), \dots, Q^{(\mB)}_K(\mB\cap \mA_0^c))
    \notag \\
    & \stackrel{\hidewidth(a)\hidewidth}= h_{\mB\cap \mA_0}(0, \dots, 0) + h_{\mB\cap \mA_0^c}(0, \dots, 0)
    \notag \\
    & \stackrel{\hidewidth\eqref{eq:defq0}\hidewidth}= \, 
    Q_0(\mB\cap \mA_0) + Q_0(\mB\cap \mA_0^c)\,,
\end{align*}
where we used $Q^{(\mB)}_k(\mB)=0$ in $(a)$.
Using \eqref{eq:q0a0} with $\mA=\mB\cap \mA_0$ and \eqref{eq:q0a0c} with $\mA=\mB\cap \mA_0^c$, this implies 
\begin{align*}
    Q_0(\mB)
    & = \int_{\mB\cap \mA_0} q^{(\mA_0)}(\btheta) \,\mathrm{d}\btheta + \int_{\mB\cap \mA_0^c} q^{(\mA_0^c)}(\btheta) \,\mathrm{d}\btheta \notag \\
    & = \int_{\mB} \tilde{q}_{0}(\btheta) \,\mathrm{d}\btheta\,,
\end{align*}
where we defined 
\begin{equation*}
    \tilde{q}_{0}(\btheta)
    \triangleq
    \begin{cases}
        q^{(\mA_0)}(\btheta) & \text{ if $\btheta\in \mA_0$} \\
        q^{(\mA_0^c)}(\btheta) & \text{ if $\btheta\in \mA_0^c$.}
    \end{cases}
\end{equation*}
We thus found an integral representation for $Q_0$ and can define 
\begin{equation}
\label{eq:densQ0isq0tilde}
    Q_0(\mB) = \int_{\mB} \tilde{q}_{0}(\btheta) \,\mathrm{d}\btheta
\end{equation} 
also for trivial events $\mB$.
The nonnegativity of $q^{(\mA_0)}(\btheta)$ and $q^{(\mA_0^c)}(\btheta)$ implies that $\tilde{q}_{0}(\btheta)$ is nonnegative and, in turn, that $Q_0$ is a measure.
However, $\tilde{q}_{0}(\btheta)$ is not a pdf in general.

We define 
\begin{equation}
\label{eq:defw0inappb}
    w_0 \triangleq Q_0(\Theta)
\end{equation} 
(note that this implies $w_0\geq 0$) and 
\begin{equation}
\label{eq:q0tilde}
    q_{0}(\btheta) \triangleq \frac{\tilde{q}_0(\btheta)}{w_0},
\end{equation}
provided $w_0 \neq 0$. 
If $w_0= 0$, we choose $q_{0}(\btheta)$ as an arbitrary pdf.
We claim that   $w_0\leq 1$.
To prove this claim, let $\mB_n$ be a sequence of nontrivial events such that $\mB_n\subseteq \mB_{n+1}$ and $\lim_{n\to \infty} \mB_n= \Theta$.
For each $\mB_n$, there exists an opinion profile $(q^{(\mB_n)}_1, \dots, q^{(\mB_n)}_K)$ such that $Q^{(\mB_n)}_k(\mB_n)= \int_{\mB_n} q^{(\mB_n)}_k(\btheta)\,\mathrm{d}\btheta=0$ for all $k=1, \dots, K$.
Again we denote the sequence of fused probability measures as $Q^{(\mB_n)}$.
Following the steps in \eqref{eq:q0beqqb}, we have that 
\begin{equation*}
    Q_0(\mB_n)= Q^{(\mB_n)}(\mB_n) \leq 1\,,
\end{equation*} 
because $Q^{(\mB_n)}$ is a probability measure. 
The continuity from below of measures \cite[Lem.~3.4]{bartle1995elements} implies $w_0 = Q_0(\Theta) = \lim_{n\to\infty} Q_0(\mB_n) \leq 1$.

A similar argument can be employed to show (for later use) that for any nontrivial event $\mA \subseteq \Theta$ and arbitrary probabilities $p_k$
\begin{align}
    h_{\mA}(p_1, \dots, p_K)
    & \geq Q_0(\mA)\,.
    \label{eq:haboundha0}
\end{align}
Indeed, for any nontrivial event $\mA \subseteq \Theta$, let $\mB_n\subseteq \mA$ be a sequence satisfying $\lvert \mA\setminus \mB_n\rvert > 0$, $\mB_n\subseteq \mB_{n+1}$ for all $n\in \mathbb{N}$, and $\lim_{n\to \infty} \mB_n= \mA$.
Then for each $n\in \mathbb{N}$ there exists an opinion profile $(q^{(\mB_n)}_1, \dots, q^{(\mB_n)}_K)$ satisfying $Q^{(\mB_n)}_k(\mB_n)=0$, $Q^{(\mB_n)}_k(\mA\setminus \mB_n)=p_k$, and, in turn, $Q^{(\mB_n)}_k(\mA)=p_k$.
Again we denote the sequence of fused probability measures as $Q^{(\mB_n)}$.
Following the steps in \eqref{eq:q0beqqb}, $Q_0(\mB_n)= Q^{(\mB_n)}(\mB_n)$.
Thus, we have
\begin{align}
    h_{\mA}(p_1, \dots, p_K)
    & =
    h_{\mA}(Q^{(\mB_n)}_1(\mA), \dots, Q^{(\mB_n)}_K(\mA)) \notag \\
    &
    \stackrel{\hidewidth\eqref{eq:wsft}\hidewidth}= Q^{(\mB_n)}(\mA)  \notag \\
    & \geq Q^{(\mB_n)}(\mB_n) \notag \\
    & = Q_0(\mB_n)\,.
    \label{eq:haboundhb0}
\end{align}
Here,  $h_{\mA}(p_1, \dots, p_K)$ does not depend on $n$. 
Hence, we can take the limit on the right-hand side of \eqref{eq:haboundhb0} and obtain
\begin{align*}
    h_{\mA}(p_1, \dots, p_K)
    & \geq \lim_{n\to\infty}Q_0(\mB_n)
     = Q_0(\mA)\,,
\end{align*}
using again the continuity from below of $Q_0$.

\subsubsection*{Step 2: Define $\tilde{h}_{\mA}$ and prove that it defines a linear pooling function}
We  define 
\begin{equation}
\label{eq:tildeh}
    \tilde{h}_{\mA}(p_1, \dots, p_K)
    \triangleq \frac{h_{\mA}(p_1, \dots, p_K)- Q_0(\mA)}{1 - w_0}\,.
\end{equation} 
Here, we have to assume that $w_0 < 1$.
Thus, we first show that $g$ is a generalized linear pooling function in the case $w_0 = 1$.
In this case, for any nontrivial event $\mA \subseteq \Theta$ and arbitrary probabilities $p_k$, 
we choose an opinion profile that satisfies $Q_k(\mA)=p_k$ and hence  $Q_k(\mA^c)=1-p_k$ for all $k=1, \dots, K$.
We then have
\begin{align}
    1 & =  Q(\mA)+Q(\mA^c)
    \notag \\
    & \stackrel{\hidewidth\eqref{eq:wsft}\hidewidth}= h_{\mA}(p_1, \dots, p_K)
    + h_{\mA^c}(1-p_1, \dots, 1-p_K)
    \notag \\
    & \stackrel{\hidewidth\eqref{eq:haboundha0}\hidewidth}\geq \, Q_0(\mA) + Q_0(\mA^c) 
    \notag \\
    & \stackrel{\hidewidth\eqref{eq:defw0inappb}\hidewidth}=  \, w_0
    \notag \\
    & = 1\,.
    \notag
\end{align}
Thus, the inequality in the third line is actually an equality, which is only possible if $h_{\mA}(p_1, \dots, p_K)=Q_0(\mA)$.
Because $\mA$ and the $p_k$ were chosen arbitrarily, we have $h_{\mA}(p_1, \dots, p_K) = Q_0(\mA)$ independently of the probabilities $p_k$.
By \eqref{eq:wsft},
this further implies for any opinion profile $(q_1, \dots, q_K)$ that the aggregate pdf $q$  satisfies 
\begin{align*}
    \int_{\mA} q(\btheta) \,\mathrm{d}\btheta 
    & = Q(\mA) 
    \\
    & \stackrel{\hidewidth\eqref{eq:wsft}\hidewidth}= h_{\mA}(Q_1(\mA), \dots, Q_K(\mA))
    \\
    & = Q_0(\mA)
    \\
    & \stackrel{\hidewidth\eqref{eq:densQ0isq0tilde}\hidewidth}= \int_{\mA} \tilde{q}_0(\btheta) \,\mathrm{d}\btheta
\end{align*}
for all events $\mA$. 
Hence, $q(\btheta) = \tilde{q}_0(\btheta)$,
which implies that $g$ is a dogmatic pooling function (which is a special case of a generalized linear pooling function with weights $w_0=1$, $w_k=0$ for $k=1, \dots, K$).
This concludes the proof for the special case $w_0=1$, and thus we can assume $w_0<1$ in what follows.

We define  a new fusion rule $\tilde{g}$ by
\begin{equation}
\label{eq:tildeq}
    \tilde{g}[q_1, \dots, q_K](\btheta) 
    \triangleq \frac{g[q_1, \dots, q_K](\btheta)-   w_0 q_0(\btheta)}{1 - w_0} 
\end{equation}
and claim that it satisfies the WSFP with the functions $\tilde{h}_{\mA}$ defined by \eqref{eq:tildeh}.
Indeed, we have for any opinion profile $(q_1, \dots, q_K)$ and any event $\mA\subseteq \Theta$ that
\begin{align}
    \tilde{Q}(\mA) 
    & = 
    \int_{\mA} \tilde{g}[q_1, \dots, q_K](\btheta) \,\mathrm{d}\btheta
    \notag \\
    & =
    \int_{\mA} \frac{g[q_1, \dots, q_K](\btheta)-   w_0 q_0(\btheta)}{1 - w_0} \,\mathrm{d}\btheta
    \notag \\
    & \stackrel{\hidewidth(a)\hidewidth}=
    \, \frac{h_{\mA}(Q_1(\mA), \dots, Q_K(\mA))-  \int_{\mA} \tilde{q}_0(\btheta)\,\mathrm{d}\btheta}{1 - w_0} 
    \notag \\
    & \stackrel{\hidewidth\eqref{eq:densQ0isq0tilde}\hidewidth}=
    \, \frac{h_{\mA}(Q_1(\mA), \dots, Q_K(\mA))-  Q_0(\mA)}{1 - w_0} 
    \notag \\
    & \stackrel{\hidewidth\eqref{eq:tildeh}\hidewidth}=\, 
    \, \tilde{h}_{\mA}(Q_1(\mA), \dots, Q_K(\mA))\,,
    \label{eq:Qtildeeqqtilde}
\end{align}
where we used in $(a)$ that, by \eqref{eq:wsft}, $h_{\mA}(Q_1(\mA), \allowbreak \dots, \allowbreak Q_K(\mA)) = Q(\mA) = \int_{\mA} g[q_1, \dots, q_K](\btheta)$ and, by \eqref{eq:q0tilde},  $w_0 q_0(\btheta) = \tilde{q}_0(\btheta)$.
Furthermore, we claim that $\tilde{g}$ satisfies the ZPP.
To prove this, let $(q_1, \dots, q_K)$ be an opinion profile and $\mA$ a nontrivial event such that $Q_k(\mA)=0$ for all $k=1, \dots, K$.
Because $h_{\mA}(0, \dots, 0)= Q_0(\mA)$,
\begin{align*}
    \tilde{Q}(\mA) \, 
    & \stackrel{\hidewidth\eqref{eq:Qtildeeqqtilde}\hidewidth}=
    \, \tilde{h}_{\mA}(Q_1(\mA), \dots, Q_K(\mA))
    \\
    & = 
    \, \tilde{h}_{\mA}(0, \dots, 0)
    \\
    & \stackrel{\hidewidth\eqref{eq:tildeh}\hidewidth}= 
    \, \frac{h_{\mA}(0, \dots, 0)- Q_0(\mA)}{1 - w_0}
    \\
    & = 
    \, \frac{Q_0(\mA) -  Q_0(\mA)}{1 - w_0} 
    \\
    & = 0\,,
\end{align*}
proving the ZPP.

Finally, to see that $\tilde{g}$ is a valid pooling function, we first show that for any  $(p_1, \dots, p_K)\in [0,1]^K$, the function $\tilde{h}_{\mA}(p_1, \dots, p_K)$ is nonnegative. 
This follows from  \eqref{eq:tildeh}, \eqref{eq:haboundha0}, and our assumption $w_0 < 1$.
Hence, the measure $\tilde{Q}$ is nonnegative and thus also the associated density $\tilde{g}[q_1, \dots, q_K](\btheta)$ must be nonnegative.
The fact that $\tilde{g}[q_1, \dots, q_K](\btheta)$ integrates to one follows directly from the definition \eqref{eq:tildeq} and the fact that $g[q_1, \dots, q_K](\btheta)$ and $q_0$ are pdfs.

Because $\tilde{g}$ is a pooling function that satisfies  the WSFP and  the ZPP, Theorem~\ref{th:linearpooling}
 implies that it is a linear pooling function, i.e., 
 \begin{equation}
 \label{eq:tildeq2}
     \tilde{g}[q_1, \dots, q_K](\btheta) = \sum_{k=1}^K w_k q_k(\btheta)\,,
 \end{equation}
with $(w_1, \dots, w_K)\in \mathcal{S}_K$.
 
\subsubsection*{Step 3: Conclude that $\tilde{g}$ is a generalized linear pooling function}
Combining \eqref{eq:tildeq} and \eqref{eq:tildeq2}, we obtain
\begin{equation*}
    \sum_{k=1}^K w_k q_k(\btheta) = \frac{g[q_1, \dots, q_K](\btheta) -   w_0 q_0(\btheta)}{1 - w_0}  
\end{equation*}
or, equivalently,
\begin{equation*}
   g[q_1, \dots, q_K](\btheta) = w_0 q_{0}(\btheta) + \sum_{k=1}^K (1 - w_0)w_k q_k(\btheta)\,.
\end{equation*}
From $\sum_{k=1}^K  w_k=1$, it follows that 
$w_0 + \sum_{k=1}^K (1 - w_0)w_k$ is one.
Thus, $g$ is a generalized linear pooling function.

\section{Proof of the Equivalence Statement in Theorem~\ref{th:mult_pooling}}
\label{app:proof_mult_pooling}
We only show that \ref{en:glib}, i.e., 
$g$ satisfies individualized Bayesianity (A\ref{ax:IB}) and $g[q_0, \dots, q_0](\btheta)= q_0(\btheta)$ for some pdf $q_0$, implies \ref{en:gmultpool}, i.e., $g$ is a multiplicative pooling function.
The other direction is obvious.

Thus, let us assume that $g[q_0, \dots, q_0](\btheta)= q_0(\btheta)$ for some pdf $q_0$.
We have to show that, for 
any opinion profile  $(q_1, \dots, q_K)$ such that $q_k/q_0$ is bounded for all $k=1, \dots, K$ (recall that we only consider those opinion profiles in the multiplicative pooling function), $q$ is of the form \eqref{eq: multiplicative_pooling}, i.e., 
\begin{align*}
    g[q_1, \dots, q_K](\btheta)
    & \propto  (q_0(\btheta))^{1-K} \prod_{k=1}^K q_k(\btheta)\,.
\end{align*}
To this end, we first note that $q_k = q_0^{(\ell_k)}$ (see \eqref{eq: discrete_bayesian_update}) with $\ell_k = q_k/q_0$ for  all $k=1, \dots, K$.
Thus, 
\begin{align*}
    g[q_1, \dots, q_K](\btheta)
    & = g\Big[q_0^{(q_1/q_0)}, \dots, q_0^{(q_K/q_0)}\Big](\btheta)\,.
\end{align*}
By iteratively using individualized Bayesianity \eqref{eq:indivbayes} with $\ell=q_k/q_0$ for each $k=1, \dots, K$, we obtain further
\begin{align*}
    g[q_1, \dots, q_K](\btheta)
    & \propto g\Big[q_0, q_0^{(q_2/q_0)}, \dots, q_0^{(q_K/q_0)}\Big](\btheta)  \frac{q_1(\btheta)}{q_0(\btheta)} \notag \\
    & \propto g[q_0, \dots, q_0](\btheta) \prod_{k=1}^K \frac{q_k(\btheta)}{q_0(\btheta)} \notag \\
    & =  (q_0(\btheta))^{1-K} \prod_{k=1}^K q_k(\btheta)\,,
\end{align*}
which is \eqref{eq: multiplicative_pooling} and thus concludes the proof.

\section{Partial Proof of Theorem~\ref{th:dict}}
\label{app:proof_dict}

\subsubsection{\ref{en:gwsfpip} implies \ref{en:gssfpip}}
We first show that \ref{en:gwsfpip}, i.e., the WSFP  (A\ref{ax:WSFP}) and independence preservation (A\ref{ax:IP}), implies \ref{en:gssfpip}, i.e., the SSFP  (A\ref{ax:SSFP}) and independence preservation (A\ref{ax:IP}).
To this end, we show that independence preservation implies the ZPP (A\ref{ax:ZPP}).
The ZPP and the assumed WSFP in turn imply the SSFP by Theorem~\ref{th:linearpooling}.

To show  that independence preservation implies the ZPP, assume that for some event $\mA$, we have $Q_k(\mA)=0$ for all $k=1, \dots, K$. 
This implies that 
\begin{equation*}
    Q_k(\mA\cap \mA) = Q_k(\mA)= 0 = Q_k(\mA)Q_k(\mA)\,.
\end{equation*}
Independence preservation now implies that also $Q$ must satisfy
$Q(\mA\cap \mA) = Q(\mA)Q(\mA)$, and thus that either $Q(\mA) =0$ or  $Q(\mA) =1$.
In the first case, the proof of the ZPP is finished.
In the second case, i.e., $Q(\mA) = 1$,  there must exist a subset $\mB\subseteq \mA$ such that $Q(\mB) = 1/2$.
However, because $\mB\subseteq \mA$ and $Q_k(\mA)= 0$, we have that also $Q_k(\mB)=0$, and thus  we again have that $Q_k(\mB\cap \mB) = 0 = Q_k(\mB)Q_k(\mB)$.
This implies that $Q(\mB)$ is either $0$ or $1$, which is a contradiction to $Q(\mB) = 1/2$. 
Thus,  $Q(\mA) =0$ is the only valid conclusion, which proves that the ZPP is satisfied.

\subsubsection{\ref{en:gwsfpeb} implies \ref{en:gssfpeb}}
We next show that \ref{en:gwsfpeb}, i.e., the WSFP  (A\ref{ax:WSFP}) and external Bayesianity (A\ref{ax:EB}), implies \ref{en:gssfpeb}, i.e., the SSFP  (A\ref{ax:SSFP}) and external Bayesianity (A\ref{ax:EB}).
Thus, we have to show that the WSFP  and external Bayesianity imply   the SSFP.

By Theorem~\ref{th:genlinpool}, the WSFP implies the weak likelihood principle (A\ref{ax:WLP}). 
Furthermore, by Theorem~\ref{th:genloglinpool}, the weak likelihood principle and external Bayesianity imply that $g$ is a generalized log-linear pooling function, i.e.,
\begin{equation}
\label{eq:appdgenloglin}
    g[q_1,\ldots,q_K](\btheta) = c\, \xi_0(\btheta) \prod_{k=1}^K \left(q_k(\btheta)\right)^{w_k}
\end{equation}
for all positive opinion profiles.
Finally, by Theorem~\ref{th:genlinpool}, the WSFP implies that $g$ is also a generalized linear pooling function, i.e., 
\begin{equation}
\label{eq:appdgenlin}
    g[q_1,\ldots,q_K](\btheta) = \sum_{k=0}^K w'_{k} q_{k}(\btheta)
\end{equation}
for all opinion profiles.
Thus, combining \eqref{eq:appdgenloglin} and \eqref{eq:appdgenlin}, we have
\begin{equation}
\label{eq:genloglineqgenlin}
    c\, \xi_0(\btheta) \prod_{k=1}^K \left(q_k(\btheta)\right)^{w_k} = \sum_{k=0}^K w'_{k} q_{k}(\btheta)
\end{equation}
for all positive opinion profiles.
Note that $q_0(\btheta)$ is not necessarily positive, i.e., it may be zero for certain values of $\btheta$.

We choose an arbitrary positive pdf $\tilde{q}_0(\btheta)$ and $\varepsilon \in (0,1)$ and consider the  opinion profile 
$(\varepsilon \tilde{q}_0 + (1-\varepsilon) q_0, \dots, \varepsilon \tilde{q}_0 + (1-\varepsilon) q_0)$.
Since $\tilde{q}_0(\btheta)$ is positive, this is a positive opinion profile for any  $\varepsilon \in (0,1)$.
Using it in \eqref{eq:genloglineqgenlin} gives
\begin{align*}
    & c\, \xi_0(\btheta) \big(\varepsilon \tilde{q}_0(\btheta) + (1-\varepsilon) q_0(\btheta)\big) 
    \\*
    & \quad = w'_{0}q_0(\btheta) + (1- w'_{0}) \big(\varepsilon \tilde{q}_0(\btheta) + (1-\varepsilon) q_0(\btheta)\big),
\end{align*}
where $\sum_{k=1}^K w_k =1$ and $\sum_{k=0}^K w'_k =1$ were used, or, equivalently,
\begin{equation}
\label{eq:th8epstrick}
    c\, \xi_0(\btheta)
    = w'_{0}\frac{q_0(\btheta)}{\varepsilon \tilde{q}_0(\btheta) + (1-\varepsilon) q_0(\btheta)} + 1- w'_{0}\,.
\end{equation}
Taking the limit $\varepsilon \to 0$ in \eqref{eq:th8epstrick}, we obtain
\begin{equation}
\label{eq:cxiisone}
    c\, \xi_0(\btheta)
    = 
    \begin{cases}
    1 & \textrm{if } q_0(\btheta) > 0 \\
    1-w'_0 & \textrm{if } q_0(\btheta) = 0 \,.
    \end{cases}
\end{equation}
Inserting into \eqref{eq:appdgenloglin} and evaluating \eqref{eq:appdgenloglin} for the  opinion profile 
$(\tilde{q}_0, \dots, \tilde{q}_0)$ yields
\begin{equation*}
    g[\tilde{q}_0,\ldots,\tilde{q}_0](\btheta) = 
    \begin{cases}
    \tilde{q}_0(\btheta) & \textrm{if } q_0(\btheta) > 0 \\
    (1-w'_0)\tilde{q}_0(\btheta) & \textrm{if } q_0(\btheta) = 0 \,.
    \end{cases}
\end{equation*}
Because $g[\tilde{q}_0,\ldots,\tilde{q}_0](\btheta)$ is a pdf, this implies
\begin{align}
    1
    & = \int_{\Theta} g[\tilde{q}_0,\ldots,\tilde{q}_0](\btheta) \,\mathrm{d}\btheta
    \notag \\
    & = 
    \int_{\{\btheta\in \Theta: q_0(\btheta) > 0\}} \tilde{q}_0(\btheta) \,\mathrm{d}\btheta 
    \notag \\*
    & \quad 
    + 
    (1-w'_0)\int_{\{\btheta\in \Theta: q_0(\btheta) = 0\}} \tilde{q}_0(\btheta) \,\mathrm{d}\btheta \,.
\label{eq:goftq0isapdf}
\end{align}
On the other hand, because $\tilde{q}_0(\btheta)$ is a pdf, we have
\begin{align}
    1
    & = \int_{\Theta} \tilde{q}_0(\btheta) \,\mathrm{d}\btheta
    \notag \\
    & 
    = 
    \int_{\{\btheta\in \Theta: q_0(\btheta) > 0\}} \tilde{q}_0(\btheta) \,\mathrm{d}\btheta 
    + 
    \int_{\{\btheta\in \Theta: q_0(\btheta) = 0\}} \tilde{q}_0(\btheta) \,\mathrm{d}\btheta \,.
\label{eq:tq0isapdf}
\end{align}
Combining \eqref{eq:goftq0isapdf} and \eqref{eq:tq0isapdf}, we obtain
\begin{equation*}
    (1-w'_0)\int_{\{\btheta\in \Theta: q_0(\btheta) = 0\}} \tilde{q}_0(\btheta) \,\mathrm{d}\btheta = 
    \int_{\{\btheta\in \Theta: q_0(\btheta) = 0\}} \tilde{q}_0(\btheta) \,\mathrm{d}\btheta  
\end{equation*}
or equivalently
\begin{equation*}
    w'_0 \int_{\{\btheta\in \Theta: q_0(\btheta) = 0\}} \tilde{q}_0(\btheta) \,\mathrm{d}\btheta  
    = 0\,.
\end{equation*}
Since $\tilde{q}_0(\btheta)$ is a positive pdf on $\Theta$, this can only hold if either $w'_0=0$ or $\lvert \{\btheta\in \Theta: q_0(\btheta) = 0\}\rvert = 0$.
In the first case, \eqref{eq:appdgenlin} implies that $g$ is actually a linear pooling function and thus, by Theorem~\ref{th:linearpooling}, $g$ satisfies the SSFP.
In the second case, $q_0(\btheta) > 0$ almost everywhere and thus \eqref{eq:cxiisone} states that  $c\, \xi_0(\btheta)=1$.
Using the opinion profile $(q_1, \dots, q_K)= (\tilde{q}_0, \dots, \tilde{q}_0)$ 
in \eqref{eq:genloglineqgenlin} now gives
\begin{equation}
\label{eq:q1toseewp0is0}
    \tilde{q}_0(\btheta)
    = w'_{0}q_0(\btheta) + (1- w'_{0}) \tilde{q}_0(\btheta)\,.
\end{equation}
In particular, let us partition $\Theta$ into disjoint sets $\mA_1$, $\mA_2$ satisfying $\int_{\mA_1} q_0(\btheta) \mathrm{d}\btheta = \int_{\mA_2} q_0(\btheta) \mathrm{d}\btheta = 1/2$, and let us choose
\begin{equation*}
    \tilde{q}_0(\btheta)
    = 
    \begin{cases}
    \frac{3}{2} q_0(\btheta) & \textrm{if } \btheta\in \mA_1 \\
    \frac{1}{2} q_0(\btheta) & \textrm{if } \btheta\in \mA_2 \,.
    \end{cases}
\end{equation*}
Then \eqref{eq:q1toseewp0is0} yields for all $\btheta \in \mA_1$
\begin{equation*}
    \frac{3}{2} q_0(\btheta)
    = 
    w'_{0}q_0(\btheta) + (1- w'_{0}) \frac{3}{2} q_0(\btheta)
    = 
    \bigg(\frac{3}{2} - \frac{1}{2} w'_{0}\bigg) q_0(\btheta)\,.
\end{equation*}
This implies $w'_{0}=0$, and hence we again conclude from \eqref{eq:appdgenlin} that $g$ is actually a linear pooling function, and thus, by Theorem~\ref{th:linearpooling}, that $g$ satisfies the SSFP.

\subsubsection{\ref{en:gssfpgb} implies \ref{en:gdictpool}}
Finally, we prove that \ref{en:gssfpgb}, i.e.,  the SSFP  (A\ref{ax:SSFP}) and generalized Bayesianity (A\ref{ax:GB}), implies  \ref{en:gdictpool}, i.e.,  that $g$ is a dictatorship pooling function.
By Theorem~\ref{th:linearpooling}, the SSFP implies that $g$ is a linear pooling function, i.e., 
\begin{equation}
\label{eq:linpoolinproof}
    g[q_1,\ldots,q_K](\btheta) = \sum_{k=1}^K w_{k} q_{k}(\btheta)
\end{equation}
with $(w_1, \dots, w_K)\in \mathcal{S}_K$.
We will show that for an arbitrary $k$ the weight $w_k$ is either $0$ or $1$, which is equivalent to $g$ being a dictatorship pooling function. 

We first choose a positive function $f$ and two disjoint sets $\mA$ and $\mB$ such that 
$\Theta = \mA\cup \mB$ and
$\int_{\mA} f(\btheta)\,\mathrm{d}\btheta = \int_{\mB} f(\btheta)\,\mathrm{d}\btheta = 1$.
We fix an arbitrary $k$ and define an opinion profile $(q_1, \dots, q_K)$ by setting 
\begin{equation*}
    q_k(\btheta) =
    \begin{cases}
    \frac{1}{3}f(\btheta) & \textrm{if }\btheta \in \mA \\
    \frac{2}{3}f(\btheta) & \textrm{if }\btheta \in \mB
    \end{cases}
\end{equation*}
and $q_{k'}=q_0$ for all $k'\neq k$, where
\begin{equation}
\label{eq:defq0ashalff}
    q_0(\btheta) =\frac{1}{2}f(\btheta)\,.
\end{equation}
Inserting this opinion profile into the fusion rule \eqref{eq:linpoolinproof} and using $\sum_{k'\neq k}w_{k'}= 1-w_k$ gives 
\begin{align}
    g[q_1,\ldots,q_K](\btheta) 
    & = 
    \begin{cases}
    \big(\frac{1}{3}w_k + \frac{1}{2}(1-w_k)\big) f(\btheta) & \textrm{if }\btheta \in \mA \\
    \big(\frac{2}{3}w_k + \frac{1}{2}(1-w_k)\big) f(\btheta) & \textrm{if }\btheta \in \mB
    \end{cases} \notag \\
    & = 
    \begin{cases}
    \big(\frac{1}{2} - \frac{1}{6} w_k\big) f(\btheta) & \textrm{if }\btheta \in \mA \\
    \big(\frac{1}{2} + \frac{1}{6} w_k\big) f(\btheta) & \textrm{if }\btheta \in \mB\,.
    \end{cases}
    \label{eq:gpriepr}
\end{align}

Next, we use generalized Bayesianity with $\ell_{k'}=\ell$ for all $k'=1, \dots, K$, where
\begin{equation*}
    \ell(\btheta) =
    \begin{cases}
    1 & \textrm{if }\btheta \in \mA \\
    2 & \textrm{if }\btheta \in \mB\,.
    \end{cases}
\end{equation*}
We easily obtain (see~\eqref{eq: discrete_bayesian_update})
\begin{equation*}
    q_k^{(\ell)}(\btheta) =
    \begin{cases}
    \frac{1}{5}f(\btheta) & \textrm{if }\btheta \in \mA \\
    \frac{4}{5}f(\btheta) & \textrm{if }\btheta \in \mB\,,
    \end{cases}
\end{equation*}
and
\begin{equation}
\label{eq:q0ellasthirdsf}
    q_0^{(\ell)}(\btheta) =
    \begin{cases}
    \frac{1}{3}f(\btheta) & \textrm{if }\btheta \in \mA \\
    \frac{2}{3}f(\btheta) & \textrm{if }\btheta \in \mB\,.
    \end{cases}
\end{equation}
Now, \eqref{eq:linpoolinproof} gives
\begin{align}
    g[q_1^{(\ell)},\ldots,q_K^{(\ell)}](\btheta) 
    & = 
    \begin{cases}
    \big(\frac{1}{5}w_k + \frac{1}{3}(1-w_k)\big) f(\btheta) & \textrm{if }\btheta \in \mA \\
    \big(\frac{4}{5}w_k + \frac{2}{3}(1-w_k)\big) f(\btheta) & \textrm{if }\btheta \in \mB
    \end{cases} \notag \\
    & = 
    \begin{cases}
    \big(\frac{1}{3} - \frac{2}{15} w_k\big) f(\btheta) & \textrm{if }\btheta \in \mA \\
    \big(\frac{2}{3} + \frac{2}{15} w_k\big) f(\btheta) & \textrm{if }\btheta \in \mB\,.
    \end{cases}
    \label{eq:gpostepr}
\end{align}
On the other hand, because $g$ satisfies generalized Bayesianity, there exists a function $h[\ell, \dots, \ell]$ 
such that 
\begin{equation}
\label{eq:gsatgenb}
    g\big[q_1^{(\ell)},\ldots,q_K^{(\ell)}\big](\btheta) 
    = \frac{g[q_1,\ldots,q_K](\btheta) h[\ell, \dots, \ell](\btheta)}{c_{\ell}},
\end{equation}
where $c_{\ell} = \int_{\Theta} g[q_1,\ldots,q_K](\btheta) h[\ell, \dots, \ell](\btheta)\, \mathrm{d}\btheta$.
Inserting \eqref{eq:gpriepr} and \eqref{eq:gpostepr} into \eqref{eq:gsatgenb} gives
\begin{equation}
\label{eq:conw1eqA}
    \frac{1}{3} - \frac{2}{15} w_k = \frac{\big(\frac{1}{2} - \frac{1}{6} w_k\big) h[\ell, \dots, \ell](\btheta)}{c_{\ell}}
\end{equation}
for all $\btheta \in \mA$
and 
\begin{equation}
\label{eq:conw1eqB}
    \frac{2}{3} + \frac{2}{15} w_k = \frac{\big(\frac{1}{2} + \frac{1}{6} w_k\big) h[\ell, \dots, \ell](\btheta)}{c_{\ell}}
\end{equation}
for all $\btheta\in \mB$.

Using again the generalized Bayesianity of $g$, we also have
\begin{equation}
\label{eq:gofallq0}
    g\big[q_0^{(\ell)},\ldots,q_0^{(\ell)}\big](\btheta) 
    = \frac{g[q_0,\ldots,q_0](\btheta) h[\ell, \dots, \ell](\btheta)}{c_{0,\ell}},
\end{equation}
where $c_{0,\ell} = \int_{\Theta} g[q_0,\ldots,q_0](\btheta) h[\ell, \dots, \ell](\btheta) \,\mathrm{d}\btheta$.
Because linear pooling functions are unanimity preserving (see Theorem~\ref{th:linearpooling}), we have $g\big[q_0^{(\ell)},\ldots,q_0^{(\ell)}\big](\btheta)= q_0^{(\ell)}(\btheta)$ and $g[q_0,\ldots,q_0](\btheta)= q_0(\btheta)$, and thus \eqref{eq:gofallq0} is equivalent to 
\begin{equation*}
    q_0^{(\ell)}(\btheta) = \frac{q_0(\btheta) h[\ell, \dots, \ell](\btheta)}{c_{0,\ell}},
\end{equation*}
or, inserting \eqref{eq:q0ellasthirdsf} and \eqref{eq:defq0ashalff}, 
\begin{equation*}
    \frac{1}{3} = \frac{\frac{1}{2} h[\ell, \dots, \ell](\btheta)}{c_{0,\ell}}
\end{equation*}
for all $\btheta \in \mA$
and 
\begin{equation*}
    \frac{2}{3}  = \frac{\frac{1}{2}  h[\ell, \dots, \ell](\btheta)}{c_{0,\ell}}
\end{equation*}
for all $\btheta\in \mB$.
We thus obtain
\begin{equation*}
    h[\ell, \dots, \ell](\btheta) =
    \begin{cases}
    \frac{2}{3}c_{0,\ell} & \textrm{if }\btheta \in \mA \\
    \frac{4}{3}c_{0,\ell} & \textrm{if }\btheta \in \mB\,.
    \end{cases}
\end{equation*}
Inserting this into \eqref{eq:conw1eqA} and \eqref{eq:conw1eqB} yields
\begin{equation*}
    \frac{1}{3} - \frac{2}{15} w_k = \frac{\big(\frac{1}{2} - \frac{1}{6} w_k\big) \frac{2}{3}c_{0,\ell}}{c_{\ell}}
\end{equation*}
and 
\begin{equation*}
    \frac{2}{3} + \frac{2}{15} w_k = \frac{\big(\frac{1}{2} + \frac{1}{6} w_k\big) \frac{4}{3}c_{0,\ell}}{c_{\ell}}
\end{equation*}
or, equivalently,
\begin{equation*}
    \frac{\frac{1}{3} - \frac{2}{15} w_k}{\frac{1}{3} - \frac{1}{9} w_k}
    = \frac{c_{0,\ell}}{c_{\ell}}
    = \frac{\frac{2}{3} + \frac{2}{15} w_k}{\frac{2}{3} + \frac{2}{9} w_k}\,.
\end{equation*}
This amounts to the quadratic equation
$w_k^2 - w_k=0$,
which has the solutions $w_k=0$ and $w_k=1$.
Since $k$ was arbitrary, this concludes the proof.

\section{Proof of Lemma~\ref{lem:gbwsfpdog}}
\label{app:prooflemgbwsfpdog}

By Theorem~\ref{th:genlinpool}, the WSFP implies that $g$ is a generalized linear pooling function, i.e., 
\begin{equation}
\label{eq:genlinpoolinproof}
    g[q_1,\ldots,q_K](\btheta) = \sum_{k=0}^K w_{k} q_{k}(\btheta)
\end{equation}
with $(w_0, \dots, w_K)\in \mathcal{S}_{K+1}$.
We will show that $w_0$ is either $0$ or $1$, which is equivalent to $g$ being either a linear pooling function or a dogmatic pooling function. 

We first choose two disjoint sets $\mA$ and $\mB$ such that 
$\Theta = \mA\cup \mB$ and
$\int_{\mA} q_0(\btheta)\,\mathrm{d}\btheta = \int_{\mB} q_0(\btheta)\,\mathrm{d}\btheta = 1/2$.
Furthermore, we choose an opinion profile $(q_1, \dots, q_K)$ as
\begin{equation*}
    q_k(\btheta) =
    \begin{cases}
    \frac{2}{3}q_0(\btheta) & \textrm{if }\btheta \in \mA \\
    \frac{4}{3}q_0(\btheta) & \textrm{if }\btheta \in \mB
    \end{cases}
\end{equation*}
 for all $k = 1, \dots, K$.
Inserting this opinion profile into \eqref{eq:genlinpoolinproof} and using $\sum_{k=1}^K w_k = 1-w_0$
gives 
\begin{align}
    g[q_1,\ldots,q_K](\btheta) 
    & = 
    \begin{cases}
    \big(w_0 + \frac{2}{3}(1-w_0)\big) q_0(\btheta) & \textrm{if }\btheta \in \mA \\
    \big(w_0 + \frac{4}{3}(1-w_0)\big) q_0(\btheta) & \textrm{if }\btheta \in \mB
    \end{cases} \notag \\
    & = 
    \begin{cases}
    \big(\frac{2}{3} + \frac{1}{3} w_0\big) q_0(\btheta) & \textrm{if }\btheta \in \mA \\
    \big(\frac{4}{3} - \frac{1}{3} w_0\big) q_0(\btheta) & \textrm{if }\btheta \in \mB\,.
    \end{cases}
    \label{eq:g2priepr}
\end{align}

Next, we use generalized Bayesianity with $\ell_{k}=\ell$ for all $k = 1, \dots, K$, where
\begin{equation}
\label{eq:deflas1or2}
    \ell(\btheta) =
    \begin{cases}
    1 & \textrm{if }\btheta \in \mA \\
    2 & \textrm{if }\btheta \in \mB\,.
    \end{cases}
\end{equation}
We easily obtain (see~\eqref{eq: discrete_bayesian_update})
\begin{equation*}
    q_k^{(\ell)}(\btheta) =
    \begin{cases}
    \frac{2}{5}q_0(\btheta) & \textrm{if }\btheta \in \mA \\
    \frac{8}{5}q_0(\btheta) & \textrm{if }\btheta \in \mB
    \end{cases}
\end{equation*}
and then \eqref{eq:genlinpoolinproof} gives
\begin{align}
    g[q_1^{(\ell)},\ldots,q_K^{(\ell)}](\btheta) 
    & = 
    \begin{cases}
    \big(w_0 + \frac{2}{5}(1-w_0)\big) q_0(\btheta) & \textrm{if }\btheta \in \mA \\
    \big(w_0 + \frac{8}{5}(1-w_0)\big) q_0(\btheta) & \textrm{if }\btheta \in \mB
    \end{cases} \notag \\
    & = 
    \begin{cases}
    \big(\frac{2}{5} + \frac{3}{5} w_0\big) q_0(\btheta) & \textrm{if }\btheta \in \mA \\
    \big(\frac{8}{5} - \frac{3}{5} w_0\big) q_0(\btheta) & \textrm{if }\btheta \in \mB\,.
    \end{cases}
    \label{eq:g2postepr}
\end{align}
Because $g$ satisfies generalized Bayesianity, we have that there exists a function $h[\ell, \dots, \ell]$ such that 
\begin{equation}
\label{eq:g2satgenb}
    g\big[q_1^{(\ell)},\ldots,q_K^{(\ell)}\big](\btheta) 
    = \frac{g[q_1,\ldots,q_K](\btheta) h[\ell, \dots, \ell](\btheta)}{c_{\ell}},
\end{equation}
where $c_{\ell} = \int_{\Theta} g[q_1,\ldots,q_K](\btheta) h[\ell, \dots, \ell](\btheta) \mathrm{d}\btheta$.
Inserting \eqref{eq:g2priepr} and \eqref{eq:g2postepr} into \eqref{eq:g2satgenb} gives
\begin{equation}
\label{eq:con2w1eqA}
    \frac{2}{5} + \frac{3}{5} w_0
    = \frac{\big(\frac{2}{3} + \frac{1}{3} w_0\big) h[\ell, \dots, \ell](\btheta)}{c_{\ell}}
\end{equation}
for all $\btheta \in \mA$
and 
\begin{equation}
\label{eq:con2w1eqB}
    \frac{8}{5} - \frac{3}{5} w_0 
    = \frac{\big(\frac{4}{3} - \frac{1}{3} w_0\big) h[\ell, \dots, \ell](\btheta)}{c_{\ell}}
\end{equation}
for all $\btheta\in \mB$.

Using again the generalized Bayesianity of $g$, we also have
\begin{equation}
\label{eq:condgballq0}
    g\big[q_0^{(\ell)},\ldots,q_0^{(\ell)}\big](\btheta) 
    = \frac{g[q_0,\ldots,q_0](\btheta) h[\ell, \dots, \ell](\btheta)}{c_{0,\ell}},
\end{equation}
where $c_{0,\ell} = \int_{\Theta} g[q_0,\ldots,q_0](\btheta) h[\ell, \dots, \ell](\btheta) \mathrm{d}\btheta$.
Using \eqref{eq:deflas1or2} and \eqref{eq: discrete_bayesian_update}, we obtain
\begin{equation*}
    q_0^{(\ell)}(\btheta) =
    \begin{cases}
    \frac{2}{3}q_0(\btheta) & \textrm{if }\btheta \in \mA \\
    \frac{4}{3}q_0(\btheta) & \textrm{if }\btheta \in \mB\,.
    \end{cases}
\end{equation*}
Inserting into~\eqref{eq:genlinpoolinproof} yields
\begin{align}
    g\big[q_0^{(\ell)},\ldots,q_0^{(\ell)}\big](\btheta) 
    & = 
    \begin{cases}
    \big(w_0 + \frac{2}{3} (1-w_0)\big)q_0(\btheta) & \textrm{if }\btheta \in \mA \\
    \big(w_0 + \frac{4}{3} (1-w_0)\big)q_0(\btheta) & \textrm{if }\btheta \in \mB
    \end{cases}
    \notag \\
    & = 
    \begin{cases}
    \big(\frac{2}{3} +  \frac{1}{3} w_0\big)q_0(\btheta) & \textrm{if }\btheta \in \mA \\
    \big(\frac{4}{3} -  \frac{1}{3} w_0\big)q_0(\btheta) & \textrm{if }\btheta \in \mB\,.
    \end{cases}
    \label{eq:gq0ell}
\end{align}
Furthermore, again by~\eqref{eq:genlinpoolinproof},  $g[q_0,\ldots,q_0](\btheta) = q_0(\btheta)$.
Inserting this and \eqref{eq:gq0ell} into \eqref{eq:condgballq0}, we obtain
\begin{equation*}
    \frac{2}{3} +  \frac{1}{3} w_0 = \frac{ h[\ell, \dots, \ell](\btheta)}{c_{0,\ell}}
\end{equation*}
for all $\btheta \in \mA$
and 
\begin{equation*}
    \frac{4}{3} -  \frac{1}{3} w_0  = \frac{ h[\ell, \dots, \ell](\btheta)}{c_{0,\ell}}
\end{equation*}
for all $\btheta\in \mB$.
Thus,
\begin{equation*}
    h[\ell, \dots, \ell](\btheta) =
    \begin{cases}
    \big(\frac{2}{3} +  \frac{1}{3} w_0\big) c_{0,\ell} & \textrm{if }\btheta \in \mA \\
    \big(\frac{4}{3} -  \frac{1}{3} w_0\big) c_{0,\ell} & \textrm{if }\btheta \in \mB\,.
    \end{cases}
\end{equation*}
Inserting this into \eqref{eq:con2w1eqA} and \eqref{eq:con2w1eqB} gives
\begin{equation*}
    \frac{2}{5} + \frac{3}{5} w_0
    = \frac{\big(\frac{2}{3} + \frac{1}{3} w_0\big)^2 c_{0,\ell}}{c_{\ell}}
\end{equation*}
and 
\begin{equation*}
    \frac{8}{5} - \frac{3}{5} w_0 
    = \frac{\big(\frac{4}{3} - \frac{1}{3} w_0\big)^2 c_{0,\ell}}{c_{\ell}}
\end{equation*}
or, equivalently,
\begin{equation*}
    \frac{\frac{2}{5} + \frac{3}{5} w_0}{\big(\frac{2}{3} + \frac{1}{3} w_0\big)^2 }
    = \frac{c_{0,\ell}}{c_{\ell}}
    = \frac{\frac{8}{5} - \frac{3}{5} w_0}{\big(\frac{4}{3} - \frac{1}{3} w_0\big)^2 }\,.
\end{equation*}
This amounts to the cubic equation 
$
    w_0^3 
    -3 w_0^2
    +2 w_0
    = 0
$,
which has the solutions $w_0=0$, $w_0=1$, and $w_0=2$.
Since $w_0$ cannot be larger than one, only the solutions $w_0=0$ and $w_0=1$ remain.
In the first case, $g$ is a linear pooling function, which satisfies the SSFP by Theorem~\ref{th:linearpooling}.
Hence, since $g$ satisfies both the SSFP and generalized Bayesiantity, it reduces to a dictatorship pooling function by Theorem~\ref{th:dict}.
In the second case, $g$ is a dogmatic pooling function.

\section{Proof of Theorem~\ref{th:implstruct}}
\label{app:thimplstruct}


The implications in \ref{en:implSSFP} follow from Theorem~\ref{th:linearpooling} because the SSFP implies that $g$ is a linear pooling function and in turn satisfies the ZPP (A\ref{ax:ZPP}), unanimity preservation (A\ref{ax:UP}), the WSFP (A\ref{ax:WSFP}), the likelihood principle (A\ref{ax:LP}), and the weak likelihood principle (A\ref{ax:WLP}).
Similarly, the implications in \ref{en:implWSFP} follow from Theorem~\ref{th:genlinpool}.
Implication~\ref{en:implLP} follows directly from the concerned axioms.
Implication~\ref{en:implIP} is shown in the first part of the proof of Theorem~\ref{th:dict} in Appendix~\ref{app:proof_dict}.
It remains to show implication~\ref{en:implIB}, i.e.,  that individualized Bayesianity implies generalized Bayesianity.
This can easily be seen by defining
\begin{align}
    & h[\ell_1,\ldots,\ell_K](\btheta) 
    \triangleq \prod_{k=1}^K \ell_k(\btheta)\,.
    \label{eq:genbayeschoice}
\end{align}
Indeed, because $g$ satisfies individualized Bayesianity, iterative application of \eqref{eq:indivbayes} implies
\begin{align*}
    g[q_1^{(\ell_1)},\ldots,q_K^{(\ell_K)}](\btheta)
    & \propto g\Big[q_1, q_2^{(\ell_2)},\ldots,q_K^{(\ell_K)}\Big](\btheta)  \ell_1(\btheta) \notag \\
    & \propto g[q_1, \dots, q_K](\btheta) \prod_{k=1}^K \ell_k(\btheta) \notag \\
    & \propto g[q_1, \dots, q_K]^{\big(\prod_{k=1}^K \ell_k\big)}(\btheta).
\end{align*}
Thus, \eqref{eq:gnbayes} is satisfied by $h$ defined in \eqref{eq:genbayeschoice}.

\section{Proof of Theorem~\ref{th:optimal_hellinger}
(Constrained Minimization of the Weighted Average of $\alpha$-Divergences)}
\label{appendix: proof_weighted_hellinger_simplex}

Let $f_\alpha(x)=\frac{x^{\alpha}-1}{\alpha (\alpha-1)}$. The inverse function is given by
\begin{equation}
    \label{eq: f_alpha_inverse}
    f_{\alpha}^{-1}(x)=\left(x\alpha (\alpha-1)+1\right)^{1/\alpha}.
\end{equation}
Furthermore, we have that for two functions $p_1(\btheta)$ and $p_2(\btheta)$
\begin{align}
    \label{eq: f_alpha_decomp}
    f_\alpha\left(\frac{p_1(\btheta)}{p_2(\btheta)}\right)&=\frac{\left(\frac{p_1(\btheta)}{p_2(\btheta)}\right)^\alpha-1}{\alpha (\alpha-1)} \nonumber \\
    &=\frac{(p_1(\btheta))^\alpha- (p_2(\btheta))^\alpha}{(p_2(\btheta))^\alpha \alpha (\alpha-1)} \nonumber \\[2mm]
    &=\frac{(p_1(\btheta))^\alpha-1}{(p_2(\btheta))^\alpha \alpha (\alpha-1)}-\frac{(p_2(\btheta))^\alpha-1}{(p_2(\btheta))^\alpha \alpha (\alpha-1)} \nonumber \\[2mm]
    &=\frac{f_\alpha(p_1(\btheta))-f_\alpha(p_2(\btheta))}{(p_2(\btheta))^\alpha} . 
\end{align}
Therefore, the objective function in \eqref{eq: f_divergence_fusion_optimization} for $f(x) = f_\alpha(x)$ can be written as
\begin{equation*}
    \label{eq: Hellinger_objective}
    \begin{aligned}
    & \sum_{k=1}^K w_k \mathcal{D}_{\alpha}(q_k\|\varphi)
    \\
    & \rule{5mm}{0mm} = \;\sum_{k=1}^K w_k \int_{{\Theta}} \varphi(\btheta)  f_\alpha\left(\frac{q_k(\btheta)}{\varphi(\btheta)}\right) \mathrm{d}\btheta \\
    & \rule{5mm}{0mm} \stackrel{ \hidewidth \eqref{eq: f_alpha_decomp} \hidewidth } =  \; \sum_{k=1}^K w_k \int_{{\Theta}} \varphi(\btheta) \frac{f_\alpha(q_k(\btheta))-f_\alpha(\varphi(\btheta))}{\left(\varphi(\btheta)\right)^\alpha} \mathrm{d}\btheta.
    \end{aligned}
\end{equation*}
Interchanging the summation and the integral gives
\begin{align*}
    & \sum_{k=1}^K w_k \mathcal{D}_{\alpha}(q_k\|\varphi)
    \\
    &=\int_{{\Theta}} \varphi(\btheta)\sum_{k=1}^K w_k\frac{ f_\alpha(q_k(\btheta))- f_\alpha(\varphi(\btheta))}{\left(\varphi(\btheta)\right)^\alpha}  \mathrm{d}\btheta \\
    &\stackrel{ \hidewidth (a) \hidewidth }=\int_{{\Theta}} \varphi(\btheta)\frac{ \Big(\sum_{k=1}^K w_k f_\alpha(q_k(\btheta)) \Big)-  f_\alpha(\varphi(\btheta))}{\left(\varphi(\btheta)\right)^\alpha}  \mathrm{d}\btheta \\
    &=\int_{{\Theta}} \varphi(\btheta)\frac{f_{\alpha}\left(f^{-1}_\alpha\left(\sum_{k=1}^K w_k f_\alpha(q_k(\btheta))\right)\right)-f_\alpha(\varphi(\btheta))}{\left(\varphi(\btheta)\right)^\alpha}  \mathrm{d}\btheta \\
    &\stackrel{ \hidewidth \eqref{eq: f_alpha_decomp} \hidewidth }=\int_{{\Theta}} \varphi(\btheta)f_\alpha\left(\frac{f^{-1}_\alpha\left(\sum_{k=1}^K w_k f_\alpha(q_k(\btheta))\right)}{\varphi(\btheta)}\right)  \mathrm{d}\btheta ,
\end{align*}
where we used in $(a)$ that $\sum_{k=1}^K w_k=1$.
Since $\varphi$ is a pdf and $f_\alpha(x)=\frac{x^{\alpha}-1}{\alpha (\alpha-1)}$ is a convex function for $\alpha\in\mathbb{R}\setminus\{0, 1\}$, we can apply Jensen's inequality%
\footnote{Jensen's inequality \cite[Th.~3.3]{rudin1986} asserts that for a pdf $\varphi(\cdot)$, a measurable function $\zeta(\cdot)$, and a convex function $\psi(\cdot)$ we have that $\int \psi(\zeta(\btheta)) \varphi(\btheta) \mathrm{d}\btheta \geq \psi\big(\int \zeta(\btheta) \varphi(\btheta) \mathrm{d}\btheta\big)$, with equality if and only if  the function $\zeta$ is constant almost everywhere.} 
to obtain the following lower bound on the objective function:
\begin{align}
     \int_{{\Theta}} \varphi(\btheta)&f_\alpha\left(\frac{f^{-1}_\alpha\left(\sum_{k=1}^K w_k f_\alpha(q_k(\btheta))\right)}{\varphi(\btheta)}\right)  \mathrm{d}\btheta
    \notag \\
    &\geq 
    f_\alpha\left(\int_{{\Theta}} \varphi(\btheta)\frac{f^{-1}_\alpha\left(\sum_{k=1}^K w_k f_\alpha(q_k(\btheta))\right)}{\varphi(\btheta)} \mathrm{d}\btheta \right) 
    \notag \\
    &= 
    f_\alpha\left(\int_{{\Theta}} f^{-1}_\alpha\left(\sum_{k=1}^K w_k f_\alpha(q_k(\btheta))\right) \mathrm{d}\btheta \right),
    \label{eq: min_Hellinger_jensens}
\end{align}
with equality if and only if the function
\begin{equation*}
    \zeta(\btheta) \triangleq \frac{f^{-1}_\alpha\left(\sum_{k=1}^K w_k f_\alpha(q_k(\btheta))\right)}{\varphi(\btheta)}
\end{equation*}
is constant almost everywhere. Note that this is equivalent to $\varphi(\btheta)\propto f^{-1}_\alpha(\sum_{k=1}^K w_k f_\alpha(q_k(\btheta)))$.
Since the right-hand side of \eqref{eq: min_Hellinger_jensens} is independent of $\varphi$, it is a lower bound for any choice of $\varphi$, and hence the function $\varphi(\btheta)$ minimizing the objective function (which is the desired solution $q(\btheta)$ in \eqref{eq: f_divergence_fusion_optimization}) is the one for which this lower bound is achieved with equality, i.e.,
\begin{align*}
    q(\btheta)
    & \propto f^{-1}_\alpha\left(\sum_{k=1}^K w_k f_\alpha(q_k(\btheta))\right) \\
    & \stackrel{ \hidewidth \eqref{eq: f_alpha_inverse} \hidewidth } = 
    \, \left(\left(\sum_{k=1}^K w_k f_\alpha(q_k(\btheta))\right)\alpha(\alpha-1)+1\right)^{1/\alpha} \\
    & = 
    \, \left(\sum_{k=1}^K w_k \left(q_k(\btheta)\right)^{\alpha}-\sum_{k=1}^K w_k+1\right)^{1/\alpha} \\
    & =
    \, \left(\sum_{k=1}^K w_k \left(q_k(\btheta)\right)^{\alpha}\right)^{1/\alpha}.
\end{align*}
We conclude that the solution to \eqref{eq: f_divergence_fusion_optimization} when $f(x) = f_\alpha(x)$ is $q(\btheta) = c \big(\sum_{k=1}^K w_k (q_k(\btheta))^{\alpha}\big)^{1/\alpha}$, where  $c = 1/\int_{\Theta}\big(\sum_{k=1}^K w_k (q_k(\btheta))^{\alpha}\big)^{1/\alpha}\mathrm{d}\btheta$.

\section{Characterization of the Reverse $\alpha$-divergence}
\label{app:revalphadiv}
We will show that 
$\mathcal{D}_{\alpha}(\varphi \| q_k) = \mathcal{D}_{\alpha^*}(q_k\|\varphi)$, where $\alpha^*=1-\alpha$.
To this end, we will use \eqref{eq:invfdivergence} with $f(x) = f_\alpha(x) = \frac{x^\alpha-1}{\alpha(\alpha-1)}$.
By $f^*(x)=xf(1/x)$, we have 
\begin{align*}
    f_{\alpha}^*(x)
    & = x \frac{x^{-\alpha}-1}{\alpha(\alpha-1)}
    \\
    & = \frac{x^{-\alpha+1}-x}{\alpha(\alpha-1)}
    \\
    & = \frac{x^{-(\alpha-1)}-1}{\alpha(\alpha-1)} - \frac{1}{\alpha(\alpha-1)}(x-1)
    \\
    & = f_{\alpha^*}(x) - \frac{1}{\alpha(\alpha-1)}(x-1)\,.
\end{align*}
Thus, up to the additive term $- \frac{1}{\alpha(\alpha-1)}(x-1)$, the function $f_{\alpha}^*(x)$ is equal to $f_{\alpha^*}(x)$.
Now, by \cite[Prop.~1]{Sason_2018}, an $f$-divergence does not change if $f(x)$ is replaced by $f(x)+c(x-1)$ for an arbitrary $c\in \mathbb{R}$.
Hence, $f_{\alpha}^*$ and $f_{\alpha^*}$ result in the same $f$-divergence, and \eqref{eq:invfdivergence} together with \eqref{eq: df_alpha_def} implies
\begin{align*}
    \mathcal{D}_{\alpha}(\varphi \| q_k)
    & = \mathcal{D}_{f_\alpha} (\varphi\|q_k)
    \\
    & = \mathcal{D}_{f_\alpha^*} (q_k\|\varphi)
    \\
    & = \mathcal{D}_{f_{\alpha^*}} (q_k\|\varphi)
    \\
    & = \mathcal{D}_{\alpha^*}(q_k\|\varphi)\,.
\end{align*}

\section{Proof of Theorem~\ref{th:optimal_l2}
(Constrained Minimization of the Weighted Average of Squared $L_2$ Distances)}
\label{appendix: proof_weighted_l2_simplex}
We want to find 
\begin{equation}
\label{eq:optprobl2}
     q = \argmin_{\varphi\in\mathcal{P}}\sum_{k=1}^Kw_k\|q_k-\varphi\|_2^2.
\end{equation}
To this end, we note that 
\begin{align}
    & \min_{\varphi\in\mathcal{P}}\sum_{k=1}^K w_k\|q_k-\varphi\|_2^2
    \notag \\
    & \quad = 
    \min_{\varphi\in\mathcal{P}}\int_{\Theta}\sum_{k=1}^K w_k \big( q_k(\btheta) - \varphi(\btheta) \big)^2 \mathrm{d}\btheta
    \notag \\
    & \quad  \geq
    \int_{\Theta}\min_{\varphi(\btheta)\geq 0} \bigg\{\sum_{k=1}^K w_k \big( q_k(\btheta) - \varphi(\btheta) \big)^2 \bigg\} \mathrm{d}\btheta\,.
    \label{eq:pointminphistar}
\end{align}
For each fixed $\btheta$, the function value $\varphi(\btheta)$ that achieves the minimum $\min_{\varphi(\btheta)\geq 0} \sum_{k=1}^Kw_k \big( q_k(\btheta) - \varphi(\btheta)  \big)^2$ is easily seen to be 
\begin{equation*}
    \varphi^*(\btheta) = \sum_{k=1}^K w_k q_k(\btheta)\,.
\end{equation*}
Because $\varphi^* \in \mathcal{P}$ (due to $(w_1, \dots, w_K)\in \mathcal{S}_K$), we have that 
\begin{align}
    \sum_{k=1}^K w_k\| q_k- \varphi^*\|_2^2
    &\geq 
    \min_{\varphi\in\mathcal{P}}\sum_{k=1}^K w_k\|q_k-\varphi\|_2^2
    \notag \\
    & \stackrel{\hidewidth \eqref{eq:pointminphistar} \hidewidth} \geq
    \int_{\Theta}  \sum_{k=1}^Kw_k \big( q_k(\btheta) - \varphi^*(\btheta) \big)^2 \mathrm{d}\btheta
    \notag \\
    & = \sum_{k=1}^K w_k\| q_k- \varphi^*\|_2^2.
    \label{eq:phistaropt}
\end{align}
Thus, all inequalities in \eqref{eq:phistaropt} are actually equalities.
In particular,
\begin{align*}
    \sum_{k=1}^K w_k\| q_k- \varphi^*\|_2^2
    =
    & \min_{\varphi\in\mathcal{P}}\sum_{k=1}^K w_k\|q_k-\varphi\|_2^2,
\end{align*}
i.e., $q=\varphi^*$ solves \eqref{eq:optprobl2}.

\section{Proof of Theorem~\ref{th: generalized_pooling_frechet}
(Unconstrained Minimization of  the Weighted Average of General Distances)}
\label{appendix: proof_weighted_general_frechet}

Let $\warp_\varphi(\btheta)\triangleq \warp(\varphi(\btheta))$ and $\warp_{q_k}(\btheta)\triangleq \warp(q_k(\btheta))$.
We want to find 
\begin{equation}
\label{eq:optprobwarpl2}
     \tilde q = \argmin_{\varphi}\sum_{k=1}^K w_k \|\warp_{q_k}-\warp_\varphi\|_2^2.
\end{equation}
To this end, 
we first derive
\begin{equation}
\label{eq:warpstardef}
     \warp^* = \argmin_{\warp}\sum_{k=1}^K w_k \|\warp_{q_k}-\warp\|_2^2.
\end{equation}
Following the same steps as in Appendix~\ref{appendix: proof_weighted_l2_simplex} with $q_k$ replaced by $\warp_{q_k}$ and $\varphi$ replaced by $\warp$, it is easy to see that 
\begin{equation}
\label{eq:warpstarderev}
     \warp^*(\btheta) = \sum_{k=1}^K w_k\warp(q_k(\btheta))\,.
\end{equation}
Because $\warp(q_k(\btheta))\in (a,b)$, the convex combination $\sum_{k=1}^K w_k\warp(q_k(\btheta))$ is again in $(a,b)$.
Thus, $\warp^*(\btheta)$ is in the range of $\warp$ and we can define 
\begin{equation}
\label{eq:phistardef}
    \varphi^*(\btheta) \triangleq \warp^{-1}\big(\warp^*(\btheta)\big).
\end{equation}
This implies 
\begin{equation}
\label{eq:warpphistariswarpstar}
    \warp_{\varphi^*}(\btheta)= \warp(\varphi^*(\btheta))=\warp^*(\btheta)\,.
\end{equation}
We claim that $\tilde q$ defined in \eqref{eq:optprobwarpl2} equals  $\varphi^*$.
Indeed, we have for any $\varphi$
\begin{align*}
    \sum_{k=1}^K w_k \|\warp_{q_k}-\warp_\varphi\|_2^2
    & \geq \min_{\warp} \sum_{k=1}^K w_k \|\warp_{q_k}-\warp\|_2^2
    \\
    & \stackrel{\hidewidth \eqref{eq:warpstardef} \hidewidth}= \; \sum_{k=1}^K w_k \|\warp_{q_k}-\warp^*\|_2^2
    \\
    &  \stackrel{\hidewidth \eqref{eq:warpphistariswarpstar} \hidewidth}= \; \sum_{k=1}^K w_k \|\warp_{q_k}-\warp_{\varphi^*}\|_2^2,
\end{align*}
from which we conclude that $\varphi^*$ achieves the minimum in \eqref{eq:optprobwarpl2} and thus equals $\tilde q$.
We then obtain the optimal nonnormalized pooling function as
\begin{equation*}
    \tilde q(\btheta)
    = \varphi^*(\btheta)
    \stackrel{ \eqref{eq:phistardef} }= \warp^{-1}\big(\warp^*(\btheta)\big)
    \stackrel{ \eqref{eq:warpstarderev} }= \warp^{-1}\bigg(\sum_{k=1}^K w_k \warp(q_k(\btheta))\bigg).
\end{equation*}

\section{Proofs of the Fusion Rule for a Scalar Parameter}
\label{appendix:scalarbayesfusionall}

\subsection{Proof of Theorem~\ref{th:scalarbayesfusion}}
\label{appendix:scalarbayesfusion}
For $d_{\theta}=1$, 
the local observation likelihood functions from \eqref{eq:loclicgaussnons2} are given by
\begin{align}
    \ell_k(\theta) 
    & \propto \exp\bigg(
    {-}\frac{\theta^2\mathbf{h}_k^\intercal
    \mathbf{\Sigma}_{k k}^{-1}
    \mathbf{h}_k}{2 } + \theta\mathbf{h}_k^\intercal
    \mathbf{\Sigma}_{k k}^{-1}
    \mathbf{h}_k\sufstsc_k\bigg),
    \label{eq:likgaussd1proof}
\end{align}
where $\sufstsc_k = \ourinv_k\meamr_k=   \mathbf{h}_k^\intercal \mathbf{\Sigma}_{k k}^{-1}\meamr_k/  (\mathbf{h}_k^\intercal
    \mathbf{\Sigma}_{k k}^{-1}
    \mathbf{h}_k)$
    according to \eqref{eq:sufstsc_k} and \eqref{eq:ourinv_k}.
Furthermore, the global likelihood function \eqref{eq:likgausssufstatinep} can be rewritten as
\begin{align}
    \tlik(\theta) 
    & \propto \exp\bigg({-}\frac{(\sufstr-\mathbf{1}_{K} \theta)^\intercal \widetilde{\mathbf{\Sigma}}^{-1}(\sufstr-\mathbf{1}_{K} \theta)}{2}\bigg)
    \notag \\
    & \propto  \exp\bigg({-}\frac{
    \theta^2}{2 \widehat{\sigma}^2} +   \theta \mathbf{1}_{K}^\intercal \widetilde{\mathbf{\Sigma}}^{-1}\sufstr 
    \bigg),\label{eq:prooffuse1d}
\end{align}
where 
$
    \widehat{\sigma}^{2} = 1/(\mathbf{1}_{K}^\intercal \widetilde{\mathbf{\Sigma}}^{-1} \mathbf{1}_{K})
$.
The relation \eqref{eq:fusionliksgaussd1} follows from
\begin{align*}
    & \prod_{k=1}^K  (\ell_k(\theta))^{w_k}
    \notag \\
    & \quad \stackrel{\hidewidth \eqref{eq:likgaussd1proof} \hidewidth }\propto
    \exp\bigg(\sum_{k=1}^K w_k \bigg({-}\frac{\theta^2\mathbf{h}_k^\intercal
    \mathbf{\Sigma}_{k k}^{-1}
    \mathbf{h}_k}{2 } + \theta\mathbf{h}_k^\intercal
    \mathbf{\Sigma}_{k k}^{-1}
    \mathbf{h}_k\sufstsc_k 
     \bigg)
    \bigg)
    \notag \\
    & \quad \stackrel{\hidewidth \eqref{eq:powersbayes} \hidewidth }=
    \exp\bigg({-}\frac{
    \sum_{k=1}^K \theta^2\mathbf{1}_{K}^\intercal \widetilde{\mathbf{\Sigma}}^{-1}\mathbf{e}_k}{2} + \sum_{k=1}^K \theta  \mathbf{1}_{K}^\intercal \widetilde{\mathbf{\Sigma}}^{-1}\mathbf{e}_k \sufstsc_k 
    \bigg)
    \notag \\
    & \quad \stackrel{\hidewidth (a) \hidewidth }= 
    \exp\bigg({-}\frac{
      \theta^2 }{2\widehat{\sigma}^{2} } +  \theta  \mathbf{1}_{K}^\intercal \widetilde{\mathbf{\Sigma}}^{-1}  \sufst
    \bigg)
    \notag \\
    & \quad \stackrel{\hidewidth \eqref{eq:prooffuse1d} \hidewidth }\propto
    \tlik(\theta) \,,
\end{align*}
where we used in $(a)$ that $\sum_{k=1}^K \mathbf{e}_k =  \mathbf{1}_{K}$ and $\sum_{k=1}^K\mathbf{e}_k \sufstsc_k = \sufst$.
Finally, the fusion rule for the posteriors in \eqref{eq:fusedepbayesd1} easily follows from \eqref{eq:fusionliksgaussd1}:
\begin{align*}
    \poste(\theta \,\vert\, \sufst)
    & \propto  
    \pri(\theta) \tlik(\theta)
    \notag \\
    & \stackrel{\hidewidth \eqref{eq:fusionliksgaussd1} \hidewidth }\propto 
    \pri(\theta) \prod_{k=1}^K  (\ell_k(\theta))^{w_k}
    \notag \\
    & \propto 
    \pri(\theta) \prod_{k=1}^K  \bigg(\frac{\pi_k(\theta)}{\pri(\theta)}\bigg)^{w_k}
    \notag \\
    & =
    (\pri(\theta))^{1-\sum_{k=1}^K w_k}  \prod_{k=1}^K   (\pi_k( \theta))^{w_k}\,.
\end{align*}

\subsection{Calculation of the Weights in Example~\ref{ex:specialweightsd1}}\label{app:weightsd1}

We will show expression~\eqref{eq: weight_wk_ex_1} for $w_k$.
The vectors $\ourinv_k$ in \eqref{eq:ourinv_k} are given as
\begin{equation*}
    \ourinv_k = \frac{1}{\mathbf{1}_{r_0+r_k}^\intercal \mathbf{1}_{r_0+r_k}}\mathbf{1}_{r_0+r_k}^\intercal
    = \frac{1}{r_0+r_k}\mathbf{1}_{r_0+r_k}^\intercal,
\end{equation*}
and, in turn, the matrix $\widetilde{\mathbf{\Sigma}} $ in \eqref{eq:sigmatilde} is given by the entries
\begin{align*}
    \widetilde{{\Sigma}}_{k k'}
    & = \ourinv_k \mathbf{\Sigma}_{k k'}{\ourinvt_{k'} }
    \notag \\
    & = \frac{1}{r_0+r_k}\mathbf{1}_{r_0+r_k}^\intercal 
    \mathbf{\Sigma}_{k k'}
    \frac{1}{r_0+r_{k'}}\mathbf{1}_{r_0+r_{k'}}
    \notag \\
    & = \frac{r_0}{(r_0+r_k)(r_0+r_{k'})}
\end{align*}
for $k\neq k'$ and 
\begin{equation*}
    \widetilde{{\Sigma}}_{k k}=\frac{1}{r_0+r_k}.
\end{equation*}
It is easily verified that we can rewrite $ \widetilde{\mathbf{\Sigma}}$ as the following sum of a diagonal matrix and a rank one matrix 
\begin{align*}
    \widetilde{\mathbf{\Sigma}} 
    & = 
    \!\begin{pmatrix}
        \frac{r_1}{(r_0+r_1)^2} \\
         & \ddots \\
         & & \frac{r_K}{(r_0+r_K)^2}
    \end{pmatrix}
    \notag \\*
    & \quad\,
    + 
    \begin{pmatrix}
        \frac{1}{r_0+r_1} \\ \vdots \\ \frac{1}{r_0+r_K}
    \end{pmatrix}
    r_0
    \begin{pmatrix}
        \frac{1}{r_0+r_1} & \cdots & \frac{1}{r_0+r_K}
    \end{pmatrix} 
     .
\end{align*}
By the matrix inversion lemma \cite[eq.~(0.7.4.2)]{hojo13}, we can hence calculate  $\widetilde{\mathbf{\Sigma}}^{-1}$
as
\begin{align*}
    \widetilde{\mathbf{\Sigma}}^{-1}
    & = 
    \!\begin{pmatrix}
        \frac{(r_0+r_1)^2}{r_1} \\
         & \ddots \\
         & & \frac{(r_0+r_K)^2}{r_K}
    \end{pmatrix}
    \notag \\*
    & \quad\,
    - \bigg(\sum_{k=0}^K \frac{1}{r_k}\bigg)^{-1}
    \begin{pmatrix}
        \frac{r_0+r_1}{r_1}  \\ \vdots \\  \frac{r_0+r_K}{r_K} 
    \end{pmatrix}
    \begin{pmatrix}
         \frac{r_0+r_1}{r_1} & \cdots & \frac{r_0+r_K}{r_K}
    \end{pmatrix} 
     .
\end{align*}
To calculate the weights $w_k$ in \eqref{eq:powersbayes}, we  have to sum over the $k$th column of $\widetilde{\mathbf{\Sigma}}^{-1}$ and divide by $\mathbf{1}_{r_0+r_k}^\intercal \mathbf{1}_{r_0+r_k}=r_0+r_k$, i.e.,
\begin{align*}
    w_k 
    & = \frac{1}{r_0+r_k} \Bigg( 
        \frac{(r_0+r_k)^2}{r_k}
        - \frac{\sum_{k'=1}^K \frac{(r_0+r_k)(r_0+r_{k'})}{r_k r_{k'}}}%
        {\sum_{k'=0}^K \frac{1}{r_{k'}}}
        \Bigg)
    \notag \\
    & = \frac{r_0+r_k}{r_k}
        - \frac{\sum_{k'=1}^K \frac{r_0+r_{k'}}{r_k r_{k'}}}%
        {\sum_{k'=0}^K \frac{1}{r_{k'}}}
    \notag \\
    & = \frac{r_0+r_k}{r_k}
        - \frac{\frac{K}{r_k}+\frac{r_0}{r_k}\sum_{k'=1}^K \frac{1}{r_{k'}}}%
        {\sum_{k'=0}^K \frac{1}{r_{k'}}}
    \notag \\
    & = \frac{r_0+r_k}{r_k}
        - \frac{\frac{K-1}{r_k}+\frac{r_0}{r_k}\sum_{k'=0}^K \frac{1}{r_{k'}}}%
        {\sum_{k'=0}^K \frac{1}{r_{k'}}}
    \notag \\
    & = \frac{r_0}{r_k}
        + 1
        - \frac{\frac{K-1}{r_k}}%
        {\sum_{k'=0}^K \frac{1}{r_{k'}}}
        - \frac{r_0}{r_k}
    \notag \\
    & = 1 - \frac{K-1}{r_k} \bigg(\sum_{k'=0}^K \frac{1}{r_{k'}}\bigg)^{-1}.
\end{align*}

\section{Proofs of the Fusion Rule for a Vector Parameter}
\label{appendix:vectorbayesfusion}

\subsection{Proof of Theorem~\ref{th:fuseliks}}
\label{appendix:fuseliks}

We can rewrite \eqref{eq:likgausssufstatinep} as
\begin{align}
    & \tlik(\btheta) 
    \notag \\*
    & \quad \propto \exp\bigg({-}\frac{((\mathbf{1}_{K}\otimes \mathbf{I}_{d_{\theta}}) \btheta-\sufstr)^\intercal \widetilde{\mathbf{\Sigma}}^{-1}((\mathbf{1}_{K}\otimes \mathbf{I}_{d_{\theta}}) \btheta-\sufstr)}{2}\bigg)
    \notag \\
    & \quad \propto \exp\bigg(
    {-}\frac{\btheta^\intercal (\mathbf{1}_{K}\otimes \mathbf{I}_{d_{\theta}})^\intercal\widetilde{\mathbf{\Sigma}}^{-1}(\mathbf{1}_{K}\otimes \mathbf{I}_{d_{\theta}}) \btheta}{2}
    \notag \\*
    & \quad \qquad \qquad + \btheta^\intercal (\mathbf{1}_{K}\otimes \mathbf{I}_{d_{\theta}})^\intercal\widetilde{\mathbf{\Sigma}}^{-1} \sufstr
    \bigg)
    \notag \\
    & \quad \stackrel{ \hidewidth\eqref{eq:sighatinv} \hidewidth}=
    \exp\bigg({-}\frac{
    \btheta^\intercal \widehat{\mathbf{\Sigma}}^{-1} \btheta
    }{2}
    +
    \btheta^\intercal (\mathbf{1}_{K}\otimes \mathbf{I}_{d_{\theta}})^\intercal \widetilde{\mathbf{\Sigma}}^{-1}\sufstr
    \bigg) .
    \label{eq:sufstgiventhetagauss}
\end{align}
Furthermore, from \eqref{eq:loclicgaussnons2}, we see that
\begin{equation*}
    \ell_k(\btheta)
    \propto 
    \exp\bigg({-}\frac{\btheta^\intercal 
      \mathbf{H}_k^\intercal
    \mathbf{\Sigma}_{k k}^{-1}
    \mathbf{H}_k 
    \btheta}{2}
    +
    \btheta^\intercal \mathbf{H}_k^\intercal
    \mathbf{\Sigma}_{k k}^{-1}
    \mathbf{H}_k \sufstr_k
    \bigg),
\end{equation*}
where $\sufstr_k = \Ourinv_k\meamr_k$.
Thus, we have
\begin{align}
    & \prod_{k=1}^K  \ell_k(\mathbf{W}_k\btheta)
    \notag \\*
    & \quad \propto
    \exp\bigg({-}\frac{\btheta^\intercal \big(\sum_{k=1}^K  \mathbf{W}_k^\intercal
      \mathbf{H}_k^\intercal
    \mathbf{\Sigma}_{k k}^{-1}
    \mathbf{H}_k 
    \mathbf{W}_k\big)
    \btheta}{2}
    \notag \\*
    & \quad \quad
    +
    \btheta^\intercal \sum_{k=1}^K \mathbf{W}_k^\intercal \mathbf{H}_k^\intercal
    \mathbf{\Sigma}_{k k}^{-1}
    \mathbf{H}_k \sufstr_k
    \bigg)
    \notag \\
    & \quad \stackrel{\hidewidth \eqref{eq:defG} \hidewidth}=\,\exp\bigg({-}\frac{\btheta^\intercal \big( \widehat{\mathbf{\Sigma}}^{-1} - \mathbf{G}\big)
    \btheta}{2}
    +
    \btheta^\intercal \sum_{k=1}^K \mathbf{W}_k^\intercal \mathbf{H}_k^\intercal
    \mathbf{\Sigma}_{k k}^{-1}
    \mathbf{H}_k \sufstr_k
    \bigg)
    \notag \\
    & \quad = \frac{1}{\xi_0(\btheta)} \exp\bigg({-}\frac{\btheta^\intercal \widehat{\mathbf{\Sigma}}^{-1} \btheta}{2}
    +
    \btheta^\intercal \sum_{k=1}^K \mathbf{W}_k^\intercal \mathbf{H}_k^\intercal
    \mathbf{\Sigma}_{k k}^{-1}
    \mathbf{H}_k \sufstr_k
    \bigg),
    \label{eq:prodlikgaussbayesproof}
\end{align}
with $\xi_0(\btheta)$ as defined in \eqref{eq:defxi0}.
By comparing \eqref{eq:sufstgiventhetagauss} and \eqref{eq:prodlikgaussbayesproof}, we see that \eqref{eq:fuselikgaussvec} holds, provided that
\begin{equation}
    (\mathbf{1}_{K}\otimes \mathbf{I}_{d_{\theta}})^\intercal \widetilde{\mathbf{\Sigma}}^{-1}\sufstr
    =
    \sum_{k=1}^K \mathbf{W}_k^\intercal \mathbf{H}_k^\intercal
    \mathbf{\Sigma}_{k k}^{-1}
    \mathbf{H}_k \sufstr_k.
    \label{eq:meantoshow}
\end{equation}
Inserting \eqref{eq:weightmatrix} into the right-hand side of
\eqref{eq:meantoshow}, we obtain
\begin{align*}
     \sum_{k=1}^K \mathbf{W}_k^\intercal \mathbf{H}_k^\intercal
    \mathbf{\Sigma}_{k k}^{-1}
    \mathbf{H}_k \sufstr_k
    & = (\mathbf{1}_{K}\otimes \mathbf{I}_{d_{\theta}})^\intercal \widetilde{\mathbf{\Sigma}}^{-1}\sum_{k=1}^K  
    (\mathbf{e}_k \otimes \mathbf{I}_{d_{\theta}})\sufstr_k
    \notag \\
    &  =  (\mathbf{1}_{K}\otimes \mathbf{I}_{d_{\theta}})^\intercal \widetilde{\mathbf{\Sigma}}^{-1} \sufstr\,,
\end{align*}
concluding the proof of \eqref{eq:fuselikgaussvec}.

Finally, the fusion rule \eqref{eq:fusedepbayes} easily follows from \eqref{eq:fuselikgaussvec}:
\begin{align*}
    \poste(\btheta \,\vert\, \sufst)
    & \propto 
    \pri(\btheta) \tlik(\btheta)
    \notag \\
    & \propto 
    \pri(\btheta) \xi_0(\btheta) \prod_{k=1}^K  \ell_k(\mathbf{W}_k\btheta)
    \notag \\
    & \propto
    \pri(\btheta) \xi_0(\btheta) \prod_{k=1}^K  \frac{\pi_k(\mathbf{W}_k\btheta)}{\pri(\mathbf{W}_k\btheta)}\,.
\end{align*}

\subsection{Proof of Corollary~\ref{cor:fusegaussprior}}\label{app:fusegaussprior}

We start directly from $\poste(\btheta \,\vert\, \sufst) \propto \pri(\btheta) \tlik(\btheta)$.
By \eqref{eq:sufstgiventhetagauss} and our choice of prior $\pri(\btheta) = \mathcal{N}(\btheta; \bmu_0, \mathbf{\Sigma}_0)\propto \exp\big({-}\frac{
     \btheta^\intercal  \mathbf{\Sigma}_0^{-1} \btheta
    }{2}
    +
    \btheta^\intercal \mathbf{\Sigma}_0^{-1} \bmu_0
    \big)$, we have that 
\begin{align}
    \pri(\btheta) \tlik(\btheta)
    & \propto
    \exp\bigg({-}\frac{
    \btheta^\intercal \widehat{\mathbf{\Sigma}}^{-1} \btheta 
    + \btheta^\intercal  \mathbf{\Sigma}_0^{-1} \btheta
    }{2}
    \notag \\*
    & \quad 
    +
    \btheta^\intercal (\mathbf{1}_{K}\otimes \mathbf{I}_{d_{\theta}})^\intercal \widetilde{\mathbf{\Sigma}}^{-1}\sufstr
    + 
    \btheta^\intercal \mathbf{\Sigma}_0^{-1} \bmu_0
    \bigg) 
    \notag \\
    & =
    \exp\bigg({-}\frac{
    \btheta^\intercal \big(\widehat{\mathbf{\Sigma}}^{-1} + \mathbf{\Sigma}_0^{-1} \big) \btheta 
    }{2}
    \notag \\*
    & \quad 
    +
    \btheta^\intercal \big( (\mathbf{1}_{K}\otimes \mathbf{I}_{d_{\theta}})^\intercal \widetilde{\mathbf{\Sigma}}^{-1}\sufstr
    + 
    \mathbf{\Sigma}_0^{-1} \bmu_0 \big)
    \bigg) 
    \notag \\
    & =
    \exp\bigg({-}\frac{
    \btheta^\intercal  \mathbf{\Sigma}_1^{-1}  \btheta 
    }{2}
    +
    \btheta^\intercal \mathbf{\Sigma}_1^{-1} \bmu_1  
    \bigg) 
    \notag \\
    & \propto
    \exp\bigg({-}\frac{
    (\btheta-\bmu_1)^\intercal  \mathbf{\Sigma}_1^{-1}  (\btheta-\bmu_1)
    }{2}
    \bigg),
    \label{eq:proofgausspriend}
\end{align}
with $\bmu_1$ and $\mathbf{\Sigma}_1$ given by \eqref{eq:mu1} and \eqref{eq:sigma1}, respectively.
Expression \eqref{eq:proofgausspriend} is proportional to the pdf of a Gaussian with  mean $\bmu_1$ and covariance matrix $\mathbf{\Sigma}_1$.


\bibliographystyle{IEEEtran} 
\bibliography{refs}

\end{document}